\documentclass[12pt, a4paper]{article}
%\usepackage[engish, science, titlepage]{ku-frontpage}

%\input{packs}
%PACKAGES--------------------------------

%Langue et encodage
\usepackage[utf8]{inputenc}
\usepackage[main=english, danish]{babel}
%\frenchbsetup{og=\og, fg=\fg}
%\usepackage[latin1]{inputenc}
%\usepackage{lmodern}

%Gestion de mise en page
\usepackage{titlesec}
\titleformat{\section}
{\centering\normalfont\LARGE\bfseries}{\thesection}{1em}{}
%\usepackage{supertabular}
%\usepackage{fullpage}
%\usepackage{here}% permet, avec [H] de mettre l'image ICI!!! (et pas au petit bonheur la chance LaTeXien)
%\usepackage[perpage,symbol]{footmisc} %remet à zéro le compteur des notes de bas de page après chaque page
%\pagestyle{plain}
%\parindent=0cm	%définit la longueur du retrait

%Frontpage KU requirements
\usepackage[a4paper, hmargin={55pt, 55pt}, vmargin={2.8cm, 2.8cm} ]{geometry}
\usepackage{eso-pic}
\usepackage{times}
\usepackage{ifthen}
\usepackage{graphicx}
\usepackage{float}

%Mathématiques
%\usepackage[usenames,dvipsnames]{color}
\usepackage{amsmath} %Formules
\usepackage{amssymb} %Formules
\usepackage{amsthm} %Environnements
\usepackage{relsize,exscale} %changement taille symboles mathématiques
\usepackage{amsfonts}
\usepackage{mathrsfs}
\usepackage{dsfont}
\usepackage{enumerate}
\usepackage{interval}
\usepackage[shortlabels]{enumitem}
\usepackage{esint}
\usepackage{graphics}
\usepackage{caption}
\usepackage{subcaption}
\usepackage{mathtools}
\usepackage{multirow}
\usepackage{comment}
\usepackage{hhline}
\usepackage{array}
%\geometry{scale=0.8, nohead}

%Other solution for \whidecheck
% code from mathabx.sty and mathabx.dcl
\DeclareFontFamily{U}{mathx}{\hyphenchar\font45}
\DeclareFontShape{U}{mathx}{m}{n}{
      <5> <6> <7> <8> <9> <10>
      <10.95> <12> <14.4> <17.28> <20.74> <24.88>
      mathx10
      }{}
\DeclareSymbolFont{mathx}{U}{mathx}{m}{n}
\DeclareFontSubstitution{U}{mathx}{m}{n}
\DeclareMathAccent{\widecheck}{0}{mathx}{"71}
\DeclareMathAccent{\wideparen}{0}{mathx}{"75}

%Lien url + clickable tableofcontents
\usepackage{hyperref}
\hypersetup{
    colorlinks = true,
    %linktocpage,
    citecolor = blue,
    %filecolor=black,
    linkcolor = blue,
    urlcolor = blue
}

%xy-pic
%\usepackage[all]{xy}

%style captions
\usepackage[font=small,labelfont=bf]{caption}

% pour changer le nom des captions des figures
%\usepackage{ccaption}
%\addto\captionsfrench{\def\figurename{{Diagramme}}}

%ENVIRONNEMENTS--------------------------

%Définition des styles
\newtheoremstyle{droit}
{}%Space above
{}%Space below
{\normalfont}%Body font
{}%Indent amount (empty = no indent, \parindent = para indent)
{\boldface}%theoreme head font
{}%Punctuation after theoreme head
{\newline}%Space after theoreme head: " " = normal interword space; \newline = linebreak
{}%theoreme head spec (can be left empty, meaning `normal')
\newtheoremstyle{italique}
{}%Space above
{}%Space below
{\itshape}%Body font
{}%Indent amount (empty = no indent, \parindent = para indent)
{\boldface}%theoreme head font
{}%Punctuation after theoreme head
{\newline}%Space after theoreme head: " " = normal interword space; \newline = linebreak
{}%theoreme head spec (can be left empty, meaning `normal')

%Théorèmes
%\theoremstyle{italique}
\newtheorem*{thm*}{Theorem}
\newtheorem{thm}{Theorem}[section]
\newtheorem*{prop*}{Proposition}
\newtheorem{prop}[thm]{Proposition}
\newtheorem*{lem*}{Lemma}
\newtheorem{lem}[thm]{Lemma}
\newtheorem*{cor*}{Corollary}
\newtheorem{cor}[thm]{Corollary}

\theoremstyle{definition}

\newtheorem{rema}[thm]{Remark}
\newtheorem{defi}[thm]{Definition}
\newtheorem*{defi*}{Definition}

\newtheorem{claim}{Claim}[thm]

%Autres
%\renewcommand{\thefootnote}{\fnsymbol{footnote}} %dans les notes de bas de page, remplace les chiffres par un symbole
%comme une liste, mais avec (i), (ii), (iii), (iv)
{\begin{enumerate}[\textup{(}i\textup{)}] }
{\end{enumerate}}

%COMMANDES-------------------------------
%General stuff
\newcommand{\R}{\mathbb{R}}
\newcommand{\N}{\mathbb{N}}
\newcommand{\Z}{\mathbb{Z}}
\newcommand{\C}{\mathbb{C}}

\newcommand{\ts}[1]{\binom{#1}{ 2 }}

\newcommand{\compl}{$\mathlarger{\mathlarger{\mathlarger{\bigoplus}}}$}
\newcommand{\noi}{\noindent}

\newcommand{\pws}[1]{\mathcal{P}\left( #1 \right)}
\newcommand{\dom}{\mathcal{D}}
\newcommand{\paths}{\mathcal{P}}
\newcommand{\face}[1]{\triangle\left(#1\right)}
\newcommand{\cface}[1]{\bigtriangledown\left(#1\right)}
\newcommand{\kface}[2]{\triangle_{#2}\left(#1\right)}
\newcommand{\kcface}[2]{\bigtriangledown_{#2}\left(#1\right)}
\newcommand{\rk}{\operatorname{rk}}
\newcommand{\Rk}{\operatorname{Rk}}
\newcommand{\rkb}{\overline{\rk}}

%Graphs and cell complexes
\newcommand{\Ne}{\mathcal{N}}
\newcommand{\E}{\mathcal{E}}
\newcommand{\V}{\mathcal{V}}
\newcommand{\Ce}{\mathcal{C}}
\newcommand{\dual}[1]{\overline{#1}}
\newcommand{\bual}[1]{\widetilde{#1}}
\newcommand{\bdual}[2]{\bual{#1}^{#2}}
\newcommand{\cdual}[2]{\overline{#1}^{#2}}

\newcommand{\all}{\dual{\emptyset}}
\newcommand{\sphi}{\rho}
\newcommand{\cphi}{\pi}
\newcommand{\red}{\mathrel{\raisebox{1pt}{$\prec$}}}
\newcommand{\der}{\mathrel{\raisebox{1pt}{$\succ$}}}
\newcommand{\col}{\vdash}
\newcommand{\loc}{\dashv}
\newcommand{\cpa}{\leftthreetimes}
\newcommand{\apc}{\rightthreetimes}
\newcommand{\refl}{\mathbin{\top}} %{\,{\top}\,}
\newcommand{\ccup}[1]{\stackrel{#1}{\cup}}

\newcommand{\bdiv}[1]{\widecheck{#1}}
\newcommand{\pathto}{\rightsquigarrow}

\newcommand{\kz}{K^{[0]}}
\newcommand{\ko}{K^{[1]}}
\newcommand{\kt}{K^{[2]}}
\newcommand{\kg}{K^{(1)}}
\newcommand{\dg}[1]{\mathcal{G}_{\dual{#1}}}

\newcommand{\ks}{K^{[3]}}
\newcommand{\kc}{K^{(2)}}

\newcommand{\kr}{K^{[r]}}
\newcommand{\kR}{K^{[R]}}
\newcommand{\kb}{\dual{K}}
\newcommand{\kzb}{\dual{K}^{[0]}}
\newcommand{\kob}{\dual{K}^{[1]}}

\newcommand{\ksb}{\dual{K}^{[3]}}

\newcommand{\vb}{\dual{v}}

\newcommand{\xb}{\dual{x}}

\newcommand{\kbb}{\dual{\dual{K}}}

\newcommand{\ce}[1]{\dual{#1}}

\newcommand{\st}{\operatorname{St}}
\newcommand{\lk}{\operatorname{Lk}}
\newcommand{\comp}[2]{{#1}^{#2}}
\newcommand{\conc}{\ast}
\newcommand{\ksec}[2]{\interval{#1}{#2}}
\newcommand{\ori}[1]{#1_{\rightarrow}}
\newcommand{\nsc}{\mathsf{C}}
\newcommand{\cob}{\mathsf{Cob}}
\newcommand{\mlc}{\mathsf{B}}

%Physics

%Category
\newcommand{\Hom}{\mathsf{Hom}}

\newcommand{\Cob}{\mathscr{C}}
\newcommand{\sts}{\mathcal{S}}
\newcommand{\tim}{\mathtt{T}}
\newcommand{\parity}{\mathtt{P}}
\newcommand{\charge}{\mathtt{C}}
\newcommand{\ists}{\sts^{in}}
\newcommand{\osts}{\sts^{out}}
\newcommand{\rev}[1]{{#1}^{-1}}
\newcommand{\seqs}{\mathcal{Q}}
\newcommand{\Dom}{\mathfrak{D}}
\newcommand{\bra}[1]{ \langle #1 | }
\newcommand{\ket}[1]{ | #1 \rangle  }
\newcommand{\braket}[2]{\langle #1 | #2 \rangle}
\newcommand{\ktbr}[2]{\ket{#1}\bra{#2}}

\newcommand{\ins}{\leq_{\text{in}}}
\newcommand{\outs}{\leq_{\text{out}}}
\newcommand{\homc}{\Hom_{\Cob_d}}

% Proba
\newcommand{\prt}{\mathcal{Z}}

\newcommand{\F}{\mathcal{F}}
\newcommand{\pr}{\mathbf{P}}

\newcommand{\G}{\mathcal{G}}
\newcommand{\M}{\mathcal{M}}

% Old notations
\newcommand{\A}{\mathcal{A}}
\newcommand{\B}{\mathcal{B}}

\newcommand{\id}{\operatorname{id}}

\newcommand{\im}{\mathfrak{Im}}
\newcommand{\abs }[1]{ \left| #1 \right| }
\newcommand{\dans}{\longrightarrow}

%\newcommand{\ddans}{\rightrightarrows}
%\newcommand{\tndans}{\Rightarrow}
%\newcommand{\Bdot}{\bullet}
%\newcommand{\commute}{\mathlarger{\mathlarger{\mathlarger{\mathlarger{\circlearrowleft}}}}}

%\newcommand{\conv}[1]{Exercice number  #1 in the note \cite{H}}
%\newcommand{\tconv}[1]{Theorem number  #1 in the note \cite{H}}
%\newcommand{\pconv}[1]{Proposition number  #1 in the note \cite{H}}
%\newcommand{\pt}[1]{ (\textit{#1})}
%\newcommand{\gen}[1]{ \langle #1 \rangle }
%\newcommand{\conw}{\stackrel{\textit{WOT}}{\dans}}
%\newcommand{\cons}{\stackrel{\textit{SOT}}{\dans}}
%\newcommand{\cont}{\mathscr{C}}

%\input{KUtitle}
%\DeclareOption{titlepage}{\@kutitlepagetrue}
%\DeclareOption{english}{\@kuenglishtrue \PassOptionsToPackage{\CurrentOption}{babel}}

%\def\KULANG{english}
%\def\KUFACULTY{science}

%%% Define custom methods for frontpage.
\def\ASSIGNMENT{Use \texttt{$\backslash$assignment$\lbrace \ldots \rbrace$}}
\def\AUTHOR{Use \texttt{$\backslash$author$\lbrace \ldots \rbrace$}}
\def\TITLE{Use \texttt{$\backslash$title$\lbrace \ldots \rbrace$}}
\def\SUBTITLE{Use \texttt{$\backslash$subtitle$\lbrace \ldots \rbrace$}}
\def\ADVISOR{Use \texttt{$\backslash$advisor$\lbrace \ldots \rbrace$}}
\def\DATE{Use \texttt{$\backslash$date$\lbrace \ldots \rbrace$}}
\def\FRONTPAGEIMAGE{...}

%%% Some of the above methods already exists. We override these.
\renewcommand{\author}[1]{\def\AUTHOR{#1}}
\renewcommand{\title}[1]{\def\TITLE{#1}}
\renewcommand{\date}[1]{\def\DATE{#1}}

%%% The rest are defined.
\newcommand{\assignment}[1]{\def\ASSIGNMENT{#1}}
\newcommand{\subtitle}[1]{\def\SUBTITLE{#1}}
\newcommand{\advisor}[1]{\def\ADVISOR{#1}}
\newcommand{\frontpageimage}[1]{\def\FRONTPAGEIMAGE{#1}}

\def\KUbold{\fontfamily{ptm}\fontseries{bx}\selectfont}
  \def\KUsemibold{\fontfamily{ptm}\fontseries{sb}\selectfont}

  \def\maketitle{
    \thispagestyle{empty}
    \AddToShipoutPicture*{\put(0, -35){\includegraphics*{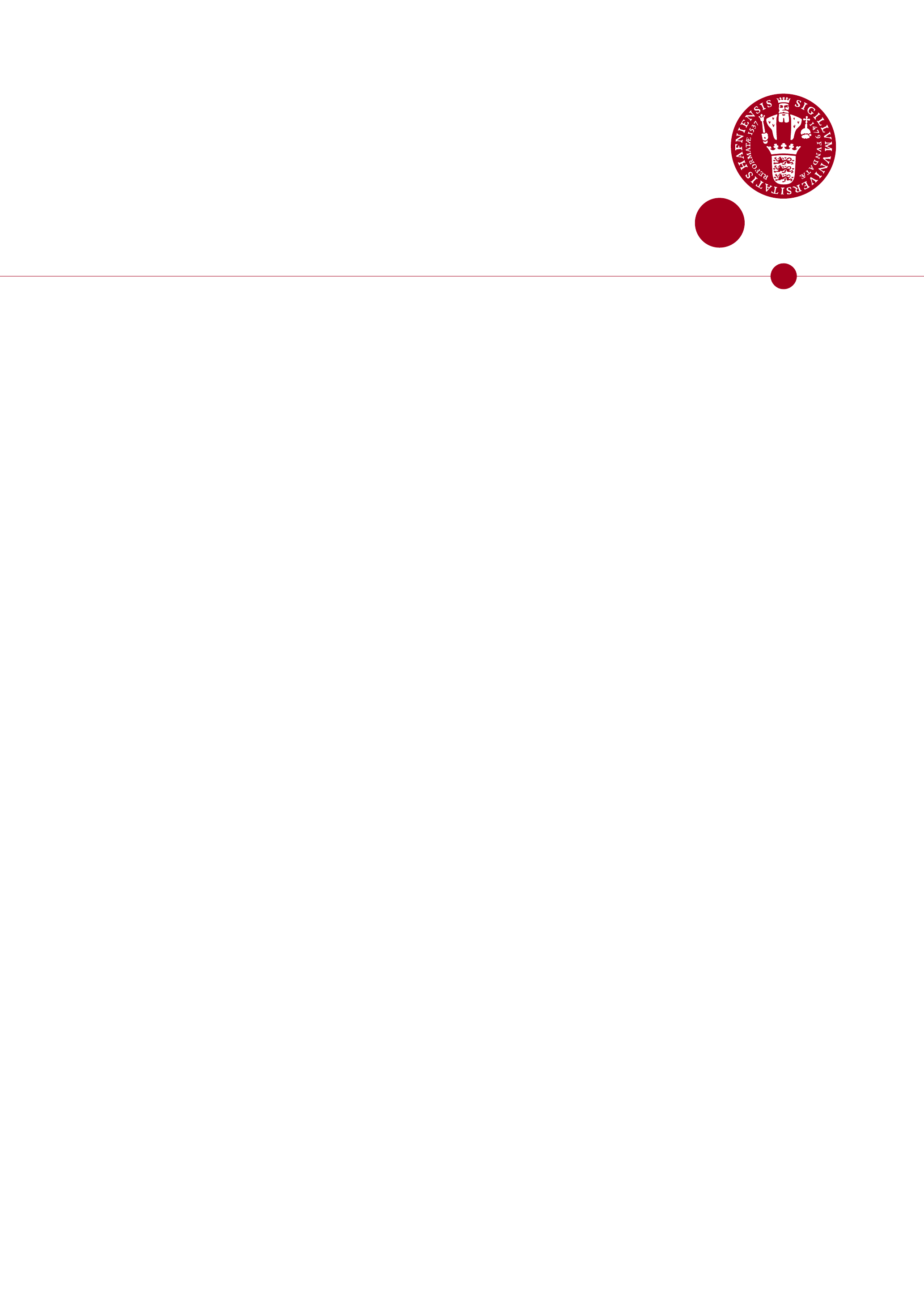}}}
    \AddToShipoutPicture*{\put(5, -10){\includegraphics*{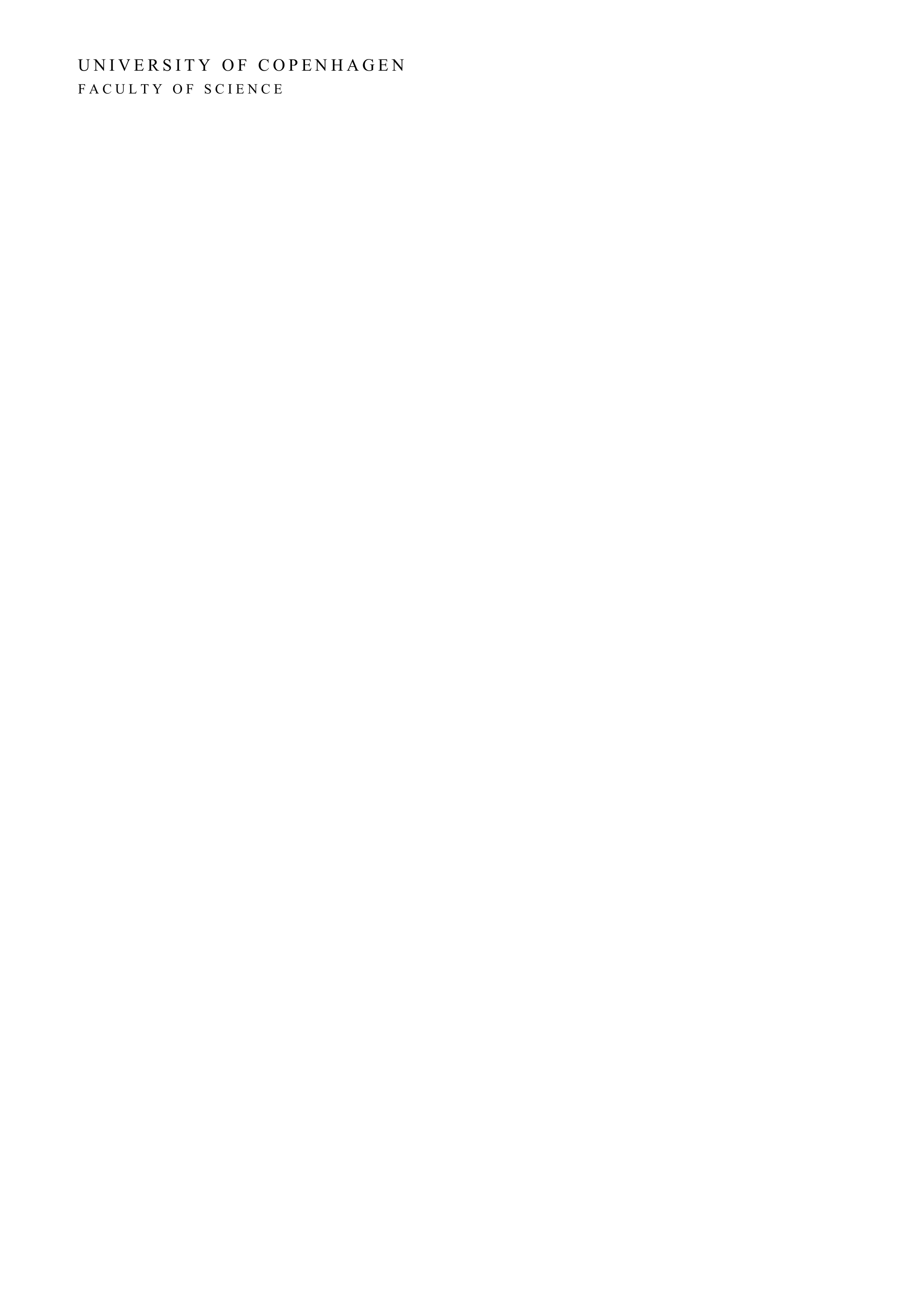}}}

    \ifthenelse{\equal{\FRONTPAGEIMAGE}{...}}{}{
      \AddToShipoutPicture*{
        \AtPageUpperLeft{\raisebox{-185mm}{\hspace{55pt}\includegraphics*[width=\textwidth, height=100mm, keepaspectratio, bb=0 0 1440 810]{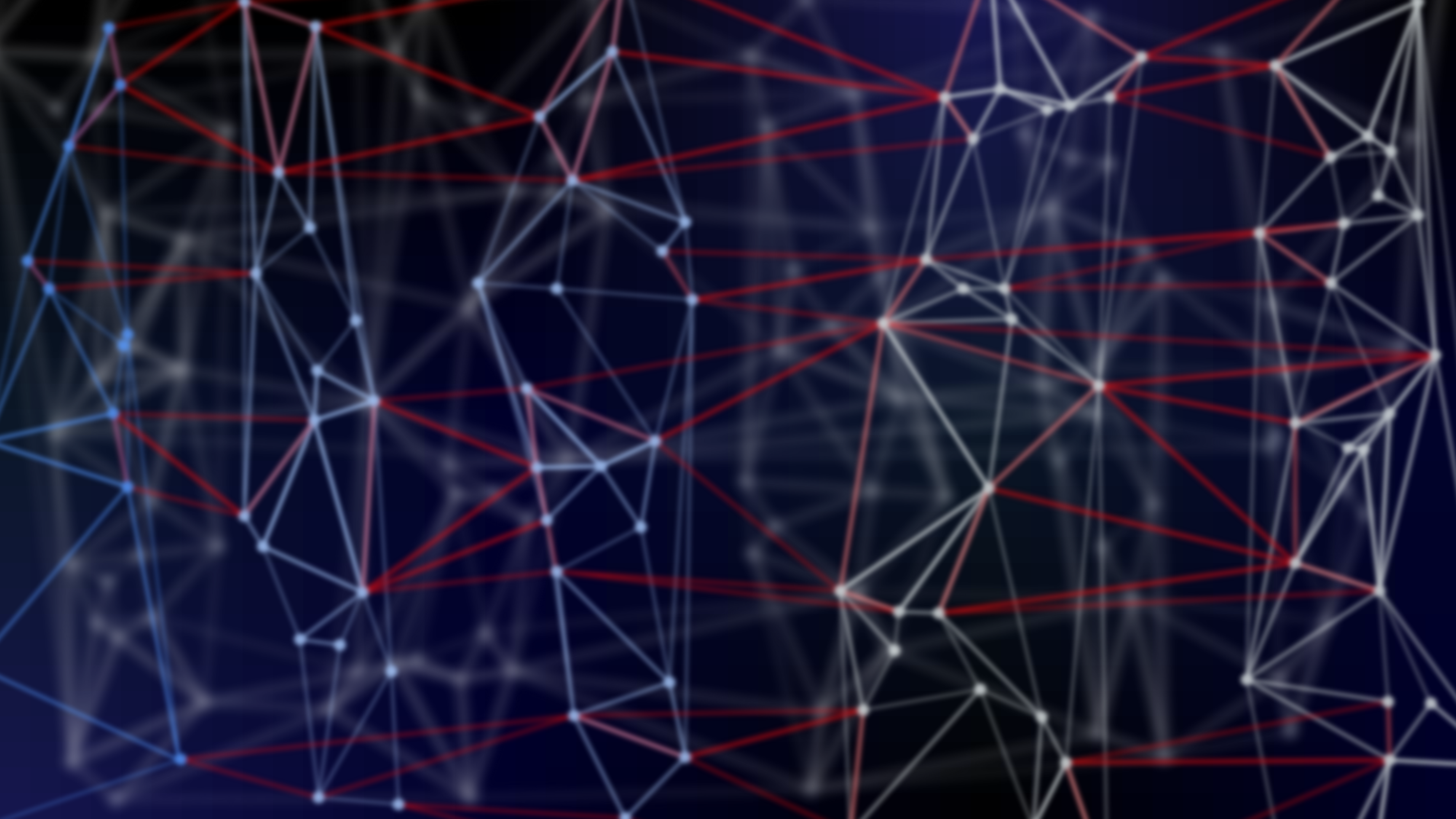}}}
      }
    }

    \AddToShipoutPicture*{\put(55, 270){\fontsize{20 pt}{22 pt} \KUbold \ASSIGNMENT}}
    \AddToShipoutPicture*{\put(55, 242){\fontsize{14 pt}{16 pt} \KUsemibold \AUTHOR}}

    \AddToShipoutPicture*{\put(55, 188){\fontsize{22 pt}{24 pt} \KUsemibold \TITLE}}
    \AddToShipoutPicture*{\put(55, 160){\fontsize{14 pt}{16 pt} \KUsemibold \SUBTITLE}}

    \AddToShipoutPicture*{\put(55, 85){\fontsize{11 pt}{12 pt} \KUsemibold \ADVISOR}}
    \AddToShipoutPicture*{\put(55, 57){\fontsize{11 pt}{12 pt} \KUsemibold \DATE}}

    \phantom{...}
    \newpage
    \noindent
  }

%\setcounter{section}{-1}

%TITRE-----------------------------------

%\setlength\arraycolsep{2 pt}
%\setcounter{tocdepth}{2}
%\setcounter{secnumdepth}{1}

\assignment{PhD thesis}
\author{Maxime Savoy}

\title{Combinatorial Cell Complexes}
\subtitle{Duality, reconstruction and causal cobordisms} % No subtitle
\date{Submitted on February 28, 2021}
\advisor{Advisor: Bergfinnur Durhuus}
\frontpageimage{}

\pagenumbering{roman}

%DOCUMENT================================
\begin{document}

\maketitle

\thispagestyle{empty}
\vspace{2cm}

{\renewcommand{\arraystretch}{2}
\begin{tabular}{ll}
Assessment committee : & \textbf{John W. Barrett}, University of Nottingham\\
 & \textbf{Thordur Jonsson}, University of Island \\
 & \textbf{Nathalie Wahl} (Chair), University of Copenhagen
\end{tabular}}
%Þórður

\vspace{13cm}

\begin{tabular}{ll}

\copyright ~ by the author: 

&

Maxime Savoy \\

& Department of Mathematical Sciences, \\

& University of Copenhaguen,\\

& Universitetsparken 5,\\

& 2100 København Ø, Denmark.\\

& \texttt{maximegsavoy@gmail.com}\\[10pt]

ISBN: & 978-87-7125-041-1

\end{tabular}

\vspace{2cm}

This thesis has been submitted on February 28 2021 to the PhD School of 
the Faculty of Science, University of Copenhagen.
The present document is an updated version containing some corrections inserted before uploading it on arxiv.

\newpage

~
\vspace{8cm}
~

\thispagestyle{empty}
\begin{flushright}
\textit{À mon grand-père Fred}
\end{flushright}

\vfill
%\end{comment}

\newpage
~
\thispagestyle{empty}

\newpage

\begin{abstract}
This thesis proposes a framework based on a notion of combinatorial cell complex (cc) whose cells are defined simply as finite sets of vertices. The cells of a cc are subject to four axioms involving a rank function that assigns a rank (or a dimension) to each cell. Our framework focuses on classes of cc admitting an inclusion-reversing duality map. We introduce a combinatorial notion of cobordism that allows us to single out a category whose morphisms are cobordisms having a causal structure. Our aim is to offer an approach to look for a combinatorial notion of quantum field theory having a built-in duality operation acting on the underlying space and not relying on any manifold structure.

The introduction includes links with certain fields in Theoretical and Mathematical Physics related to Quantum Gravity and motivating our framework. We start by introducing cc and the duality map on a class of cc with empty boundary called closed cc. We then focus on the problem of reconstructing a certain class of cc from their cells of rank lower than or equal to 2. Such cc are in particular duals to simplicial complexes with no boundary and their reconstruction is realized using a discrete notion of connection. Our next main result establishes a way to extend the duality map we defined on closed cc to a class of cc with boundary. An important by-product of the study of this extended duality map is the combinatorial notion of cobordism used in this work. We also introduce a general notion of subdivision of a cc via a map called reduction, as well as the dual notion of reduction called collapse. These two types of map allow to characterize the structure of certain cc called slices, using sequences of maps called slice sequences. Slices are the basic building blocs of our definition of causal cobordisms and the dual of a slice sequence defines the composition of cobordisms, providing us with a category whose morphisms are causal cobordisms.
\end{abstract}

\vfill

\renewcommand{\abstractname}{Danish Resumé}

\begin{abstract}
I denne afhandling fremsættes en begrebsramme baseret på en general definition af det, vi kalder et kombinatorisk celle-kompleks (forkortet til cc), der består af endelige delmængder af en mængde af vertexer, som opfylder fire aksiomer, hvori indgår en funktion, der knytter en rang (eller en dimension) til hver celle. Et væsentligt fokuspunkt for diskussionen drejer sig om klasser af kombinatoriske celle-komplekser, der tillader en inklusions-omvendende dualitetsafbildning. Vi definerer kombinatoriske kobordismer, der giver anledning til en naturlig kategori, hvis morfier er kobordismer med en kausal struktur. Målet er at beskrive en ramme for en kombinatorisk version af kvantefeltteori, der har en indbygget dualitetsoperation, der virker på et underliggende rum uden a være afhængig af en traditionel mangfoldighedsstruktur. 

I introduktionen gives en beskrivelse af, hvordan forskellige problemstillinger  i  matematisk fysik generelt, og mere specielt vedrørende formuleringer af en kvanteteori for gravitation, tjener som motivation for de indførte begreber og resultater.  Derefter fokuseres i første omgang  på spørgsmålet om rekonstruktion af et kombinatorisk celle-kompleks ud fra dets celler af dimension 2 og lavere. Sådanne celle-komplekser er i særdeleshed duale til simpliciale komplekser uden rand, og rekonstruktionen gennemføres ved brug af en kombinatorisk udgave af en  kovariant afledet. Et vigtigt biprodukt af det efterfølgende studium af dualitet for kombinatoriske celle-komplekser med rand er den her indførte kombinatoriske definition af kobordismer, hvortil dualitetsafbildningen kan udvides. Videre defineres underinddelinger af kombinatoriske celle-komplekser via en såkaldt reduktionsafbildning, hvis duale modstykke i form af en kollapsafbildning også indføres. Disse to slags afbildninger tillader en karakterisering af en klasse af kombinatoriske celle-komplekser kaldet skiver ved hjælp af en følge af såkaldte skive-afbildninger. Disse skiver udgør grundelementerne i definitionen af kausale kombinatoriske kobordismer, og den duale til en følge af skive-afbildninger definerer sammensætning af kobordismer, og giver således anledning til en kategori, hvis morfier er kausale kobordismer. 
\end{abstract}

\vfill

\newpage

\vspace{3cm}

%\section*{Acknowledgements}

\vspace{2cm}

\renewcommand{\abstractname}{Acknowledgements}

\begin{abstract}
I would first like to thank my supervisor Bergfinnur Durhuus for letting me explore this uncharted avenue despite that for a long time I could only provide very vague indications that it could lead to this project. Thank you also for your patience, even when I needed many attempts to truly understand some of your valuable explanations, for the time you invested during our very regular meetings and for all the inputs you gave me along the way that taught me a lot about how to be a better mathematician.

It has been a great privilege to be a PhD student at the Department of Mathematical Sciences of the University of Copenhagen and to be part of the the QMATH Centre, funded by VILLUM FONDEN (Grant no. 10059). I admire the remarkable work of the heads of the QMATH center, Jan Philip Solovej, Matthias Christandl and Bergfinnur Durhuus and its center administrator Suzanne Anderson. They created a vibrant environment gathering distinguished expertise from diverse fields that I found very enriching.  I am grateful to all the people I shared office, lunch, coffee breaks, jumble games or jumps in the canal with for the good company and to all the participants of the weekly running for our completion of uncountable laps around Faelledparken under various weather conditions. I greatly benefited from discussions I had with members of the department and I am especially grateful for the supportive interactions I had with Laurent and Linard over coffee breaks. I would also like to thank Morten Risager, Henrik Laurberg Pederson and Thomas Vils Pederson for their time and helpful discussions, as well as Albert Werner and Fabien Pazuki for pleasant collaborations in teaching activities. I would like to thank Nina Weisse, Mette Fulling and all the members of the staff I encountered during my time at KU for always being helpful.

A big THANKS to Laurent Bétermin, Linard Hoessly and Davide Parise for the many valuable comments on this monograph.

Merci Clément Hongler pour m'avoir introduit au modèle d'Ising, pour m'avoir inspiré à explorer d'avantage la physique mathématique et pour votre confiance en mes capacités qui m'a conduite où je suis et m'a aussi aidé à mener à bien ce projet.

I am also grateful for all the great people I met in Copenhagen who helped me go along this journey. Patrick and Esteban, you were and will certainly remain very inspiring climbing mates and it has also been great sharing about our experiences as PhD students or discussing about the strange difficulties of some quadratic equations in finance or the intricacies of behavioural economics. Considering the amount of time we were all spending at home, I was also lucky to have Ken and Roì as my flatmates. We made a great home office team and you also helped me get through harder times. There are many people that I shared my climbing sessions with that I am unfairly leaving out as they also contributed to all the great times I had in and around Copenhagen and helped me keep my spirits up.
I also owe many of my older friends for their continuing encouragements and this unveils how much this list is doomed to remain incomplete.

Last but not least, my family and close ones have always been extraordinary in supporting me in my studies and this thesis is also a product of all their encouragements. Un gigantesque merci à mes parents pour votre soutient qui m'a toujours énormément aidé et qui m'a été très précieux durant ces trois dernières années.

\vspace{2cm}
I would like to thank John Barrett and Thordur Jonsson for their willingness to participate as members of the assessment committee and Nathalie Wahl for her time as a chair of the committee.
\end{abstract}

\newpage

\vfill

\tableofcontents

\vfill

\newpage
~
\thispagestyle{empty}

\newpage
\pagenumbering{arabic}
\setcounter{page}{1}

\vspace{3cm}

\section*{\centering \huge{Introduction}}

\addcontentsline{toc}{section}{Introduction}

\vspace{2cm}

How to best describe the geometry of space-time? The quest for an accurate model of our physical reality and its geometry has lead to considerable developments of our mathematical apparatus. These developments occurred at the interface between Mathematics and Physics, in the minds of scientists driven by ideas rooted on both sides. An emblematic figure among these scientists is the German mathematician Bernhard Riemann. While developing the new mathematical notion of manifold (in the text of his 1854 "Habilitation" lecture \cite{rie16}), famously generalizing the foundation of Euclidean geometry, Riemann was also guided by the intention of using this notion to describe our spacial reality. His construction in fact starts by directly making a clear distinction between the discrete and the continuous case. The main focus of his work was on the continuous case, leading to the usual modern definition of Riemannian manifold: a metric space in which each point is "homogeneous" in the sense that it has an open neighbourhood homeomorphic to the open Euclidean ball. Although the continuous case best captures the metric structure of space at macroscopic scales, Riemann also remarkably foresaw how some conceptual considerations would be needed when investigating which of these two hypothesis would still hold at microscopical scales. The notion of continuous manifold encompasses the notion of space-time introduced in 1908 by Minkowski, which was used by Einstein and Grossmann in 1913 when laying down the foundations of General Relativity. When physicists were later able to gather an overwhelming body of observational evidence supporting Einstein's revolutionary theory, it cemented these new ideas as the correct understanding of the physics of space-time at observable scales, and, in the same way, cemented Riemannian geometry as the most accurate way to formalize this theory. Riemannian and semi-Riemannian manifolds are now the default mathematical objects used to model the geometry of space-time and were instrumental in the development of the theoretical foundations of the two main pillars of theoretical physics, General Relativity and the Standard Model of particle physics.

Despite the secondary role played by Riemann's notion of "discrete manifoldness", discrete geometries have been extensively used within theoretical physics, in particular as a technical tool that helps circumvent certain difficulties arising in the analysis of models defined on continuous manifold. A common such difficulty being the divergences encountered when studying quantum fields at arbitrarily low scales, in which case using a discretization can be a way to regularize the theory, by defining a so-called ultraviolet (UV) cut-off. Triangulations are usually the simplest way to discretize a manifold, defining a simplicial complex with some homogeneity property. Square, cubic or hypercubic honeycomb lattices are also often chosen in contexts where one regularizes fields defined in flat space, for example in statistical mechanics or lattice gauge theories.

The search for a unified theory or a theory of quantum gravity has been the source of many attempts to use discretized spaces to reconcile the principles of General Relativity and Quantum mechanics. One main issue that usually arises in this case is that regularizing General Relativity by means of a discretization of the underlying manifold violates the symmetries of the theory: diffeomorphism invariance (or invariance under changes of coordinates) and Lorentz invariance. A UV cut-off indeed generally implies the existence of a type of minimal length that is inconsistent with coordinate or Lorentz transformations over the manifold. 
But how would we even define the notions of Lorentz transformation and change of coordinates if we were to consider field theories on discrete spaces defined with no reference to any continuous manifold or metric? And would the previous inconsistencies also arise?

%\compl[But what if we assume that there is a way to define a space of metric configurations on a discrete space with no reference to any underlying continuous manifold whatsoever, how to really make sense of Lorentz symmetry?]

The main interest of this work is to study the structure of discrete spaces with the properties of
\begin{itemize}[label= $\ast$, topsep=2pt, parsep=2pt, itemsep=1pt, leftmargin=1.5cm]
\item having a given dimension,
\item being defined independently of any prior continuous manifold or metric structure and
\item admitting a duality similar to Poincaré duality.
\end{itemize}
Our main goal is to understand how to define a category using a combinatorial notion of "cobordism" based on these spaces. Such framework is also used for the formulation of Topological Quantum Field Theory described in Section \ref{sectqft} of this introduction. 

Our starting point is a notion of "combinatorial cell complexes", defined essentially in the same way as Basak in \cite{bas10} using this terminology. Since we remain for the most part in a purely combinatorial context, we often choose to drop the term "combinatorial". To define a cell complex $K,$ we start with a finite set, whose elements we call \textit{vertices}. A \textit{cell complex} $K$ is then a collection of subsets of vertices called \textit{cells} together with a rank function $\rk : K \dans \N$ (where $\N = \{0,1,2, \dots \}$) assigning a \textit{rank} (or dimension) to each cell such that each vertex has rank $0$ and the following four axioms are satisfied:
\begin{enumerate}[label=\textbf{\roman*}), topsep=2pt, parsep=2pt, itemsep=1pt, leftmargin=1.5cm, rightmargin = 2cm]
\item the rank of a cell is strictly smaller than the rank of any cell strictly containing it,
\item the intersection of two cells is a cell or is empty,
\item if a cell $x$ of rank $r$ is strictly contained in a cell $y$ then there is a cell of rank $r+1$ strictly containing $x$ and contained in $y,$
\item if a cell $x$ of rank $r$ is strictly contained in a cell $y$ of rank $r+2$ then there are exactly two cells of rank $r+1$ containing $x$ and contained in $y.$
\end{enumerate}
   
%\compl[Reformulate this § using CW-complexes or any more specific type of discretization? Introduce face lattice of a discretization: poset defined by map $C \mapsto \{v \in C\}$ and ordered by inclusion relation] 
One way to interpret this definition is to think of cell complexes as a generalization or an abstraction of certain discretizations of continuous manifolds. To compare these notions one uses the \textit{face lattice} of a discretization defined as the partially ordered set (poset) whose elements are the sets of vertices contained in each cells of the discretization. The face lattice of a discretization of a manifold is a cell complex with rank function defined by the dimension of the cells if its cells satisfies the second axiom and if no two of its cells share the same set of vertices. This corresponds to a particular type of combinatorial cell complexes satisfying an additional homogeneity property. The fact that a cell of a cell complex is characterized by its set of vertices is nevertheless important for our framework: a cell does not contain more information than the vertices it contains. This framework then offers a common ground on which to treat both spaces that are constructed from a continuous manifold, as is the case of most discrete geometries used in mathematical physics, and spaces for which one does not a priori associate an underlying continuous manifold and which are therefore not necessarily homogeneous.

From the above definition of cell complexes, and closely following the beginning of \cite{bas10}, one can readily give the definitions of purity, boundary, non-singularity and closeness for cell complexes (introduced in more detail in the beginning of Chapter \ref{secgennot} of the present work). A cell complex such that every maximal cell has the same rank is called \textit{pure} and for such cell complexes the \textit{boundary} is defined by: a cell is in the boundary if and only if it is contained in a sub-maximal cell $y$ such that $y$ is contained in only one maximal cell. \textit{Non-singular} cell complexes are pure cell complexes such that every \textit{edge}, i.e. cell of rank 1, contains exactly two vertices and such that every sub-maximal cell is contained in no more than 2 maximal cells. A non-singular cell complex is called \textit{closed} if its boundary is empty. \textit{Simplicial complexes} are then defined as cell complexes satisfying the additional property that every subset of a cell is a cell and can be thought of as "abstract" triangulations.

The duality on cell complexes we will focus on is a bijection on the set of cell complexes that reverses the inclusion relation between cells. For example, we discuss in Section \ref{ssecfullnsimplecc} the case of cell complexes that are dual to simplicial complexes, called \textit{simple cell complexes} (a terminology also used in the context of polytopes). The main feature of simple cell complexes can be understood as maximizing the adjacency relations among the cells containing any given cell. This property is dual to the characteristic property of simplicial complexes stating that any non-empty subset of the set of vertices of a simplex is a sub-simplex and this can be interpreted as maximizing the number of sub-cells of any given cell. 

Similar dualities also arise in models involving fields taking values on certain cells of a discrete space, where it extends to a duality on fields sometimes interpreted as a Fourier-like transformation. For example, we describe the Kramers-Wannier (KW) duality of the Ising model in Section \ref{secising}, which arguably constitutes the simplest non-trivial example of such a duality in dimension two, where the discrete space is a two-dimensional square lattice and the fields assign +1 or -1 to each vertex of the lattice. In Section \ref{sectqft} we also mention another occurrence of a type of Fourier transformation between so-called "state sum models". If two physical models are related by a duality interpreted as a Fourier transform, the primal and dual models are in principle considered as two equivalent mathematical descriptions of the same physical data. Applying this principle in a strict sense, one should obtain that the double dual of a theory is equivalent to the original theory and such an involution property is considered an important desirable feature for the duality map on cell complexes of our framework. The examples mentioned above generally focus on how the duality acts on fields and the underlying discrete space on which the duality also acts are usually defined as a discretization of a manifold. Despite these facts, their definition, based on assignments of data to the cells of the underlying discrete space, does not crucially rely on the assumption that this discrete space is a discretization of a manifold. Our notion of cell complex can then be thought of as some abstract generalization of the lattices or discrete spaces used as underlying spaces for these field theories that breaks free of any dependence on an underlying manifold. The framework we present in this thesis however only focuses on how to define a duality map on cell complexes of arbitrary finite dimension. The reason why we mention these examples of Fourier-like dualities is that we think they constitute a good basis to look for a notion of fields defined on cell complexes. We explain more concretely why we believe so in Section \ref{secising}. These examples therefore encapsulate well some of the motivations of this work and understanding how to include them in the formalism introduced here constitutes an interesting direction for further research.

%\compl[Although the original motivations for studying this duality are primarily related to the definition of spin foam models and the canonical approach of LQG, we became aware towards the end of this work that it was used in many other contexts and was related to other types of dualities like the electromagnetic duality, the Fourier transform, the Morita equivalence or the anyons of Kitaev.]

\begin{comment}
In the case of quantum gravity, the fields of the theory would induce a metric on the underlying space. In this case, the duality considered here could potentially play a different role when considered on purely combinatorial objects as it would play when considered on discretizations or decompositions of manifolds into cells, for example using CW complexes. The reason being that, in latter case, the primal and dual theories are both defined on a decomposition of the be equivalent description of the same metric This is not necessarily the case in our context and this freedom could be an interesting property, and allow to more general types of coupling between the primal and dual theories \compl[although the underlying primal and dual discrete space encode the same information].
\end{comment}

It is relatively straightforward to define the duality map for the case of closed combinatorial cell complexes and verify that it defines an involution, which is part of the content of Chapter \ref{secgennot}. It is also easier to restrict the discussion to closed cell complexes while addressing the following uniqueness question raised for example by the formalism used in spin foam models: what are sufficient assumptions on a simple cell complex to allow the reconstruction of the complete cell complex using only the cells of dimensions lower than or equal to 2? This question is the main focus of Chapter \ref{secreconst}. Treating the duality of cell complexes with boundary and satisfying some locality assumptions requires a more general treatment and is the object of Chapter \ref{secdualcob}. What eventually came about as the most natural way to extend the duality to this case was to define it on some class of relative cell complexes, i.e.  couples $(K,J)$ where $K$ belongs to a certain class of non-singular cell complexes and $J$ is a subset of the boundary components of $K,$ with some additional conditions ensuring that the dual of such an element remains in the same class of relative cell complexes. It is then possible to interpret this duality map as acting on some discrete cobordisms interpolating between boundary components, a picture analogous to Atiyah's formulation of Topological Quantum Field Theory (TQFT) in \cite{ati88} that we describe briefly in Section \ref{sectqft}. Motivated by this observation, the remaining part of this thesis investigates how to make this analogy more precise by singling out an interesting category defined using a specific type of boundary objects and cobordisms as morphisms, that in particular have a causal structure. Defining how to compose these causal cobordisms requires a machinery that we develop in the last Chapter \ref{seccausalstruct}, refining the notion of duality on cobordisms introduced in the previous chapter. The structure of causal cobordisms turns out to be a generalization of the causal triangulations introduced in dimension 2 by Ambjørn and Loll in \cite{al98}. Durhuus and Jonsson introduced a notion of midsection in \cite{dj15} in order to establish that three-dimensional causal triangulations are universally exponentially bounded by their number of tetrahedrons. Coincidently, a generalized notion of midsection also plays an important role in the last two chapters. We briefly present in Section \ref{seccdt} some of the ideas underlying the statistical field theory approach to Quantum Geometry and the study of causal dynamical triangulations.
 
The motivation behind the choices of these particular discrete spaces and duality as a starting point is therefore a mixture of multiple observations and intuitions related to the different areas of physics and mathematical physics mentioned in this introduction. Some of these connections have been discovered along the way and there are certainly more to be found. As a consequence of our lack of expertise in many of these areas, we did not yet find truly satisfactory explanations or heuristic arguments to convey our intuition, other than the connections we sketch here and the example of Kramers-Wannier duality we describe in Section \ref{secising}. However such intuition is not required to understand the content presented in this thesis, which might also lead to different interpretations.

The core of this thesis starts with the definition of cell complexes from which the rest has been derived, driven by questions arising from the study of the duality and, unless specified otherwise, independently from known works in theoretical physics and discrete geometry. As a consequence, we introduce multiple definitions and concepts along the way, some are related to well-known notions and other have been chosen in order to best fit our needs. We aspired to use a terminology that is coherent with what we could find in the literature, mostly in the area of discrete geometry, but by no means we can ensure that this has been done thoroughly. In Section \ref{secdiscgeom}, we provide a sample of existing definitions and results in discrete geometry that are related to this work. The thesis is then entirely self-contained and does in particular not require to understand the physical ideas and models we mention in this introduction.  We therefore invite the reader to skip ahead to Section \ref{secresults} where we present the structure of the thesis if the content presented before is not relevant to her or his reading.

%While we usually aim to give proofs with a high level of detail, we will be more brief towards the end.

%We give complete proofs with a level of details that starts high, as it helped the author and we believe it will help the reader to get familiar with these new objects, and perhaps decreases slightly towards to end of the thesis.

Each of the next five sections include a brief presentation of certain key concepts from topics in Mathematics and Physics that influenced our approach. We also mention more specific works we encountered in our research that have some relation to what we develop, as an inspiration or as a potential source of applications. We start with a section on the Ising model and the KW duality, which we take as an opportunity to expose a first example of cell complex in order to prepare the reader for the more general discussion we embark on after this introduction.

\subsection{The Ising Model and Kramers-Wannier (KW) duality} \label{secising}

The Ising model, introduced by the physicist Wilhelm Lenz in 1920 and solved in dimension one in 1924 by his student Ernst Ising as his thesis work, is one of the most studied models of statistical mechanics. Thanks to its relative simplicity, it is often the most tractable statistical mechanical model one can consider that exhibits interesting properties. An important such property is the so-called order-disorder phase transition that the Ising model undergoes when defined on the square or hypercubic lattices of dimension two or above. For the sake of brevity, we look at how the Ising model is defined on a simple portion of the square lattice $\Z^2$ seen as a subset of $\R^2.$ We then explain how the Kramers-Wannier (KW) duality is defined in this case in order to illustrate more concretely what the duality on cell complexes we study in this work is about and explain how in this case it extends to a duality on a field theory. One can find a mathematical treatment of the KW duality for example in Section 1.5 of \cite{pal07}.

A configuration of the 2D Ising model is given by associating a "spin value" $+1$ or $-1$ to each \textit{vertex} in $\V,$ the set of vertices of the square lattice $\Z^2$ included in a domain $\Omega \subset \R^2,$ as illustrated in Figure \ref{figising}, where $\Omega$ is simply a square of side length $5.$ We define the set of \text{edges} $\E$ associated to $\Omega$ to consist of all edges of $\Z^2,$ defined as sets containing two vertices, having both vertices in $\Omega.$ This defines a graph $\G_\Omega = (\V, \E)$ which represents a portion of the square lattice, to which we can associate a set of \text{faces} $\mathcal{F}$ containing each square of side length 1 having vertices in $\V.$ Here again, we simply consider an element in $\mathcal{F}$ as a set containing four vertices. Note that the term "face" has different meanings in other parts of this thesis. We define the portion of the square lattice $K = K(\Omega)$ to be the poset defined by
$$K := \kz \cup \ko \cup \kt, \quad \text{ where } \quad (\kz , \ko , \kt ) = (\V , \E ,\mathcal{F}).$$
We associate to $K$ the \textit{rank function} $\rk = \rk_K: K \dans \N $ defined by $\rk(x) = r$ whenever $x \in \kr,$ in which case $x$ is called an \textit{$r$-cell}. The \textit{rank of} $K,$ denoted $\Rk(K),$ is defined as the maximal rank among its cells. Defined as such, $(K, \rk)$ constitutes an example of a non-singular cell complex of rank 2. The \textit{boundary} $\partial K$ of $K$ is then made of all the edges in $\ko$ contained in only one face in $\kt$ and all the vertices contained in those edges.

\begin{figure}[!h]
\centering
\includegraphics[scale=0.48]{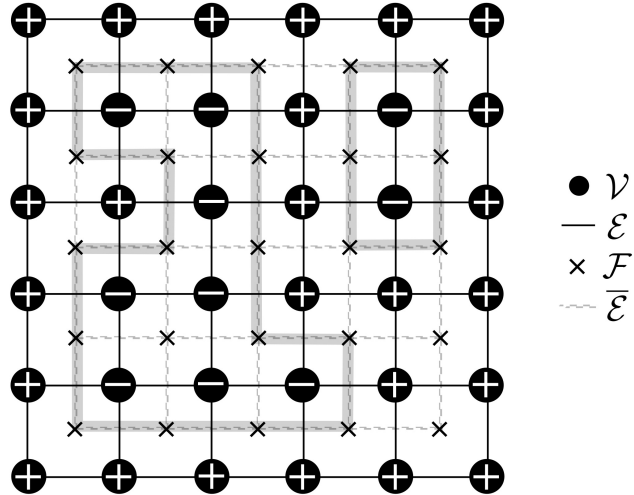}
\caption{\label{figising} In this picture, we represent the portion of lattice $K = (\V,\E, \F)$ on which the Ising model is defined by assigning a sign to each vertex in the vertex set $\V.$ The domain $\Omega$ is delimited by the outermost edges in $\E.$ For the purpose of this example, we define the dual lattice using "dual-vertices" corresponding to the faces $\mathcal{F}$, each "dual-edge" $\dual{e} \in \dual{\E}$ corresponds to two faces sharing an edge $e \in \E$ and has one "dual-face" for each vertex in $\V$ not on the boundary.}
\end{figure}

One can therefore consider an \textit{Ising configuration} to be a map $\sigma : \V \dans \{+1, -1\},$ i.e. a field taking values in $\Z/2\Z.$ We associate to each configuration $\sigma$ an "energy function" 
$$H(\sigma) = - \sum_{e =\{v,w\} \in \E} \sigma(v) \sigma(w) = N_\sigma - M_\sigma,$$
where $N_\sigma$ is the number of neighbouring spins in $\sigma$ with opposite values and $M_\sigma$ the number of neighbouring spins with identical values. The \textit{Ising model with parameter $\beta$} (and vanishing external magnetic field), or $\beta$-Ising for short, is defined by assigning a probability measure to the set of configurations, called the Boltzmann measure, defined by
$$ \pr[\sigma] = \frac{1}{\prt} \exp( - \beta H(\sigma)),$$
where $\beta = \frac{1}{T}$ is called the \textit{inverse temperature} (assuming that the Boltzmann constant is $1$ for simplicity). The \textit{partition function $\prt$} of the $\beta$-Ising is therefore given by
$$ \prt = \prt_\beta := \sum_{\sigma \in \{\pm 1\}^\V} \exp(- \beta H(\sigma)).$$
One can then assign different types of boundary conditions on the spins. For our purposes we will only use two types of boundary conditions: the \textit{$+$ boundary conditions,} fixing a value $+1$ to all vertices on the boundary, and the \textit{free boundary conditions,} corresponding to the case introduced above with no constraint on the spins on the boundary. We denote by $\prt^+$ the partition function of an Ising model with $+$ boundary conditions and by $\{\pm 1\}^{\V,+}$ the corresponding set of spin configurations.

Rudolf Peierls showed in 1936 (\cite{pei36}) that the Ising model with $+$ boundary conditions is disordered at high temperature, whereas at low temperature it exhibits ferromagnetic order. The duality due to Hendrik Kramers and Gregory Wannier (\cite{kw41}) described below allows to derive  the \textit{critical temperature} $T_c = \frac{2}{\ln(\sqrt{2})+1}$ of the model: the value of $T$ for which the model transits between the order and disorder phases (assuming such a value is unique). Lars Onsager found an exact solution of the model in dimension 2 in 1944 (\cite{ons44}) using transfer matrices.

In order to describe the KW duality, we will first define the dual of the cell complex $K.$ Generalizing this step to more general discrete spaces is a way to interpret the main focus of this thesis. We start by associating a \textit{dual cell $\dual{x}$} to each cell of $x \in K \setminus \partial K$ defined by $$\dual{x} := \{ z \in \kt ~|~ x \subset z\}.$$ 
The \textit{dual of $K$} is then defined as the set 
$$\dual{K \setminus \partial K} := \{ \dual{x} ~|~ x \in K \setminus \partial K\},$$
to which we also associate a rank function defined by $$  \rkb(\dual{x}) = \rk_{\dual{K\setminus \partial K}}(\dual{x}) := 2 -\rk(x).$$
%with an additional set of "dual boundary vertices" $\dual{\V}_b$ and an additional set $\dual{\E}_b$ "dual boundary edges", pictured on figure \ref{figising}. More explicitly, we will define $\dual{\V}_b$ and $\dual{\E}_b$ to be respectively the set of the elements $\{C_e\}$ and all the two-elements sets of the form $\{C,C_e\}$ where $C \in \partial K$ is a face in $\kt,$ $e \in \partial K$ is an edge of $K$ included in $C,$ and $C_e$ is the squared face of the lattice $\Z^2$ such that $C \cap C_e = e.$ 

As a result, the cells of $\dual{K \setminus \partial K},$ also illustrated in Figure \ref{figising}, can be described as follows. Each vertex $\dual{f}$ of $\dual{K \setminus \partial K}$ is of the form $\dual{f} = \{f\}$ where $f \in \mathcal{F}.$ An edge $\dual{e}$ of $\dual{K \setminus \partial K}$ is a set containing the two faces $f, f' \in \mathcal{F}$ such that $e = f \cap f'.$  And each face of $\dual{K \setminus \partial K}$ is of the form $\dual{v} = \{f_1, f_2, f_3, f_4\},$ where $v$ is a vertex not on the boundary $\partial K$ and $v \in f_i$ for all $1 \leq i \leq 4,$ i.e. the four faces of $K$ containing $v.$ We use the notation $\dual{\E}$ to denote the set of edges in $\dual{K \setminus \partial K}.$

%In Chapter \ref{secdualcob}, we will introduce our notion of cobordism and explain how to look at $K$ as a cobordism with "ingoing component" $\partial K$ and empty "outgoing component".

The idea of the KW duality is that one can express - up to some constant factor - the partition function of the $\beta$-Ising with $+$ boundary conditions defined on $K$ as the partition function of a $\dual{\beta}$-Ising with free boundary conditions defined on $\dual{K \setminus \partial K}.$ The value of $\dual{\beta}= \dual{\beta}(\beta)$ is given by the formula 
\begin{equation}\label{equKW}
\sinh(2 \beta) \sinh(2 \dual{\beta}) = 1 \quad \Leftrightarrow \quad \tanh(\dual{\beta}) = \exp(-2 \beta). \tag{KW}
\end{equation}
The critical temperature $T_c$ is then obtained by choosing the value of $\beta$ satisfying $\dual{\beta}(\beta) = \beta.$ At critical temperature, one obtains the "self-duality" of the model in the infinite volume limit (i.e. when $\Omega$ tends to the entire plane).

More precisely, the KW duality is obtained by expanding the partition function of each model in two different ways, using the so-called low-temperature expansion for $\prt_\beta^+$ and the high-temperature expansion for $\prt_{\dual{\beta}}.$ For the latter, we will use the following important observation: a spin configuration of the $\beta$-Ising with $+$ boundary conditions uniquely defines a subset $\varepsilon \subset \dual{\E}$ called an \textit{even edge configuration of dual edges} satisfying that each vertex of $\dual{K \setminus \partial K}$ is contained in an even number of edges in $\varepsilon.$ This bijection between $\{\pm 1 \}^{\V,+}$ and the set $\mathcal{C}(\dual{\E})$ of even edge configurations of dual edges is defined by associating to a spin configuration $\sigma$ the set $\varepsilon$ of edges delimiting the sign clusters in $\sigma$ (as illustrated in Figure \ref{figising}). Conversely, we can interpret each element $\varepsilon \in \mathcal{C}(\dual{\E})$ as the sign clusters of a spin configuration in $\{\pm 1\}^{\V,+}$ with outermost spins being assigned the value $+1.$

Let us denote by $\abs{\E}$ the number of edges in $K.$ The definitions of $N_\sigma$ and $M_\sigma$ given above yield  $\abs{\E} = N_\sigma + M_\sigma$ for all $\sigma \in \{\pm 1\}^\V.$
We can then on the one hand express $\prt_\beta^+$ as follows to obtain the \textit{low-temperature expansion}:
\begin{align*}
\prt_\beta^+ &= \sum_{\sigma \in \{\pm 1\}^{\V,+}} \exp( - \beta (N_\sigma - M_\sigma)) \\
&= \exp(2 \beta \abs{\E}) \sum_{\sigma \in \{\pm 1\}^{\V,+}} \exp( -2 \beta N_\sigma)  \propto \sum_{\sigma \in \{\pm 1\}^{\V,+}} \exp( -2 \beta N_\sigma).
\end{align*}
Where $\propto$ stands for "equal up to a constant factor". On the other hand, we get the \textit{high-temperature expansion} for $\prt_{\dual{\beta}}$ using the following computation:
\begin{align*}
\prt_{\dual{\beta}} &= \sum_{\tau \in \{\pm 1\}^{\mathcal{F}}} \exp\left( \dual{\beta} \sum_{\{f,f'\} \in \dual{\E}} \tau(f) \tau(f')\right)\\
&= \sum_{\tau \in \{\pm 1\}^{\mathcal{F}}}  \prod_{\{f,f'\} \in \dual{\E}} \left( \cosh(\dual{\beta}) + \sinh( \tau(f) \tau(f') \dual{\beta})\right)\\
&= \cosh(\dual{\beta})^{\abs{\dual{\E}}} \sum_{\tau \in \{\pm 1\}^{\mathcal{F}}}  \prod_{\{f,f'\} \in \dual{\E}} \left( 1 + \tau(f) \tau(f') \tanh(\dual{\beta})\right) \\
&= \cosh(\dual{\beta})^{\abs{\dual{\E}}} \sum_{\tau \in \{\pm 1\}^{\mathcal{F}}} \sum_{\varepsilon \subset \dual{\E}} \left( \prod_{\{f,f'\} \in \varepsilon}  \tau(f) \tau(f') \tanh(\dual{\beta}) \right)\\
&= 2^{\abs{\V}}\cosh(\dual{\beta})^{\abs{\dual{\E}}} \sum_{\varepsilon \in \mathcal{C}(\dual{\E})}  \tanh(\dual{\beta})^{\abs{\varepsilon}} \propto  \sum_{\varepsilon \in \mathcal{C}(\dual{\E})}  \tanh(\dual{\beta})^{\abs{\varepsilon}}
\end{align*}
where, in the last equality, we exchanged the order of summation and used that, after summing over $\tau \in \{\pm 1\}^{\mathcal{F}},$  the product over edges in $\varepsilon$ is non-zero if and only if $\varepsilon \in \mathcal{C}(\dual{\E}).$  Using the correspondence described above and the equation (\ref{equKW}) we indeed obtain
$$ \sum_{\varepsilon \in \mathcal{C}(\dual{\E})}  \tanh(\dual{\beta})^{\abs{\varepsilon}} = \sum_{\sigma \in \{\pm 1\}^{\V,+}}  \tanh(\beta)^{N_\sigma} = \sum_{\sigma \in \{\pm 1\}^{\V,+}}  \exp( -2 \beta N_\sigma).$$

It is possible to make an analogy between the KW duality and the Fourier transform. Such an analogy appears in a recent work of Freed and Teleman \cite{ft19}, where they also make a connection between the KW duality and the electromagnetic duality of finite gauge theories in dimension 3, an indication that KW duality is a manifestation of a more general duality.

%The partition function of a $\dual{\beta}$-Ising on the dual lattice $\dual{K \setminus \partial K}$ can be interpreted using the high-temperature expansion as a partition function $\prt^{\mathcal{C}(\dual{\E})}$ for the set of even edge configurations $\mathcal{C}(\dual{\E}),$  attributing a probability proportional to $\tanh(\dual{\beta})^{\abs{\varepsilon}}$ to each configuration $\varepsilon \in \mathcal{C}(\dual{\E}).$ This a priori rather different model can be interpreted as the "Fourier transform" of the $\dual{\beta}$-Ising, where $\tanh(\dual{\beta})^{\abs{\varepsilon}}$ is the "Fourier coefficients" associated to $\varepsilon.$ The correspondence described above allows to see each element in $\mathcal{C}(\dual{\E})$ as a spin configuration on $K$ with $+$ boundary conditions. The value of $\dual{\beta}$ given by (\ref{equKW}) implies that the partition function $\prt^{\mathcal{C}(\dual{\E})}$ has the form of a partition function of a $\beta$-Ising on $K$ with $+$ boundary conditions. This duality in fact also holds the other way around using the appropriate edge configurations, so that the $\dual{\beta}$-Ising on $\dual{K \setminus \partial K}$ can also be seen as the Fourier transform of the $\beta$-Ising on $K$ with $+$ boundary conditions.In other words, the latter discussion suggests that a particular coupling between the primal Ising model on $K$ with $+$ boundary conditions and a dual Ising model on $\dual{K \setminus \partial K}$ allows to see the dual Ising model as a type of Fourier transform of the initial Ising model.
The correspondence between even configurations of dual edges $\varepsilon \in \mathcal{C}(\dual{\E})$ and a spin configurations $\sigma \in \{\pm 1\}^{\V,+}$ implicitly uses that the domain $\Omega$ is simply connected and that $\R^2$ is an orientable manifold. In Section \ref{ssectionsimplyconccc}, we introduce a notion of simply connected cell complex and Basak's work \cite{bas10} is based on a notion of orientability for cell complexes. It would therefore be interesting to understand whether these are sufficient assumptions in order to adapt the steps in this derivation to a certain type of cell complexes from our framework. Doing so would also require to introduce a notion of fields on a cell complex and we believe that this would involve a generalization of the notion of connexion we introduce in Chapter \ref{secreconst}.

One can see for example in \cite{weg14} that the KW duality holds on more general lattices although it does only directly lead to a self-duality argument allowing the computation of the critical temperature as in the case of the square lattice. Nevertheless, the triangular lattice for example also allows to derive the critical temperature using a so-called decimation process, an operation seemingly related to the notions of reductions and collapses we introduce in Chapter \ref{seccausalstruct}.

Fixing a simply connected bounded planar domain $\Omega \subset \R^2,$ one can study the properties of the Ising model on increasingly finer discretizations of $\Omega,$ using for example $\delta \Z^2$ for some $\delta >0$ small instead of $\Z^2.$ The scaling limit of the model is then defined as the limit of models defined on discretization of $\Omega$ using a lattice spacing $\delta$ tending to $0.$ The scaling limit of the critical Ising model, with $\beta = \beta_c := \frac{1}{T_c},$ has been extensively studied. Many of its "observables", defined as the expectation value of certain functions of the spin variables, have been shown to admit a well-defined scaling limit. The limiting value of these observables has the remarkable property of being invariant under conformal transformations. 
%\compl[cite references?]
Rigorous proofs of some of these results have only been given recently, with for example the work of Smirnov \cite{smi06} on fermionic observables and later works for other fields of the model like the spin correlation functions (\cite{chi15} 
\cite{hkv17}
) and for the interfaces of the sign clusters (
%\cite{cdhks14}, \cite{bdh16}, 
\cite{bh19}). 
Setting the parameter $\beta$ at its critical value therefore allows to derive a continuous field theory having a large class of symmetries. Moreover, one observes the phenomenon of universality: different discretizations of a given domain lead to the same continuum limit (although the specific value of the critical parameter depends on the choice of discretization).
%Isoradial graphs are examples of more general type of discretizations used for planar domains (see e.g. \cite{bdt12})
Systematizing these ideas is related to the next topic of this introduction: statistical field theories.

%KW duality as a symmetry which produces a well defined scaling limit. Seems to single out a sub-category of theories, CFT, admitting a scaling symmetry and for which the geometry is not summed over.

%Many other similar contexts : Kitaev's anyons, EM duality in lattice systems

\subsection{Statistical field theory and Causal Dynamical Triangulations (CDT)} \label{seccdt}

The statistical theory of fields uses the framework of phase transitions to establish connections between the properties of field theories describing some physical system at microscopic scales, often defined as lattice models, and the properties of continuous field theories describing the system at macroscopic scales. The most common lattice models studied in this context are originally based on physical models from statistical mechanics. Among them, the Ising model, related to the phenomenon of ferromagnetism, has played a prominent role in the development of the techniques used in statistical field theory.

An important such technique that first emerged as a regularization procedure in Quantum Field Theory (QFT) is the notion of renormalization. In statistical mechanics, this technique consists in defining the notion of "coarse graining" of a lattice model to obtain a rescaled model defined on a larger lattice. Applying this process repeatedly leads to a "renormalization flow" within a certain space of lattice models described by a given set of parameters, a picture elaborated in large part by the theoretical physicist Kenneth Wilson. This flow is defined by differential equations describing how these parameters change when the model is rescaled. Using these equations, one can also try to run the flow backwards, which amounts to look for models that describe the theory at smaller scales (or lower temperature).
By running a renormalization flow one can obtain fixed points corresponding to parameters of a system invariant under renormalization. These can be of three kinds: an infrared fixed point for high scales, an ultraviolet fixed point for low scales or a critical fixed point corresponding to a phase transition. 
%A fixed point in particular produces theories invariant under change of scales, i.e. CFT, or with an even larger group of symmetries like the invariance under diffeomorphisms of a TQFT. 
These fixed points are universal among a certain class of models, meaning that all models in this class flow towards the same fixed point. The idea of looking for a non-trivial UV fixed point for a theory describing gravity originates in the work of Steven Weinberg and is at the root of an approach to Quantum Gravity called "asymptotic safety" (see e.g. \cite{eic19} for a review on this topic). 
Our framework provides the definitions of the notions of reductions (defining subdivisions of cell complexes) and collapses (duals of reductions in a certain sense), given in the first section of Chapter \ref{seccausalstruct}. Although the relation with fields is yet to be determined, these constitute good candidates as formalizations of the concepts of coarse graining and refinement in the context of cell complexes.

The study of phase transitions in statistical field theory provided a new understanding of renormalization techniques in QFT and, reciprocally, the notion of Feynman diagrams and path integral from QFT  influenced the interpretation and study of statistical models as quantum systems. 
%The treatment in \cite{ft19} is a good example of this double interpretation for the case of the Ising model. 
To illustrate this, let us consider a discretization of a metric space defining a cell complex $K$ with two boundary components $J$ and $L,$ obtained for example from a lattice embedded in a Riemannian manifold. QFT motivates us to look at a lattice model on $K$ as a functional integral in "imaginary time" (i.e. using a Wick rotation $t \mapsto i \tau$) over the space $\Phi(K)$ of field configurations taking values on certain classes of cells . Formally, one could write the propagator of such a theory as
$$ G( \phi_J; \phi_L) = \int_{\substack{\phi \in \Phi(K), \\ ~\phi|_J = \phi_J, ~ \phi|_L = \phi_L}} \mathscr{D} \phi~ \exp( - S[\phi] ),$$ 
where the integration is used symbolically and would usually reduce to a sum. For example, in the case of the Ising model, a field configuration would correspond to a spin configuration $\phi = \sigma \in \{\pm 1\}^{\kz},$ the action would be $S[\sigma] = \beta H(\sigma)$ and the integral $\int \mathscr{D} \phi$ would simply be a sum over $\sigma \in \{\pm 1\}^{\kz}.$ One could also imagine that $K$ is defined independently of any background manifold and that a field $\phi \in \Phi(K)$ induces a metric on $K.$
With the intention of quantizing gravity, this raises the following questions: can one extend the previous integral to a functional integral over some set of discrete geometries with field configurations thereon? More explicitly, such a propagator would have the form
\begin{equation} \label{equCDT}
G( J, \phi_J; L, \phi_L) = \int_{K \in \mathsf{C}(J,L)} \int_{\substack{\phi \in \Phi(K), \\ ~\phi|_J = \phi_J, ~ \phi|_L = \phi_L}}  \mathscr{D}  K ~ \mathscr{D} \phi~ \exp( - S[\phi, K] ), \tag{FI}
\end{equation}
where $\mathsf{C}(J,L)$ is a set of cell complexes with boundary components $J$ and $L.$ Moreover, would such a model also undergo a phase transition of the kind needed to define a continuum limit?

This leads us to the topic of Dynamical Triangulations (DT) and the statistical field theory approach of quantum geometry as presented in \cite{adj97}, which in particular answers positively the latter questions in dimensions 1 and 2. This approach investigates the critical behaviour of models that implement the integration over a space of triangulations made out of flat simplices of fixed edge length $a >0,$ defining a metric space. Such a model would correspond to setting $\mathsf{C}(J,L)$ in (\ref{equCDT}) to be some set of simplicial complexes with boundary $J$ and $L$ and defining $\Phi(K)$ as the set only containing one field associating the value $a$ to each edge of $K.$ Such an integration can be seen as a way to regularize the integral over the space of metrics on a Riemannian manifold. The weight assigned to a triangulation is typically defined using the so-called discrete Regge action, the Euclidean discrete equivalent to the Einstein-Hilbert action obtained in the influential paper from Tullio Regge \cite{reg61}.
One attractive idea of this approach is that the existence of a continuum limit would provide a measure on the space of continuous metrics on some manifold and the universality of this limit would imply the independence of the choice of regularization. This program leads to many analytic results in dimension 2, and one technique used to derive exact solutions of this case is to obtain triangulations of surfaces as Feynman expansion of matrix models (see for example \cite{fgzj95} for a review on this topic). 

First introduced in dimension 2 in \cite{al98}, the model of Causal Dynamical Triangulations (CDT) in dimension $d$ restricts the functional integral (\ref{equCDT}) to triangulations having the topology of a cylinder $S^{(d-1)} \times \left[ 0,1 \right]$ decomposed into "causal slices" having the same topology. A causal slice has the property that all its vertices are on the boundary and none of its simplices of maximal dimension has all its vertices in the same boundary component. A precise definition of causal slice can be found in \cite{dj15} and physical descriptions of CDT can be found in the reviews \cite{agjl12}, \cite{lol20}.  
%Most of the analytical results on CDT concerns the 2-dimensional case, but numerical analysis showed that 4-dimensional CDT exhibit different phases, with one of these phases where  causal triangulations are typically similar to Euclidean de Sitter universe.
%Schematically, CDT represent a path integral on triangulated geometries interpolating between two boundary components and built by successively stacking slices. 
In the last chapter of this thesis we introduce a generalization of the notion of causal slices, simply called a "slice". We show in Section \ref{ssecslice} that such a slice is characterized by some sequence of the reduction and collapse maps mentioned above. We also introduce a generalization of the notion of causal triangulation that we call causal cobordism. Our reason for choosing the name "cobordism" comes from a similar notion appearing in the formulation of TQFT, presented in the next section. 

%TQFT, introduced in the next part, can be seen as an other ("real time" version of) the functional integral (\ref{equCDT}), with the requirement that the result only depends on the topology of the underlying manifold.

\subsection{Topological Quantum Field Theory (TQFT) and state sum models} \label{sectqft}

Topological quantum field theories have the particular property that the observables associated to the fields of the theory do not depend upon any metric structure on their underlying space. As presented in \cite{bbrt91}, the study of TQFT was initiated on the one hand by mathematical works related to the classification of manifolds using topological invariants, notably the work of Albert Schwarz on the Ray-Singer torsion and, on the other hand, by the search for super-symmetric extensions of the Standard Model by Edward Witten. The first instances of TQFT where therefore relatively complicated and quite diverse in nature. This motivated the introduction of a comparatively simple axiomatic definition of TQFT given by Atiyah in \cite{ati88}. These axioms have been inspired by a similar setup given for CFT by Graeme Segal and are designed to allow a general mathematical treatment of these theories. We refer the reader to \cite{bar95} for a more complete discussion of TQFT in the context of Quantum Gravity. 

One can conveniently formulate the definition of a TQFT in the language of category theory, using the notion of symmetric monoidal category. Other more refined notions of categories have also been developed to encompass the different examples of state sum models we mention later on. We chose to simply use the formulation used by Atiyah here in order to not be carried away introducing categorical notions we will not need for the purpose of this work. The discussion of the duality on the type of combinatorial cobordisms presented in the core of this thesis nonetheless calls for an appropriate formulation in categorical terms and this could be a topic for further investigations.  In what follows we will only refer to the standard notions of category and functor between categories, which we assume familiar to the reader.

Let $\Lambda$ be $\R$ or $\C$ and let $d$ be an integer larger or equal to $1.$ A TQFT in dimension $d$ over $\Lambda$ is defined by associating a finite dimensional vector space $Z(\Sigma)$ over $\Lambda$ to each oriented closed smooth $(d-1)$-dimensional manifold $\Sigma,$ setting $Z(\emptyset) = \Lambda$ and an element $Z(M) \in Z(\partial M)$ to each oriented smooth $d$-dimensional manifold $M$ in such a way that the three following axioms are satisfied:
\begin{enumerate}[label=\textbf{(\arabic*}), topsep=2pt, parsep=2pt, itemsep=1pt, leftmargin=1.5cm, rightmargin = 2cm]
\item \label{axiom1tqft} each orientation preserving diffeomorphism $f: \Sigma \dans \Sigma'$ induces an isomorphism $Z(f): Z(\Sigma) \dans Z(\Sigma')$ which satisfies the composition rule 
$$Z(g \circ f) = Z(g) \circ Z(f)$$
for any other orientation preserving diffeomorphism $g: \Sigma' \dans \Sigma'';$
\item $Z(\Sigma^*) = Z(\Sigma)^*$ where $\Sigma^*$ is $\Sigma$ with opposite orientation and $Z(\Sigma)^*$ is the vector space dual to $Z(\Sigma);$
\item \label{axiom3tqft}for every disjoint oriented closed smooth $(d-1)$-dimensional manifolds $\Sigma_1, \Sigma_2,$
$$Z(\Sigma_1 \sqcup \Sigma_2) = Z(\Sigma_1) \otimes Z(\Sigma_2).$$
\end{enumerate}

Theses axioms in particular implies that for each oriented smooth $d$-dimensional manifold $M$ such that $\partial M = \Sigma_1 \sqcup \Sigma_0^*$ (where $\Sigma_0,\Sigma_1$ can be empty), we have
$$ Z(M) \in Z(\Sigma_0)^* \otimes Z(\Sigma_1) = \Hom(Z(\Sigma_0),Z(\Sigma_1)).$$ 
We can therefore view $M$ as a cobordism from $\Sigma_0$ to $\Sigma_1$ which induces a linear transformation $Z(M) : Z(\Sigma_0) \dans Z(\Sigma_1).$ We give another example of cobordism in Figure \ref{figtqft} and one can find simple examples of TQFT in \cite{cr18}. Note that in particular if $M$ has empty boundary then $$Z(M) \in Z(\emptyset) = \Lambda,$$ i.e. $Z(M)$ corresponds to a topological invariant of the manifold $M.$

\begin{figure}[!h]
\centering
\includegraphics[scale=0.35]{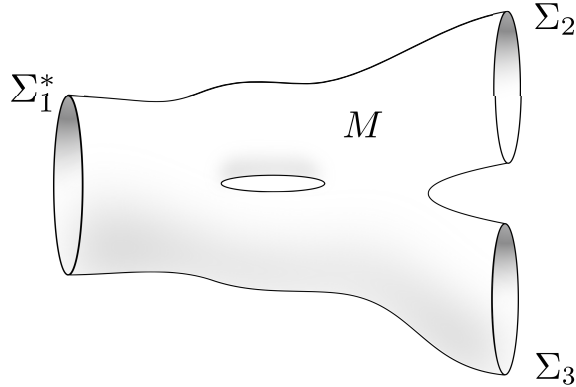}
\caption{\label{figtqft} A cobordism $M$ of a two-dimensional TQFT with ingoing component $\Sigma_1^*$ and outgoing components $\Sigma_2$ and $\Sigma_3.$ The axioms of TQFT imply that $Z(M)$ induces a linear transformation $Z(\Sigma_1) \dans Z(\Sigma_2) \otimes Z(\Sigma_3).$}
\end{figure}

In particular, a TQFT provides a functor $Z$ from a geometrical category, whose objects are oriented closed smooth $(d-1)$-dimensional manifolds and morphisms are $d$-dimensional cobordisms, to the algebraic category of vector spaces. Such a functor well captures the fundamental physical concepts of quantum states and processes: a state of a quantum system corresponds to a background space in which one incorporates physical data and the morphism associated to a cobordism acts on a state as a linear transformation on these data. Axiom \ref{axiom3tqft} implies the usual rule from quantum mechanics, stating that the state space associated to a quantum system composed of two isolated sub-systems corresponds to the tensor product of the state spaces associated to the sub-systems. In this sense, TQFT provides a way to consider oriented smooth manifolds as "natural carriers" of quantum fields and proposing a similar definition using purely combinatorial objects instead of manifolds is a central motivation of the last two chapters of this thesis. 

%TQFT is an attractive framework to look for approaches to a theory of Quantum Gravity, as such field theories admit a form of background independence. Not only their topological nature implies that one does not need to fix a metric on the underlying space to study the observables of the theory, but defining a TQFT does not require to fix a background manifold. 

In practice, many TQFT are defined using discretizations of manifolds. 
%In spirit, this corresponds to considering propagators of the form (\ref{equCDT}) where every cell complex is usually simplicial and discretizes an oriented smooth manifold, and the space $\mathsf{C}(J,L)$ has only one element per topological equivalence class of cobordisms interpolating between the boundary states $J$ and $L.$ 
Loosely speaking, starting from a functional integral as in (\ref{equCDT}), the Dynamical Triangulations approach fixes a field and sums over a certain set of discrete spaces. In the case of a discretized TQFT, one would fix a discretization of a manifold while summing over a certain set of field configurations with the requirement that taking any other triangulation of the same manifold does not affect the result of the integral.
More generally one can define the notion of state sum model, as given in \cite{bar01}. Such a model is defined on a triangulation of a manifold of given topology by assigning a set of states to each simplex with compatibility conditions whenever simplices overlap, where each set of states can be seen as the possible field configurations on a given simplex, usually defined using group theoretic data. These states allow to associate a complex number called weights to each simplex and the partition function defining the model is computed by summing the product of all weights over all possible state configurations, with no a priori assumption on the invariance of the resulting value under changes of triangulation. 
%As a consequence, a state sum model whose partition function depends on the choice of triangulation would in particular not be invariant under changes of coordinates.

Numerous state sum models have been defined in relation to Quantum Gravity. One of the first examples is the Ponzano-Regge model (\cite{pr69}, \cite{bn09}), which is defined by assigning irreducible representation of the group $SU(2)$ to the edges of a 3-dimensional triangulation and with weights defined using intertwiners (i.e. equivariant linear maps between representations) called Racah's coefficients or 6-j symbols. The partition function of the latter model diverges in many cases and a way to obtain a regularization is to use representations of the quantum group $SU_q(2),$ where $q$ is a root of unity, instead of $SU(2),$ and this corresponds to the Tuarev-Viro (TV) model (\cite{tv92}). This last model in fact defines an example of a TQFT: the partition function of the TV model associated to a triangulation of a manifold produces an invariant of the manifold. An analogue of the TV model in four dimensions is the Crane-Yetter (CY) model (\cite{cy93}, \cite{cky97}). It is also an example of TQFT and it provides invariants of 4-manifolds. In the original version of the CY model, these invariants are actually functions of the Euler characteristic and signature of the manifold, but it has recently been generalized to give a larger class of invariants (\cite{bb18}).
State sum models defining a quantized discrete version of General Relativity using representations of a quantum deformation of the rotation group $SO(4)$ include the Barrett-Crane (BC) model (\cite{bc98}) for the case of Euclidean signature and its Lorientzian counterpart (\cite{bc00}). These models were called relativistic spin networks, as they can be seen as 4-dimensional versions of the spin-networks originally introduced by Roger Penrose in \cite{pen71}. The BC model is in particular not topological, as it corresponds to a (deformed) discretized version of a continuous field theory called $BF$-theory, which is topological, but with an additional local constraint sometimes referred to as the "simplicity constraint". This constitutes an attractive aspect of the BC model since there is a relation between $BF$-theory with the simplicity constraint and gravity described e.g. in \cite{fs12}.
%\footnote{As a side note, we will also make the following observation about a possible interpretation of the simplicity constraint. For the case of $SO(4),$ the irreducible representations correspond to couples $(j,k)$ where $j$ and $k$ designate an irreducible $SU(2)$ representation. In the (non-deformed) BC model, these couples label the triangles (i.e. the 2-cells) of the triangulation and the simplicity constraint corresponds to imposing $j = k.$ The mere fact that in 4 dimensions the dual of a 2-cell is also a 2-cell indicates that an hypothetical dual to the BC model defined using our framework would also have labels of the form $(j,k)$ assigned to its 2-cells. It is then reasonable to imagine that the order of the $SU(2)$ representations are reversed by taking the dual, and in this case the simplicity constraint, much like the KW constrain (\ref{equKW}), could be regarded as a self-duality condition.} 

It is interesting to note the many occurrences, in the analysis of state sum models, of a type of complex dual to the underlying triangulation as well as dualities seen as generalized Fourier transformations between the observables of certain models. This suggests that a number of the models mentioned above would be good candidates to consider as particular examples of the field theory one could obtain using the discrete notion of cobordism introduced in this work. Among the occurrences of Fourier-type dualities in state sum models, we can mention the relation between the TV or CY models and the relativistic spin networks described in \cite{bar03} and \cite{bfmg07}.

We can also point out the recent construction of a "cellular TQFT" in \cite{cmr20}, corresponding to a discretized $BF$-theory defined on a discrete notion of cobordism. These discrete spaces and their duals are constructed using well-known notions from piecewise linear topology we mention in Section \ref{secdiscgeom}. Even though the fields are therefore defined on cobordisms constructed from a manifold, it would also be interesting to understand if their approach can be adapted to constitute an example of fields defined on the notion of cobordism defined here.

%An important aspect of the research on state sum models and TQFT concerns so called extended TQFT, where observables are defined using the notion of defects. We believe that such a notion is also natural to consider our framework and that it is probably related to the notion of "cobordism with corners" we briefly discuss in the endnotes.
%\footnote{We can also note the occurrences of self-dual constructions in TQFT with defects in the work from Bianca Dittrich \cite{dit17}, which provides a notion of vacuum with interesting homogeneity properties. This model is based on the TV model }

The spin foam models we discuss in the next section and developed in the context of LQG constitute another type of state sum models closely related to the BC models and defined on a dual structure.

\subsection{Loop Quantum Gravity (LQG) and spin foam models} \label{seclqg}

The Loop Quantum Gravity approach to Quantum Gravity is originally based on the Hamiltonian formulation of General Relativity given by Abhay Ashtekar, using a connection and its conjugate momentum as fundamental variables. As presented in \cite{arr15}, this so-called canonical quantization approach dates back to the work of physicists like Dirac, Bergmann, Arnowitt, Desner and Misner and sets General Relativity into the framework of gauge theory. During the 1990s many physicists followed the lead of Lee Smolin, Ted Jacobson and Carlo Rovelli and went on to develop the first formulation of LQG from Ashtekar's variables, using loops - or holonomies - to represent the type of solutions obtained in this approach. Their formulation included the idea that loops are a way to implement a "loop transform" acting on the connection variables like a Fourier transform \cite{ai92}. The theory then evolved in two directions: the first direction stayed closer to the algebraic treatment of the original canonical formulation and the second opted for a "covariant approach" based on the path integral formulation related to the functional integration we described above. Although the results on reconstruction presented in Section \ref{ssecreconstthm} of this thesis suggest the interpretation that a certain class of simple cell complexes are characterized by the loops defined by their set of 2-cells, we believe that the spin foam models at the basis of the second "covariant" approach of LQG might be more directly related to our work. Features of spin foam models, as structures based on the dual of a triangulation, where in fact the first inspiration of our choice to include a Poincaré-like duality at the basis of our framework, as an attempt to bridge ideas of CDT and LQG.

As mentioned previously, spin foams are a type of state sum models and they are also designed to implement a functional integral over 4-dimensional metrics interpolating between 3-dimensional boundary states called spin networks. Their definition is based on a 2-dimensional discrete space called "2-complex" and assigns "colourings" to each cell corresponding to an irreducible representation of some rotation group for each 2-dimensional face as well as an intertwiner for each vertex and each edge. Such 2-complexes are similar to but more general than the combinatorial cell complexes introduced here, as for example the underlying graph can have edges containing only one vertex or two vertices can be contained in more than one edge. The 2-complex of a spin foam is often taken to be the 2-skeleton (i.e. the cells of rank lower or equal to 2) of the dual of a triangulation of a 4-dimensional manifold. In this case spin foams are essentially defined as a non-deformed BC model, with more general choices of intertwiners.
%Boundary states also carry the name of spin networks, and corresponds to a graph in a 3-dimensional manifold - often constructed from a triangulation as its dual graph - together with "colourings", i.e. labelling of edges - or "link" - with half-integers corresponding to unitary irreducible representations of $SU(2),$ and each .  

There exist many variants of spin foam models corresponding to different choices of intertwiners as well as different ways to enforce the simplicity constraint mentioned for the case of the BC model (see e.g. \cite{per13} for a review on the topic).
A spin foam model allows to compute "amplitudes" - an analogue of the propagator (\ref{equCDT}) - assigning a number to boundary spin network states with fixed colouring that are seen as quantum states of 3-dimensional geometry. Similarly to the partition functions of state sum models, such an amplitude is typically defined by integrating over the colourings on a 2-complex interpolating between the boundary states and "tracing over" the representations assigned to each face of the 2-complex by contracting indices according to the pairings induced by the intertwiners. 

Spin foam models in four dimensions are generally "non-topological", implying that the related amplitudes in principle depend on the choice of a triangulation and this constitutes a major difficulty in making sense of these models as a non-perturbative theory of Quantum Gravity. Looking for a so-called "background independent" formulation, some spin foam models are obtained in \cite{dfkr00} as a Feynman expansion of models called "Group Field Theories". Such a theory can be interpreted as a formal generalization of the matrix model techniques used to derive analytic results for 2 dimensional Dynamical Triangulations. At the end of Chapter \ref{secreconst},  we briefly explain how the results on reconstruction of cell complexes could be interesting if seen as diagrams in such Feynman expansions.

The term spin foam is due to John Baez, who introduced these objects using the language of category theory in \cite{bae98}. His definition includes a notion of dual spin foam, but the duality acts only on the colourings of the model and does not affect the structure of the underlying 2-complex. Also, the cells of such a spin foam are defined using the notion of piecewise linear cell complex given in \cite{rs72} and the composition of spin foams is formulated using some embedding in Euclidean space. In this sense, our formulation is "even more combinatorial" as the composition of cobordisms defined in Section \ref{sseccompcob} does not involve any embedding and is based on our definition of cells merely as sets of vertices.

In \cite{rs12} and \cite{rov11}, the idea of using a setup similar to Atiyah's axioms of TQFT for a specific spin foam model is proposed and motivates well the aim of this work. However, the boundary of a spin foam is defined in this context as the graph composed of edges contained in only one 2-cell (or face) and therefore differs fundamentally from the definition of the boundary of cell complexes used in our framework. In particular it does not allow for boundary edges to be included in two faces of the 2-complex, a feature we think might represent a too restrictive and somewhat arbitrary constraint on the discrete geometry in dimensions higher than two. Moreover, the composition of spin foams defined in these references implies that the graph corresponding to the intermediate boundary object is removed in the process, a feature that seems to represent a drawback in order to obtain a formulation using categories.

A notion of spin foam models with causal structure was discussed by Gupta in \cite{gup00}, which is presented as a generalization of the concept of causal evolution for spin networks given by Markopoulou and Smolin in \cite{ms97} (with an emphasis on the duality in \cite{mark97}). These models also motivate well the notion of causal cobordism introduced in Section \ref{sseccatcausalcob}, which can be seen as discrete spaces on which one could in principle define a notion of causal spin foams having a more general structure.

%As we saw, the use of dualized structures is bound to the fundamental canonical approach of LQG, and therefore also frequent in the research related to spin foams.   

%Duality used in \cite{dd18}. Also refers to electromagnetic duality in the context of lattice gauge theory. But they use the notion of Heegaard surfaces to define the dual, much complicated. Our formalism allows to define the dual and glueing/composition without the use of a background differentiable manifold and the notion of flatness, in any dimension.

In \cite{ds14}, Dittrich and Steinhaus discuss the notions of coarse graining and refinement of spin foams and establish a relation with their time evolution. These notions are used to argue for the existence of a continuum limit of certain spin foam models. These very notions happen to be the central ingredients of the main result of Chapter \ref{seccausalstruct} which characterizes the structure of causal cobordisms using some sequences of reductions and collapses.

As already mentioned, spin foams are usually based on a 2-complex constructed as the dual of a triangulation and corresponding to (the 2-skeleton of) the notion of simple cell complexes used here. The geometry associated to a spin foam is obtained by attributing volumes and areas to the simplices of the triangulations using the data from the colourings. These constructions are generalized in  \cite{bds11} and \cite{bby18} using the notion of polyhedra to define more general types of cells and this brings us to the last topic of this introduction.

\subsection{Combinatorial cell complexes in the context of discrete geometry} \label{secdiscgeom}

Making connections with existing works in discrete geometry or other areas of mathematics has not been our main concern, nevertheless we encountered a number of notions that are related to the framework we develop here and influenced our choice of terminology. As a first general remark, we point out that the notion of "face", particularly in the context of polytopes, has a more general meaning than the definition of face we use in our work, in which it designates only the sub-cells of rank one lower than a given cell (as opposed to any sub-cell).

%Our definition of cell complex implies that if $K$ is a pure cell complex of rank $R$ to which one append the empty set $\emptyset$ with rank $-1$ and some maximal element $\{K\}$ containing every other cells and ofrank $R+1,$ then one obtain an example of graded lattice as defined in order theory.

As mentioned before, the starting point for our choice of a suitable discrete space to study a Poincaré-like duality is the definition of combinatorial cell complexes given by Basak in \cite{bas10}. This definition indeed has the advantage of not invoking any embedding in some manifold and this reflects well an idea underlying our work and also rooted in the physics of LQG: a vertex of a cell complex is often not best interpreted as a point in a manifold. In physical contexts related to space-time, it can be more illuminating to see a vertex as representing a region of space or "quanta of space" of a given dimension, having in particular a non-zero volume. One reason for adopting this interpretation in our framework comes from the principle that both the primal and dual theories should in principle give an equivalent description of the physics it models. As a consequence, a vertex would then be "coupled" to a maximal cell of the cell complex, representing a space-time region having a quantized volume. It would only be after defining some continuum limit, where the volume of each maximal cell tends to zero, that these two entities become equivalent and can be seen as points in a continuous space which, in "good cases", would have the structure of a manifold.
Basak's combinatorial cell complexes are studied from the point of view of finite topological spaces and we refer the reader to the introduction of his paper for more background and references on this topic.

A different notion of combinatorial cell complex is given by Aschbacher in \cite{asc96}. This definition seems to offer a more general treatment which appears in some respects too complicated for the purposes of our work.

Our framework being purely combinatorial and not based on the assumption of homogeneity, we did not knowingly make use of techniques or results from other discrete frameworks designed to tackle the problem of quantum space-time and based on the notion of manifold. We can for example mention the framework of "Quantum triangulations" given in \cite{cm17}.

Cell complexes are common to the field of piecewise linear topology, formulated using subspaces of the Euclidean space. In \cite{rs72} for example, cells are defined as subsets of $\R^n$ and in the definition of "ball complex" (on p.27), two cells can intersect on more than one cell. On this same page, and also in more details in \cite{bry02}, the notion of dual complex is defined. As explained in Section \ref{ssecbdiv}, this definition of dual differs from ours, although when studied as topological spaces embedded in $\R^n,$ the two definitions are essentially equivalent.

%Historical work if Alexander, Newman (Analysis in situs)

A definition of complexes that had an impact on our work is the notion of polyhedral complexes. Polytopes are one of the oldest topics  mathematics (see e.g. \cite{gru03} or \cite{zi95} for classic references on this topic) and although their definition is usually equivalent to a convex hull of a finite set of points in $\R^n,$ the combinatorial theory of polytopes represents an important part of the related research. So much so that there exists a theory of "abstract polytopes" based on a purely combinatorial definition using a poset with a rank function (see \cite{ms02} finor a reference on this topic). Abstract polyhedral complexes share features with the notion of combinatorial cell complexes and we discuss their relation in Section \ref{secpolytope}. In this section, we also present a definition of shellable cell complexes, which is directly borrowed from the theory of polytopes and essentially means that such cell complexes can be built by successively adding maximal cells while preserving the topology of a Euclidean ball, except possibly for the last step where one can obtain a sphere-like object. This notion was introduced by the mathematician Ludwig Schläfli to compute the Euler characteristic of polytopes and it is also used here as a sufficient condition in order to derive a version of the Euler-Poincaré formula for cell complexes.

%Schläfli (shellability ?, higher dimensional polytopes?)

%definition of polytope/polyhedra in Euclidean space: we will sometimes refer to such polytopes as "embedded polytopes" to emphasize that they are thought as subspace of an euclidean space.

%def of complexes of polytopes (i.e. a collection of polytopes in $\R^n$ such that the intersection of two polytopes is a face of each, see e.g. \cite{zi95} p. \compl)

The problem of polytope reconstruction, understood as determining the face lattice of a polytope from its $k$-skeleton (the faces of dimension $k$ and lower), gave rise to a number of results and we refer the reader to the review \cite{bay18} from Bayer for details on this topic. Notably, a well-known result (\cite{bm87}, \cite{kal88}) states that the face lattice of a simple polytope of arbitrary dimension is determined by its graph (i.e. its 1-skeleton). The proof relies on shellability of polytopes and so does another result using the 2-skeleton that applies to "almost simple" polytopes presented in the review from Bayer. The result presented here concerns the 2-skeleton of simple cell complexes of arbitrary dimension with no assumption on shellability and no equivalent result related to complexes of polytopes is known to the author. Our proof uses a notion of connection, which defines a mapping between the edges incident on neighbouring vertices. We recently found out about the notion of "$k$-frames" and "$k$-systems" defined in \cite{jkk02} that seem closely related. The theme of reconstruction is also present in the context of piecewise linear manifolds, where "n-graphs", are used to encode the face lattice of a triangulation (see e.g. \cite{lm85}). This notion was motivated by the Poincaré Conjecture and used for example in the work of Vince (\cite{vin85}). This last work inspired us to define the notion of 2-shellability for cell complexes and prove at the end of Section \ref{secpolytope} that, in the case of simple cell complexes, this notion is equivalent to shellability of the dual triangulation.

Diagrams of spaces, described for example in \cite{zz93}, allow to encode the construction of a topological space from basic building blocs using a functor. This can be considered as a way to dualize a given topological space, but in this case the dual has then a rather different structure than the original space. A central feature of our framework is that the dual of a cell complex is also a cell complex and our focus is on classes of cell complexes that are invariant under duality. This naturally incited us not to use the assumption of homogeneity in our framework, as in general this property is not preserved under taking the dual, as exposed in the last example of Section \ref{ssecexamdcob}. Diagrams of spaces are related to the notion of model category and it would be interesting to understand if this notion can be applied to the category we obtain at the end of this work in Section \ref{sseccatcausalcob}. 

Finally, we introduce in Section \ref{ssecredncol} a novel formulation of the notion of subdivision. A subdivision of a cell complex $K$ is defined as the domain of type of map called reduction with image $K.$ The duality map defined on closed cell complexes leads to a dual notion of a reduction map that we call collapse. The definitions of reductions and collapses are given using five conditions that are formulated as combinatorial relations on the cells. A subdivision in our context is more general than the usual examples of subdivisions given for example in \cite{bry02} as in particular it can introduce changes in the topology of the underlying cell complexes.

\subsection{Structure and main results of the thesis} \label{secresults}

The first chapter is dedicated to the introduction of general notions from our framework which are used throughout the thesis. We embark on the study of the duality operation by proving that it defines an involution on closed cell complexes (cc). This result exposes a structure of proof used repeatedly to introduce more complicated instances of cell complexes constructed from the cells of a given cc: we start with a lemma establishing an equivalence between the inclusion relations of cells in the newly defined cc and the inclusion relations of certain cells from the cc given for the construction and the main proof consists in checking the axioms for the new cc using the lemma. In the later part of Chapter \ref{secgennot}, we discuss the notion of barycentric subdivision and adapt a result given by Bayer (\cite{bay88}) to show that a cc is determined by its barycentric subdivision up to duality. We close the chapter by briefly introducing the topology and geometric realization of a cc, although these do not play a central role in what is developed in the next chapters.

The central task of Chapter \ref{secreconst} is to establish under which conditions one can reconstruct a simple complex from its 2-skeleton, a question raised for example by the 2-complexes used in spin foam models. Our result, given by the Corollary \ref{correconst1} of Section \ref{ssecreconstthm}, involves assumptions called evenness, fullness, monodromy-freedom and simple-connectedness.  All these notions are therefore introduced in the preceding sections of this second chapter with the exception of Section \ref{secpolytope}, where we make connection with the notions of polytopes and shellability.

Our main result of Chapter \ref{secdualcob} in particular generalizes the notion of dual to a class of cell complexes with boundary. In order to preserve a framework where the duality acts as a map on a set of discrete spaces, we are lead to introduce the notion of relative cc and to interpret the duality as acting on a discrete notion of cobordisms. We start by introducing midsections, before giving the definitions of specific types of relative cc used to define cobordisms and formulate the main results in Section \ref{ssectiondcob}. The last two sections of this chapter are devoted to the discussion of certain examples of cobordisms and their duals using illustrations and give the proof of the main theorem on the duality of cobordisms.

The last chapter of this thesis starts by introducing the notions of reduction and collapse. An important result about these maps is that they induce partial orders on the set of pure cell complexes. We close this first section by discussing the notions of relative reduction and relative collapse designed for the study of cobordisms.
Section \ref{ssectrans} introduces transitions, a poset constructed from the cells in a boundary component of a cobordism. This poset is used as a way to formulate a notion of uniformity for cobordisms. Uniformity constitutes an important assumption underlying the definition of the category of causal cobordism given at the end of this chapter.
The notions of compatibility and orthogonality between two poset homomorphisms are introduced in Section \ref{sseccompnorthhom} as a condition on the case where these poset homomorphisms have a common domain and codomain, respectively.
In Section \ref{ssecslice}, we introduce slice sequences, composed of reductions and collapses using the compatibility and orthogonality conditions. The core of this section is devoted to a particular type of cobordism, called slice, and to its characterization using slice sequences. This constitutes the main result of this fourth chapter that we refer to as the "Correspondence Theorem". 
We then introduce connecting sequences in Section \ref{sseccompcob}, the dual notion of slice sequences, on which the definition of the composition of cobordisms is based, formulated as a union of cell complexes. 
%A first instance of category defined from cobordisms is given in Section \ref{sseccatexcolcob}, and we explain why this does not seem to fit our purposes.
We close this chapter by introducing the category of causal cobordisms in Section \ref{sseccatcausalcob} constructed from connecting sequences and slice sequences. This category exhibits two other dualities arising from the causal structure analogous to a type of "time reversal" and "parity transformation". With such an interpretation, the duality map we studied in this work can be compared to a "charge conjugation".

\newpage
\vspace{3cm}

\section{General definitions and terminology}\label{secgennot}

\vspace{2cm}

In this first chapter, we lay down the basic definitions used in this work and present a number of simple facts related to this framework. We begin by giving some notations and a list of symbols appearing in this work and then turn to the introduction of cell complexes in Section \ref{ssecccc}. The Section \ref{ssecgraphncc} focuses on the terminology related to graphs and paths that are used particularly in Chapter \ref{secreconst}. In Section \ref{secdualncc}, we define the dual of closed cell complexes, show that the duality map is an involution on closed cell complexes and introduce the notions of strongly connected and non-pinching cell complexes. In Section \ref{ssecbdiv}, we introduce a combinatorial notion of barycentric subdivision and adapt a result from M. Bayer. This result states that the barycentric subdivision characterises the cell complex it subdivides up to duality, which has some implications for the discussion of the results on reconstruction of Chapter \ref{secreconst}. We use the barycentric subdivision in the last Section \ref{sectopoconsid} where we briefly describe how to associate a topology to a cell complex, although topological considerations do not play a central role here.

\subsection{Notation}

In these notes $\N = \{0,1,2, \dots\}$ are the natural numbers and $\N^* = \{1,2,\dots \}.$ We usually use the letters $i,j,k,n,r$ to denote indices in $\N.$ We use $:=$ to denote an equality where the left hand side is defined by the right hand side and we use italic words whenever we \textit{define a notion for the first time.}

Let $(S, \leq)$ be a \textit{partially ordered set} or \textit{poset}, which means that $\leq$ is a reflexive antisymmetric transitive binary relation on the set $S.$ For two elements $x,y \in S,$ we say that $x$ is \textit{below} $y$ or $y$ is \textit{above} $x,$ written $x<y$ if $x \leq y$ and $x \neq y.$ An element in $S$ is \textit{maximal} if it is not below any other element and \textit{minimal} if it is not above any other element. A \textit{totally ordered subset} of a poset $S$ is a sub-poset $T$ of $S$ such that for all $x,y \in T,$ either $x \leq y$ or $y \leq x.$ In this case we say that $\leq$ is a total or linear ordering of $T.$ We denote by $\{x_0 < \dots < x_k \}$ the finite totally ordered set with elements $x_i$ such that $x_i < x_j$ if and only if $i<j.$ If $T \subset S,$ is a subset of $S,$ an \textit{upper bound of $T$} is an element $a \in S$ such that $b \leq a$ for all $b \in T$ and a \textit{lower bound of $T$} is an element $a \in S$ such that $a \leq b$ for all $b \in T.$ If there is a unique upper bound of $T$ which is below all upper bounds of $T$ then this element is called the \textit{least upper bound of $T$} and is denoted by $\bigvee T$. Similarly if there is a unique lower bound of $T$ which is above all lower bounds of $T$ then this element is called the \textit{greatest lower bound of $T$} and is denoted by $\bigwedge T.$ We use the notation $\abs{S}$ for the number of elements in a set $S.$  The set of subsets of $S$ is denoted by $\pws{S}$ and the set of (unordered) 
2-element subsets in $S$ by $\ts{S} \subset \pws{S}.$ \textbf{The disjoint union of the sets $S$ and $S'$ is denoted by $S \sqcup S'$ and is used purely to emphasize that $S$ and $S'$ are disjoint.} The set $S \setminus S'$ denotes the difference between $S$ and $S',$ i.e. the elements in $S$ that are not in $S'.$
A \textit{map} $f: A \dans B$ is a function with domain equal to $A$ and the composition of two maps $f:A \dans B,$ $g: B \dans C$ is denoted by $g \circ f.$ We denote by $f^{-1}$ the inverse image under $f$ seen as a function from $\pws{B}$ to $\pws{A}.$ If $f^{-1}(\{x\}) = \{y\}$ we will often simply write $f^{-1}(x) = y.$

\newpage

\paragraph{List of symbols with [the number or location of their definition]}

\begin{itemize}[label= $\cdot$,topsep=2pt, parsep=2pt, itemsep=1pt]
\item $\face{x},$ $\cface{x}:$ set of faces and cofaces of a cell $x$ [ \ref{defccc}]
\item $\kr, ~~ K^{(r)}:$ set of $r$-cells and $r$-skeleton of $K$ [ Section \ref{ssecccc}]
\item $J \leq K:$ sub-cell complex [ Section \ref{ssecccc}]
\item $K \cap A:$ restriction of $K$ to $A \subset \kz$ [ Section \ref{ssecccc}]
\item $K \cong J:$ isomorphic cc [ Section \ref{ssecccc}]
\item $\V,$ $\E, \Ne_v:$ set of resp. vertices, edges and neighbours of a vertex $v$ in a graph $\G$ [ Section \ref{ssecgraphncc}]
\item $p: v \pathto w:$ path from $v$ to $w$ in a graph [ Section \ref{ssecgraphncc}]
%\item $[p]:$ homotopy class of a path [\ref{defhomclasspath}]
\item $l_C(v):$ loop associated with a connected component of a 2-cell [\ref{defiloops2c}]
\item $[v_1 \dots v_k]:$ representation of a connected component of a 2-cell [\ref{lemrepr2cell}]
\item $\partial K:$ boundary of a non-singular cc $K$ [ Section \ref{ssecccc}]
\item $\dual{A}:$ dual of the set $A$ [\ref{defdualset}]
\item $\dg{K}:$ dual graph [\ref{defdualgraph}]
\item $\bdiv{K}:$ barycentric subdivision [\ref{defbdiv}]
\item $A(x),$ $B(x):$ set of cells respectively above and below $x$ [ Section \ref{ssecbdiv}]
\item $\E_v^x, \E_v^\A:$ set of edges incident to $v$ contained in one or several cells [ intro Chap. \ref{secreconst}]
\item $\ksec{x}{y}:$ section in a c.c between the cells $x$ and $y$ [ Section \ref{secpolytope}]
%\item $\chi:$ Euler characteristic [\ref{defeulerchar}]
\item $\nabla:$ connexion [\ref{nabla}], $\nabla_p$ defined on a path [\ref{definablapaths}]
\item $\comp{p}{C}:$ complementary path in a connected component of a 2-cell [ Section \ref{ssectionsimplyconccc}]
\item $m_C^p,$ $m_{e,v} \in \M:$ moves of paths and set of moves [\ref{defmovespath}]
\item $\A_B$ ($\A_B^C$): collar of a set of cells $\A$ into another set $B$ (restricted to the cells in $C$) [\ref{defcollar}]
\item $\nsc$ ($\nsc^R$): set of non-singular non-pinching cc (of rank $R$) [ Section \ref{susecmidsection}]
\item $\mlc$ ($\mlc^R$): set of closed non-pinching cc (of rank $R$) [ Section \ref{susecmidsection}]
\item $(K - J) \in \cob:$ set of cobordisms[ Section \ref{ssectiondcob}]
\item $\bual{A}:$ generalized dual set [\ref{defbdualset}]  
\item $J \red K:$ reduction [\ref{defreduction}]
\item $J \col K:$ collapse [\ref{defcollapse}]
\item $K_\sphi, K^\sphi:$ for $K \geq J \red_\sphi L$ or $J \red_\sphi L \leq K$ [\ref{defikphi}]
\item $J(K):$ for $J \leq K,$ transition [\ref{defreducedbdry}]
\item $K_J(A), K_J[A]:$ set of cells in $K_J$ containing $A \subset J^{[0]}$ or intersecting $J^{[0]}$ on $A$ [ Section \ref{ssectrans}]
\item $\sphi_K^J, \cphi_K^J:$ reduction and collapse associated to a transition [\ref{proptransit}]
\item $j \cpa l:$ compatible cc-homomorphisms [\ref{defcoorthom}]
\item $j\refl l,$ $j \perp l:$ reflective and orthogonal cc-homomorphisms [\ref{defrefnorth}]
\item $(\phi_J \sqcup \phi_L)(M):$ augmented poset associated to $\phi_J$ and $\phi_L$ [\ref{lemsqcup}]
\item $\nsc_s:$ set of slices in $\nsc$ [above \ref{defsliceseq}]
\item $ J \red J' \loc M \col L' \der L:$ slice sequence [\ref{defsliceseq}]
\item $ M \col J \der I \red L \loc M:$ connecting sequence [\ref{defconnseq}]
\item $K \ccup{\sphi_J ~ \sphi_L} K':$ union of cc in $\nsc$ [\ref{defunioncc}]
%\item $\catmlc^d, \catmlc^d_T:$ category from exactly collared cobordisms [\ref{sseccatexcolcob}]
\item $M \col \der J:$ semi-sequence [Section \ref{sseccatcausalcob}]
\end{itemize}

\subsection{A combinatorial definition of cell complexes (cc)} \label{ssecccc}
 
In this section, we start by introducing the definition of combinatorial cell complexes. Throughout this work, \textbf{we will drop the term "combinatorial"} and simply call such objects cell complexes, abbreviated cc, as we focus on a purely combinatorial treatment. 
%Some basic questions related to topology are nevertheless addressed at the end of the chapter. 
In \cite{bas10} Basak gives a definition of combinatorial cell complexes using a set of four axioms on a poset with a rank function. We formulate an equivalent definition that makes it clear from the beginning that every cell is defined as a subset of a set of vertices and that the intersection of two cells is a cell or is empty. We give simple illustrations related to the axioms in Figure \ref{axiomtwopic}. The reader is invited to look at the illustrations given later in this chapter or in the Figures \ref{figdualbitetrahedra}, \ref{figdualcylinder} and \ref{figtorus} of Section \ref{ssecexamdcob} to find more elaborated examples of cell complexes. 
%These examples satisfying conditions that allow to see them as discretization of manifold with the exception of one illustration in Figure \ref{figtorus}

% A function  $\rk: S \dans \N$ satisfying $\rk(x) < \rk(y)$ whenever $x \leq y$ is said to be a \textit{rank function}. In this case we say that $x$ is a \textit{facet} of $y$ of \textit{co-dimension} $\rk(y) - \rk(x).$ A facet of co-dimension 1 is called a \textit{face}, and $\face{x}$ will denote the set of faces of $x.$ Similarly $\cface{x}:= \{y \in S ~|~ x \leq y, ~ \rk(y)= \rk(x) +1\}$ is the set of \textit{co-faces} of $x.$ More generally, we define $U(x) := \{ y \in S ~|~ x \leq y  \}$ and $\clo(x):=\{y \in S ~|~ y \leq x\}.$

\begin{defi}[Cell complex (cc)] \label{defccc}
Let $S$ be a finite set and let $K \subset \pws{S} \setminus \{\emptyset\} $ together with a \textit{rank function} \mbox{$\rk = \rk_K: K \rightarrow \N$} such that $S = \rk^{-1}(\{0\}).$ We define $ \kr := \rk^{-1}(\{r\})$ for $r \in \N.$ We call an element of $K$  a \textit{cell} and an element of $\kr$ an \textit{r-cell.} The set of \textit{faces} of $x \in K$ is defined by
$$ \face{x} := \{ y \in K ~|~ y \subset x, ~\rk(y) = \rk(x) -1\}$$
and the set of \textit{co-faces} of $x$ is defined by 
$$ \cface{x} :=\{ y \in K ~|~ x \subset y, ~ \rk(y) = \rk(x) +1\}.$$
The data $(K, \rk)$  defines a \textit{cell complex,} or cc for short, if it satisfies the following four \textit{axioms}:
\begin{enumerate}[label=\textbf{\roman*}),topsep=1pt, parsep=2pt, itemsep=1pt]
\item \label{cccrank}$\rk$ is strictly compatible with the partial order $\subset:$ for all $x,y \in K,$ if $x \subsetneq y$ then $\rk(x) < \rk(y),$
\item \label{cccinter} $K \cup \{ \emptyset \}$ is closed under intersection: for all $x,y \in K,$ $x \cap y \in K \cup \{\emptyset\}.$
\item \label{cccenough} there is no gap in the value of the rank: for all $x,y \in K,$ if $x \subsetneq y$ then $x$ is a face of a cell contained in $y,$
\item \label{cccdiamond} $K$ satisfies the \textit{diamond property}: for all $x,y \in K,$ if $x \subset y$ and $\rk(x) = \rk(y) - 2$ then there are exactly two cells $z_1,z_2$ such that
$$ \cface{x} \cap \face{y} =\{z_1,z_2\}.$$ 
\end{enumerate}

\begin{figure}[!h]
\centering
\includegraphics[scale=0.4]{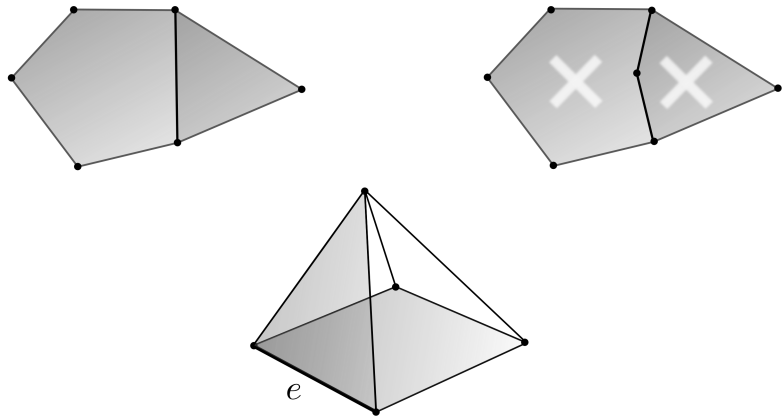}
\caption{\label{axiomtwopic} The top-left picture gives an example of a 2-cc. The top-right pictures is a non-example of a 2-cc as Axiom \ref{cccrank} and \ref{cccinter} imply for example that the intersection of two 2-cells is a cell of rank 1 or lower and hence it cannot contain two (or more) edges. The lower picture consists in one 3-cell defining an example of 3-cc. The diamond property \ref{cccdiamond} implies that the edge $e$ is included in exactly two 2-cells, depicted in gray.}
\end{figure}

\begin{comment}
\begin{figure}[!h]
\centering
\includegraphics[scale=0.33]{diamond.eps}
\caption{\label{diamondpic} This picture consists in one 3-cell defining an example of 3-cc. The diamond property \ref{cccdiamond} implies that the edge $e$ is included in exactly two 2-cells, which are depicted in gray.}
\end{figure}
\end{comment}
\end{defi}

Let $K$ be a cc and let $R \in \N$ be the minimal integer such that $K = \bigcup_{r=0}^{R}\kr.$ $R$ is called the \textit{rank} or \textit{dimension} of $K,$ noted $\Rk(K),$ and we call $K$ a $R$-cc. For $0 \leq k \leq R,$ we define the \textit{$k$-skeleton} of $K$ to be 
$$K^{(k)} := \bigcup_{r=0}^{k} \kr,$$
which is clearly a $k$-cc. Two cells $x$ and $y$ are called \textit{incident} if either $x \subset y$ or $y \subset x.$ We call $0$-cells \textit{vertices} and $1$-cells \textit{edges}. We will sometimes call a 2-cell a \textit{triangle} if it contains 3 vertices or a \textit{square} if it contains 4 vertices. Similary, it will be convenient to use other common names such as a \textit{tetrahedron} for a 3-cell containing four vertices when describing illustrations.

Let $A \subset \kz$ then the poset $$K \cap A := \{x \in K ~|~ x \subset A \},$$ called the \textit{restriction} of $K$ to $A,$ is a cc. For example, every cell $x$ in $K$ defines an underlying cc, namely $K \cap x.$

A cc $K$ is said to be  \textit{pure} or \textit{equidimensional} if every maximal cell in the poset $(K, \subset )$ has rank $R$ and in this case we define the \textit{boundary} of $K$, noted $\partial K$ to be the cells in $K^{(R-1)}$ contained in a $(R-1)$-dimensional cell, also called a \textit{sub-maximal cell}, having only one maximal cell above itself. In other words $$\partial K := \bigcup_{\substack{y \in K^{[R-1]} \\ \abs{\cface{y}} = 1}} K \cap y.$$

The only axiom from Definition \ref{defccc} which is not self-explanatory is the diamond property \ref{cccdiamond} (named after the shape $\diamond$ formed by the face lattice obtained from the four cells involved in the statement). As we will see later in Lemma \ref{rembdrycc}, this axiom can be understood as a condition that in particular implies $\partial(\partial K) = \emptyset.$
Similarly to the terminology used for polytopes, we will sometimes call \textit{facets} the maximal faces of a cc. We say that a cc $J$ is a \textit{sub-cc} of $K$ and write $J \leq K$ if $J \subset K$ and $\rk_J = \rk_K|_J.$

A map $\phi : J \dans K$ is a \textit{cc-homomorphism} if it is a poset homomorphism, i.e. $\phi(x) \subset \phi(y)$ if $x \subset y,$ such that for all $y \in \phi(J),$ there exists $x \in J^{[\rk_K(y)]}$ such that $\phi(x) = y.$ A \textit{cc-isomorphism} $\phi$ is a bijective cc-homomorphism such that $\phi^{-1}$ is a cc-homomorphism. Note that the inverse of bijective poset isomorphism is not always a poset homomorphism, a counter-example is given in Figure \ref{figexccchom}. Two cc $J$ and $K$ are \textit{isomorphic}, written $J \cong K,$ if there exists a cc-isomorphism between them. A cc-isomorphism $\phi : J \dans J$ is called an \textit{cc-automorphism} of $J.$

\begin{figure}[!h]
\centering
\includegraphics[scale=0.47]{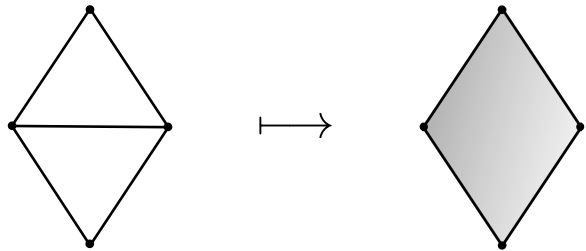}
\caption{\label{figexccchom} Example of a bijective poset homomorphism from a cc of rank 1 to a cc of rank 2 such that the inverse is not a poset homomorphism. The outer edges from the 1-cc are "mapped to themselves" and the middle edge is mapped to the square.}
\end{figure}

\begin{lem}\label{lemisom}
Let $K,J$ be cc and $\phi: J \dans K$ be a poset homomorphism. The following three statements are equivalent.
\begin{enumerate} [label=\arabic*),topsep=2pt, parsep=2pt, itemsep=1pt]
\item \label{lemiso1}$\phi$ is a cc-isomorphism.
\item \label{lemiso2}$\phi$ is bijective and $\phi(x) \in K^{[\rk_J(x)]}$ for all $x \in J.$
\item \label{lemiso3}$\phi$ is bijective and $\phi^{-1}$ is a poset homomorphism.
\end{enumerate}
\begin{proof}
\ref{lemiso1} implies \ref{lemiso2} since we have $(\phi^{-1})^{-1}(x) = \{ \phi(x) \}$ and $\phi^{-1}$ is a cc-homomorphism, therefore it implies that $\rk_K(\phi(x)) = \rk_J(x).$ The reverse implication is clear.

\ref{lemiso3} implies \ref{lemiso2} by the following argument. If $x_r \in J$ is a $r$-cell then by Axioms \ref{cccrank} and \ref{cccenough} for $J,$ there exists $x_0, x_1 \dots, x_{r-1}$ in $J$ such that $\rk_J(x_i) = i$ and $$x_0 \subsetneq x_1 \subsetneq \dots \subsetneq x_{r-1} \subsetneq x_r.$$
By Axiom \ref{cccrank} for $K$ and since $\phi$ is a bijective poset homomorphism, the previous inclusions imply
$$ \phi(x_0) \subsetneq \phi(x_1) \subsetneq \dots \subsetneq \phi(x_{r-1}) \subsetneq \phi(x_r)$$
and therefore $\rk_K(\phi(x_i)) \geq i + 1$ for all $0 \leq i \leq r.$ 

We have that $\rk_K(\phi(x_0)) = 0$ by the following argument. If $\rk_K(\phi(x_0)) > 0$ then by Axiom \ref{cccenough} for $K,$ there exists $v \in \kz$ such that $v \subsetneq \phi(x_0).$ Since $\phi^{-1}$ is a poset homomorphism, the latter implies that $\phi^{-1}(v) \subset \{x_0\},$ hence $\phi(x_0) = v$ and this contradicts $\rk_K(\phi(x_0)) > 0.$ 
Moreover, if we assume that there exists $1 \leq i \leq r$ such that $\rk_K(\phi(x_i)) > i + 1$ then, by Axiom \ref{cccenough} for $K,$ there exists $y \in K$ such that $\phi(x_{i-1}) \subsetneq y \subsetneq \phi(x_i).$ Since $\phi^{-1}$ is a bijective poset homomorphism, this implies $x_i \subsetneq \phi^{-1}(y) \subsetneq x_{i+1},$ a contradiction with the Axiom \ref{cccrank} of $J.$
Therefore $\rk_K(\phi(x_i)) + 1 = \rk_K(\phi(x_{i+1}))$ for all $1 \leq i \leq r$ and $\rk_K(\phi(x_r)) = r$ and this proves the point \ref{lemiso2} as we assumed that $x_r \in J^{[r]}.$

\ref{lemiso1} clearly implies \ref{lemiso3} and this concludes the proof.
\end{proof}
\end{lem}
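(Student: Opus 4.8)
The plan is to prove the cycle $\ref{lemiso1}\Rightarrow\ref{lemiso2}$, $\ref{lemiso1}\Rightarrow\ref{lemiso3}$, $\ref{lemiso3}\Rightarrow\ref{lemiso2}$ and $\ref{lemiso2}\Rightarrow\ref{lemiso1}$, which together give the three equivalences. The recurring tool will be the \emph{maximal chain} attached to a cell: for $x_r$ of rank $r$, Axioms \ref{cccrank} and \ref{cccenough} produce cells $x_0\subsetneq x_1\subsetneq\dots\subsetneq x_r$ with $x_i\in\face{x_{i+1}}$ and $\rk(x_i)=i$ (take a face of $x_r$ of rank $r-1$, then a face of that, down to a vertex). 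Combined with the fact that a bijective poset homomorphism sends strict inclusions to strict inclusions, this is what controls the behaviour of ranks along $\phi$.

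First I would do $\ref{lemiso1}\Rightarrow\ref{lemiso2}$ by unwinding the definition of cc-homomorphism for $\phi^{-1}$. Since $\phi$ is a cc-isomorphism, $\phi^{-1}\colon K\to J$ is a cc-homomorphism whose image is $J$; its defining surjectivity-with-rank condition, applied at an arbitrary $x\in J$, yields a cell of $K$ of rank $\rk_J(x)$ sent to $x$ by $\phi^{-1}$, and that cell must be $\phi(x)$. Hence $\rk_K(\phi(x))=\rk_J(x)$, which with bijectivity is exactly \ref{lemiso2}. The implication $\ref{lemiso1}\Rightarrow\ref{lemiso3}$ needs no work, since a cc-isomorphism is by definition bijective with $\phi^{-1}$ a cc-homomorphism, hence in particular a poset homomorphism.

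The heart of the argument is $\ref{lemiso3}\Rightarrow\ref{lemiso2}$. Fix $x_r\in J^{[r]}$ and a maximal chain $x_0\subsetneq\dots\subsetneq x_r$ as above. Applying $\phi$ gives $\phi(x_0)\subsetneq\dots\subsetneq\phi(x_r)$, strict by injectivity of $\phi$, so Axiom \ref{cccrank} in $K$ makes $\rk_K(\phi(x_i))$ strictly increasing, whence $\rk_K(\phi(x_i))\ge i$. To pin the ranks down exactly I would use that $\phi^{-1}$ is a bijective poset homomorphism twice. First, $\rk_K(\phi(x_0))=0$: otherwise Axiom \ref{cccenough} in $K$ gives a vertex $v\subsetneq\phi(x_0)$, and then $\phi^{-1}(v)\subsetneq x_0$, contradicting that the rank-$0$ cell $x_0$ is minimal in $J$ by Axiom \ref{cccrank}. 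Second, there is no rank jump: if $\rk_K(\phi(x_i))\ge\rk_K(\phi(x_{i-1}))+2$ for some $i$, then Axiom \ref{cccenough} in $K$ gives a cell $y$ with $\phi(x_{i-1})\subsetneq y\subsetneq\phi(x_i)$, and then $x_{i-1}\subsetneq\phi^{-1}(y)\subsetneq x_i$, which is impossible because $x_{i-1}\in\face{x_i}$ leaves no cell of intermediate rank by Axiom \ref{cccrank}. Hence $\rk_K(\phi(x_i))=i$ for all $i$; taking $i=r$ gives $\rk_K(\phi(x_r))=r=\rk_J(x_r)$, and since $x_r$ was arbitrary this is \ref{lemiso2}.

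Finally $\ref{lemiso2}\Rightarrow\ref{lemiso1}$ is a routine check once rank is known to be preserved: $\phi$ is a poset homomorphism by hypothesis and satisfies the rank condition in the definition of cc-homomorphism because for $y\in K$ the cell $\phi^{-1}(y)$ lies in $J^{[\rk_K(y)]}$; the analogous condition for $\phi^{-1}$ is immediate, and that $\phi^{-1}$ is a poset homomorphism is checked directly, reducing an inclusion $a\subseteq b$ in $K$ to covering pairs $a\in\face{b}$ via Axiom \ref{cccenough} and reading off $\phi^{-1}(a)\subsetneq\phi^{-1}(b)$ from the chain picture once more. I expect $\ref{lemiso3}\Rightarrow\ref{lemiso2}$ to be the only delicate point: one must simultaneously track strictness of inclusions (injectivity) and the absence of rank jumps (Axioms \ref{cccrank} and \ref{cccenough} in both $J$ and $K$), and — in the last implication — resist assuming the face structure is transported bijectively before it has been established; the example in Figure \ref{figexccchom} is a useful reminder that bijectivity of a poset homomorphism alone is not enough.
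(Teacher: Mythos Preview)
Your treatment of \ref{lemiso3}$\Rightarrow$\ref{lemiso2} is essentially identical to the paper's: build a maximal chain below $x_r$ in $J$, push it forward via $\phi$, then use Axiom~\ref{cccenough} together with the poset-homomorphism property of $\phi^{-1}$ to pin down first $\rk_K(\phi(x_0))=0$ and then the absence of rank jumps. Your indexing is in fact cleaner than the paper's, which contains some off-by-one slips in this passage. The directions \ref{lemiso1}$\Rightarrow$\ref{lemiso2} and \ref{lemiso1}$\Rightarrow$\ref{lemiso3} also match.

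The gap is in \ref{lemiso2}$\Rightarrow$\ref{lemiso1}, and it is real. Your sketch asserts that ``$\phi^{-1}$ is a poset homomorphism \dots\ from the chain picture once more'', but the chain picture only pushes maximal chains from $J$ to $K$; it does not pull an arbitrary covering pair $a\in\face{b}$ in $K$ back to an inclusion in $J$, since nothing forces $\phi^{-1}(a)$ to lie on any chain below $\phi^{-1}(b)$. In fact the implication fails as stated. Take $J$ with vertices $1,2,3$ and a single edge $\{1,2\}$, and $K$ with vertices $a,b,c$ and a single rank-$1$ cell $\{a,b,c\}$; both satisfy Axioms \ref{cccrank}--\ref{cccdiamond}. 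The map $\phi$ sending $i$ to the $i$-th letter and $\{1,2\}\mapsto\{a,b,c\}$ is a rank-preserving bijective poset homomorphism, so \ref{lemiso2} holds; but $\{c\}\subset\{a,b,c\}$ while $\phi^{-1}(\{c\})=\{3\}\not\subset\{1,2\}=\phi^{-1}(\{a,b,c\})$, so \ref{lemiso3} and hence \ref{lemiso1} fail. The paper dismisses this direction with ``the reverse implication is clear'', so the same gap is present there. What is actually established is \ref{lemiso1}$\Leftrightarrow$\ref{lemiso3} (the missing arrow \ref{lemiso3}$\Rightarrow$\ref{lemiso1} follows by applying \ref{lemiso3}$\Rightarrow$\ref{lemiso2} to both $\phi$ and $\phi^{-1}$ and reading off that each is a cc-homomorphism), with \ref{lemiso2} only a consequence of either.
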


The following remark is an observation from \cite{bas10} which we will refer to later in Chapter \ref{secdualcob}, where we will in effect find a way around this fact.

\begin{rema}[Remark 3.1 in \cite{bas10}]\label{remcobdry}
An edge of a cc $K$ has more than one vertex as the definition of cc asserts that all singletons are vertices (i.e. have rank $0$), so the \textit{co-boundary} of $K,$ defined as the set of cells containing a 1-cell made of only one vertex, is empty.
\end{rema}

The next remark is an observation that allows to better compare our definition of cc with the definition in \cite{bas10} and which will be used later in Chapter \ref{seccausalstruct} when discussing the notion of collapse.

\begin{rema}\label{remdefccc}
The definition of cc in \cite{bas10} demands the additional condition that if $\rk(x) \geq 1$ then $ x = \bigvee \face{x},$ where the least upper bound $\bigvee$ is taken in the poset $(K, \subset).$ This condition is always satisfied in our context since cells are defined as subsets of a vertex set. Indeed if $y$ is an upper bound of $\face{x}$ then in particular for every vertex $v \in x$ there is a face  $x' \in \face{x}$ such that $v \in x' \subset y$ and therefore $x \subset y.$ So $x$ always fulfils the criteria to be the least upper bound of $\face{x}.$ As a consequence, a cell of non-zero rank is always equal to the union of its faces. In summary, if $\rk(x) \geq 1$ then
\begin{equation}\label{relcface}
x = \bigvee \face{x} = \bigcup_{y \in \face{x}} y.
\end{equation} 
\end{rema}

The definition of cc allows edges to have more than two vertices. Adding the requirement that edges have exactly two vertices constitutes the notions of graph-based cc. Adding what will later be understood as the dual of the latter requirement gives the notion of non-singular cc introduced next. 

\begin{defi}[Graph-based, non-singular and closed cc]
A cc $K$ of rank $R$ is said to be \textit{graph-based} if every edge of $K$ has (no more than) two vertices. $K$ is \textit{non-singular} if it is graph-based, pure and \textit{non-branching,} i.e. every $(R-1)$-cell is a face of at-most two $R$-cells. 

$K$ is said to be \textit{closed} if it is non-singular and has empty boundary, i.e. every $(R-1)$-cell is contained in exactly two $R$-cells.
\end{defi}

For example, as a consequence of the diamond property \ref{cccdiamond}, if $K$ is a non-singular cc then for all $x \in K$ the cc $\partial(K \cap x)$ is closed. In \cite{bas10}, what we call a closed cc is called a \textit{manifold-like} cc, but we prefer the former term as it better encapsulates the idea of having an empty boundary and is not related to the notion of manifold. If $K$ contains a sub-maximal cell $y$ such that $\abs{\cface{y}} > 2$ then we say that $K$ is \textit{branching} and that $y$ is a \textit{($K$-)branching}.

The following technical lemma is used repeatedly in this work and, in a sense that will become clear in the next section, shows when the property dual to (\ref{relcface}) is verified.

\begin{lem}\label{lemcface}
Let $x$ be a cell of a cc $K$ such that $\abs{\cface{x}} \geq 2$ then $$x = \bigwedge \cface{x} = \bigcap_{y \in \cface{x}} y.$$
\begin{proof}
Since $\abs{\cface{x}} \geq 2$ Axiom \ref{cccinter} implies $x' :=  \bigcap_{y \in \cface{x}} y \in K.$ If $y'$ is a lower bound of $\cface{x}$ then $y' \subset y, ~\forall y \in \cface{x}$, hence $y' \subset x'.$ The latter implies $$x' = \bigcap_{y \in \cface{x}} y = \bigwedge \cface{x}.$$ We also have that $x \subset x'$ since $x$ is a lower bound of $\cface{x}.$ If by contradiction $x' \neq x$ then by \ref{cccrank} $\rk(\bigwedge \cface{x}) > \rk(x),$ hence $\rk(\bigwedge \cface{x}) = \rk(x) + 1.$ But if $y \in \cface{x}$ then $ \bigwedge \cface{x} \subset y$ and $\rk(y) = \rk( \bigwedge \cface{x})$ so that $y = \bigwedge \cface{x}$ by \ref{cccrank}, which implies $\abs{\cface{x}} = 1$ and contradicts $\abs{\cface{x}} \geq 2.$
\end{proof}
\end{lem}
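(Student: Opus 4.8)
The plan is to establish the two claimed identities $x = \bigcap_{y \in \cface{x}} y$ and $x = \bigwedge \cface{x}$ essentially in tandem, since the second is really a poset-theoretic reformulation of the first once one knows the intersection lies in $K$. First I would invoke Axiom \ref{cccinter}: since $\abs{\cface{x}} \geq 2$, the set $\cface{x}$ is non-empty, and iterating the closure under intersection shows $x' := \bigcap_{y \in \cface{x}} y$ is either a cell of $K$ or empty; but $x$ itself is contained in every coface of $x$, so $x \subset x'$ and in particular $x'$ is non-empty, hence $x' \in K$.

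Next I would verify that $x'$ is in fact the greatest lower bound of $\cface{x}$ in $(K,\subset)$. Any lower bound $y'$ of $\cface{x}$ satisfies $y' \subset y$ for all $y \in \cface{x}$, hence $y' \subset \bigcap_{y \in \cface{x}} y = x'$; and $x'$ is itself a lower bound of $\cface{x}$ by construction. So $x' = \bigwedge \cface{x}$, and the whole statement reduces to showing $x' = x$.

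For this last equality I would argue by contradiction: suppose $x \subsetneq x'$. By Axiom \ref{cccrank}, $\rk(x) < \rk(x')$. Now pick any $y \in \cface{x}$; then $\rk(y) = \rk(x) + 1$, and $x \subsetneq x' \subset y$ forces $\rk(x) < \rk(x') \leq \rk(y) = \rk(x) + 1$, hence $\rk(x') = \rk(x) + 1 = \rk(y)$. But then $x' \subset y$ with equal ranks gives $x' = y$ by Axiom \ref{cccrank} (a strict inclusion would contradict rank-compatibility). Since $y \in \cface{x}$ was arbitrary, this means every element of $\cface{x}$ equals $x'$, i.e. $\abs{\cface{x}} = 1$, contradicting the hypothesis $\abs{\cface{x}} \geq 2$. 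Therefore $x' = x$, which completes the proof.

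There is no real obstacle here; the only point requiring a little care is making sure the intersection $x'$ is genuinely in $K$ rather than empty before one starts treating it as a cell and applying the rank axiom, and keeping straight that Axiom \ref{cccrank} rules out both a strict inclusion with equal ranks and is what converts "$x \subsetneq x'$" into the rank inequality. The argument is the exact dual of Remark \ref{remdefccc}'s verification that $x = \bigvee \face{x}$, with intersections replacing unions and cofaces replacing faces, so one could alternatively present it by appealing to that symmetry, but writing it out directly as above is cleaner.
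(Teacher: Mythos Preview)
Your proof is correct and follows essentially the same approach as the paper's: both establish $x' := \bigcap_{y \in \cface{x}} y \in K$ via Axiom~\ref{cccinter}, identify $x'$ as $\bigwedge \cface{x}$, and then derive a contradiction from $x \subsetneq x'$ by squeezing the rank of $x'$ between $\rk(x)$ and $\rk(x)+1$ and concluding that all cofaces coincide. Your version is slightly more explicit about why $x'$ is non-empty and about the rank squeeze, but the argument is the same.
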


Introducing a greatest lower bound $\bigwedge$  raises the question of whether it exists at all. This next remark therefore helps us clarify this point in Chapter \ref{seccausalstruct}, when discussing the notions of reduction and collapse.

\begin{rema} \label{remcface}
A consequence of the proof of Lemma \ref{lemcface} is that $\bigwedge \cface{x}$ exists for all cell $x$ in a cc $K$ and 
\begin{align*}
\bigwedge \cface{x} = \begin{cases} x, &\text{if } \abs{\cface{x}} \geq 2,\\
y \in \cface{x}, &\text{if } \cface{x} = \{y\},\\
\emptyset &\text{if } x \text{ is maximal.}
\end{cases}
\end{align*}
\end{rema}

\subsection{Simplicial complexes, graphs, paths and local cc} \label{ssecgraphncc}

In this section, we first introduce simplicial complexes and graphs. We then introduce a number of definitions and some terminology related to paths that in particular allows us to define the notion of connected, cell-connected and local cc. We  end by a lemma we use to introduce the notion of representation of a cell. These definitions are used especially in the next Chapter \ref{secreconst}, whose main goal is to prove our results on reconstruction.

The definition of simplicial complex presented now, sometimes called "abstract simplicial complex" in other contexts, constitutes a particular case of cc and encompasses the notion of graph introduced afterwards.

\begin{defi}[Simplicial complex]
Let $S$ be a finite set. A collection $K \subset \pws{S}$ is called a \textit{simplicial complex} if it satisfies the two following properties:
\begin{enumerate}[label=\textbf{\alph*}),topsep=2pt, parsep=2pt, itemsep=1pt]
\item If $ x \subset y \in K$ then $x \in K,$ \label{simpcplxa}
\item every one element subset (vertex) of $S$ is in $K.$ \label{simpcplxb}
\end{enumerate}
An element $x \in K$ is called a \textit{simplex} and we define the rank function of $K$ by
$$ \rk_K(x) = \abs{x} - 1.$$
\end{defi}

Conditions \ref{simpcplxa} and \ref{simpcplxb} characterizing simpicial complexes clearly imply the four axioms of cell complexes, hence our choice of definition: it is sufficient to check these two conditions to establish whether a given collection of subsets of a set is a simplicial complex. We could have also defined a simplicial complex to be a cc $K$ with rank function $\rk_K$ satisfying condition \ref{simpcplxa}. In this case, as a consequence of Axiom \ref{cccenough}, the rank function satisfies $\rk_K(x) = \abs{x} - 1.$

The notion of graph used in our context can naturally be defined as follows.

\begin{defi}[Graph]
A graph $\G$ is a simplicial complex of rank 1. We denote by $\V$ the set of vertices of a graph and $\E$ its set of edges.
\end{defi}

In particular a graph $\G$ is a 1-cc where we use the notation $\V = \G^{[0]}$ and $\E = \G^{[1]}$ and there cannot be multiple edges between two vertices nor self-edge (linking a vertex with itself) in a graph. As expected, if a cc $K$ is graph-based then $\kg$ is a graph and we will sometimes use the notation $\E$ for $\ko,$ the set of edges of $K,$ when there is no ambiguity on the cc $K.$
        
Let $\G$ be a graph. For $v \in \V,$ we define $$\Ne_v := \{ w \in \V ~|~ \{v,w\} \in \E\}$$ the set of \textit{neighbours} of $v$ in $\G$ and $$\E_v := \{ e \in \E ~|~ v \in e \}$$ the set of \textit{edges of $\G$ incident to $v.$} We define the \textit{degree} of a vertex $v$ to be $$ \deg(v):= \abs{\E_v} = \abs{\Ne_v}.$$ More generally, if $A \subset \V,$ we will sometimes use the notation $$\E_A :=\{ e \in \E ~|~ \abs{A \cap e} = 1\}.$$

We will often use paths in Chapter \ref{secreconst}, where for example in Section \ref{ssectionsimplyconccc} we define a notion of homotopy for paths in cell complexes of rank higher or equal to 2. For this, we chose to use the following definitions related to paths in a graph $\G$ and, by extension, in any graph-based cc. 

Let $k \in \N$ and let $v_0,v_k \in \V.$ We say that $p \in (\E)^k$ is a \textit{path in $\G$ from $v_0$ to $v_k$} of \textit{length} $ \abs{p} := k$ if $p = (e_1, \dots, e_k)$ and there exists $v_1, \dots v_{k-1}$ such that $e_i = \{v_{i-1}, v_i\}$ for all $1 \leq i \leq k.$ In this case we sometimes write $p : v_0 \pathto v_k.$ We conventionally denote by $\emptyset_v$ the \textit{empty path} of length 0 from $v$ to $v.$ We say that an edge $e$ is contained in $p$ and write $e \in p$ if $e = e_i$ for possibly multiple $1 \leq i \leq k.$ We define the vertices of the path $p$  to be $p^{[0]} = \{v \in \V ~|~ \exists e \in p, ~ v \in e\}$ and its edge to be $p^{[1]} = \{e \in \E ~|~ e \in p\}.$ The \textit{inverse} of $p$ is defined by $p^{-1} =( e_k, e_{k-1}, \dots, e_1),$ which is a path from $v_k$ to $v_0.$ A path $p'$ is called a \textit{sub-path} of $p,$ noted $p' \leq p,$ if it is a path such that $(p')^{[1]} \subset p^{[1]}.$ A path from $v$ to $v$ is called a \textit{cycle based at $v$}. We say that a path of length $k$ is \textit{simple} if $\abs{p^{[0]}} = \abs{p} + 1,$ which implies that the path $p = (e_1, \dots, e_k)$ "does not pass through each $v \in p^{[0]}$ more than once": 
$$\abs{\{1 \leq i \leq k ~|~ v \in e_i \}} \leq 2 \quad \forall v \in p^{[0]},$$
and that $p$ "never uses an edge more than once": 
$$\abs{\{ 1 \leq i \leq k ~|~ e = e_i \}} \leq 1 \quad \forall e \in p^{[1]}.$$
A simple cycle is called a \textit{loop.}

If $p_1$ is a path from $v_0$ to $v_1$ and $p_2$ is a path from $v_1$ to $v_2$ then we denote by $p_1 \conc p_2$ the \textit{concatenation} of the paths $p_1$ and $p_2$, which is the path from $v_0$ to $v_2$ composed of the edges of $p_1$ followed by the edges of $p_2.$ The concatenation of paths is thereby associative and has \textit{the empty paths $\emptyset_v$ based at $v \in \V$} as neutral elements, i.e. if $p$ is a path from $v$ to $v'$ then $p \conc \emptyset_{v'} = \emptyset_v \conc p = p.$ We set $\emptyset_v^{-1} = \emptyset_v$ by convention.

A graph is \textit{connected} if any two of its vertices can be linked by a path. For our purposes, we choose to define connectedness for cc in the following way, implying in particular that a connected cc is graph-based.

\begin{defi}[Connected and cell-connected cc] \label{defconnccc}
A cc $K$ is \textit{connected} if it is graph-based and $\kg$ is connected. $K$ is said to be \textit{cell-connected} if it is graph-based and $K \cap x$ is connected for all $x \in K.$
\end{defi}

The following notion of locality for cc is often a case of focus in this work, hence we also chose an abbreviation for local cc. 

\begin{defi}[Local cell complex (lcc)]
A \textit{local cell complex,} abbreviated lcc, of dimension $R$ or \textit{$R$-lcc} is a connected cell-connected $R$-cc 
\end{defi}

It will be convenient to have the following notion of representation of a connected component of a 2-cell in a graph-based cc, in particular when using what we will call the loop associated to a connected component of a 2-cell defined afterwards. Not assuming locality in this lemma allows us to formulate some results of Chapter \ref{secreconst} in more generality, however thinking of $C$ in the next lemma simply as a $2$-cell will be sufficient for most of this work.

\begin{lem}[Representation of a 2-cell] \label{lemrepr2cell}
Let $K$ be a graph-based $R$-cc with $R \geq 2.$ If $C$ is a connected component of 2-cell of $K$ then there exits $v_1,\dots, v_k \in \kz$ such that $C= \{v_1, \dots, v_k \}$ and such that $\{v_i, v_{i+1}\},$  $1 \leq i \leq k-1$ and $\{v_k,v_1\}$ are the only edges of $\kg \cap C.$ In this case we write $C = [v_1\dots v_k]$ and call $[v_1 \dots v_k]$ a representation of $C.$
\begin{proof}
It is sufficient to prove the claim when $C \in \kt$ is connected. The diamond property \ref{cccdiamond} implies that if $C \in \kt$ and $v \in C$ then 
\begin{equation}\label{equlemrepr2cell}
\abs{\Ne_v \cap C} = 2.
\end{equation} 
Let $v_2 \in C$ and let $v_1, v_3 \in \Ne_{v_2} \cap C.$ If $v_1 \in \Ne_{v_3} \cap C$ then $C = \{v_1,v_2,v_3\}$ and 
$$\{\{v_1,v_2\},\{v_2,v_3\}, \{v_1,v_3\}\} \subset \ko,$$ is the set of all the edges in $\kg \cap C.$ If $v_1 \notin \Ne_{v_3} \cap C$ then, since $\abs{\Ne_{v_3} \cap C} = 2,$ there exists $v_4 \in \kz \setminus \{v_1,v_2,v_3\}$ such that $\Ne_{v_3} \cap C = \{v_2,v_4\}.$ By repeating this argument, if at step $j$ we have $v_i \notin \Ne_{v_{j-1}} \cap C$ for all $1 \leq i \leq j-3,$ we can add $v_j \in \kz \setminus \{v_1, \dots, v_{j-1}\}$ to the set $\{v_1, \dots, v_{j-1}\} \subset C$ such that $\Ne_{v_{j-1}} \cap C = \{v_{j-2}, v_j\}.$ Since $\kz$ is finite, this process must terminate after a finite number $k$ of steps such that for all $4 \leq j \leq k$ we have $\{v_{j-1},v_{j}\} \in \ko$ by step $j$ and $\{v_k, v_l\} \in \ko$ by the step $k,$ for some $1 \leq l \leq k-2.$ If $1 < l \leq k-2$ then $\{v_{l-1}, v_{l+1}, v_k\} \subset \Ne_{v_l} \cap C,$ which is a contradiction with (\ref{equlemrepr2cell}), therefore $l = 1.$ The same contradiction occurs if there exists an edge in $\kg \cap \{v_1, \dots, v_k\}$ different than $\{v_j,v_{j+1}\}$ for $1 \leq j \leq k-1$ or $\{v_k, v_1\}.$ Moreover, we have $C = \{v_1, \dots, v_k\}$ since if $v \in C \setminus \{v_1, \dots, v_k\}$ then by the same argument $\{v,v_j\} \notin \ko$ for all $1\leq j \leq k$ which is in contradiction with the fact that $\kg \cap C$ connected.
\end{proof}
\end{lem}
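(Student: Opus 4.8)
The plan is to first reduce to the case where $C$ is itself a connected $2$-cell, and then to build a cyclic enumeration of its vertices by repeated use of the diamond property. Suppose $D \in \kt$ and $C$ is a connected component of the graph $\kg \cap D$. For any $v \in C$, I would apply the diamond property \ref{cccdiamond} with $x = \{v\}$ (of rank $0$) and $y = D$ (of rank $2$): this yields exactly two cells in $\cface{\{v\}} \cap \face{D}$, i.e.\ exactly two edges of $K$ containing $v$ and contained in $D$. Since $K$ is graph-based these edges have two vertices each, so they determine exactly two vertices of $D$ adjacent to $v$; being joined to $v$ by an edge, both lie in $C$. Hence $\abs{\Ne_v \cap C} = 2$ for every $v \in C$, and since this local degree-two condition is the only feature of $C$ used below, it suffices to treat the case $C \in \kt$ connected.

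Next I would construct the representation. Pick $v_2 \in C$ with neighbours $v_1, v_3 \in \Ne_{v_2} \cap C$. If $v_1 \in \Ne_{v_3} \cap C$, then $\Ne_{v_3}\cap C = \{v_1, v_2\}$ and the degree-two condition at every vertex forces $C = \{v_1,v_2,v_3\}$ with edge set $\{\{v_1,v_2\},\{v_2,v_3\},\{v_3,v_1\}\}$, i.e.\ $C = [v_1 v_2 v_3]$. Otherwise $\Ne_{v_3}\cap C = \{v_2, v_4\}$ for a new vertex $v_4$, and I iterate: having produced distinct $v_1,\dots,v_{j-1}$ with $\{v_i,v_{i+1}\}\in\ko$ for $1\le i\le j-2$, the vertex $v_{j-1}$ has a second neighbour $v_j$ besides $v_{j-2}$; if $v_j \notin \{v_1,\dots,v_{j-1}\}$ it is appended. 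Since $\kz$ is finite this cannot continue indefinitely, so at some step $k$ the second neighbour of $v_{k-1}$ is some $v_l$ with $1\le l\le k-2$.

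The step requiring the most care is showing that the walk closes into a single loop rather than a "lollipop": if $1 < l \le k-2$ then $v_l$ would be adjacent in $C$ to all of $v_{l-1}$, $v_{l+1}$, $v_k$, contradicting $\abs{\Ne_{v_l}\cap C}=2$, so $l=1$ and $\{v_k,v_1\}\in\ko$. The same three-neighbour argument rules out any edge of $\kg\cap C$ other than the $\{v_i,v_{i+1}\}$, $1\le i\le k-1$, and $\{v_k,v_1\}$. Finally, for $C = \{v_1,\dots,v_k\}$: if some $v\in C$ were not among the $v_i$, the degree-two condition would force both its $C$-neighbours also to avoid $\{v_1,\dots,v_k\}$ (else some $v_i$ acquires a third neighbour), making $\{v_1,\dots,v_k\}$ a proper union of components of $\kg\cap C$ and contradicting connectedness. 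This produces the claimed representation $C = [v_1\dots v_k]$.
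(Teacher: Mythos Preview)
Your proof is correct and follows essentially the same route as the paper's: establish the degree-two condition $\abs{\Ne_v\cap C}=2$ via the diamond property, build a walk, close it into a single cycle by the three-neighbour contradiction, and use connectedness to exhaust $C$. There is a small indexing slip at the termination step (you say the second neighbour of $v_{k-1}$ is $v_l$, so the last new vertex is $v_{k-1}$, but then refer to the closing edge as $\{v_k,v_1\}$ and to $v_l$ being adjacent to $v_k$); once this is made consistent the argument is identical to the paper's.
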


\begin{defi}[Loops associated with a connected component of a 2-cell] \label{defiloops2c}
Let $K$ be a graph-based $R$-cell complex with $R \geq 2.$ A representation $[v_1\dots v_k]$ of a connected component $C$ of a 2-cell induces the definition of $2k$ loops $l_C(v_j), l_C^{-1}(v_j)$ $1 \leq j \leq k$ of length $\abs{C}$ based at $v_j$ and defined by $$l_C(v_j) := (\{v_j, v_{j+1}\}, \dots, \{v_{j-1}, v_j\}),~~l_C^{-1}(v_j) := (\{v_j, v_{j-1}\}, \dots, \{v_{j+1}, v_j\}).$$
\end{defi}

\subsection{Duality map for closed cc} \label{secdualncc}

In this section we first introduce the notion of duality map at the center of the focus in this work. We first define it on closed cc and then generalize on what can be considered as more general cc, called cobordisms, in Chapter \ref{secdualcob}.
We start by defining the general notion of dual set defined for pure cc.

\begin{defi}[Dual set] \label{defdualset}
Let $K$ be a pure $R$-cc and let $A \subset \kz$ be non-empty. We define the \textit{dual set} of $A$ to be $$\ce{A} := \{ z \in K^{[R]} ~|~ A \subset z \}.$$
If we need to specify the cc relative to which one is taking the dual, we use the notation $\cdual{A}{K} = \ce{A}.$
\end{defi}

Using the notion of dual set, we can then simply associate a poset to any pure cc in the following way. 

\begin{defi}[Dual of a cc] \label{defdualccc}
Let $K$ be a pure $R$-cc We define the poset $\left( \kb, \subset \right)$ by $$\kb := \{ \ce{x} ~|~ x \in K \}.$$
\end{defi}

We illustrate examples of dual cells in cc of rank 3 in Figure \ref{figexdual}.
Before proving that $\dual{K}$ is a cell complex for $K$ closed, we  start by proving the following lemma introducing the duality map. Similar lemmas showing an equivalence between inclusion relation of certain cells will repeatedly be a first step when proving that a newly introduced poset satisfies the axioms of a cell complex.

\begin{figure}[!h]
\centering
\includegraphics[scale=0.43]{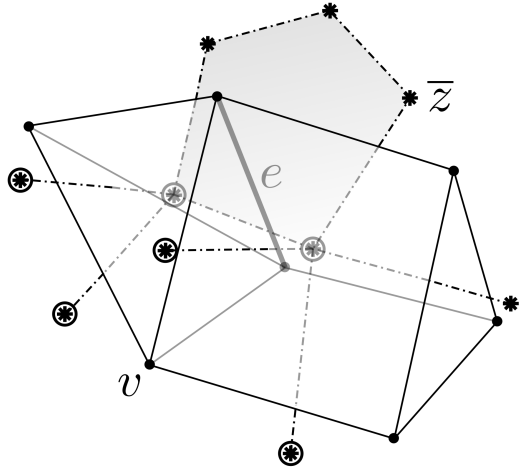}
\caption{\label{figexdual} This figure illustrates a portion of a closed 3-cc $K$ made of two 3-cells. One of these 3-cell is a tetrahedra $z_1$ and the second is a 3-cell $z_2$ on 6 vertices sharing a triangle with the tetrahedra. The dual of the 3-cells in $K$ are represented using star-shaped vertices. We indicated one of these vertices $\dual{z},$ representing one of the 3-cells around $z_1$ and $z_2.$ Two of these vertices are linked by an edge whenever the corresponding 3-cells share a face. The picture includes the dual of the edge $e$ (thicker line) forming a 2-cell on five vertices represented with a grey shade. We also use circles to indicate the vertices included in $\dual{v},$ the 3-cell dual to the vertex $v \in K.$ }
\end{figure}

\begin{lem}\label{claim2propdual}
Let $K$ be a closed cc and $x,y \in K.$ Then we have the following equivalence
$$x \subsetneq y \quad \text{ if and only if } \quad \ce{y} \subsetneq \ce{x},$$
and the \textit{duality map} $x \mapsto \ce{x}$ is a bijection from $K$ to $\kb.$
\begin{proof}
We first prove the following statement: if $x,y \in K$ are such that there exists $v \in y \setminus x$ then $\ce{x}$ contains a maximal cell which does not contain $v,$ and thus $\ce{x} \not\subset \ce{y}.$ 

This can be shown by induction on $n := R - \rk(x).$ For $n = 0,$ $x$ has rank $R$ and is then a maximal cell in $\ce{x}$ which does not contain $v$ by assumption. Assume $n \geq 1,$ therefore $\rk(x) < R$ and by Lemma  \ref{lemcface}, we have $ x = \bigwedge \cface{x}.$ This implies that there exists $x' \in \cface{x}$ such that $v \notin x',$ since otherwise $v \in x'$ for all $x' \in \cface{x}$ and in particular $v \in \bigwedge \cface{x} = x,$ which is a contradiction. By induction $x'$ is contained in a $R$-cell which does not contain $v,$ and so does $x,$ and this shows the statement.

As a consequence, if there exists $v \in y \setminus x$ then $\ce{x}$ contains a maximal cell $z$ such that $v \notin z.$ Together with the obvious implication $x \subset y ~\Rightarrow~ \ce{y} \subset \ce{x}$ this shows that $x \subsetneq y ~\Rightarrow~ \ce{y} \subsetneq \ce{x}.$ Conversely if $\ce{x},\ce{y} \in \kb$ are such that $\ce{y} \subsetneq \ce{x}$ then $x \setminus y = \emptyset$ by the first statement, and since $x \neq y,$ this implies $x \subsetneq y.$

Similarly if $\ce{x} = \ce{y}$ then $x \setminus y = y \setminus x = \emptyset,$ so that $x = y.$ The duality map $x \mapsto \ce{x}$ is then injective, hence bijective. 
\end{proof}
\end{lem}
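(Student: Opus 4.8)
The plan is to base everything on a single one-directional claim proved by a downward induction on rank, and then deduce the equivalence and the bijectivity from it. The claim is: if $x,y\in K$ and there is an element $v\in y\setminus x$, then $\ce{x}$ contains a maximal cell that does not contain $v$; since every maximal cell lying in $\ce{y}$ contains $y$ and hence $v$, this gives in particular $\ce{x}\not\subset\ce{y}$. I would prove the claim by induction on $n:=R-\rk(x)$, where $R=\Rk(K)$. For $n=0$ the cell $x$ is itself a maximal cell of $\ce{x}$, and $v\notin x$ by hypothesis. For $n\ge 1$ we have $\rk(x)<R$, and I first observe that $\abs{\cface{x}}\ge 2$: if $\rk(x)=R-1$ this holds because $\partial K=\emptyset$, while if $\rk(x)<R-1$ purity gives a maximal cell above $x$ and Axiom \ref{cccenough} then produces a cell of rank $\rk(x)+2$ between them, so the diamond property \ref{cccdiamond} furnishes two cofaces of $x$. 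Hence Lemma \ref{lemcface} applies and gives $x=\bigcap_{x'\in\cface{x}}x'$, so $v$ cannot belong to every coface of $x$; picking $x'\in\cface{x}$ with $v\notin x'$ we have $v\in y\setminus x'$ and $R-\rk(x')=n-1$, so the inductive hypothesis yields a maximal cell $z\supset x'$ with $v\notin z$, and $x\subset x'$ gives $z\in\ce{x}$, as required.

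Granting the claim, the equivalence and bijectivity follow quickly. If $x\subsetneq y$, choose $v\in y\setminus x$; the claim gives a maximal cell of $\ce{x}$ avoiding $v$, whereas every maximal cell of $\ce{y}$ contains $v$, so $\ce{x}\neq\ce{y}$, and combined with the obvious implication $x\subset y\Rightarrow\ce{y}\subset\ce{x}$ this yields $\ce{y}\subsetneq\ce{x}$. Conversely, if $\ce{y}\subsetneq\ce{x}$ and there were some $v\in x\setminus y$, the claim with $x$ and $y$ interchanged would give $\ce{y}\not\subset\ce{x}$, a contradiction; hence $x\subset y$, and $x=y$ is impossible since it would force $\ce{x}=\ce{y}$, so $x\subsetneq y$. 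Finally, $x\mapsto\ce{x}$ is onto $\kb$ by definition of $\kb$, and it is injective because $\ce{x}=\ce{y}$ forces, via the claim applied both ways, $x\setminus y=\emptyset=y\setminus x$, i.e. $x=y$.

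The step I expect to be the main obstacle is the inductive one in the claim, and specifically the verification that $\abs{\cface{x}}\ge 2$ at each stage, which is exactly where the hypothesis that $K$ is \emph{closed} (rather than merely non-singular) is used: a non-singular cc with non-empty boundary can have a sub-maximal cell with a single coface, in which case Lemma \ref{lemcface} would not apply and the passage from $x$ to a suitable coface $x'$ would break down.
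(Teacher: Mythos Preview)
Your proof is correct and follows essentially the same approach as the paper's own proof: the same auxiliary claim, the same downward induction on $R-\rk(x)$, and the same deduction of the equivalence and bijectivity from it. The only difference is that you explicitly verify the hypothesis $\abs{\cface{x}}\ge 2$ of Lemma~\ref{lemcface} (using closedness for $\rk(x)=R-1$ and purity plus the diamond property otherwise), whereas the paper invokes the lemma without spelling this out; your version is slightly more careful on this point but otherwise identical.
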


We could rephrase the latter lemma by saying that the duality map is an anti-isomorphism of posets.
The following proposition shows that the above definitions define a duality map on closed cell complexes. The absence of boundary in closed cc allows us to prove this result in an otherwise rather broad generality, as for example no assumption related to the notion of locality has to be included. This allows us to introduce several notions of Chapter \ref{secreconst} also without assuming locality and also to prove certain results for example only using the assumption that the cc is graph-based. Nevertheless, the main results Chapter \ref{secreconst} related to reconstruction of cc are proven under the assumption of locality. 
%We also believe that lcc are most relevant regarding any applications to physics. 
Focusing on the local case also prevents to run into too many complications in Chapters \ref{secdualcob} and \ref{seccausalstruct} even though some results from this chapters could probably be formulated in more general terms.

\begin{prop} \label{propdual}
If $K$ is a closed $R$-cc then so is $\kb$ with rank function $$\rk_{\kb}(\xb) = \rkb(\xb) := R - \rk_K(x)$$ and we have $$\kbb \cong K.$$
\begin{proof}
The rank function $\rkb$ is well defined as a consequence of the bijectivity of the duality map showed in Lemma \ref{claim2propdual}. The latter lemma together with condition \ref{cccrank} for $K$ also imply that for $x,y \in K,$ if $\ce{y} \subsetneq \ce{x}$ then $ x \subsetneq y,$ hence $\rk(x) < \rk(y)$ and therefore $\rkb(\ce{y}) < \rkb(\ce{x}).$ This shows that $\rkb$ is strictly compatible with the inclusion in $\kb.$

Axiom \ref{cccinter} is satisfied for $\kb$ by the following argument. Suppose that there is no cell of $K$ containing $x \cup y$ then in particular $\dual{x \cup y} = \dual{x} \cap \dual{y} = \emptyset.$ If not then let $w$ be the cell containing $x \cup y$ with minimal rank. The cell $w$ is well defined, since if $w'$ is a cell of $K$ such that $x \cup y \subset w'$ and $\rk(w') = \rk(w)$ then $x \cup y \subset w \cap w' \in K.$ If $\rk(w \cap w') < \rk(w),$ it contradicts the minimality of $\rk(w),$ so $\rk(w \cap w') = \rk(w)$ and this implies $w \cap w' = w = w'$ by \ref{cccrank}. We therefore obtain $\dual{x} \cap \dual{y} = \dual{w} \in \kb,$ since the definition of $w$  implies the following equivalences: a maximal cell $z$ contains $w$ if and only if $z$ contains $x \cup y,$ if and only if $z \in \dual{x} \cap \dual{y}.$

In order to prove Axiom \ref{cccenough} for $\kb,$ we pick $\ce{x}, \ce{y} \in \kb$ such that $\ce{y} \subsetneq \ce{x}.$ Lemma \ref{claim2propdual} implies $x \subsetneq y$ and as a direct consequence of Axiom \ref{cccenough} for $K$ there exists a face $y'$ of $y$ containing $x.$ By Lemma \ref{claim2propdual}, we have the following identity.
\begin{align*}
 \{ \dual{z} ~|~ z \in K, ~z \in \cface{y'} \} &= \{ \ce{z} ~|~ z \in K,~ y' \subset z,~ \rk(z) = \rk(y') +1 \}\\
&= \{ \ce{z} \in \kb ~|~ \ce{z} \subset \ce{y'}, ~ \rkb(\ce{z}) = \rkb(\ce{y'}) -1 \} = \face{\ce{y'}}.
\end{align*}
Therefore we have that $\ce{y}$ is a face of $\ce{y'} \subset \ce{x}.$

Similarly, the following argument shows that the diamond property \ref{cccdiamond} for $\kb$ is equivalent to the diamond property for $K.$ Take $\ce{y} \subsetneq \ce{x}$ such that $\rkb(\ce{x}) = \rkb(\ce{y}) + 2$ then we have $x \subsetneq y$ and $\rk(x) = \rk(y) -2$ and this implies that $$\cface{x} \cap \face{y} =\{z_1,z_2\}$$ for some $z_1,z_2 \in K^{[\rk(x) + 1]} = \kb^{[\rkb(\ce{y}) + 1]}.$ This directly implies that
$$  \face{\ce{x}} \cap \cface{\ce{y}} = \{\dual{ z } ~|~ z \in \cface{x}\} \cap \{ \dual{z} ~|~ z \in \face{y} \} = \{\ce{z_1}, \ce{z_2}\}.$$

Therefore $\kb$ is a cc, it is also pure and satisfies $\Rk(\kb) = \Rk(K)$ by the following reasons: the maximal cells of $\kb$ of rank $R$ are the images of vertices of $K$ under the duality map and since every cell of $K$ contains at least one vertex, we have that every cell of $\kb$ is contained in $\vb$ for some $v \in K.$ Since $K$ is graph-based, Lemma \ref{claim2propdual} implies that every sub-maximal cell in $\kb$ is contained in exactly two maximal cells, hence $\kb$ is non-singular and has empty boundary.
Lemma \ref{claim2propdual} also implies that the map $x \mapsto \dual{\dual{x}}$ is a poset isomorphism from $K$ to $\kbb,$ which defines a cc-isomorphism by Lemma \ref{lemisom} since $\rk_{\kbb}(\ce{\ce{x}}) = \rk_K(x).$ 
\end{proof}
\end{prop}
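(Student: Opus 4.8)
The plan is to use Lemma~\ref{claim2propdual} as the workhorse: it tells us the duality map $x \mapsto \ce{x}$ is an inclusion-reversing bijection from $K$ onto $\kb$, so $\rkb$ is well defined, and each of the four axioms for $\kb$ can be checked by dualizing it into a statement about $K$. First I would dispatch Axiom~\ref{cccrank}: if $\ce{y} \subsetneq \ce{x}$ then Lemma~\ref{claim2propdual} gives $x \subsetneq y$, hence $\rk(x) < \rk(y)$ by \ref{cccrank} for $K$, and therefore $\rkb(\ce{y}) = R - \rk(y) < R - \rk(x) = \rkb(\ce{x})$. Axioms~\ref{cccenough} and~\ref{cccdiamond} are equally mechanical translations: for \ref{cccenough}, from $\ce{y}\subsetneq\ce{x}$ one gets $x\subsetneq y$, takes a face $y'$ of $y$ containing $x$ (by \ref{cccenough} for $K$), and checks that $\{\ce{z}: z\in\cface{y'}\} = \face{\ce{y'}}$, so that $\ce{y}$ is a face of $\ce{y'}\subset\ce{x}$; for \ref{cccdiamond}, from $\ce{y}\subsetneq\ce{x}$ with $\rkb(\ce{x})=\rkb(\ce{y})+2$ one gets $x\subsetneq y$ with $\rk(x)=\rk(y)-2$, applies the diamond property of $K$ to obtain $\cface{x}\cap\face{y}=\{z_1,z_2\}$, and dualizes to $\face{\ce{x}}\cap\cface{\ce{y}}=\{\ce{z_1},\ce{z_2}\}$.

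The step requiring genuine care is Axiom~\ref{cccinter}. Given $\ce{x},\ce{y}\in\kb$, I would observe that $\ce{x}\cap\ce{y} = \{z\in K^{[R]} : x\cup y \subset z\}$. If no cell of $K$ contains $x\cup y$ then this set is empty; otherwise I claim there is a \emph{unique} cell $w$ of minimal rank containing $x\cup y$, and then $\ce{x}\cap\ce{y}=\dual{w}\in\kb$. Uniqueness is the crux: if $w'$ is another minimal such cell with $\rk(w')=\rk(w)$, then $x\cup y\subset w\cap w'\in K\cup\{\emptyset\}$ by Axiom~\ref{cccinter} for $K$; since $w\cap w'\supset x\cup y\neq\emptyset$ and $w\cap w'$ has rank $\le\rk(w)$, minimality forces $\rk(w\cap w')=\rk(w)$, whence $w\cap w'=w=w'$ by \ref{cccrank}. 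The equivalence "a maximal cell $z$ contains $w$ iff it contains $x\cup y$" then gives $\dual{w}=\ce{x}\cap\ce{y}$. I expect this to be the main obstacle, not because it is deep but because it is the only place where the argument is not a one-line transcription through the duality bijection.

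Finally I would verify that $\kb$ is closed and identify the double dual. Purity and $\Rk(\kb)=R$: the maximal cells of $\kb$ are exactly the $\dual{v}$ for $v\in\kz$, which have rank $R$, and every cell of $\kb$ lies inside some $\dual{v}$ since every cell of $K$ contains a vertex. That $\kb$ is graph-based is precisely where closedness of $K$ is used: a $1$-cell of $\kb$ is $\dual{e}$ for an $(R-1)$-cell $e$ of $K$, and its vertices are the $\dual{z}$ with $z$ an $R$-cell containing $e$; since $K$ is closed there are exactly two such $z$. Non-branching of $\kb$ follows dually from $K$ being graph-based (via Lemma~\ref{claim2propdual}, a sub-maximal cell $\dual{e}$ of $\kb$, with $\abs{e}=2$, sits below exactly two maximal cells $\dual{v},\dual{w}$ for $e=\{v,w\}$), and emptiness of $\partial\kb$ follows the same way, so $\kb$ is closed. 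For $\kbb\cong K$, the map $x\mapsto\dual{\dual{x}}$ is a poset isomorphism by applying Lemma~\ref{claim2propdual} twice, and it satisfies $\rk_{\kbb}(\dual{\dual{x}}) = R-\rkb(\dual{x}) = R-(R-\rk_K(x)) = \rk_K(x)$, so Lemma~\ref{lemisom} upgrades it to a cc-isomorphism.
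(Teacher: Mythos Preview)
Your proposal is correct and follows essentially the same approach as the paper's proof: both use Lemma~\ref{claim2propdual} as the main tool, verify Axiom~\ref{cccinter} via the unique minimal-rank cell $w$ containing $x\cup y$, and dualize Axioms~\ref{cccenough} and~\ref{cccdiamond} mechanically. You are in fact slightly more explicit than the paper in separating out that $\kb$ being graph-based comes from $K$ being closed while $\kb$ being non-branching with empty boundary comes from $K$ being graph-based; the paper bundles these into one sentence.
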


By Lemma \ref{claim2propdual}, we have in particular that if $v,w \in \kz$ are distinct vertices then $\dual{v} \not \subset \dual{w}.$ Therefore if $\dual{v} \subset \dual{w}$ then $v = w.$ This implies that $$\dual{\dual{v}} = \{ \dual{w} \in \kR ~|~ \dual{v} \subset \dual{w} \} = \{ \dual{v}\}.$$ The vertices of $\kbb$ are then also sets containing one element.

The following notion of dual graph is convenient to formulate the notion of pinch and strongly connected cell complex that are introduced afterwards. It is also used later in the proof of Theorem \ref{thmdualcob}.

\begin{defi}[Dual graph] \label{defdualgraph}
Let $K$ be a non-singular cc. We defined $$ \dg{K}:= ( \kR, \{ \dual{y} ~|~ y \in (K \setminus \partial K)^{[R-1]}\} )$$ to be the \textit{dual graph} of $K.$ 
\end{defi}

We point out that the non-singularity of $K$ and the diamond property \ref{cccdiamond} guarantee that $\dg{K}$ and $\dg{\partial K}$ are graphs. The notion of strong connectedness defined next is common to different treatments of simplicial complexes and is often used to defined a \textit{pseudo-manifold} as a strongly connected closed simplicial complex.

\begin{defi}[Strongly connected cc]
Let $K$ be a non-singular cc. $K$ is said to be \textit{strongly-connected} if $\dg{K}$ is connected.  
\end{defi}

The following notions of pinch and non-pinching cc will be important when defining our notion of cobordisms introduced in Chapter \ref{secdualcob}.

\begin{defi}[Non-pinching cc]
A \textit{pinch}, or \textit{$K$-pinch,} is a cell $x \in K$ such that $\dg{K} \cap \dual{x}$ is \textit{not} connected. $K$ is said to be \textit{non-pinching} if it contains no $K$-pinch and no $\partial K$-pinch.
\end{defi}

Figure \ref{pinchpic} illustrates a simple example of pinch. As explained in the caption, this illustration can also help us understand the reason for adding in our definition of a non-pinching cc $K$ the condition that $K$ also do not contain any $\partial K$-pinch.

\begin{figure}[H]
\centering
\includegraphics[scale=0.55]{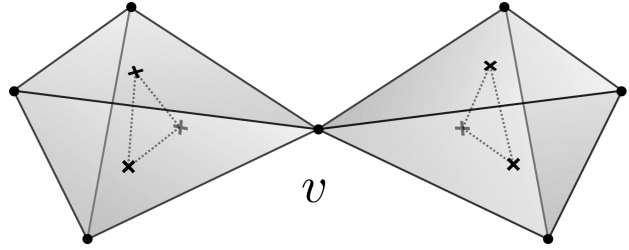}
\caption{\label{pinchpic} There are three different ways to look at this illustration. First, one can see it as a portion of a closed 2-cell complex $K$ and the central vertex $v$ gives an example of a $K$-pinch. The graph $\dg{K} \cap \dual{v}$ would then be constituted of two disconnected cycles, each containing three vertices and each such vertex corresponding to a 2-cell (triangle) of $K$ containing $v.$ Second, one could look at this illustration as a portion of a non-singular 3-cell complex $L,$ in which $v$ is included in two tetrahedra considered as two different 3-cells of $L.$ In this case $v$ is both a $L$-pinch and a $\partial L$-pinch. The third way to look at this illustration is to consider that the two tetrahedra of $L$ containing $v$ form a single cell of a non-singular 3-cell complex $L'.$ In this case $v$ is a $\partial L'$-pinch, but not a $L'$-pinch.}
\end{figure}

This next lemma and the remark that follows will be useful in Chapter \ref{secdualcob}, when considering the boundary of cobordisms. 

\begin{lem} \label{rembdrycc}
If $K$ is a non-singular cc such that there is no $K$-pinch in $\partial K$ then $\partial K$ is closed.
\begin{proof}
Let $R = \Rk(K).$ It is sufficient to show that if $\partial K$ contains no $K$-pinch then 
$$\abs{\cface{x} \cap \partial K} = 2, \quad \forall x \in (\partial K)^{[R-2]}.$$ 
We start by showing that $\abs{ \cface{x} \cap \partial K} \geq 2,$ which in particular implies that $\partial ( \partial K) = \emptyset.$ Since $x \in \partial K$ there is $y_0 \in (\partial K)^{[R-1]}$ and $z_0 \in \kR$ such that $x \subset y_0 \subset z_0$ and by the diamond property there exists $y_1 \in K^{[R-1]}$ such that $ \face{z_0} \cap \cface{x} = \{y_0, y_1\}.$ If $y_1 \not\in \partial K$ then there exist $z_1 \in \kR, ~y_2 \in K^{[R-1]}$ such that $\cface{y_1} =\{z_0, z_1\}$ and $\face{z_1} \cap \cface{x} =\{y_1,y_2\}.$ By repeating this argument we can construct a sequence of elements in $K^{[R-1]},$ distinct by non-singularity of $K,$ until at some step $k$ we get $y_k \in \partial K$ and this implies $\{y_0, y_k\} \subset \cface{x} \cap \partial K.$ If by contradiction we assume that there is more that two elements in $\cface{x} \cap \partial K$ then the previous construction would show that it contains at least four elements and that the corresponding two sequences of elements in $K^{[R]}$ sharing an element of $K^{[R-1]}$ constitute two disconnected components of $\dg{K} \cap \dual{x}.$ This would imply that $x \in \partial K$ is a $K$-pinch, a contradiction with our hypothesis.
\end{proof}
\end{lem}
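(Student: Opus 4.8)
The plan is to reduce the statement to the single claim that $|\cface{x} \cap \partial K| = 2$ for every $x \in (\partial K)^{[R-2]}$, where $R = \Rk(K)$. Given this, $\partial K$ is closed: it is graph-based since its edges are edges of $K$; it is pure of rank $R-1$, because every cell of $\partial K$ lies inside some $(R-1)$-cell $y$ of $K$ with $|\cface{y}| = 1$, such a $y$ is itself a cell of $\partial K$, and no $(R-1)$-cell of $\partial K$ is strictly contained in another by Axiom~\ref{cccrank}; and for $x$ of rank $R-2$ in $\partial K$, the co-faces of $x$ computed inside $\partial K$ are exactly the $(R-1)$-cells $y \supset x$ with $|\cface{y}| = 1$, i.e. the set $\cface{x} \cap \partial K$. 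So the claim is precisely that $\partial K$ is non-branching (the inequality $\le 2$) and has empty boundary (the equality $= 2$), which together with purity and graph-basedness is the definition of closed.

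To prove the claim I would work in the finite graph $\G_x := \dg{K} \cap \dual{x}$ whose vertices are the $R$-cells of $K$ containing $x$. The first step is to describe the edges at a vertex $z$: since $\rk(x) = \rk(z) - 2$, the diamond property~\ref{cccdiamond} gives $\face{z} \cap \cface{x} = \{y_0, y_1\}$ with $y_0 \ne y_1$, and I claim the edges of $\G_x$ incident to $z$ are exactly the $\dual{y_i}$ with $y_i \notin \partial K$. The only delicate point is ruling out other contributions: if $y \subset z$ with $\rk(y) = R-1$, $y \notin \partial K$, and the second $R$-cell $z'$ above $y$ also lies in $\dual{x}$, then $x \subset z \cap z' \in K$ and $y \subset z \cap z'$ with $z \ne z'$ force $z \cap z' = y$ by Axioms~\ref{cccrank} and~\ref{cccinter}, whence $x \subset y$, i.e. $y \in \face{z} \cap \cface{x}$. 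Consequently $\G_x$ has maximum degree $2$, so it is a disjoint union of paths (including isolated vertices) and cycles, and the degree of $z$ equals the number of non-boundary cells among $\face{z} \cap \cface{x}$; in particular a degree-$1$ vertex has exactly one of its two such cells in $\partial K$, a degree-$0$ vertex has both, and a degree-$2$ vertex has none.

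The second step is bookkeeping. Each $y \in \cface{x} \cap \partial K$ has a unique $R$-cell above it, hence is charged to exactly one vertex of $\G_x$; by the degree description, each non-cycle component of $\G_x$ gets charged exactly twice (two endpoints of degree $1$, or one isolated vertex of degree $0$) and each cycle component not at all. Thus $|\cface{x} \cap \partial K|$ equals twice the number of non-cycle components of $\G_x$. Since $x \in \partial K$, it lies in some $y_0 \in (\partial K)^{[R-1]}$, and the $R$-cell above $y_0$ then has degree $\le 1$ in $\G_x$, so it belongs to a non-cycle component; hence $|\cface{x} \cap \partial K| \ge 2$. If this number were $\ge 3$ it would be $\ge 4$, so $\G_x$ would have at least two non-cycle components, hence be disconnected, making $x$ a $K$-pinch lying in $\partial K$ and contradicting the hypothesis. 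Therefore $|\cface{x} \cap \partial K| = 2$, which proves the claim (and in passing yields $\partial(\partial K) = \emptyset$). The step I expect to be the main obstacle is the first one — pinning down exactly which $(R-1)$-cells contribute edges to $\G_x$, i.e. excluding ``spurious'' edges $\dual{y} \subseteq \dual{x}$ with $x \not\subset y$ — since this is what makes the clean ``disjoint union of paths and cycles'' picture legitimate; the remaining counting is routine.
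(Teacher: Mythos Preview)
Your proof is correct and follows essentially the same approach as the paper's: both reduce to showing $|\cface{x}\cap\partial K|=2$ for $x\in(\partial K)^{[R-2]}$ and analyze the structure of $\dg{K}\cap\dual{x}$ via the diamond property to see that boundary cofaces of $x$ come in pairs. Your formulation is somewhat more explicit than the paper's --- you spell out that $\dg{K}\cap\dual{x}$ has maximum degree $2$ and is therefore a disjoint union of paths and cycles, and you prove carefully that every edge of this graph at $z$ arises from an element of $\face{z}\cap\cface{x}$ (the point you flagged as the main obstacle); the paper instead walks along a sequence $y_0,z_0,y_1,z_1,\dots$ and appeals to non-singularity for distinctness, which is really the same argument but less formally stated.
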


\begin{rema} \label{remassumptlocal}
Proposition \ref{propdual} allows us to make the following observations.
A closed cc $K$ is strongly-connected  and non-pinching if $\kb$ is connected and cell-connected. 
Hence a closed $R$-cc $K$ is local if and only if $\kb$ is a closed strongly-connected non-pinching $R$-cc. We can also note that if a non-singular cc $K$ is connected and non-pinching then it is strongly connected.
Therefore, in order to define a class of cc invariant under the duality map, we need the assumptions of locality and non-pinchingness. This will justify our choices of definitions in Chapter \ref{secdualcob}.
\end{rema}

\subsection{Barycentric subdivision and how it determines the underlying cc} \label{ssecbdiv}

In this section, we first introduce a combinatorial definition of barycentric subdivision that one can associate to any poset and explain how it is sometimes used to defined a different notion of dual cell than the one used in our framework. In Chapter \ref{seccausalstruct} we will introduce a generalization of the notion of barycentric subdivision that we will simply call subdivision. We then show Theorem \ref{thmbayer}, stating the barycentric subdivision characterizes the cc from which it is constructed under relatively general conditions.

We use the following abstract analogue of the first derived barycentric subdivision of a poset, also called order complex. 

\begin{defi}[Barycentric subdivision] \label{defbdiv}
Let $(K, \subset)$ be a poset. The \textit{(first derived) barycentric subdivision} or \textit{order complex} of $K$ is the poset noted $\bdiv{K}$ of totally ordered subsets of $K.$ 
\end{defi}

Figure \ref{dualcellpic} gives examples of cells in the barycentric subdivision of a 2-cc. In our context, the term barycentric subdivision is therefore interpreted in a broad sense, but for the case of a simplicial complex obtained from a triangulation of a manifold, this definition corresponds to the face lattice of the barycentric subdivision of the triangulation as defined for example in \cite{bry02}.
The barycentric subdivision of a cc is a simplicial complex since condition \ref{simpcplxa}  follows from the observation that a subset of a totally ordered set is a totally ordered set and condition \ref{simpcplxb} is a direct consequence of the definition of the barycentric subdivision.

The barycentric subdivision provides an interesting way to relate a closed cc with its dual, as shown in the next proposition.

\begin{prop}\label{propsubdivdual}
The barycentric subdivision of a closed cc is isomorphic to the barycentric division of its dual. In other words, $\bdiv{K} \cong \bdiv{\kb}.$
\begin{proof}
This follows from the observation that,  by Lemma \ref{claim2propdual}, the bijection $\{x\} \mapsto \{\ce{x}\}$ from $\bdiv{K}^{[0]}$ to $\bdiv{\kb}^{[0]}$ induced by the duality map defines an isomorphism of cc. This bijection indeed maps totally ordered subsets of $K$ to totally ordered subsets of $\kb$ as  
\begin{equation} \label{mappropdual}
\{x_1 \subsetneq \dots \subsetneq x_k\} \mapsto \{\ce{x_k} \subsetneq \dots \subsetneq \ce{x_1}\}.
\end{equation}
\end{proof}
\end{prop}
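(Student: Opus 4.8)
The plan is to exhibit an explicit isomorphism induced, cell by cell, by the duality map of Lemma~\ref{claim2propdual}, exploiting that the order complex of a poset depends on the poset only up to reversal of the order relation.

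First I would recall that, by Definition~\ref{defbdiv}, $\bdiv{K}$ is the simplicial complex whose cells are the totally ordered subsets (chains) of $(K,\subset)$, and likewise $\bdiv{\kb}$ is built from the chains of $(\kb,\subset)$; in both cases the rank of a chain is its cardinality minus one. By Lemma~\ref{claim2propdual} the duality map $d\colon K \dans \kb$, $x \mapsto \ce{x}$, is a bijection such that $x \subsetneq y$ holds in $K$ if and only if $\ce{y} \subsetneq \ce{x}$ holds in $\kb$; since $d$ is a bijection, this equivalence says exactly that both $d$ and $d^{-1}$ reverse the inclusion order, i.e. $d$ is an anti-isomorphism of posets.

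Next I would define $D\colon \bdiv{K} \dans \bdiv{\kb}$ by applying $d$ elementwise, $D(\sigma) := \{\,\ce{x} : x \in \sigma\,\}$, and check the following routine points. (i) $D$ is well defined, i.e. it sends chains to chains: if $\{x_1 \subsetneq \dots \subsetneq x_k\}$ is a chain of $K$ then order-reversal gives $\ce{x_k} \subsetneq \dots \subsetneq \ce{x_1}$, a chain of $\kb$. (ii) $D$ is bijective, with inverse obtained by applying $d^{-1}$ elementwise, well defined by the same argument as in (i) applied to $d^{-1}$. (iii) $D$ and $D^{-1}$ are poset homomorphisms: for any subsets $\sigma, \tau$ of $K$ one has $\sigma \subset \tau$ if and only if $d(\sigma) \subset d(\tau)$, the subset relation between chains being preserved in both directions under an elementwise bijection regardless of the order within each chain. (iv) $D$ preserves rank, since $\lvert D(\sigma)\rvert = \lvert \sigma \rvert$. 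By Lemma~\ref{lemisom} --- the equivalence between being a cc-isomorphism and being a bijective poset homomorphism whose inverse is a poset homomorphism --- points (ii), (iii) and (iv) give that $D$ is a cc-isomorphism, hence $\bdiv{K} \cong \bdiv{\kb}$.

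I do not anticipate a real obstacle: the single subtlety worth spelling out is that a chain is recorded in $\bdiv{K}$ as an \emph{unordered} subset of $K$, so the fact that $d$ reverses the internal order of a chain is harmless --- all that is used is that $d$ carries chains to chains and preserves inclusion of chains in both directions, both being immediate from $d$ being an order-reversing bijection. Equivalently and more succinctly, one may simply observe that the order complex of a poset coincides with the order complex of its opposite poset, and that $\kb$ is, via the relabelling $x \mapsto \ce{x}$ of Lemma~\ref{claim2propdual}, the opposite of the poset $(K,\subset)$.
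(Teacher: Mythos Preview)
Your proof is correct and follows essentially the same approach as the paper: define the map on chains by applying the duality bijection $x \mapsto \ce{x}$ elementwise, and use Lemma~\ref{claim2propdual} to see that this order-reversing bijection sends chains to chains and preserves inclusion of chains. The paper's version is simply more terse, recording only the key observation that $\{x_1 \subsetneq \dots \subsetneq x_k\} \mapsto \{\ce{x_k} \subsetneq \dots \subsetneq \ce{x_1}\}$, while you spell out the routine verifications and invoke Lemma~\ref{lemisom} explicitly.
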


The barycentric subdivision is sometimes used to define the notion of dual cell, as for example in \cite{bry02}. This definition can also be defined in a purely combinatorial manner, but it corresponds to different cells than the cells obtained by the duality map leading to the involution property of Proposition \ref{propdual}. We explain here how to precisely compare these two definitions.

Let us extend a notion we use for a cc $K,$ to arbitrary subsets $\A \subset K$ by setting $$\A^{[r]} :=  \A \cap \kr.$$ 
We can also introduce the sets of elements respectively \textit{above} and \textit{below} an element $w$ in a poset $(L, \subset):$
$$ A_L(w) := \{ y \in L ~|~ w \subset y \}, \quad B_L(w) := \{ y \in L ~|~ y \subset w \}. $$

Let $K$ be a closed cc. We will more specifically consider the sets $A_L(w),B_L(w)$ in the cases where $L= \bdiv{K}$ and $w = \{x\} \in \bdiv{K}^{[0]}$ and the case $L = \bdiv{\kb}$ and $w = \{\dual{x}\} \in \bdiv{\kb}^{[0]}$ respectively. These sets can be more explicitly expressed as   
$$A_{\bdiv{K}}(\{x\}) = \{ \{y_1 \subsetneq \dots \subsetneq y_k\} \in \bdiv{K} ~|~ x \subset y_1 \}, \quad B_{\bdiv{\kb}}(\{\ce{x}\}) = \{ \{\ce{y_1} \subsetneq \dots \subsetneq \ce{y_k} \} \in \bdiv{\kb} ~|~ \ce{x} \supset \ce{y_k}\}.$$ 
Then $A_{\bdiv{K}}(\{x\})$ and $ B_{\bdiv{\kb}}(\{\ce{x}\})$ are isomorphic as posets using the map (\ref{mappropdual}), so that we have 
$$ A_{\bdiv{K}}(\{x\}) \cong \cap B_{\bdiv{\kb}}(\{\ce{x}\}) = \bdiv{\kb \cap \ce{x}}.$$

For the case of a simplicial complex $K,$ the map $K \dans \pws{\bdiv{K}}$ defined by
$$ x \longmapsto \dual{x}^B := \bdiv{\kb \cap \ce{x}} \cong \{\{y_1 \subsetneq \dots \subsetneq y_k \} \in \bdiv{K} ~|~ x \subset y_1 \}$$
is equivalent to the definition of dual cell used in \cite{bry02} (which explains our choice for the letter $B$ in the notation $\dual{x}^B$). The dual cell $\dual{x}^B$ is therefore seen as a sub-simplicial complex of $\bdiv{K}.$ 

\begin{figure}[!h]
\centering
\includegraphics[scale=0.46]{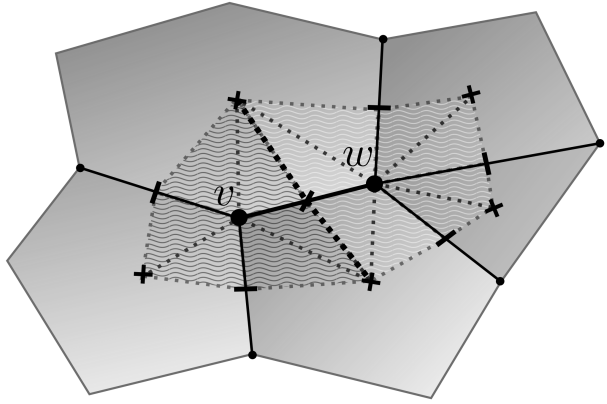}
\caption{\label{dualcellpic}This illustration represents certain cells of the barycentric subdivision $\bdiv{K}$ of a 2-cc $K$ and gives three examples of cells in $\kb^B.$ Being a simplicial complex, each 2-cell of the barycentric subdivision is a triangle and the vertices of the barycentric subdivision are represented using three different symbols depending of the rank of the cells it represent: a vertex of $\bdiv{K}$ of the form $\{C\}$ where $C \in \kt$ is represented by a cross, a vertex of the form $\{e'\}$ where $e' \in \ko$ is represented by a dash and a vertex of the form $\{v\}$ simply corresponds to a vertex of $K.$ There are two 2-cells of $\kb^B,$ the first $\dual{v}^B$ is filled the black wavy pattern, the other $\dual{w}^B$ is filled with the white wavy pattern. We also have the 1-cell $\dual{e}^B,$ the dual of $e = \{v,w\},$ represented by the two thick dashed lines. As a comparison, the dual cell $\dual{v}$ used in our framework simply corresponds to the triangle whose vertices are the three 2-cells containing $v.$}
\end{figure}

We can then define the poset $\kb^B := \{ \dual{x}^B ~|~ x \in K \}$ with rank function 
$$\rk_{\kb^B}(\dual{x}^B) := \Rk(\bdiv{\kb \cap \ce{x}}) = \rkb(\ce{x}).$$ 
It is possible to show that the map $ \phi(\dual{x}^B) := \ce{x}$ defines an isomorphism of posets from $\kb^B$ to $\kb$ such that $\rkb(\phi(\dual{x}^B)) = \rk_{\kb^B}(\dual{x}^B).$ Nevertheless $\kb^B$ is not a cc since the cells in $\kb^B$ are not subsets of $(\kb^B)^{[0]}$ as these cells also contain elements in $\bdiv{K}^{[0]} \setminus (\kb^B)^{[0]}.$

\begin{comment}
\begin{lem} \label{lempropbdiv}
If $K$ a cc is respectively connected, cell-connected, strongly-connected, non-pinching, closed, homogeneous, then so is $\bdiv{K}.$ Conversely, if $\bdiv{K}$ is respectively connected, strongly-connected, non-pinching, homogeneous, then so is $K.$
\begin{proof}
\compl
\end{proof} 
\end{lem}
\end{comment}

The remainder of this section aims to present an adaptation of the Theorem 3 given by Bayer in \cite{bay88}, stating that the barycentric subdivision of a polyhedral complex determines the underlying complex up to duality. The notion of polyhedral complex from \cite{bay88} is identical to the notion of complex of polytopes defined for example in \cite{zi95}, which will be discussed in more detail in Section \ref{secpolytope}. We sometimes designate such convex polytopes in $\R^n$ as "embedded polytopes". Bayer's proof can nevertheless be easily adapted to our purely combinatorial framework. Theorem \ref{thmbayer} stated below can then be seen as a relatively straightforward generalization of Bayer's result to local cell complexes. The main point from Bayer's proof that needs an additional justification is an observation about the existence of a type of ordering of the faces of an embedded polytope. This observation is also valid for the sub-cells of a cell in a cell complex and this constitutes our Lemma \ref{claimthmbayer}. Note that by "face of a polytope", we mean any sub-polytope and not necessarily of dimension one lower than the original polytope (as is defined for the cells of a cc).

We start by summarizing the notions needed to state the proof of Theorem \ref{thmbayer}, closely following the section 3 of \cite{bay88}. First, we define what is an orientation in a graph-based cc. For a set of edges $\E \subset \ts{\V}$ we use the notation $\ori{\E}$ to denote the set of ordered two-sets of $\E.$ An element $e = (x,y) \in \ori{\E}$ is called an \textit{directed edge}, $e_s := x$ is called its \textit{source} and $e_t := y$ its \textit{target}.

\begin{defi}[Orientation of a cc] \label{deforient}
An \textit{orientation} of a graph-based cc $K$ is a map $g: \ko \mapsto \ori{\ko}$ such that $g(e)^{[0]} = e.$

We say that an orientation $g$ is \textit{transitive} if, whenever 
$$g(\{x,y\}) = (x,y), \quad g(\{y,z\}) = (y,z)$$ 
for $x,y,z \in \kz$ such that $\{x,y\},\{y,z\} \in \ko$ then $\{x,z\} \in \ko$ and $$g(\{x,z\}) = (x,z).$$
\end{defi}

In particular an \textit{oriented graph} $(\G, g)$ is a graph $\G = (\V, \E),$ for example the 1-skeleton of a graph-based cc, with an orientation $g: \E \dans \ori{\E}.$ An isomorphism of oriented graphs is a cc-isomorphism $\phi: \G \dans \G'$ between two oriented graphs $(\G,g)$ and $(\G',g')$ such that $g'$ is the image of $g$ through $\phi$ in the sense that $$g'( \phi(e) ) = (\phi(g(e)_s), \phi(g(e)_t)).$$

\begin{comment}
\begin{rema} \label{remaorient}
We point out here that, in particular, if a cc $K$ admits a transitive orientation, then every 2-cell $C \in \kt$ such that $\abs{C} \geq 4$ contains an even number of vertices. This observation could indicate that the assumption of evenness used in the reconstruction Theorem \ref{thmreconst} is related to the notion of orientation. \compl[ or not as the dual is even and a transitive orientation of the primal does not correspond to a transitive orientation of the dual]
\end{rema}
\end{comment}

We now turn to the notion of uniquely partially orderable cc, a key notion in the proof of the Theorem \ref{thmbayer}, in which we essentially prove that the barycentric subdivision of a local cell complex is uniquely partially orderable. Let $g$ be a transitive orientation of a graph-based cc $K$ and let $(x,y), (w,z) \in \ori{\ko}.$ We say that $(x,y)$ \textit{directly forces} $(w,z)$ if either $x = w$ and $\{y,z\} \notin \ko$ or $y = z$ and $\{x,y\} \notin \ko.$ More generally, we say that $(x,y)$ \textit{forces} $(y,z)$ if there is a finite sequence of directed edges starting from $(x,y)$ and ending at $(y,z)$ such that each directed edge in the sequence directly forces the next one. This relation is an equivalence relation, which allows to decompose $\ori{\ko}$ into equivalence classes called \textit{implication classes.} $K$ is called \textit{uniquely partially orderable} if and only if $K$ admits exactly two transitive orientations, each being the reverse of the other, or equivalently, if and only if there exists only two possible implication classes in $\ori{\ko}.$ For example, a transitive orientation on $\bdiv{K}$ is simply given by $g_K : \bdiv{K}^{[1]} \dans \ori{\bdiv{K^{[1]}}},$ the orientation induced by the inclusion relation in $K,$ i.e. if $x,y \in K$ satisfy $x \subsetneq y,$ or equivalently $\{x \subsetneq y\} \in (\bdiv{K})^{[1]},$ then $g_K( \{x \subsetneq y \} ) = (x,y).$ (Recall that our notation $\{x \subsetneq y\}$ is a shorthand for an - unordered - set $\{x , y\}$ whose elements are totally ordered by the inclusion relation in $K,$ so that we also specify this total order by writing $\{ x \subsetneq y\}$). A second transitive orientation of $\bdiv{K}$ is the \textit{reverse of $g_K,$} denoted $g^{-1}_K,$ and defined by $g^{-1}_K(\{x \subsetneq y\}) := (y,x).$ In case where $K$ is a closed cc then Proposition \ref{propsubdivdual} implies $$\bdiv{K} \cong \bdiv{\dual{K}}.$$ Using the later cc-isomorphism, we can identify $\{x \subsetneq y \} \in \bdiv{K}^{[1]}$ with $\{\dual{y} \subsetneq \dual{x} \} \in \bdiv{\dual{K}}^{[1]}$ and using this identification we have that $g^{-1}$ corresponds to the transitive orientation $g_{\dual{K}}$ induced by the inclusion relation in $\dual{K}.$

We prove the following lemma, which, in the proof from \cite{bay88}, corresponds to an observation about the faces of convex polyhedra, where by faces we mean any sub-polyhedra. We will see that it also holds in the case of cc of the form $K \cap z,$ for $z \in K^{[R]},$ as a consequence of Axioms \ref{cccdiamond} and \ref{cccinter}.

\begin{lem} \label{claimthmbayer}
Let $K$ be a $R$-lcc and let $z \in \kR.$ For any $0 \leq r \leq R-2$ and any $r$-cell $y \in K \cap z,$  there is a $(r+1)$-cell $w \in K \cap z$ such that $y \not\subset w.$
\begin{proof}
It is sufficient to prove the claim for the case $r= R-2.$ The case $r = 0,$ i.e. $R = 2$ and $y$ is a vertex of a 2-cell $z,$ is direct since by Lemma \ref{lemrepr2cell} $z$ is a cycle of at least 3 edges. Suppose that $r > 0$ and take $x \in \face{y},$ i.e. such that $\rk(x) = r - 1.$ By the diamond property \ref{cccdiamond} there exist $\{w_1,w_2\} \subset K^{[r+1]}$ and $y' \in \kr$ such that $$\cface{y} \cap \face{z} =\{w_1,w_2\}, \quad \cface{x} \cap \face{w_1} = \{y,y'\}.$$ Moreover there exists $w \in K^{[r+1]}$ such that $$\cface{y'} \cap \face{z} =\{w_2,w\}$$ and $w \in K^{[r+1]}\setminus \{w_1,w_2\}$ since otherwise $\{y,y'\} \subset w_1 \cap w_2$ which contradicts \ref{cccinter}. Also $y \not\subset w$ since $\cface{y} \cap \face{z} =\{w_1,w_2\}$ and this concludes the proof.
\end{proof}
\end{lem}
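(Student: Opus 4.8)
The plan is to peel off the vertex case and reduce everything else to repeated applications of the diamond property \ref{cccdiamond} carried out inside the maximal cell $z$; note that if $R<2$ the statement is vacuous, so I assume $R\ge 2$. For $r=0$, given a vertex $y\subsetneq z$, Axiom \ref{cccenough} yields a chain $y\subsetneq c_{1}\subsetneq c_{2}\subseteq z$ with $\rk(c_{i})=i$, so $c_{2}\in\kt$ is a $2$-cell with $y\subsetneq c_{2}\subseteq z$. By Lemma \ref{lemrepr2cell} the connected component of $c_{2}$ through $y$ is a cycle $[v_{1}\dots v_{k}]$ with $k\ge 3$ edges, in exactly two of which $y$ lies; picking an edge $w$ of $\kg\cap c_{2}$ with $y\notin w$ gives the desired $1$-cell of $K\cap z$.

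For $1\le r\le R-2$ I would argue uniformly, the boundary case $r=R-2$ being the instance $v=z$ below. Since $r+2\le R$, Axiom \ref{cccenough} provides an $(r+2)$-cell $v$ with $y\subsetneq v\subseteq z$. Applying \ref{cccdiamond} to $y\subsetneq v$ gives $\cface{y}\cap\face{v}=\{w_{1},w_{2}\}$, two distinct $(r+1)$-cells with $y\subsetneq w_{i}\subsetneq v$. Choosing $x\in\face{y}$ (possible since $\rk(y)\ge 1$, by Remark \ref{remdefccc}) and applying \ref{cccdiamond} to $x\subsetneq w_{1}$ gives $\cface{x}\cap\face{w_{1}}=\{y,y'\}$ with $y'\ne y$ an $r$-cell and $y'\subsetneq w_{1}\subseteq z$. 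Finally, since $w_{1}\in\cface{y'}\cap\face{v}$, applying \ref{cccdiamond} to $y'\subsetneq v$ gives $\cface{y'}\cap\face{v}=\{w_{1},w\}$ for some $(r+1)$-cell $w$ with $w\subsetneq v\subseteq z$.

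It then remains to verify that this $w$ works. The last diamond forces $w\ne w_{1}$. If instead $w=w_{2}$, then $y$ and $y'$ both lie in $w_{1}$ and in $w_{2}$; the intersection $w_{1}\cap w_{2}$ is a cell by \ref{cccinter}, it strictly contains $y$ (as $y'\not\subseteq y$ supplies a further vertex), hence has rank $\ge r+1$ by \ref{cccrank}, yet it is a proper subcell of $w_{1}$, a contradiction. So $w\notin\{w_{1},w_{2}\}$, and since $w_{1},w_{2}$ are the only $(r+1)$-cells contained in $v$ that contain $y$, we get $y\not\subset w$; as $w\subseteq v\subseteq z$ we have $w\in K\cap z$, which is what we wanted.

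The step I expect to be the actual obstacle is the distinctness $w\notin\{w_{1},w_{2}\}$: this is exactly where Axiom \ref{cccinter} has to be combined with the rank axiom \ref{cccrank} to exclude the degenerate configuration in which two distinct $r$-cells both sit inside $w_{1}\cap w_{2}$; the remainder is routine diamond-chasing, and the only place locality is used is graph-basedness, needed to invoke Lemma \ref{lemrepr2cell} in the vertex case. One could alternatively reduce general $r$ to $r=R-2$ by passing to the sub-lcc $K\cap z'$ for a facet $z'$ of $z$ with $y\subseteq z'$ and inducting on $R$; the direct argument above sidesteps that induction.
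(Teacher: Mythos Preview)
Your proof is correct and follows the same diamond-chasing strategy as the paper. The only noteworthy difference is organizational: the paper asserts (without argument) that it suffices to treat $r=R-2$ and then runs the diamond argument with $v=z$, whereas you handle all $r\ge 1$ uniformly by first producing an intermediate $(r+2)$-cell $v$ with $y\subsetneq v\subseteq z$ via Axiom~\ref{cccenough}. This makes your write-up self-contained where the paper leaves the reduction implicit; your closing remark about inducting via $K\cap z'$ is exactly how one would justify the paper's reduction. Incidentally, the paper's displayed equation $\cface{y'}\cap\face{z}=\{w_{2},w\}$ appears to be a typo for $\{w_{1},w\}$, which is what you correctly write.
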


We also add the following lemma, which is considered as an obvious statement in \cite{bay88} for the case of face lattices of polytopes.

\begin{lem} \label{lembdivoriented1skel}
A cell complex $K$ is determined up to cc-isomorphism by the oriented 1-skeleton of its barycentric subdivision $(\bdiv{K}^{(1)},g_K).$ 
\begin{proof}
Let $\{x\}$ be a vertex of $\bdiv{K}.$ Lets us prove this result by first indicating how to associated to $\{x\}$ a set of vertices of $\bdiv{K}$ of the form $\{v\}$ where $v \in \kz$ corresponding to the vertices of $K$ contained in $x.$ The elements of the form $\{v\} \in \bdiv{K}^{[0]}$ such that $v \in x$ are the starting points of all directed paths (i.e. paths using directed edges which target is equal to the source of the next directed edges or the endpoint of the directed path) in $(\bdiv{K}^{(1)},g_K)$ having $\{x\}$ as an endpoint and which length is maximal with this property. The rank of $\{x\} \in \bdiv{K}^{[0]}$ is the length of any of such directed path. This is well defined by Axiom \ref{cccenough}.
\end{proof}
\end{lem}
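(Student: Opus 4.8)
The plan is to observe first that the oriented $1$-skeleton $(\bdiv{K}^{(1)}, g_K)$ is precisely the face poset $(K, \subseteq)$ presented as a directed graph: its vertices are the singletons $\{x\}$, $x \in K$, its (undirected) edges are the pairs $\{x \subsetneq y\}$ with $x \subsetneq y$ in $K$, and $g_K$ orients such an edge as $(x,y)$; so an oriented-graph isomorphism between $(\bdiv{K}^{(1)}, g_K)$ and $(\bdiv{L}^{(1)}, g_L)$ is, restricted to vertices, exactly an order isomorphism $(K,\subseteq) \dans (L,\subseteq)$, and conversely any order isomorphism gives back such an oriented-graph isomorphism. Hence it suffices to prove that any poset isomorphism $\phi : K \dans L$ between (the underlying posets of) two cc is automatically a cc-isomorphism.

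The key point is that the rank function of a cc can be read off from its face poset. I would first check that the minimal elements of $(K,\subseteq)$ are exactly the vertices: if $\rk_K(x) \geq 1$, then $x$, being a non-empty subset of its vertex set, contains some vertex $v$, and $\{v\} \subsetneq x$ with $\rk_K(\{v\}) = 0$ by Axiom \ref{cccrank}, so $x$ is not minimal; conversely Axiom \ref{cccrank} forbids anything strictly below a rank-$0$ cell. Next, using Axiom \ref{cccenough}, every unrefinable chain $x_0 \subsetneq \dots \subsetneq x_k = x$ with $x_0$ minimal satisfies $\rk_K(x_{i+1}) = \rk_K(x_i)+1$ at each step: the strict inequality $\rk_K(x_{i+1}) > \rk_K(x_i)$ comes from Axiom \ref{cccrank}, and $\rk_K(x_{i+1}) \leq \rk_K(x_i)+1$ because otherwise Axiom \ref{cccenough} would produce a cell strictly between $x_i$ and $x_{i+1}$, contradicting unrefinability. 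Since such chains exist (iterate Axiom \ref{cccenough} upward from any vertex of $x$), we get that $\rk_K(x)$ is the common length of all unrefinable chains from a minimal element to $x$; equivalently, in the digraph $(\bdiv{K}^{(1)}, g_K)$, $\rk_K(x)$ is the length of a longest directed path ending at $\{x\}$, and the sources of these longest paths are precisely the singletons $\{v\}$ with $v$ a vertex of $K$ contained in $x$ (a maximal-length directed path into $\{x\}$ has length $\rk_K(x)$, so it must start at rank $0$ and increase rank by exactly one at each step, whence its source lies in $x$; conversely each vertex of $x$ starts such a chain by iterating Axiom \ref{cccenough}).

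To finish, a poset isomorphism $\phi : K \dans L$ maps minimal elements to minimal elements and unrefinable chains to unrefinable chains bijectively and length-preservingly, so by the previous paragraph $\rk_L(\phi(x)) = \rk_K(x)$ for all $x \in K$; thus $\phi$ is a bijective poset homomorphism with $\phi(x) \in L^{[\rk_K(x)]}$, and Lemma \ref{lemisom} then gives that $\phi$ is a cc-isomorphism, as required. Unravelling this, the argument also shows concretely how $K$ is reconstructed from $(\bdiv{K}^{(1)}, g_K)$: one recovers the vertices of $K$ as the sources of the digraph, the rank of a vertex $\{x\}$ of $\bdiv{K}$ as the length of a longest incoming directed path, and the cell $x$ as the set of those sources that begin a longest directed path ending at $\{x\}$. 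I do not expect a serious obstacle; the only place where the cc axioms (rather than general poset facts) are genuinely used is the recovery of the rank from the order, which rests squarely on Axiom \ref{cccenough}, and the only delicate point is to phrase the "longest directed path" characterisation so that it is manifestly preserved by oriented-graph isomorphisms.
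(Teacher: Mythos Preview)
Your proof is correct and follows essentially the same approach as the paper: both recover the rank of a cell $x$ as the length of a longest directed path in $(\bdiv{K}^{(1)},g_K)$ ending at $\{x\}$, and identify the vertices of $x$ as the sources of such maximal paths, using Axiom~\ref{cccenough} to guarantee these paths exist and have the right length. Your write-up is in fact more thorough than the paper's sketch, as you explicitly verify that the rank is poset-determined and invoke Lemma~\ref{lemisom} to conclude that the resulting poset isomorphism is a cc-isomorphism.
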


\noi We now have the required notions and preliminary results to state and prove the following theorem. This result seems important to us as a general fact about lcc and will be interesting for us in relation to the reconstruction results of Chapter \ref{secreconst}, as explained in Remark \ref{remareconst}. It also lead us to introduce a notion of orientation in a cc, but this notion as defined here will not be used again in this work. The work of Basak \cite{bas10} introduces another notion of orientation which is central to his treatment of the Poincaré duality for his definition of combinatorial cell complexes.

\begin{thm} \label{thmbayer}
Let $K$ and $J$ be two lcc having at least two maximal cells and such that $\bdiv{K} \cong \bdiv{J}.$ Then we have the following possibilities:
\begin{itemize}
\item if $\partial K$ is non-empty then $\partial J$ is also non-empty and $K \cong J;$
\item if $K$ and $J$ are closed then either $K \cong J$ or $\kb \cong J.$
\end{itemize}
  
\begin{proof}
Let $(\bdiv{K}^{(1)}, g_K)$ and $(\bdiv{J}^{(1)},g_J)$ be the oriented 1-skeletons of the barycentric subdivision of $K$ and $J,$ where both $g$ and $h$ are the orientation induced by the inclusion relation in $K$ and $J$ respectively. Our assumption implies in particular that $\bdiv{K}^{(1)}$ and $\bdiv{J}^{(1)}$ are cc-isomorphic, we therefore consider these two graphs as identical for simplicity. The strategy of the proof is to show that $\bdiv{K}$ is uniquely partially orderable, which would directly imply that $g_J = g_K$ or $g_J = g_K^{-1}$ as defined above.

In case $K$ and $J$ are closed, the two possibilities $g_J = g_K$ and $g_J = g_K^{-1}$ lead consistent cases since $g_K^{-1} = g_{\dual{K}}$ and Lemma \ref{lembdivoriented1skel} lead $K \cong J$ or $\dual{K} \cong J.$

If $\partial K$ is non-empty then $g^{-1}_K$ cannot be a transitive orientation induced by the inclusion relation of a local cell complex. To justify this statement, let us assume by contradiction that $$g_K^{-1} = g_L : \bdiv{L}^{[1]} \dans \ori{\bdiv{L}}^{[1]}$$ is a transitive orientation induced by the inclusion relation of a lcc $L.$ Consider a maximal cell $y \in \partial K,$ which is therefore included in a unique maximal cell $z$ of $K.$ Then, as shown in Lemma \ref{lembdivoriented1skel}, $\{z\}$ corresponds to a vertex of $L$ and $\{y\}$ corresponds to an edge of $L$ and this is a contradiction with the fact that every edge of $L$ has two vertices. As a consequence, it only leaves $g_K = g_J$ as a possibility and Lemma \ref{lembdivoriented1skel} implies $K \cong J.$  

We therefore turn to the proof that $\bdiv{K}$ is uniquely partially orderable by induction on the rank of $K.$

For $\Rk(K) = 1$ we have that $K$ is a connected graph and so is $\bdiv{K}.$ Every edge $e = \{v,w\}$ of $K$ is subdivided into two "half-edges" $\{\{v\} , \{e\}\}$ and $\{ \{w\}, \{e\}\}$ in $\bdiv{K}^{[1]}.$ Moreover, any two edges $\varepsilon, \varepsilon'$ of $\bdiv{K}$ sharing a vertex in $\bdiv{K}^{[0]}$ correspond either to a subdivided edge of $K$ or to two "half-edges of $K$" i.e. $ \varepsilon =\{\{v\}, \{e\}\},$ $\varepsilon' = \{\{v\}, \{e'\}\}$ where $e,e' \in \ko$ and $ v = e \cap e' \in \kz.$ In both cases there is no edge between the other ends of $\varepsilon$ and $\varepsilon'$ and this means that an orientation of $\varepsilon$ directly forces an orientation of $\varepsilon'.$ The connectedness of $\bdiv{K}$ then implies that an orientation of any edge in $\bdiv{K}$ force an orientation of any other edge in $\bdiv{K},$ which shows that $K$ is uniquely partially orderable.

Let $K$ be a $R$-lcc with $\abs{K^{[R]}} \geq 2$ and assume that $\bdiv{(K')}^{(1)}$ is uniquely partially orderable for any lcc $K'$ of rank smaller or equal to $R-1$ with more than one maximal cell. Let $K' := K^{(R-1)}$ and $G' := (\bdiv{K'})^{(1)}.$ Then $K'$ is a $(R-1)$-lcc with more than one maximal cell and $G'$ is the sub-graph of $G$ restricted to the vertices with corresponding cell in $K$ having a rank smaller or equal to $R-1.$ By induction $G'$ is uniquely partially orderable and we show that an orientation of $G'$ forces an orientation of $G.$

Let $z \in K^{[R]}$ and let $\{z\}$ be the corresponding vertex in $G.$ Since there are at least two maximal cells in $K,$ there is one vertex $v$ of $z$ contained in an edge $e \in \ko$such that $e \not \subset z.$ Therefore we obtain $$\{\{v\},\{z\}\} \in G^{[1]}, \quad \{\{v\}, \{e\}\} \in (G')^{[1]} \text{ and } \{\{z\},\{e\}\} \notin G^{[1]}$$ and this implies that $(\{v\},\{e\})$ directly forces $(\{v\},\{z\}).$ We conclude the proof by showing that $(\{v\},\{z\})$ directly forces all edges of $G$ containing $\{z\}.$

Indeed, all edges of $G$ containing $\{z\}$ are of the form $\{\{z\},\{w\}\}$ where $w$ is a sub-cell of $z.$ By Lemma \ref{claimthmbayer} we can find an ordering of the cells in $K \cap z,$ starting with any vertex in $z,$ so that no two consecutive sub-cells contain one another. This corresponds to an ordering of the vertices in $G$ neighbouring $\{z\},$ so that no two consecutive vertices are linked by an edge. We can suppose that this vertex ordering starts with $v$ and this means that $(\{v\},\{z\})$ forces each edge containing $\{z\}$ in G and this concludes the proof.
\end{proof}
\end{thm}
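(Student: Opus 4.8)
The plan is to follow Bayer's strategy and reduce everything to the statement that, for any lcc with at least two maximal cells, the $1$-skeleton $\bdiv{K}^{(1)}$ of its barycentric subdivision is \emph{uniquely partially orderable}, i.e.\ it admits exactly two transitive orientations, one the reverse of the other. Granting this, identify $\bdiv{K}^{(1)}$ with $\bdiv{J}^{(1)}$ via the assumed isomorphism; then the two inclusion-induced transitive orientations $g_K$ and $g_J$ on this common graph must coincide or be reverses of each other. If $g_J = g_K$, Lemma~\ref{lembdivoriented1skel} gives $K \cong J$ directly. If $g_J = g_K^{-1}$ and $K$ is closed, then by Proposition~\ref{propsubdivdual} the map $\{x\} \mapsto \{\ce{x}\}$ identifies $\bdiv{K}$ with $\bdiv{\kb}$, and since this map reverses inclusion (Lemma~\ref{claim2propdual}) it carries $g_K^{-1}$ to the inclusion-orientation of $\kb$, so Lemma~\ref{lembdivoriented1skel} gives $J \cong \kb$.

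The remaining part of the reduction is to exclude $g_J = g_K^{-1}$ when $\partial K \neq \emptyset$. Suppose $g_K^{-1}$ were the inclusion-orientation $g_L$ of some lcc $L$, after identifying $1$-skeletons of barycentric subdivisions. Choose an $(R-1)$-cell $y \in \partial K$ with $\cface{y} = \{z\}$. Under $g_K^{-1}$ all edges of $\bdiv{K}$ at $\{z\}$ point away from $\{z\}$ (because $z$ is maximal in $K$), so the longest directed path ending at $\{z\}$ has length $0$; by the description in Lemma~\ref{lembdivoriented1skel}, $\{z\}$ is then a vertex of $L$. The only edge of $\bdiv{K}$ at $\{y\}$ pointing into $\{y\}$ is $\{y \subsetneq z\}$, so the longest directed path ending at $\{y\}$ is $\{z\} \to \{y\}$, of length $1$; hence $\{y\}$ is an edge of $L$ whose set of vertices (the starting points of maximal such paths) is the singleton $\{\{z\}\}$. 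This contradicts the fact that every edge of the graph-based cc $L$ has two vertices. Therefore $g_J = g_K$, so $K \cong J$ and in particular $\partial J \cong \partial K \neq \emptyset$.

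It remains to prove unique partial orderability of $\bdiv{K}$, by induction on $R = \Rk(K)$. For $R = 1$, $K$ is a connected graph and $\bdiv{K}^{(1)}$ is its edge-subdivision; any two adjacent edges of $\bdiv{K}^{(1)}$ have non-adjacent far endpoints, so in a transitive orientation the orientation of one is forced by that of the other, and connectedness of $\bdiv{K}^{(1)}$ shows that all directed edges fall into a single implication class and its reverse. For $R \geq 2$, put $K' := K^{(R-1)}$; then $K'$ is an $(R-1)$-lcc (it has the $1$-skeleton of $K$ and $K' \cap x = K \cap x$ whenever $\rk_K(x) \leq R-1$) with at least two maximal cells (two distinct $R$-cells of $K$ have distinct face sets by Remark~\ref{remdefccc}, forcing at least two $(R-1)$-cells). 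By induction $\bdiv{K'}^{(1)}$ is uniquely partially orderable, and it is exactly $\bdiv{K}^{(1)}$ with the vertices $\{z\}$, $z \in K^{[R]}$, deleted. Any transitive orientation of $\bdiv{K}^{(1)}$ restricts to a transitive one of $\bdiv{K'}^{(1)}$, hence to $g_{K'}$ or $g_{K'}^{-1}$; I claim the orientation of the remaining edges, those of the form $\{\{w\},\{z\}\}$ with $w \subsetneq z \in K^{[R]}$, is then forced. Fix $z \in K^{[R]}$. Connectedness of $K$ gives a vertex $v \in z$ and an edge $e \in \ko$ with $v \in e$ and $e \not\subset z$, and then $(\{v\},\{e\})$ directly forces $(\{v\},\{z\})$. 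Using Lemma~\ref{claimthmbayer} I order the proper sub-cells of $z$ as $w_0 = v, w_1, \dots, w_m$ with consecutive cells incomparable: list them rank by rank (each rank being represented, by Axiom~\ref{cccenough}), in arbitrary order within each rank, and at each rank jump use the lemma to pick a cell of the next rank not lying above the last cell already listed. Then $(\{w_i\},\{z\})$ directly forces $(\{w_{i+1}\},\{z\})$ for each $i$, so every edge at $\{z\}$ is forced; doing this for all $z \in K^{[R]}$ shows the orientation of $\bdiv{K}^{(1)}$ is determined by its restriction to $\bdiv{K'}^{(1)}$, leaving exactly the two transitive orientations $g_K$ and $g_K^{-1}$.

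I expect the main obstacle to be this inductive step: verifying cleanly that $K^{(R-1)}$ is again an lcc with at least two maximal cells, and upgrading the single-step statement of Lemma~\ref{claimthmbayer} into the full incomparability-ordering of the proper sub-cells of an $R$-cell (starting at a prescribed vertex). Once that ordering is in hand, the forcing arguments and the final case analysis are essentially bookkeeping.
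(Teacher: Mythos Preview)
Your proposal is correct and follows essentially the same route as the paper: reduce to showing $\bdiv{K}^{(1)}$ is uniquely partially orderable, handle the closed versus non-empty-boundary dichotomy via Lemma~\ref{lembdivoriented1skel} and Proposition~\ref{propsubdivdual}, and prove unique partial orderability by induction on the rank using $K' = K^{(R-1)}$ and the forcing argument through Lemma~\ref{claimthmbayer}. You supply a few details the paper only asserts --- notably the explicit rank-by-rank construction of the incomparability ordering of the proper sub-cells of $z$ (with Lemma~\ref{claimthmbayer} invoked only at rank jumps) and the check that $K^{(R-1)}$ is again an lcc with at least two maximal cells --- but these are straightforward and do not constitute a different approach.
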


\subsection{Topological considerations} \label{sectopoconsid}

In this section, we mention two different ways to assign a topology to a cc: the first one is from \cite{bas10} and the second, using a notion of a geometrical realization, is from \cite{wac07}. It seems clear that these two topologies are equivalent, but we have not proven this here. We then give the definitions of star, link and homogeneity in a cc and define the notions of cellular and simplicial manifold. All homeomorphisms here are piecewise-linear homeomorphisms (as defined e.g. in \cite{bry02}). 

In \cite{bas10} (setion 2.4) the topology of a cc $K$ is defined by setting all closed subsets $C$ of $K$ to be the subsets of $K$ such that if $x \in K,~ y \in C, ~x \subset y$ then $x \in C.$ If we define the closure of a subset $S \subset K$ to be the set of cells of $K$ that are contained in some cell in $S$ then the closed subsets of $K$ are given by the closures of subsets of $K.$ As a consequence, $K$ is not Hausdorff if it has rank one or above, since each set of the form $\{x\}$ where $x \in K$ has a rank higher or equal to 1 is a non-closed "point" in this topology. For example if  $x = \{v,w\} \in \ko,$ this topology implies that $\{\{v,w\}\}$ is an "open edge" whereas the closure of $\{x\}$ corresponds to a "closed edge" $\{\{v,w\},\{v\},\{w\}\}$ and this is coherent with the following definitions obtained using a geometrical realization.

Another way to define a topology on a cc $K,$ described for example in \cite{wac07}, is to consider $K$ simply as a poset to which one associates the topology obtained from the geometrical realization of its barycentric subdivision $\bdiv{K}.$ The geometrical realization of a simplicial complex is defined in the following way. A $d$-dimensional geometric simplex $\sigma$ in $\R^n$ is defined as the convex hull of $d+1$ points in $\R^n,$ each such point being a vertex of $\sigma.$ The set of sub-simplicies of $\sigma,$ $B(\sigma),$ contains all the convex hulls of subsets of vertices of $\sigma.$ A geometric simplicial complex $\Lambda$ in $\R^n$ is defined as a non-empty collection of geometric simplicies satisfying that if $\sigma \in \Lambda$ and $\tau \in B(\sigma)$ then $\tau \in \Lambda$ and if $\tau, \sigma \in K$ then $\tau \cap \sigma \in B(\sigma) \cap B(\sigma)$ (the roles of $K$ and $\Lambda$ are exchanged compared to the notations used in \cite{wac07}). From a geometric simplicial complex $\Lambda,$ one can define a (abstract) simplicial complex $K(\Lambda)$ corresponding to the poset of non-empty faces of $\Lambda$ and say that $\Lambda$ realizes $K.$ It is an exercise in \cite{rs72} (2.27(1)) to show that every simplicial complex can be realized as a geometric simplicial complex and such a realization is unique up to homeomorphism. We can therefore define the topology of $K$ to be the topology of $\Lambda$ as a subspace of $\R^n,$ with usual topology. As a direct consequence of Proposition \ref{propsubdivdual}, $K$ and $\kb$ are homeomorphic as topological spaces.

%We will also mention the point of view of Diagrams of spaces, used for example in the context of arrangements in \cite{zz93}. A diagram of space is defined by a functor $\D$ from a poset $(P, \leq)$ to the category of topologicaly spaces having the homotopy type of a finite CW-complex.  provides an other definition of geometric realization using homotopy direct limit. The topology on diagrams of space can be seen as "dual topological notion" : topology on the information used to construct a topological space.

The notions of star and link are usually defined in the context of simplicial complexes. These definitions are easily adapted to the case of cc in the following way.

\begin{defi}[Star and link of a cell] \label{defstarlink}
The \textit{star} of a cell $x$ of a cell complex $K$ is defined by 
$$ \st_K(x) = \{  y \in K ~|~ \exists w \in K \text{ such that } x \vee y = w \}.$$
The \textit{link} of $x$ is defined as
$$ \lk_K(x) = \{ y \in \st_K(x) ~|~  x \cap y = \emptyset \}.$$
\end{defi}

The previous definitions of star and link are then equivalent to the definitions given for example in \cite{bry02} for the case of simplicial complex, since if $\sigma$ and $\eta$ are sub-simplicies of a simplex in a simplicial complex $K$ then $\sigma \vee \eta = \sigma \cup \eta \in K.$

Homogeneity in a cc is then defined using the star of each vertex and the topology associated to a cc introduced before.

\begin{defi}[Homogeneous vertex in a cc]
A vertex $v$ in a $R$-cc is \textit{homogeneous} if any geometrical realization of the cc $K \cap \st_K(v)$ is homeomorphic to the closed $R$-dimensional Euclidean ball.
%$$ \{x \in \R^n ~|~ \|x\| \leq 1\}.$$
If every vertex in $K$ is homogeneous, we say that $K$ is \textit{homogeneous}.
\end{defi}

An example of non-homogeneous vertex is given in Figure \ref{figtorus} of Chapter \ref{secdualcob}. It is then natural to introduce the following definitions of cellular manifold and simplicial manifold.

\begin{defi}[Cellular and simplicial manifold]
A \textit{cellular $R$-manifold} is a homogeneous $R$-cc. A \textit{simplicial (cellular) manifold} is a cellular manifold such that the underlying cc is simplicial.
\end{defi}

A \textit{discretization} of a piecewise-linear manifold $M$ is a cellular manifold $K$ which geometrical realization is homeomorphic to $M$ and a \textit{triangulation} of $M$ is a simplicial discretization of $M.$

\newpage
\vspace{3cm}

\section{Structures in a simple cell complex and reconstruction} \label{secreconst}

\vspace{2cm}

The aim of this chapter is to prove Theorem \ref{thminducedcc} as well as Corollaries \ref{correconst1} and \ref{correconst2} of Section \ref{ssecreconstthm}. These results can be interpreted as a method for reconstructing a local even simply connected simple cell complex from its 2-skeleton.  The core of this chapter is thus devoted to the introduction of the notions of fullness, evenness, monodromy-freedom and simple connectedness and prove some related basic facts, with the exception of Section \ref{secpolytope} where we make links with some well-known notions of discrete geometry. A key technical result that allows to prove the results on reconstruction is the "Extension Lemma" \ref{lemextension}. This lemma will be the first result involving all notions required for the main results of Section \ref{ssecreconstthm} as well as two newly introduced notions of this chapter defined in the context of a cc of rank 2 or higher: a connection and covariant edge fields.
We start with a short section introducing these two new notions, followed by Section \ref{ssecfullnsimplecc} where we introduce the notions of full and simple cell complexes that characterize the duals of simplicial cell complexes. In Section \ref{secpolytope}, we make a digression from the reconstruction problem about an abstract definition of polytopes and shellability, using the notion of simplicity. We first show the Euler-Poincaré formula for shellable local cell complexes. We then introduce a notion of 2-shellability, only involving the cells of the 2-skeleton, and show in Theorem \ref{thm2shell} that the 2-shellability of a simple cell complex is equivalent to the shellability of its dual. We then come back to the discussion of reconstruction-related notions in Section \ref{ssectionsimplyconccc}, a mostly technical section whose principal goal is to define simply connected cc. Simple connectedness in a cc is defined using a combinatorial notion of homotopy that naturally arise from the structure of 2-cc. In Section \ref{ssecmonod}, we introduce the concept of monodromy-free and even cc. We then have all the ingredients needed to give a relatively quick proof of the Extension Lemma at the beginning of Section \ref{ssecreconstthm}, which allows us to prove the main results on reconstruction in the remainder of this last section.

In this chapter 2-cc play an important role. We often use the letter $C$ to denote a $2$-cell of a cc and we also recall that if $K$ is a pure 2-cc then the dual set of $A \subset \kz$ corresponds to
$$ \ce{A} := \{ C \in \kt ~|~ A \subset C \}.$$
In what follows, we will choose to state certain results using a 2-cc $K$ in order to formulate the proof in terms of dual sets as above. In these cases $K$ often play the role of the 2-skeleton of a cc of higher rank for which the corresponding result is also true.

It will also be convenient to use the following notation to denote the set of edges included in a cell $x \in K$ and containing a vertex $v \in x,$
$$ \E_v^x := \{ e \in \E_v ~|~ e \subset x \}.$$
More generally, if $\A \subset K,$ we define
$$\E_v^\A := \bigcup_{x \in \A} \E_v^x.$$

As an example used in Section \ref{ssecindedgemap}, if $v,w$ are vertices of a pure graph-based $2$-cc $K$ such that $\{v,w\} \in \ko,$ we consider the following subset of $\E_v:$
$$\E_v^{\dual{w}} := \{ e \in \E_v ~|~ \exists C \in \dual{w} \text{ s.t. } e \subset C \} = \{ e \in \E_v ~|~ \dual{\{v,w\}} \cap \dual{e} \neq \emptyset \}.$$
In Section \ref{ssecfullnsimplecc} we introduce the notion of full cc and this notion in particular implies that $\E_v^{\dual{w}} = \E_v$ for all $w \in \Ne_v.$ Since these edge sets will for the most part be used in the case of full cc, the set $\E_v^{\dual{w}}$ is almost exclusively used to introduce a general definition of connection in the next section.

\subsection{Connection and covariant edge fields} \label{ssecindedgemap}

In this first section, we define the notion of a connection as a mapping between edges of neighbouring vertices canonically associated to a pure graph-based cc of rank 2 or higher. This connection is an important tool for this chapter, as well as the notion of covariant edge fields introduced next which is the main object of the Extension Lemma \ref{lemextension}.

The next Lemma is used to introduce the \textit{connection} $\nabla$ for the case of a pure graph-based cc $K$ of rank 2. The connection associated to a pure graph-based cc $L$ of rank higher or equal to 2 is simply defined as the connection associated to its 2-skeleton $K = L^{(2)}.$

\begin{lem} \label{nabla}
Let $K$ be a pure graph-based 2-cc and $\{v,w\} \in \ko.$  There is a bijection 
\begin{align*}
\nabla_w^v : \E_v^{\dual{w}} &\longrightarrow \E_w^{\dual{v}} \\
e_v &\longmapsto  \nabla_w^v e_v =:e_w,
\end{align*}
as illustrated in Figure \ref{figlemnabla}, such that
\begin{enumerate}[label=\textbf{\arabic*}),topsep=2pt, parsep=2pt, itemsep=1pt]
\item $\nabla_w^v \{v,w\} = \{v,w\},$ \label{nabla0}
\item  If $C \in \ce{\{v,w\}}$ then  $e_v \subset C$ if and only if $e_w \subset C,$ \label{nabla1}
\item $(\nabla_w^v)^{-1} = \nabla_v^w.$ \label{nabla2}
\end{enumerate}

\begin{figure}[!h]
\centering
\includegraphics[scale=0.35]{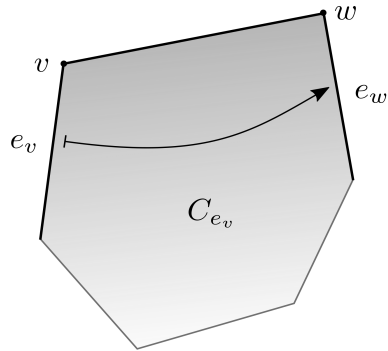}
\caption{\label{figlemnabla}This figure illustrates how the connection $\nabla_w^v$ maps an edge $e_v \in \E_v \setminus \{v,w\}$ to an edge $e_w \in \E_w \setminus \{v,w\}$ using a $2$-cell $C_{e_v}$ defined as the unique $2$-cell in $\dual{v}$ containing $e_v$ and $\{v,w\}.$} 
\end{figure}

\begin{proof}
We set $\nabla_w^v \{v,w\} = \{v,w\}.$ Let $e_v \in \E_v^{\dual{w}} \setminus \{v,w\}$ and let $C_{e_v} \in \ce{e_v} \cap \ce{\{v,w\}}.$ Since $e_v \neq \{v,w\}$ we have $\abs{\ce{e_v} \cap \ce{\{v,w\}}} = 1$  as two 2-cells cannot intersect on more than one edge by Axiom \ref{cccinter}. $C_{e_v}$ is then the unique 2-cell containing $e_v$ and $\{v,w\}.$ In particular $C_{e_v} \in \ce{w},$ so there exists a unique edge $e_w \in \E_w \setminus \{v,w\}$ such that $\{\{v,w\}, e_w\} \subset C_{e_v},$ which implies $e_w \in \E_w^{\dual{v}},$ so that we can set $\nabla_w^v e_v = e_w.$  The uniqueness of $C_{e_v}$ and $e_w$ ensures \ref{nabla1} and $\nabla_v^w e_w = e_v.$
\end{proof}
\end{lem}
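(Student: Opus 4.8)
The plan is to build $\nabla_w^v$ directly from the cellular structure around the edge $\{v,w\}$, using that in a $2$-cc two distinct $2$-cells overlap in at most one edge. First I would set $\nabla_w^v\{v,w\} := \{v,w\}$, which already secures property \ref{nabla0}. For the remaining edges, fix $e_v \in \E_v^{\dual{w}}$ with $e_v \ne \{v,w\}$; by definition of $\E_v^{\dual{w}}$ there is a $2$-cell $C \in \ce{e_v}\cap\ce{\{v,w\}}$, and such a $C$ is unique. Indeed, if $C,C' \in \kt$ both contained the two distinct edges $e_v$ and $\{v,w\}$, then $C\cap C'$ would be a cell of rank at most $1$ by Axiom \ref{cccinter}; but a rank-$\le 1$ cell of a graph-based cc is a single vertex or a single edge, so it contains at most one edge as a subset, a contradiction. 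Write $C_{e_v}$ for this unique $2$-cell.

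Next I would invoke the cycle description of connected $2$-cells from Lemma \ref{lemrepr2cell}: $C_{e_v}$ is a cycle, and both $\{v,w\}$ and $e_v$ are among its edges, since any edge of $K$ contained in $C_{e_v}$ is one of the cycle edges. Hence the vertex $w$ is incident in $C_{e_v}$ to exactly two of those edges, one being $\{v,w\}$; I would define $\nabla_w^v e_v =: e_w$ to be the other. Then $e_w \in \E_w$, $e_w \ne \{v,w\}$, and since $C_{e_v}$ also contains $v$ we get $C_{e_v} \in \ce{e_w}\cap\ce{v}$, so $e_w \in \E_w^{\dual{v}}$ and $\nabla_w^v\colon \E_v^{\dual{w}}\to\E_w^{\dual{v}}$ is well defined.

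Property \ref{nabla1} then follows by cases: for $C\in\ce{\{v,w\}}$, if $e_v=\{v,w\}$ both sides are trivially true, and otherwise $e_v\subset C$ forces $C\in\ce{e_v}\cap\ce{\{v,w\}}$, hence $C=C_{e_v}$ by uniqueness, and likewise $e_w\subset C$ forces $C=C_{e_v}$ (the same uniqueness argument applied to the pair $e_w,\{v,w\}$ shows $C_{e_v}$ is the unique $2$-cell containing both), so $e_v\subset C\iff C=C_{e_v}\iff e_w\subset C$. For property \ref{nabla2} and bijectivity, I would note that running the construction for $\nabla_v^w$ on input $e_w$ uses the unique $2$-cell containing $e_w$ and $\{v,w\}$, which is again $C_{e_v}$, and returns the edge of $C_{e_v}$ at $v$ other than $\{v,w\}$; since $e_v$ is an edge of $K$ at $v$ contained in $C_{e_v}$, it is exactly that edge, so $\nabla_v^w e_w = e_v$, and by symmetry $\nabla_w^v\circ\nabla_v^w=\id$ as well. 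Hence $\nabla_w^v$ and $\nabla_v^w$ are mutually inverse bijections. The one genuinely delicate point is this pair of uniqueness claims --- of $C_{e_v}$ and of $e_w$ --- both resting on two elementary features of $2$-cc: distinct $2$-cells meet in a cell of rank $\le 1$ (Axiom \ref{cccinter}), and a connected $2$-cell is a cycle, hence incident to each of its vertices along exactly two of its edges (Lemma \ref{lemrepr2cell}); everything else is routine bookkeeping.
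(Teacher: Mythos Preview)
Your proof is correct and follows the same approach as the paper: identify the unique 2-cell $C_{e_v}$ containing both $e_v$ and $\{v,w\}$ via Axiom \ref{cccinter}, then define $e_w$ as the other edge of $C_{e_v}$ at $w$. One small remark: where you invoke Lemma \ref{lemrepr2cell} to say $C_{e_v}$ is a cycle, the hypothesis here is only ``pure graph-based'' (not cell-connected), so $C_{e_v}$ need not be connected; the paper instead relies directly on the diamond property \ref{cccdiamond} to obtain the unique second edge at $w$ in $C_{e_v}$, which is exactly the fact your argument uses and which holds without any connectedness assumption.
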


The connection is also a convenient tool in the formulation of Lemma \ref{lemfullk} and in Definition \ref{defmonodromyfreecc} of monodromy-free cc.  From this mapping, one is also naturally lead to consider the notion of covariant edge fields, introduced next.

\begin{defi}[Covariant Edge field]
Let $K$ be a cc a let $\dom \subset \kz$ be non-empty. An \textit{edge field on $\dom$} is a map $$\phi : \dom \dans \ko$$ such that $\phi(v) \in \E_v$ for all $v \in \dom.$ We will sometimes denote the \textit{domain of $\phi$} $\dom$ by $\dom_\phi.$

An edge field $\phi$ in a full graph-based cc $K$ is said to be \textit{covariant} if, whenever an edge $\{v,w\} \in \ko$ is included in $\dom_\phi$ then $\nabla_w^v \phi(v) = \phi(w).$
\end{defi}

Covariant edge fields, encoding the edge mapping induced by the connection, are a key tool in the proof of the Theorem \ref{thminducedcc} where they allow to construct "induced cells" by identifying edge fields that can be interpreted as "tangent" and "normal" to a given cell. In this sense the connection can in fact be interpreted as defining a notion of "parallel transport" in a cc.

\subsection{Full and simple cc} \label{ssecfullnsimplecc}

In this section we introduce the notions of regularity and edge-regularity in a cc and we explain how these notions are implied by the notion of fullness in a cc that is defined next. Fullness is the object of two lemmas and is used as a characteristic property in our definition of simple cc. We finally show that simple cc are precisely the duals of simplicial complexes.

The following notion of $n$-regularity is common to many contexts involving graphs.

\begin{defi}[$n$-regular cc]
A graph-based cc $K$ is said to be $n$-regular if $\abs{\E_v} = n$ for all $v \in \kz$ or equivalently if $\kg$ is a $n$-regular graph.
\end{defi}

When considering cc of rank 2, it is natural to also define the notion of edge-regularity as follows.

\begin{defi}[$n$-edge-regular 2-cc]
A 2-cc $K$ is \textit{$n$-edge-regular} if $\abs{\ce{e}} = n$ for all $e \in \ko.$ 
\end{defi}

As a consequence of the diamond property \ref{cccdiamond}, the notions of regularity and edge-regularity in a cc are related. The precise relation is given in the next remark mentioned in Lemma \ref{lemfullk}.

\begin{rema}\label{remdegv}
Let $n \geq 1$ and $ n-1 \geq m \geq 0.$ An $n$-regular 2-cc is $m$-edge regular if and only if $\abs{\ce{v}} = nm/2$ for all $v \in \kz,$ as showed by the following counting argument:
\begin{align*}
2 \abs{\ce{v}} &= \sum_{C \in \ce{v}} 2 = \sum_{C \in \ce{v}} \abs{ \E_v^C}\\
&= \abs{\{(e,C) \in \E_v \times \ce{v}~|~ e \subset C \}} \\
&= \sum_{e \in \E_v} \abs{\ce{e}} = m \abs{\E_v} = m \deg(v) = mn,
\end{align*}
where the second equality uses that $v$ as two neighbors in $C$ by the diamond property \ref{cccdiamond} and the penultimate equality uses that $K$ is $m$-edge-regular. In particular, in a 2-edge-regular 2-cc we have $\deg(v) = \abs{\ce{v}}.$
\end{rema}

Fullness in a cc, introduced next, is a key characteristic of cc dual to simplicial complexes and formalizes the idea that one cannot add more cell to the set of cells containing a given vertex while complying with Axiom \ref{cccinter}. This dualizes the property of (maximal) simplices in a simplicial complex for which one cannot add more faces, as their faces are already given by all subsets of their set of vertices. Since in what follows we will particularly need to use this assumption for the case of 2-cc, or 2-skeletons of higher dimensional cc, we chose to reserve the terminology "full" to denote the fullness property for the case of 2-cc and only specify a rank $r$ when using the more general notion of $r$-fullness.

\begin{defi}[Full cc]
Let $K$ be a cc of rank $R \geq 2$ and let $2 \leq r \leq R.$ We say that $K$ is \textit{$r$-full} if for all $v \in \kz,$ for all $2 \leq k \leq r$ and for all $S \subset \E_v$ such that $\abs{S}=k,$ there is a (unique, by Lemma \ref{lemaedgesetfullcc}) $r$-cell $x \in K$ such that $ \E_v^x = S.$ We simply say that $K$ is \textit{full} if $K$ is 2-full.
\end{defi}

In particular a $R$-full $R$-cc is pure. The following lemma constitutes a first useful observation one can make about full connected cc.

\begin{lem} \label{lemfullk}
A connected cc $K$ is full if and only if there exists $n\in \N^*$ such that $K$ is a $(n-1)$-edge-regular $n$-regular 2-cc.
\begin{proof}
First, as a consequence of Remark \ref{remdegv}, a $n$-regular 2-cc is $(n-1)$-edge regular if and only if $\abs{\ce{v}} = n(n-1)/2.$
Therefore if $K$ is full and $\{v,w\} \in \ko$ then $\nabla_w^v$ is a bijection from $\E_v$ to $\E_w.$  Also, since $\kg$ is connected, $\kg$ is $n$-regular where $n= \abs{\E_v}$  and 
$$\abs{\ce{v}} = \abs{\ts{\E_v}} = \ts{\abs{\E_v}} = \frac{n(n-1)}{2}.$$
Conversely, if $K$ is a $(n-1)$-edge-regular $n$-regular 2-cc and $v \in \kz$ then $$\frac{n(n-1)}{2} = \abs{\ce{v}} \leq \abs{\ts{\E_v}} = \frac{n(n-1)}{2},$$ so $\abs{\ce{v}} = \abs{\{\{e_1,e_2\} \subset \E_v\}}$ and $K$ is full.
\end{proof}
\end{lem}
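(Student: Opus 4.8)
The plan is to study, for each vertex $v$, the map $\Phi_v \colon \ce{v} \to \ts{\E_v}$ sending a $2$-cell $C$ with $v \in C$ to its set $\E_v^C$ of edges through $v$, and to observe that fullness of $K$ at $v$ is precisely the statement that $\Phi_v$ is a bijection, while $n$-regularity together with $(n-1)$-edge-regularity is precisely the statement that $\Phi_v$ is a map between two finite sets of the same cardinality $\binom{n}{2}$. The preliminary facts I would establish are: (i) $\Phi_v$ is well defined, i.e. $\abs{\E_v^C}=2$ for $C \in \ce{v}$, which is the diamond-property identity $\abs{\Ne_v \cap C}=2$ already recorded in Lemma \ref{lemrepr2cell}; and (ii) $\Phi_v$ is injective, because if $C \neq C'$ both lie in $\ce{v}$ then $C \cap C'$ is properly contained in $C$ (as $C$ and $C'$ have equal rank), hence by Axioms \ref{cccrank} and \ref{cccinter} it is a cell of rank at most $1$, so it has at most two vertices and cannot contain the three vertices spanned by two distinct edges through $v$; thus $\abs{\E_v^C \cap \E_v^{C'}} \leq 1$. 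From (i) and (ii) one gets $\abs{\ce{v}} \leq \abs{\ts{\E_v}} = \binom{\deg(v)}{2}$ for every $v$, with equality if and only if $\Phi_v$ is onto, i.e. if and only if $K$ is full at $v$.

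For the forward implication I would suppose $K$ full. Fullness yields $\E_v^{\dual w} = \E_v$ for every $\{v,w\} \in \ko$: the edge $\{v,w\}$ lies in some $2$-cell because a full $2$-cc is pure, and any other $e \in \E_v$ lies in the $2$-cell $C$ with $\E_v^C = \{e,\{v,w\}\}$, which then contains $w$. Lemma \ref{nabla} therefore gives a bijection $\nabla_w^v \colon \E_v \to \E_w$ for every edge $\{v,w\}$, so $\deg$ is constant along edges, and since $\kg$ is connected there is an $n \in \N^*$ with $\deg(v) = n$ for all $v$, i.e. $K$ is $n$-regular. By the equality case of the first paragraph, $\abs{\ce{v}} = \binom{n}{2} = n(n-1)/2$ for every $v$, and Remark \ref{remdegv}, applied with $m = n-1$, then shows that $K$ is $(n-1)$-edge-regular.

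For the converse I would assume $K$ is a connected $n$-regular $(n-1)$-edge-regular $2$-cc. Remark \ref{remdegv} with $m = n-1$ gives $\abs{\ce{v}} = n(n-1)/2 = \binom{n}{2} = \abs{\ts{\E_v}}$ for every $v$, while the first paragraph gives the reverse inequality $\abs{\ce{v}} \leq \abs{\ts{\E_v}}$; hence $\Phi_v$ is a bijection for every $v$, which is exactly fullness of $K$. I expect no serious obstacle here: everything reduces to the injectivity of $\Phi_v$ and to bookkeeping with Remark \ref{remdegv}, the only mild subtlety being the degenerate range $n \leq 1$, which is vacuous since a $2$-cc contains a $2$-cell and hence a vertex of degree at least $2$.
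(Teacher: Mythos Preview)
Your proposal is correct and follows essentially the same approach as the paper's proof. The paper's argument is terser but relies on the same ingredients: the bijection $\nabla_w^v$ together with connectedness to get $n$-regularity, the counting $|\ce{v}| = \binom{n}{2}$ from fullness, and Remark \ref{remdegv} to pass between $|\ce{v}|$ and edge-regularity. Your contribution is to make explicit the map $\Phi_v : \ce{v} \to \ts{\E_v}$ and its injectivity (which the paper uses implicitly when writing $|\ce{v}| \leq |\ts{\E_v}|$ in the converse direction and $|\ce{v}| = |\ts{\E_v}|$ in the forward direction), and to spell out why $\E_v^{\dual w} = \E_v$ under fullness; this is helpful but not a different route.
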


An important consequence of $k$-fullness is the following useful result, which will be used several times in the remainder of this chapter.

\begin{lem} \label{lemaedgesetfullcc}
Let $K$ be an $R$-cc and suppose that $K$ is $k$-full for a given $k$ such that $2 \leq k \leq R.$ Then for all $x \in K$ such that $1 \leq \rk(x) \leq k$ and all $v \in x,$ we have 
$$ \abs{\E_v^x} = \rk(x)$$
and $x$ is the unique $\rk(x)$-cell $y$ such that $\E_v^y = \E_v^x.$
\begin{proof}
We proceed by induction on $1 \leq \rk(x) \leq k.$

The claim for $\rk(x) = 1$ is direct. For $x \in \kt$ and $v \in x,$ we have $\abs{ \E_v^x} = 2$ by the diamond property \ref{cccdiamond} and if $x,y \in \kt$ satisfy $\E_v^x = \E_v^y$ then $e \subset x \cap y$ for all $e \in \E_v^x.$ This implies $\rk(x \cap y) \geq 2$ and therefore $x \cap y = x = y.$

Suppose that the claim holds for all cells in $K^{(r)}$ for some $2 \leq r \leq k-1$ and let $x \in K^{[r+1]}$ be such that $v \in x.$ By the property \ref{cccenough}, there exists $x_s \in K^{[s]},$ $s = r-1, r$ such that $$v \in x_{r-1} \subset x_r \subset x.$$ By the diamond property there exists $x_r' \in \kr \setminus \{x_r\}$ such that $x_{r-1} \subset x_r' \subset x$ and by the induction hypothesis there is an edge $e_r \in \E_v^{x_r'} \setminus \E_v^{x_r} \subset \E_v^x.$ Since $K$ is $k$-full, there is $y \in K^{[r+1]}$ such that $\E_v^y = \E_v^{x_r} \cup \{e_r\}.$ Therefore we have that $ \abs{\E_v^{x \cap y}} = r + 1$ and this implies $ \rk(x \cap y) \geq r + 1$ by induction hypothesis. Hence $x \cap y = x = y$ and we have $ \abs{\E_v^x} = r+1 = \rk(x).$ Similarly, the uniqueness part is proven as follows. Let $y \in K^{[r+1]}$ such that $\E_v^y = \E_v^x.$ By the previous argument $\abs{\E_v^{x \cap y}} = \rk(x \cap y)$ and $\abs{\E_v^{x \cap y}} = \abs{\E_v^x} = r + 1$ and this implies that $x \cap y = x = y.$
\end{proof}
\end{lem}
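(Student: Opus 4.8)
The plan is to prove the two assertions simultaneously by induction on $r:=\rk(x)$, ranging over $1\le r\le k$, since the uniqueness statement at rank $r$ is precisely what feeds the existence argument one rank higher.

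The base cases are rank $1$ and rank $2$; the latter is genuinely needed, for instance when $k=2$, where there is no inductive step at all. If $\rk(x)=1$ then $x$ is an edge, and by Axiom \ref{cccrank} the only edge contained in $x$ is $x$ itself, so $\E_v^x=\{x\}$, $\abs{\E_v^x}=1$, and uniqueness is immediate from Axiom \ref{cccrank} again. If $\rk(x)=2$, I would apply the diamond property \ref{cccdiamond} to the incident pair $v\subsetneq x$ (which has $\rk(v)=\rk(x)-2$): this gives $\abs{\cface{v}\cap\face{x}}=2$, and since $\E_v^x$ is exactly the set $\cface{v}\cap\face{x}$ of edges that contain $v$ and are contained in $x$, it follows that $\abs{\E_v^x}=2$. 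For uniqueness at rank $2$: if $x,y\in\kt$ with $\E_v^x=\E_v^y$, then $x\cap y$ is a cell (Axiom \ref{cccinter}) containing the two distinct edges through $v$, so $\rk(x\cap y)\ge 2$ and hence $x\cap y=x=y$ by Axiom \ref{cccrank}.

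For the inductive step, assume the claim throughout $K^{(r)}$ for some $2\le r\le k-1$ and take $x\in K^{[r+1]}$ with $v\in x$. Using Axiom \ref{cccenough} repeatedly, first extract cells $x_{r-1}\subsetneq x_r\subsetneq x$ with $v\in x_{r-1}$, $\rk(x_{r-1})=r-1$ and $\rk(x_r)=r$. The diamond property \ref{cccdiamond} applied to $x_{r-1}\subsetneq x$ then produces a second rank-$r$ cell $x_r'\ne x_r$ with $\cface{x_{r-1}}\cap\face{x}=\{x_r,x_r'\}$, both cells containing $v$. Now the induction hypothesis is used twice: its uniqueness part gives $\E_v^{x_r}\ne\E_v^{x_r'}$ and its equality part gives $\abs{\E_v^{x_r}}=\abs{\E_v^{x_r'}}=r$, so there is an edge $e_r\in\E_v^{x_r'}\setminus\E_v^{x_r}$, and $e_r\subset x_r'\subset x$ puts $e_r$ in $\E_v^x$. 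Then $\E_v^{x_r}\cup\{e_r\}$ is a subset of $\E_v$ of cardinality $r+1\le k$, so $k$-fullness supplies a cell $y\in K^{[r+1]}$ with $\E_v^y=\E_v^{x_r}\cup\{e_r\}$. Since an edge lying in both $x$ and $y$ lies in the cell $x\cap y$ (Axiom \ref{cccinter}), we have $\E_v^{x\cap y}=\E_v^x\cap\E_v^y\supseteq\E_v^{x_r}\cup\{e_r\}$, so $\abs{\E_v^{x\cap y}}\ge r+1$; as $\rk(x\cap y)\le\rk(x)=r+1$ by Axiom \ref{cccrank} while $\rk(x\cap y)\le r$ would contradict the inductive equality applied to $x\cap y$, we must have $\rk(x\cap y)=r+1$, whence $x=y$ by Axiom \ref{cccrank} once more. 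This gives $\abs{\E_v^x}=\abs{\E_v^y}=r+1=\rk(x)$, and running the same chain of implications with $y$ replaced by an arbitrary $(r+1)$-cell $y'$ satisfying $\E_v^{y'}=\E_v^x$ forces $\rk(x\cap y')=r+1$ and hence $y'=x$, which is the uniqueness assertion.

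I do not expect a real obstacle here: the argument is a clean two-layer induction. The one point worth flagging is that one must invoke the inductive \emph{uniqueness} statement, not merely the cardinality count, in order to separate $x_r$ from $x_r'$ — that separation is exactly what makes the extra edge $e_r$ available — and then keep straight the bookkeeping identity $\E_v^{x\cap y}=\E_v^x\cap\E_v^y$, which is what lets Axiom \ref{cccrank} eliminate the degenerate possibility $\rk(x\cap y)<r+1$.
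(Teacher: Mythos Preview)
Your proof is correct and follows essentially the same route as the paper's: induction on the rank, the diamond property to produce the second $r$-cell $x_r'$, the inductive uniqueness to extract the extra edge $e_r$, and $k$-fullness together with the intersection axiom to force $x=y$. You are in fact slightly more explicit than the paper in flagging that it is the \emph{uniqueness} part of the induction hypothesis that separates $\E_v^{x_r}$ from $\E_v^{x_r'}$, and in recording the identity $\E_v^{x\cap y}=\E_v^x\cap\E_v^y$.
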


The following then serves as our definition of a simple cc.

\begin{defi}[Simple cc]
A $R$-cc such that $R \geq 2$ is \textit{simple} if it is closed and $R$-full. A $R$-cc such that $R \leq 1$ is \textit{simple} if and only if it is closed.
\end{defi}

%For example, the boundary of the tetrahedra is the only connected cell complex of rank 2 or higher which is both simple and simplicial. 
The next lemma provides an alternative definition of a simple cc as a graph-based $R$-full $(R+1)$-regular $R$-cc.

\begin{lem} \label{lemregfullcc}
A connected $R$-full $R$-cc is closed if and only if it is graph-based and $(R+1)$-regular.
\begin{proof}
Let $K$ be a connected $R$-full cell complex of rank $R.$ By Lemma \ref{lemfullk} $K$ is $n$-regular and Lemma \ref{lemaedgesetfullcc} implies that $n \geq R$ and that for all $1 \leq r < s \leq R$ the number of $s$-cells containing a given $r$-cell $x$ is given by 
$$\abs{\{S \subset \E_v \setminus \E_v^{x_r} ~|~ \abs{S} = s-r \}} = \binom{n-r}{s-r},$$
where $v \in x_r.$ For the case $s = R, ~r = R-1$ and for any $y \in K^{[R-1]},$ we obtain that
$$\abs{ \cface{y}} = \binom{n-r}{s-r} = \binom{n-R + 1}{1} = n - R + 1 = 2 \quad \text{if and only if} \quad n = R+1.$$
\end{proof}
\end{lem}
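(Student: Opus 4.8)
The plan is to reproduce, with the bijections made precise, the counting argument that in a connected $R$-full cc every $(R-1)$-cell has exactly $n-R+1$ cofaces, where $n$ is the common vertex degree, and then to observe that closedness is precisely the statement that this number equals $2$.

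First I would record the reductions common to both directions. Since $K$ is $R$-full it is in particular $2$-full, i.e.\ full, so by Lemma~\ref{lemfullk} (using that $K$ is connected) $K$ is graph-based and $\kg$ is $n$-regular for a single $n\in\N^*$; moreover an $R$-full $R$-cc is pure, and Lemma~\ref{lemaedgesetfullcc} with $k=R$ gives $\abs{\E_v^{x}}=\rk(x)$ whenever $v\in x$ and $1\le\rk(x)\le R$, so applying this to any $R$-cell (one exists, since $\Rk(K)=R$ and $K$ is pure) yields $n\ge R$. In the ``only if'' direction graph-basedness also follows directly from the definition of closed. Thus in both directions we may assume $K$ is connected, graph-based, pure, $R$-full and $n$-regular, and it remains to tie closedness to the value of $n$.

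Next I would carry out the count. Fix $y\in K^{[R-1]}$ and a vertex $v\in y$, so $\abs{\E_v^{y}}=R-1$ by Lemma~\ref{lemaedgesetfullcc}. I claim $z\mapsto\E_v^{z}$ is a bijection from $\cface{y}$ onto $\{\,S\subset\E_v : \E_v^{y}\subset S,\ \abs{S}=R\,\}$, whose cardinality is $\binom{n-(R-1)}{1}=n-R+1$. If $z\in\cface{y}$ then $y\subset z$ gives $\E_v^{y}\subset\E_v^{z}$, while $\abs{\E_v^{z}}=R$ again by Lemma~\ref{lemaedgesetfullcc}; so the map is well defined, and it is injective by the uniqueness clause of that lemma (two $R$-cells with the same edge set at $v$ coincide). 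For surjectivity, given such an $S$, $R$-fullness (with $\abs{S}=R$) produces a unique $R$-cell $z$ with $\E_v^{z}=S$, and I must check $y\subset z$: an edge $e\ni v$ lies in $y\cap z$ iff $e\in\E_v^{y}\cap\E_v^{z}=\E_v^{y}$, so $\E_v^{y\cap z}=\E_v^{y}$, which is nonempty; since $y\cap z$ is a cell by Axiom~\ref{cccinter}, Lemma~\ref{lemaedgesetfullcc} gives $\rk(y\cap z)=\abs{\E_v^{y}}=R-1$, and then its uniqueness clause forces $y\cap z=y$, so $y\subset z$ and hence $z\in\cface{y}$. Therefore $\abs{\cface{y}}=n-R+1$ for every $(R-1)$-cell $y$.

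Finally, since $K$ is graph-based and pure, $K$ is non-singular iff it is non-branching iff $\abs{\cface{y}}\le2$ for all $y\in K^{[R-1]}$; and in that case $K$ is closed iff every $(R-1)$-cell has exactly two cofaces (by purity an $(R-1)$-cell is not maximal, so it has at least one coface and it lies in $\partial K$ precisely when it has exactly one). By the previous paragraph this holds iff $n-R+1=2$, i.e.\ iff $n=R+1$, i.e.\ iff $K$ is $(R+1)$-regular; together with graph-basedness this is exactly the right-hand side, proving both implications. The only genuinely delicate point is the surjectivity half of the bijection — verifying that the $R$-cell manufactured by $R$-fullness really contains $y$ — which is where Axiom~\ref{cccinter} and the uniqueness part of Lemma~\ref{lemaedgesetfullcc} must be combined; everything else is bookkeeping with binomial coefficients.
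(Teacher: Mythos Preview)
Your proof is correct and follows essentially the same approach as the paper: use Lemma~\ref{lemfullk} to get $n$-regularity, use Lemma~\ref{lemaedgesetfullcc} to count the $R$-cells above a given $(R-1)$-cell as $\binom{n-(R-1)}{1}=n-R+1$, and conclude closedness iff $n=R+1$. The paper states the binomial count without justification, whereas you spell out the bijection $z\mapsto\E_v^{z}$ and, in particular, verify surjectivity via Axiom~\ref{cccinter} and the uniqueness clause of Lemma~\ref{lemaedgesetfullcc}; this is a genuine improvement in rigor but not a different method.
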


Using the previous lemma, we are then able to prove the main result of this section. 

\begin{prop}\label{propsimplecc}
A connected closed cc $K$  is simple if and only if $\kb$ is simplicial.
\begin{proof}
Suppose that $K$ is a connected simple $R$-cc. In order to show that $J = \kb$ is simplicial it is sufficient to prove that any subset of a maximal cell $z$ of $J$ is a cell of $J.$ Since $K$ is $(R+1)$-regular by Lemma \ref{lemregfullcc}, $\dual{z} \in \kz$ is contained in $(R+1)$ edges, which implies that $\abs{\face{z}} = R+1.$ 
By Lemma \ref{lemaedgesetfullcc} we have that
$$ \abs{z} = \abs{ \dual{\dual{z}}} = \abs{ \{y \in \kR ~|~ \dual{z} \in y\}} = \binom{R + 1}{R} = R+1.$$ 
As a consequence of Lemma \ref{lemaedgesetfullcc}, we also have that $\abs{x} = R$ for all $x \in \face{z},$ since each vertex $v \in x \subset z$ corresponds to a configuration $ \dual{z} \in \dual{x} \subset \dual{v} \in K^{[R]},$ which is bijectively mapped to a set $S \subset \E_{\ce{z}}$ such that $\abs{S} = \rk_K(\dual{v}) = R$ and $\ce{x} \in S$ and there are $\binom{R}{R-1} = R$ such sets. Therefore we have $$ \face{z} =\{ z \setminus \{v\} ~|~ v \in z \}.$$
Using that $J \cup \{\emptyset\}$ is closed under intersection, this implies that $J \cap z = \pws{z}$ and thus $J$ is simplicial.

Conversely, if $J = \kb$ is a connected closed simplicial $R$-cc and $v \in \kz,$ let $S \subset \E_v, ~ \abs{S} = r.$ If we set $z := \dual{v}$ we obtain the following inclusions
$$ \{ \dual{e} ~|~ e \in S \} := \{ y_1, \dots , y_r \} \subset \face{z} \subset J^{[R-1]}.$$
Since $J$ is simplicial, for all $1 \leq i \leq r$ we have $y_i = z \setminus \{x_i\}$ for some (distinct) $x_i \in z$ and therefore
$$ y_1 \cap \dots \cap y_r = z \setminus \{x_1, \dots, x_r\} \in J^{[R-r]},$$
which implies that $\dual{y_1 \cap \dots \cap y_r}$ is the $r$-cell of $K$ containing all edges in $S.$
\end{proof}
\end{prop}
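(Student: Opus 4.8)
The statement to prove is: a connected closed cc $K$ is simple if and only if $\dual{K}$ is simplicial. I would prove the two implications separately, in each case translating properties back and forth across the duality map using the anti-isomorphism of posets established in Lemma \ref{claim2propdual} and the combinatorial dictionary of Lemma \ref{lemaedgesetfullcc}.

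\textbf{Forward direction ($K$ simple $\Rightarrow$ $\dual{K}$ simplicial).} Recall that a simplicial complex is characterized by the property that every subset of a cell is again a cell, and since $\dual K \cup \{\emptyset\}$ is closed under intersection (Axiom \ref{cccinter}), it suffices to show that every subset of a maximal cell $z$ of $J := \dual K$ is a cell, which in turn reduces to showing $J \cap z = \pws z$, i.e.\ the faces of $z$ are exactly the $|z|$ sets of the form $z \setminus \{v\}$. The key computations are: (i) by Lemma \ref{lemregfullcc} $K$ is $(R+1)$-regular, so the vertex $\dual z \in \kz$ lies in exactly $R+1$ edges, hence $|\face z| = R+1$; (ii) by Lemma \ref{lemaedgesetfullcc} the vertices of $z$ correspond bijectively to the $R$-cells of $K$ containing $\dual z$, which are bijective with the $R$-subsets of the edge set $\E_{\dual z}$ (of size $R+1$), giving $|z| = \binom{R+1}{R} = R+1$; (iii) similarly each $x \in \face z$ has $|x| = \binom{R}{R-1} = R$. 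Since $|\face z| = |z| = R+1$ and each face has exactly one vertex fewer than $z$, the faces must be precisely $\{\, z \setminus \{v\} \mid v \in z \,\}$, and closure under intersection then forces $J \cap z = \pws z$.

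\textbf{Backward direction ($\dual K$ simplicial $\Rightarrow$ $K$ simple).} Here I must verify $R$-fullness of $K$ (closedness is given). Fix $v \in \kz$ and a subset $S \subset \E_v$ with $|S| = r$; I need a unique $r$-cell $x$ of $K$ with $\E_v^x = S$. Set $z := \dual v \in J^{[R]}$. The dual cells $\dual e$ for $e \in S$ are $r$ distinct elements of $\face z$, and since $J$ is simplicial each has the form $z \setminus \{x_i\}$ for distinct vertices $x_i \in z$; their intersection $z \setminus \{x_1,\dots,x_r\}$ is then an element of $J^{[R-r]}$, whose dual is the desired $r$-cell of $K$ containing exactly the edges of $S$ at $v$. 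Uniqueness follows from Lemma \ref{lemaedgesetfullcc}.

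\textbf{Main obstacle.} The conceptual work is essentially all done in the preceding lemmas (\ref{lemfullk}, \ref{lemaedgesetfullcc}, \ref{lemregfullcc} and the duality Lemma \ref{claim2propdual}); the remaining difficulty is bookkeeping — carefully matching up edge subsets at a vertex $v$, $R$-cells above $v$, and faces of $z = \dual v$ under the duality, and making sure the binomial counts line up so that the face poset of $z$ is forced to be the full power set. The one place to be a little careful is checking that the correspondence "vertex of $x$ $\leftrightarrow$ edge subset of $\E_{\dual z}$ of the appropriate size" is genuinely a bijection and not just an injection, which is exactly what Lemma \ref{lemaedgesetfullcc} supplies.
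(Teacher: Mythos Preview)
Your proposal is correct and follows essentially the same argument as the paper: in both directions you use Lemma~\ref{lemregfullcc} for regularity, Lemma~\ref{lemaedgesetfullcc} for the bijection between edge subsets at a vertex and cells above it, and the duality anti-isomorphism to translate between faces of $z=\dual v$ and cells of $K$, concluding via the counting $|z|=|\face z|=R+1$ and closure under intersection. The only cosmetic addition is your explicit mention of uniqueness in the backward direction, which the paper leaves implicit (it is the parenthetical remark in the definition of fullness).
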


\subsection{Digression on polytopes and shellability} \label{secpolytope}

Now that we introduced the notion of simple cell complex, we would like to discuss some links between the cell complexes considered in this thesis and some common notions of discrete geometry that could be relevant when stepping into physical considerations: polytopes and the related notion of shellability for cell complexes. Polytopes or polyhedral complexes appear as a natural choice to generalize simplices or simplicial complexes. For example, in \cite{bds11} three dimensional polyhedra are used as natural generalizations of a tetrahedron to define a model of three-dimensional quantum geometry related to the spin foam or Loop Quantum Gravity approach. The notion of shellability is an assumption one can make on some discretizations to ensure that it satisfies the Euler-Poincaré formula. In spirit, shellable objects are iteratively defined as being constructed using pieces of a given dimension $d$ glued together in a given order, satisfying that the next piece is intersecting the previous ones on a shellable object of dimension $d-1.$ 
%This type of objects can arise for example in physics models that consider discrete spaces as Feynman diagrams of some generalized matrix models  (see e.g. \cite{dfkr00} or \cite{fre05}).

Convex polytopes in $\R^n$ (or equivalently polyhedra), can be defined in many equivalent ways (see e.g. Theorem 2.15 \cite{zi95}). We sometimes designate such convex polytopes in $\R^n$ as \textit{embedded polytopes,} as opposed to the notion of abstract polytopes we introduce below. The notion of \textit{face of an embedded polytope} (defined on p.51 of \cite{zi95}) is more general than the notion of face used in this work. Polytopes can be arranged into a \textit{polyhedral complex}: a finite collection $\mathcal{C}$ of polyhedra in $\R^n$ containing every face of each of its polyhedron and such that the intersection of two polyhedra $P,Q \in \mathcal{C}$ is a face of both $P$ and $Q,$ with the convention that the empty polyhedron is a face of any polyhedron. 

Within their combinatorial theory, convex polytopes are often reduced to their face lattice, i.e. the poset of their faces ordered by inclusion. The boundary of a polytope forms a polyhedral complex and the face lattice of a polytope is characterized by the face lattice of its boundary. The face lattices obtained in this way constitute examples of cell complexes as defined here. The term "simple" is used in this context to denote polytopes such that (the face lattice of) their boundary is dual to a simplicial complex. Therefore our notion of simplicity for cell complexes is equivalent to the corresponding notion for polytopes when restricted to cc that can be realized as such.
Typical examples of cc that cannot be realized as a face lattice of a polyhedral complex are given using cc having cells with non-trivial topology. We will come back to this point in the discussion given after the Proposition \ref{propEPformula} on the Euler-Poincaré formula.

In \cite{ms02}, McMullen and Schulte introduced the notion of abstract polytopes, a notion that is designed to possess the main characteristics of the face lattice of polytopes, but without any a priori notion of embedding in Euclidean space nor convexity and thus share some properties with our definition of cell complexes. Their definition of polytopes is however more general than the definition we give next, as it does not require faces to be characterized by their vertices nor to be stable under intersection (including $\emptyset$ as a face). 

In order to introduce the definition of polytopes, we need the following notion of section.

\begin{defi}
A \textit{section} in a cc $K$ is a sub-poset of $K$ of the form 
$$ \ksec{x}{y} := \{ w \in K ~|~ x \subset w \subset y \},$$
where $x \in K \cup \{\emptyset\}$ and $y \in K$ such that $x \subset y.$ A section $\ksec{x}{y}$ is \textit{connected} if $\rk(y) - \rk(x) \leq 2,$ where we set $\rk(\emptyset) = -1,$ or, for all $x \subsetneq w,w' \subsetneq y,$ there exists a sequence $w_0 = w, w_1, \dots, w_k=w'$ such that $x \subsetneq w_i \subsetneq y$  and $w_i \subset w_{i+1}$ or $w_{i+1} \subset w_i$ for all $i = 0, \dots, k-1.$ 
\end{defi}

\begin{defi}[Polytope]\label{defpolytope}
An pure $R$-cc $K$ is a \textit{polytope of rank $R$} if $\abs{K^{[R]}} = 1$ and every section in $K$ is connected.
\end{defi}

For example, if $K$ is a polytope of rank $R$ then in particular $K$ has no $\partial K$-pinch $x$ such that $\rk(x) \geq R-2,$ as it would imply that the section $\ksec{x}{z}$ is not connected, where $z$ is the unique maximal cell of $K.$ In \cite{ms02}, the property that each section is connected is called strong-connectedness. We already use this terminology here to denote the characteristic property of a pseudo-manifold.

The following result gives sufficient conditions for the cells of a local cell complex to be seen as abstract polytopes. It is the only result in this work that do not hold for cell complexes of arbitrary rank.

\begin{prop} \label{proppolytope}
Let $K$ be a local simple cell complex of rank $R \leq 4.$ Then $K \cap x,$ and thus $\kb \cap \ce{x},$ are polytopes for all $x \in K.$
\begin{proof}
It is sufficient to prove the case $R = 4.$ Let $\ksec{a}{b}$ be a section in $K \cap x.$ If $\rk(b)-\rk(a) \leq 2$ then $\ksec{a}{b}$ is connected by definition and if $a = \emptyset, b \in K$ then $\ksec{a}{b}$ is connected since $K$ is cell-connected. 

In the case $a \in \kz, b \in \ks \cup K^{[4]},$ $\ksec{a}{b}$ is connected since $K$ is full and therefore for all $e_1,e_2 \in \E_v$ there is a 2-cell $C \in \kt$ containing $v$ as well as $e_1$ and $e_2.$ 

Finally, for the case $a \in \ko, b \in K^{[4]},$ the section $\ksec{a}{b}$ is connected if for all $C_1, C_2 \in \kt$ such that $a \subset C_i \subset b, i = 1,2,$ there is $B \in \ks$ such that $C_i \subset B \subset b,~ i=1,2.$ Finding such $B$ amounts to find an edge $e \in \kob$ such that 
$$\kzb \ni \dual{b} \in e \subset \dual{a} \in \ksb$$
and such that $e \in \dual{C_1} \cap \dual{C_2}.$ Such an edge exists as $\kb$ is a simplicial complex by Proposition \ref{propsimplecc} and therefore $\dual{a}$ is a tetrahedron and $\dual{C_i}$ is a triangle of $\dual{a}$ for $i=1,2.$
\end{proof}
\end{prop}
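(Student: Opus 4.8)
$K$ is a local simple cell complex of rank $R \le 4$, and we want to show that $K \cap x$ (and hence, by Proposition~\ref{propsubdivdual} and the duality, $\kb \cap \ce{x}$) is a polytope for every cell $x \in K$.

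The plan is to reduce immediately to the top case $R = 4$ and $x = z \in K^{[4]}$, since every cell $x$ of a simple cc spans a simple subcomplex $K \cap x$ of rank $\rk(x)$ (fullness and closedness being inherited), so once the rank-$4$ case is done the lower ranks follow by the same argument applied verbatim. For $z \in K^{[4]}$ we have $|(K\cap z)^{[4]}| = 1$ by definition, so the only thing to verify is that every section $\ksec{a}{b}$ inside $K \cap z$ is connected. I would organize this by the value of $\rk(b) - \rk(a)$. When $\rk(b) - \rk(a) \le 2$ the section is connected by definition, so the only non-trivial cases are $(\rk(a),\rk(b)) \in \{(-1, k) : k \le 4\} \cup \{(0,3),(0,4),(1,4)\}$. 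The case $a = \emptyset$ (so $b$ is any cell of $K\cap z$) is handled by cell-connectedness: $K$ is a local cc, hence cell-connected, so $K \cap b$ is connected, which is exactly the statement that $\ksec{\emptyset}{b}$ is connected as a poset (one can walk between any two cells through incidences).

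The two genuinely combinatorial cases are $a$ a vertex with $b \in K^{[3]} \cup K^{[4]}$, and $a$ an edge with $b \in K^{[4]}$. For $a = v \in \kz$: fullness says that for any two edges $e_1, e_2 \in \E_v$ there is a $2$-cell $C$ with $\E_v^C = \{e_1,e_2\}$; using this and Lemma~\ref{lemaedgesetfullcc} (which controls $\E_v^w$ for all cells $w$ containing $v$) one sees that any two cells strictly between $v$ and $b$ can be linked through a common $2$-cell, so the section is connected. For $a = e \in \ko$ and $b = z \in K^{[4]}$: the section consists of the $2$- and $3$-cells $w$ with $e \subset w \subset z$, and connectedness comes down to showing that any two $2$-cells $C_1, C_2$ with $e \subset C_i \subset z$ have a common $3$-cell $B$ with $C_i \subset B \subset z$. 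I would translate this through the duality: such a $B$ corresponds to an edge $\eb$ of $\kb$ with $\dual{z}\in\eb\subset\dual{a}$ and $\eb \in \dual{C_1}\cap\dual{C_2}$, i.e.\ an edge of the tetrahedron $\dual{a}$ joining the two faces $\dual{C_i}$. Since $\kb$ is simplicial by Proposition~\ref{propsimplecc}, $\dual{a}$ is a $3$-simplex (tetrahedron) and each $\dual{C_i}$ is a $2$-face of it, and any two $2$-faces of a tetrahedron share an edge — so $\eb$, hence $B$, exists. Finally, the statement about $\kb \cap \ce{x}$ follows because $\kb \cap \ce{x}$ is cc-isomorphic to the dual of $K \cap x$ (a closed simple cc is dual to a simplicial complex, and the constructions restrict to cells), and being a polytope is a property of the poset that we have just checked; alternatively one re-runs the section argument directly in $\kb$.

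The main obstacle is the edge-to-top-cell case: it is the only place where one really needs the full force of simplicity in rank $4$ (the argument "any two $2$-faces of a tetrahedron meet in an edge" is exactly a statement about $3$-simplices and would fail for higher-dimensional simplices, which is precisely why the hypothesis $R \le 4$ is essential and why the proposition is flagged as the unique rank-restricted result). I would make sure to phrase the dual translation carefully — getting the ranks right under $\rkb = R - \rk$ — since that bookkeeping is where a slip is most likely. Everything else (the $a=\emptyset$ case, the vertex case, the reduction to $R=4$) is routine given cell-connectedness, fullness, and Lemma~\ref{lemaedgesetfullcc}.
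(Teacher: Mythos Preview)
Your proposal is correct and follows essentially the same approach as the paper: the same case split on $(\rk(a),\rk(b))$, cell-connectedness for $a=\emptyset$, fullness for $a$ a vertex, and the dual tetrahedron argument via Proposition~\ref{propsimplecc} for $a$ an edge and $b$ a $4$-cell. One small slip: $K\cap x$ is not closed (it has boundary $\partial(K\cap x)$), so it is not ``simple'' in the paper's sense; the paper sidesteps this by working directly with sections $\ksec{a}{b}$ for arbitrary $b\in K$ rather than reducing to $b=z\in K^{[4]}$, but this does not affect your argument since your case analysis already covers all relevant ranks of $b$.
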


We now turn to the notion of shelling of a cc and the proof the Euler-Poincaré formula for shellable complexes in Proposition \ref{propEPformula}. Both the proof of the latter result and the following definition of shelling are closely inspired by the content of Sections 8.1 \& 8.2 of \cite{zi95}, containing also many examples of shellable polyhedral complexes that also constitute examples for our framework.

\begin{defi}[Shelling and shellable cell complex] \label{defshelling}
A shelling of a non-singular $R$-cell complex $K$ is defined in the following recursive manner.

\begin{enumerate}[topsep=2pt, parsep=2pt, itemsep=1pt]
\item For $R = 0,$ $K$ is a point and has a shelling which is the trivial linear order on the 1-element set $\kz = K.$
%\item For $R = 1,$ a shelling of $K$ is a linear ordering on $\ko =\{ e_1, \dots, e_N\}$ such that $e_k$ intersects one of the previous edges on one vertex for all $1 < k < N$ and $e_N$ intersects the previous edges on one vertex or on two vertices only if $K$ is closed.
\item For $R \geq 1,$ a shelling of $K$ is a linear ordering of its facets $\kR = \{z_1, \dots, z_N\}$ such that $\partial(K \cap z_1)$ has a shelling and for all $1 \leq k \leq N,$ the cc
\begin{equation} \label{intersdefshelling}
K \cap \left( z_k \cap \left( \bigcup_{j=1}^{k-1} z_j \right) \right)
\end{equation}
is a non-singular sub-$(R-1)$-cell complex of $K$ which has a shelling that can be completed into a shelling of $\partial(K \cap z_k)$ and such that (\ref{intersdefshelling}) has non-empty boundary for $1<k<N$ and empty boundary for $k = N$ (i.e. (\ref{intersdefshelling}) is equal to $\partial(K \cap z_N)$) only if $k = N$ and $K$ is closed. 
\end{enumerate}
A cc is shellable if it is non-singular and has a shelling.

\begin{comment}
We will say that a $R$-cell complex $K$ is \textit{shellable} if one of the following conditions is met:
\begin{enumerate}[topsep=2pt, parsep=2pt, itemsep=1pt]
\item $R = 0,$
\item $K$ is non-singular with non-empty boundary and has a shelling,
\item $K$ is closed and there is $z_N \in \kR$ such that $K \setminus \{z_N\}$ has a shelling $\{z_1, \dots, z_{N-1}\},$ in which case $\{ z_1, \dots,z_{N-1}, z_N\}$ is said to be a shelling of $K.$
\end{enumerate}

where a \textit{shelling} of a non-singular $R$-cell complex $K$ such that $R \geq 1$ and $\partial K \neq \emptyset$ is a linear ordering of its facets $\kR = \{z_1, \dots, z_N\}$ such that $\partial(K \cap z_1)$ is shellable or $z_1$ is an edge, and for $1 \leq k \leq N,$ the cc
\begin{equation}
K \cap \left( z_k \cap \left( \bigcup_{j=1}^{k-1} z_j \right) \right)
\end{equation}
is a (non-empty) non-singular sub-$(R-1)$-cell complex of $K$ which has a shelling that can be completed into a shelling of $\partial(K \cap z_k).$
\end{comment}
\end{defi}

We make the following remark about what are 1-dimensional shellings according to our definition, which will be used in the proof of the Euler-Poincaré formula \ref{propEPformula}.

\begin{rema} \label{remashell1cc}
The previous definition differs slightly from the one given in \cite[p.233]{zi95} as in our setup a shellable 1-cell complex is either a tree, if it has a non-empty boundary, or a cycle, if it is closed, as opposed to being any connected graph as allowed by Ziegler's definition.
%For, in the 1-dimensional case, the facets of $K$ are edges, and if $z_k, 1 \leq k \leq N$ intersects the previous facets on 2 points, then the cc (\ref{intersdefshelling}) will be the two vertices in $z_k,$ and will therefore not be connected as required by the definition of cell complex, thus preventing the construction from creating a cycle in $K,$ unless $K$ is closed. 
This allows to dervie the Euler-Poincaré formula also for shellable cell complexes that are not the boundary complex of a polytope.
Also, we require the cc (\ref{intersdefshelling}) to have a non-empty boundary for all $1 < k < N,$ as it is not clear a priori that it is the case in general. This allows to prove that the reverse order of a shelling is also a shelling.
\end{rema}

The barycentric subdivision of a shellable cell complex is shellable, as shown in the next lemma. This is a relatively basic result that we will use in Corollary \ref{corECbdiv}. It is also for example shown in \cite{ab17} that the second derived subdivision of any convex polytope is shellable and that a triangulated manifold is a PL-sphere or PL-ball if and only if for some $m \geq 0$ its $m$-th derived subdivision is shellable. 
There exists however an example of a non-shellable triangulation of a PL-ball (\cite{lic91}) or even simple examples of convex polytopes such as a tetrahedra (\cite{ru58}) admitting non-shellable triangulations. 

\begin{lem} \label{lemshellbdiv}
The barycentric subdivision of a shellable cell complex is shellable.
\begin{proof}
Let $K$ be a shellable cell complex. We prove the statement by induction on the rank $R = \Rk(K).$ We first argue that it is sufficient to prove that the statement holds for each sub-cell complex $K \cap z$  where $z \in \kR.$ This follows from the observation that if we have a shelling order of $K \cap z$ for each maximal cell $z$ then we can obtain a shelling of $\bdiv{K}$ by combining the shellings of each sub-cell complex $K \cap z$ according to the order on $\kR$ defined by the shelling of $K.$ For this we use that $\bdiv{K}$ is a simplicial complex, therefore the cell complex defined at each step using (\ref{intersdefshelling}), with $\bdiv{K}$ instead of $K$, is simply the restriction of $\bdiv{K}$ to a $(R-1)$-simplex. By the previous observation, the proof for the cases $R = 0,1$ is clear. This same observation also implies the induction step relatively directly, by the following argument. By Definition \ref{defshelling} a maximal cell $z \in \kR$ satisfies that $\partial (K \cap z ) = K \cap \face{z}$ is shellable and by induction we also have that its barycentric subdivision $\bdiv{\partial (K \cap z)}$ is shellable. A shelling of the latter cell complex directly provides a shelling of $K \cap z,$ as the map between $\left( \bdiv{\partial(K \cap z)} \right)^{[R-1]}$ and  $\bdiv{(K \cap z)}^{[R]}$ defined by $$ \{x_0 \subsetneq \dots \subsetneq x_{R-1}\} \mapsto \{x_0 \subsetneq \dots \subsetneq x_{R-1} \subsetneq z \}$$ is clearly a poset isomorphism. Hence the image of a shelling of $\bdiv{\partial(K \cap z)}$ through this map is a shelling of $\bdiv{(K \cap z)}.$ Therefore we obtain that $\bdiv{K \cap z}$ is shellable for all $z \in \kR$ and we can conclude that $\bdiv{K}$ is also shellable.
\end{proof}
\end{lem}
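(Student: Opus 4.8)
The plan is to induct on the rank $R = \Rk(K)$, relying on two facts already available: that $\bdiv{K}$ is always a simplicial complex, and the recursive shape of shellings in Definition \ref{defshelling}. The base cases are quick. If $R = 0$ then $\bdiv{K} = K$ is a single vertex. If $R = 1$ then, by Remark \ref{remashell1cc}, a shellable $1$-cc is either a tree or a cycle, and subdividing it barycentrically again produces a tree or a cycle, hence a shellable $1$-cc.

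For the inductive step I would first reduce to a statement about single maximal cells: it suffices to show that $\bdiv{K \cap z}$ is shellable for every facet $z$ of $K$. Indeed, given a shelling $z_1, \dots, z_N$ of $K$ and a shelling of each $\bdiv{K \cap z_i}$, one can try to shell $\bdiv{K}$ block by block, listing the facets of $\bdiv{K \cap z_1}$ first (in their shelling order), then those of $\bdiv{K \cap z_2}$, and so on; since every facet of $\bdiv{K}$ is a maximal chain $\{x_0 \subsetneq \dots \subsetneq x_{R-1} \subsetneq z_k\}$ ending at a unique facet $z_k$ of $K$, this enumerates the facets of $\bdiv{K}$ exactly once. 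To check it is a shelling one uses that $\bdiv{K}$ is simplicial, so that the intersection cc appearing at each step of Definition \ref{defshelling} (formula (\ref{intersdefshelling}), with $\bdiv{K}$ in place of $K$) is just the restriction of $\bdiv{K}$ to an $(R-1)$-simplex; within the block of $z_k$ the shelling requirement is literally a shelling of $\bdiv{K \cap z_k}$, while the transition from one block to the next is controlled by the $(R-1)$-cc $K \cap (z_k \cap \bigcup_{j<k} z_j)$, which is shellable by the hypothesis that $z_1, \dots, z_N$ is a shelling of $K$.

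It then remains to handle a single maximal cell $z$. By Definition \ref{defshelling}, $\partial(K \cap z) = K \cap \face{z}$ is a shellable cc of rank $R-1$, so by the induction hypothesis $\bdiv{\partial(K \cap z)}$ is shellable. Now $\bdiv{K \cap z}$ is the cone over $\bdiv{\partial(K \cap z)}$ with apex the vertex $\{z\}$: the assignment $\{x_0 \subsetneq \dots \subsetneq x_{R-1}\} \mapsto \{x_0 \subsetneq \dots \subsetneq x_{R-1} \subsetneq z\}$ is a poset isomorphism from $(\bdiv{\partial(K \cap z)})^{[R-1]}$ onto $\bdiv{(K \cap z)}^{[R]}$, and pushing a shelling of $\bdiv{\partial(K \cap z)}$ forward along it yields a shelling of $\bdiv{K \cap z}$. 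Combined with the block concatenation above, this closes the induction.

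The step I expect to be the real obstacle is the verification that the block-by-block enumeration satisfies every clause of Definition \ref{defshelling}: that each newly added simplex meets the union of all previously added facets (those in its own block and those in earlier blocks) in a non-singular sub-$(R-1)$-complex that extends to a shelling of its boundary, with the emptiness/non-emptiness conditions on that intersection holding exactly where demanded. What keeps this tractable is that inside the simplicial complex $\bdiv{K}$ every such intersection is a subcomplex of a simplex, so non-singularity and shellability of the pieces come essentially for free, and all one really has to track is which faces of the new simplex are already present — and that bookkeeping is exactly what the shelling of $K$ itself encodes.
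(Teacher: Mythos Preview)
Your proof is correct and follows essentially the same route as the paper: induction on $R$, reduction to the shellability of each $\bdiv{K \cap z}$, the cone/poset isomorphism $\{x_0 \subsetneq \dots \subsetneq x_{R-1}\} \mapsto \{x_0 \subsetneq \dots \subsetneq x_{R-1} \subsetneq z\}$ transporting a shelling of $\bdiv{\partial(K \cap z)}$ to one of $\bdiv{K \cap z}$, and the block-by-block concatenation using simpliciality of $\bdiv{K}$. If anything, you are more explicit than the paper about where the verification work hides (the block transitions and the clauses of Definition~\ref{defshelling}), which the paper dispatches with the same observation that the intersection in~(\ref{intersdefshelling}) is always a subcomplex of an $(R-1)$-simplex.
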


For our purposes, the Euler characteristic is defined as follows.

\begin{defi}[Euler characteristic] \label{defeulerchar}
The \textit{Euler characteristic} of a cell complex $K$ of rank $R$ is defined by
$$\chi(K) = \sum_{r=0}^R (-1)^r \abs{\kr}.$$
\end{defi}

As a direct consequence of the latter definition, we can make the following observation, used later in Corollary \ref{corthm2shell}.

\begin{rema} \label{remaeulerchardual}
Since $\abs{\kb^{[R-r]}} = \abs{\kr}$ we have $\chi(K) = (-1)^R \chi(\kb),$ for $K$ closed.
\end{rema}

\begin{comment}
\begin{lem}
A closed 2-cell complex is shellable if and only if it is the discretization of a 2-disc or a 2-sphere.
\begin{proof}
\compl
\end{proof}
\end{lem}
\end{comment}

We are now able to state the Euler-Poincaré formula for cc, which proof is essentially identical to the proof of Corollary 8.17 of \cite{zi95}.

\begin{prop}[Euler-Poincaré formula]\label{propEPformula}
Let $K$ be a shellable $R$-cell complex. Then if $\partial K \neq \emptyset,$ we have
$$ \chi(K) = 1,$$
otherwise, i.e. if $K$ is closed, we have
$$ \chi(K) = 1 + (-1)^R.$$
\begin{proof}
We prove the claim by induction on $R \geq 0$ and $ N = \abs{\kR},$ using that for any sub-cell complex $J,J'$ of $K,$ $\chi$ satisfies 
\begin{equation}\label{chiadditive}
\chi(J) + \chi(J') = \chi(J \cup J') + \chi(J \cap J').
\end{equation}

The claim for $R \leq 1$ and all $N \geq 0$ is clear by Remark \ref{remashell1cc}. For $R \geq 2,$ let $\kR = \{z_1, \dots, z_N\}$ be a shelling of $K.$ 
For $1 \leq k \leq N,$ let $J_k := K \cap \left( \bigcup_{j=1}^k z_j \right).$ By (\ref{chiadditive}), we get
$$ \chi\left( J_k \right) = \chi( K \cap z_k ) + \chi\left( J_{k-1} \right) - \chi\left( (K \cap z_k) \cap J_{k-1} \right).$$
By the induction hypothesis, we have that 
$$  \chi\left( J_{k-1} \right) = 1,$$
as well as 
$$\chi( K \cap z_k) ) = \chi(\{z_k\}) + \chi(\partial(K \cap z_k)) = (-1)^R + 1 + (-1)^{R-1} = 1,$$
since $\partial(K \cap z_k)$ is a closed $(R-1)$-cell complex and the shellability of $K$ implies in particular that the boundary of each of its facet is shellable.

For $1\leq k < N,$ the shelling property and the induction hypothesis also implies
$$ \chi\left( (K \cap z_k) \cap J_{k-1} \right)  = 1.$$
For $k = N,$ we have the following two cases:
\begin{align*}
\chi\left( (K \cap z_N) \cap J_{N-1} \right) = \begin{cases}
\chi\left( \partial(K \cap z_N) \right) = 1 + (-1)^{R-1}, &\text{if } K \text{ is closed,}\\ 
1, &\text{ otherwise.} \end{cases}
\end{align*}
This concludes the proof.
\end{proof}
\end{prop}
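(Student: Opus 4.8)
The plan is to run a double induction on the rank $R$ and on the number $N = \abs{\kR}$ of facets, following the strategy of Corollary 8.17 of \cite{zi95}, with the additivity of the Euler characteristic
\begin{equation*}
\chi(J) + \chi(J') = \chi(J \cup J') + \chi(J \cap J')
\end{equation*}
(valid for any two sub-cc $J,J' \leq K$, since the cells are just sets of vertices and the union and intersection of sub-cc are again sub-cc) as the only computational tool. This identity is itself an immediate inclusion-exclusion on each rank level $r$, since $(J\cup J')^{[r]} = J^{[r]}\cup J'^{[r]}$ and $(J\cap J')^{[r]} = J^{[r]}\cap J'^{[r]}$.

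First I would dispose of the base cases $R=0$ and $R=1$: by Remark \ref{remashell1cc} a shellable $1$-cc is a tree (if $\partial K \neq \emptyset$), for which $\chi = 1$, or a cycle (if closed), for which $\chi = 0 = 1 + (-1)^1$; the case $R=0$ is trivial. For the inductive step, fix $R\geq 2$, let $z_1, \dots, z_N$ be a shelling of $K$, and set $J_k := K \cap \left( \bigcup_{j=1}^{k} z_j \right)$, a sub-cc of $K$. Apply additivity with $J = K\cap z_k$ and $J' = J_{k-1}$ to get
\begin{equation*}
\chi(J_k) = \chi(K\cap z_k) + \chi(J_{k-1}) - \chi\bigl((K\cap z_k)\cap J_{k-1}\bigr).
\end{equation*}
For $\chi(K\cap z_k)$ I would again use additivity, writing $K\cap z_k = \{z_k\} \cup \partial(K\cap z_k)$ with intersection $\partial(K\cap z_k)$, so that $\chi(K\cap z_k) = (-1)^R + \chi(\partial(K\cap z_k))$; since $\partial(K\cap z_k)$ is a closed shellable $(R-1)$-cc (shellability of $K$ demands that the boundary of each facet be shellable, and it is closed by the diamond property, cf.\ Lemma \ref{rembdrycc}), the induction hypothesis gives $\chi(\partial(K\cap z_k)) = 1 + (-1)^{R-1}$, hence $\chi(K\cap z_k) = 1$. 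The term $\chi\bigl((K\cap z_k)\cap J_{k-1}\bigr)$ is exactly the cc \eqref{intersdefshelling} appearing in Definition \ref{defshelling}: for $1 < k < N$ it is a shellable $(R-1)$-cc with non-empty boundary, so by induction it has $\chi = 1$; combined with $\chi(J_{k-1}) = 1$ (induction on $N$), this gives $\chi(J_k) = 1$, closing the induction on $N$ for all $k < N$. Finally, for $k = N$ there are two sub-cases: if $K$ is closed then \eqref{intersdefshelling} equals $\partial(K\cap z_N)$, a closed shellable $(R-1)$-cc, so its Euler characteristic is $1 + (-1)^{R-1}$ and $\chi(K) = \chi(J_N) = 1 + 1 - (1 + (-1)^{R-1}) = 1 + (-1)^R$; if $\partial K \neq \emptyset$ then \eqref{intersdefshelling} still has non-empty boundary and $\chi = 1$, giving $\chi(K) = 1$.

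The part requiring the most care is the bookkeeping hidden in Definition \ref{defshelling}: one must be sure that at each step the intermediate complex \eqref{intersdefshelling} is indeed a non-singular $(R-1)$-cc that is itself shellable — so that the rank-$(R-1)$ induction hypothesis genuinely applies — and that it has non-empty boundary precisely in the range $1 < k < N$ (and at $k=N$ only in the closed case). This is exactly what the definition of shelling was crafted to guarantee (see Remark \ref{remashell1cc}), so no extra argument is needed beyond invoking it, but it is the delicate point: the whole inductive scheme collapses if one of these intermediate complexes fails to be shellable or has the wrong boundary behaviour. Everything else is routine additivity.
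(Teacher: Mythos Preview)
Your proof is correct and follows essentially the same approach as the paper's: the same double induction on $R$ and $N$, the same additivity relation, the same decomposition via $J_k$, and the same case split at $k=N$. Your write-up is in fact slightly more explicit in a couple of places (the justification of additivity and the decomposition $K\cap z_k = \{z_k\}\cup \partial(K\cap z_k)$), but the argument is the one the paper gives.
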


We can for example consider the boundary of a $R$-simplex, which clearly defines a closed shellable simplicial complex $K$ of rank $R-1$ and Proposition \ref{propEPformula} therefore implies that $\chi(K) = 1 + (-1)^{(R-1)}.$ As a consequence, the Euler characteristic of a simplex (seen as a polytope) is $1.$

As noted in the next corollary, the previous results imply that the Euler characteristic is invariant under taking the barycentric subdivision, provided that the complex we started with is shellable.

\begin{cor} \label{corECbdiv}
If $K$ is a shellable cell complex then $\chi(\bdiv{K}) = \chi(K).$
\begin{proof}
This is a direct consequence of Lemma \ref{lemshellbdiv} and Proposition \ref{propEPformula}.
\end{proof}
\end{cor}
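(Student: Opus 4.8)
The plan is to obtain the corollary as a short formal deduction from the two results just cited, Lemma \ref{lemshellbdiv} and the Euler--Poincaré formula of Proposition \ref{propEPformula}, once we have checked two elementary facts about the barycentric subdivision: that $\bdiv{K}$ has the same rank as $K$, and that $\bdiv{K}$ is closed precisely when $K$ is closed.

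First I would record the structural input. Since $K$ is shellable, Lemma \ref{lemshellbdiv} gives that $\bdiv{K}$ is shellable, hence non-singular and in particular pure, just like $K$. Write $R = \Rk(K)$. Axioms \ref{cccrank} and \ref{cccenough} force every maximal totally ordered subset of $K$ to be of the form $\{x_0 \subsetneq \dots \subsetneq x_R\}$ with $\rk(x_i) = i$; hence every facet of $\bdiv{K}$ has exactly $R+1$ elements and $\Rk(\bdiv{K}) = R$.

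Next I would check the equivalence $\partial(\bdiv{K}) = \emptyset \iff \partial K = \emptyset$. A facet of $\bdiv{K}$ sitting below a maximal simplex $\{x_0 \subsetneq \dots \subsetneq x_R\}$ is obtained by deleting one entry $x_i$, and the number of maximal simplices of $\bdiv{K}$ above it is: two if $0 < i < R$, by the diamond property \ref{cccdiamond} applied to $x_{i-1} \subsetneq x_{i+1}$ (with $\rk(x_{i-1}) = \rk(x_{i+1}) - 2$); two if $i = 0$, since $K$ is graph-based so the edge $x_1$ has exactly two vertices; and $\abs{\cface{x_{R-1}}}$ if $i = R$. Since every submaximal cell of $K$ occurs as the top of some such chain (Axioms \ref{cccrank}, \ref{cccenough}), it follows that $\bdiv{K}$ has a submaximal cell contained in only one maximal cell if and only if $K$ does, i.e. $\partial(\bdiv{K}) \neq \emptyset \iff \partial K \neq \emptyset$. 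This small counting argument is really the only step that is not a pure invocation of a stated result; everything around it is routine.

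Finally I would apply Proposition \ref{propEPformula} to both $K$ and $\bdiv{K}$. If $\partial K \neq \emptyset$, then $\partial(\bdiv{K}) \neq \emptyset$ as well, so $\chi(\bdiv{K}) = 1 = \chi(K)$. If instead $K$ is closed, then $\bdiv{K}$ is closed with $\Rk(\bdiv{K}) = R$, so $\chi(\bdiv{K}) = 1 + (-1)^R = \chi(K)$. In either case $\chi(\bdiv{K}) = \chi(K)$, which is the claim.
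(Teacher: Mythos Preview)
Your proof is correct and follows exactly the same route as the paper's one-line argument: apply Lemma~\ref{lemshellbdiv} to get $\bdiv{K}$ shellable, then invoke Proposition~\ref{propEPformula} on both $K$ and $\bdiv{K}$. You have simply made explicit the two facts the paper leaves implicit---that $\Rk(\bdiv{K}) = \Rk(K)$ and that $\partial(\bdiv{K}) = \emptyset$ iff $\partial K = \emptyset$---and your verifications of these are accurate.
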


In our context, cc are not shellable in general and their Euler characteristic can therefore take various values. For example, a 3-cc $T$ could be defined  so that $\partial T$ is a discretization of a torus, seen as a surface of genus 1, and $\abs{T^{[3]}}=1,$ hence $\chi(T) = -1$ (such an cc is given in Figure \ref{figtorus} as an example of cobordism in Chapter \ref{secdualcob}). It is a standard result from Bruggesser and Mani (Theorem 8.11 in \cite{zi95}) that convex polytopes are shellable. Therefore $T$ also provides a case of cc that cannot be realized as the face lattice of  an embedded convex polytope, although it would satisfy the conditions of (abstract) polytope defined in Definition \ref{defpolytope} (the homogeneity of $\partial T$ in particular implies that $\partial T$ is local and non-pinching, which implies the connectedness of all sections in $T$). Similarly, a $R$-cc $K$ having a facet $z \in K^{[R]}$ such that $K \cap z = T$ cannot be realized as the face lattice of a polyhedral complex. In such an example, the $3$-cell $z$ would be a center of an non-homogeneous vertex in $\bdiv{K}$

%Euler characteristic do not properly account for the topology induced by a cc

In the remainder of this section we introduce a definition of 2-shellability that applies to cc of rank higher or equal to 2 and prove that the 2-shellability of a simple cc is equivalent to the shellability of its dual. 2-shellability is characterized only using the cells of the 2-skeleton and is inspired by the notion of "weak-shellability" given by Vince in \cite{vin85}.

\begin{defi}[2-shellable cell complex] \label{def2shellable}
A cell complex $K$ is \textit{2-shellable} if there is a linear ordering on $\kz := \{v_1, \dots, v_N\}$ such that for all $C \in \kt$ the graph $$ K_j \cap (C \setminus \{v_{j+1}\})$$ is connected, where $K_j := K \cap \{v_1, \dots, v_j\}$ is a cell complex. Such linear ordering is called a \textit{2-shelling} of $K.$
\end{defi}

\begin{rema} \label{rem2shellable}
We notice that Definition \ref{def2shellable} implies that the 2-skeleton of a 2-shellable cell complex can be constructed by successively adding each 2-cell in such a way that at each step of the construction, the new 2-cell intersects the previous ones on a simple path. Indeed, using the notation of Definition \ref{def2shellable}, a 2-shelling satisfies the following property. At step $j+1,$ if $C \in \kt$ contains $v_{j+1}$ and $C \cap \{v_1, \dots, v_j\} \neq \emptyset$ then there is no gap between $v_{j+1}$ and the previous vertices in $C,$ in the sense that there exists $1 \leq k \leq j$ such that $ C \supset \{v_{j+1},v_k\} \in \ko.$ The reason for this is that if there were a gap and if we take the first $l > j+1$ such that $$v_l \in C \setminus \{v_1, \dots , v_{j+1}\}$$ we have that $K_l \cap (C \setminus \{v_l\})$ is disconnected, contradicting that $\{v_1, \dots, v_N\}$ is a 2-shelling. We can therefore suppose, without loss of generality, that $l = j+2,$ i.e. the shelling order continues by adding all the vertices in $C$ first, before completing another 2-cell.
\end{rema}

\noi This following technical result will be used in the proof of the next Theorem \ref{thm2shell}.

\begin{lem} \label{claim3prop2shel}
Let $K$ be a simple cc such that $\dual{K}$ is shellable. Let $\{\sigma_1, \dots, \sigma_N\}$ be a shelling of $\dual{K}$ and define $v_i := \dual{\sigma_i},$ for $i = 1, \dots N.$ We also set $K_j := K \cap \{v_1, \dots, v_j\}$ and define the following simplicial complex:
$$\kb_j := \kb \cap \left( \bigcup_{i=1}^j \sigma_i \right).$$
Then we have that $K_j \cap \dual{\omega}$ is connected for all $\omega \in \kb_j$ and all $1 \leq j \leq N.$
\begin{proof}
Suppose by contradiction that there exists $1 < j \leq N$ such that $K_j \cap \dual{\omega}$ is not connected for some $\omega \in \kb_j,$ which implies $\rk(\dual{\omega}) \geq 2.$ Since $\dual{\omega} \in K = K_N$ is connected, we can take $j$ maximal with the property that $K_j \cap \dual{\omega}$ is not connected, so that $K_{j+1} \cap \dual{\omega}$ is connected. Hence there are at least two vertices $v_k,v_l$ in $\dual{\omega} \cap \Ne(v_{j+1})$ that belong to two different connected components of $K_j \cap \dual{\omega}.$ Since $K$ is full, there is a 2-cell $C \in \kt$ containing the edges $\{v_{j+1},v_k\}$ and $\{v_{j+1},v_l\}$ and thus $K_j \cap (C \setminus \{v_{j+1}\})$ is not connected, which contradicts that $\{v_1, \dots, v_N\}$ is a 2-shelling of $K.$ 
\end{proof}
\end{lem}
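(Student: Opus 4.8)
The plan is to reduce the statement to the case in which $\dual{\omega}$ is a $2$-cell, and to treat that base case directly from the shelling of $\kb$. So suppose first that $\rk_K(\dual{\omega})\ge 2$ (if $\rk_K(\dual{\omega})\le 1$ then $\dual{\omega}$ is a vertex or a two-vertex edge and $K_j\cap\dual{\omega}$ is a point, an edge or empty, hence connected), and assume for contradiction that $K_j\cap\dual{\omega}$ is disconnected. Since $K$ is cell-connected, $K_N\cap\dual{\omega}=K\cap\dual{\omega}$ is connected, so we may take $j$ maximal with $K_j\cap\dual{\omega}$ disconnected; then $j<N$ and $K_{j+1}\cap\dual{\omega}$ is connected. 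Passing from $K_j$ to $K_{j+1}$ only adjoins the vertex $v_{j+1}$ together with the edges of $K$ through $v_{j+1}$ having both endpoints in $\dual{\omega}$; since this reconnects $K_j\cap\dual{\omega}$, the vertex $v_{j+1}$ must be joined to vertices $v_k,v_l$ lying in two distinct components of $K_j\cap\dual{\omega}$, giving two distinct edges $\{v_{j+1},v_k\},\{v_{j+1},v_l\}\in\E_{v_{j+1}}^{\dual{\omega}}$. By fullness of $K$ there is a $2$-cell $C\in\kt$ with $\E_{v_{j+1}}^{C}=\{\{v_{j+1},v_k\},\{v_{j+1},v_l\}\}$; as $C\cap\dual{\omega}$ contains these two edges, Lemma \ref{lemaedgesetfullcc} forces $\rk(C\cap\dual{\omega})\ge 2$, and since $C\cap\dual{\omega}\subset C$ has rank $2$ we get $C\cap\dual{\omega}=C$, i.e.\ $C\subset\dual{\omega}$. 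Then $v_k,v_l$ lie in distinct components of the subcomplex $K_j\cap C$ of $K_j\cap\dual{\omega}$; moreover $\dual{C}$ is a rank-$(R-2)$ cell of $\kb$ with $\dual{C}\subset\dual{v_k}=\sigma_k$ and $k\le j$, so $\dual{C}\in\kb_j$. It therefore suffices to derive a contradiction when $\dual{\omega}$ itself is a $2$-cell.

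For that base case write $x:=\dual{\omega}\in\kt$, so $\omega$ has rank $R-2$ in $\kb$ and $\omega\in\kb_j$. By Proposition \ref{propsimplecc} $\kb$ is simplicial, and being closed and shellable it is non-pinching, so $\lk_{\kb}(\omega)$ is a single cycle $L$ whose edges are the facets $\sigma\supset\omega$ of $\kb$ and whose vertices are the codimension-one faces of $\kb$ containing $\omega$. Under the duality map the vertices of $x$ are precisely the facets $\sigma_i\supset\omega$, an edge of $K$ inside $x$ joining $v_i$ to $v_{i'}$ is $\sigma_i\cap\sigma_{i'}$, and the cycle $K\cap x$ (Lemma \ref{lemrepr2cell}) is identified with $L$; thus $K_j\cap x$ corresponds to the set $A_j:=\{\sigma_i : i\le j,\ \omega\subset\sigma_i\}$, viewed as a set of edges of $L$. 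I would show $A_j$ is a connected arc. List the facets containing $\omega$ in shelling order as $\sigma_{i_1},\dots,\sigma_{i_m}$; it suffices that for each $l\ge 2$ the edge $\sigma_{i_l}$ of $L$ meets the arc built from $\sigma_{i_1},\dots,\sigma_{i_{l-1}}$. At step $i_l$ of the shelling of $\kb$ the face $\omega$ is already present (it lies in $\sigma_{i_1}$, $i_1<i_l$), so $\omega$ lies in the shelling intersection at that step, which is a nonempty pure union of facets of $\partial\sigma_{i_l}$; hence some facet $\sigma_{i_l}\setminus\{v\}$ of $\partial\sigma_{i_l}$ containing $\omega$ is already present, hence contained in some earlier $\sigma_{i_{l'}}$, $l'<l$. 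That codimension-one face is a vertex of $L$ incident to both $\sigma_{i_l}$ and $\sigma_{i_{l'}}$, so $\sigma_{i_l}$ is adjacent in $L$ to an earlier link-edge. Hence every initial segment of $\sigma_{i_1},\dots,\sigma_{i_m}$ is a connected arc of $L$, in particular $A_j$ is, so $K_j\cap x$ is connected---the desired contradiction.

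The real work is the base case: one must keep straight the duality dictionary identifying ``restrict $K$ to $\{v_1,\dots,v_j\}$'' with ``keep the first $j$ facets of $\kb$ through $\omega$'', extract from Definition \ref{defshelling} the precise shape of the shelling intersection (a nonempty pure union of facets of $\partial\sigma$, proper before the last step), and use that $\kb$ is closed and non-pinching so that the links of codimension-two faces are genuine cycles; it is here that shellability of $\kb$---equivalently, locality of $K$---enters. (If earlier in the enclosing proof one has already established that $\{v_1,\dots,v_N\}$ is a $2$-shelling of $K$, the argument shortcuts: in the general case the disconnectedness of $K_j\cap(C\setminus\{v_{j+1}\})=K_j\cap C$ contradicts Definition \ref{def2shellable} outright, and the base-case analysis becomes unnecessary.)
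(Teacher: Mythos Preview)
Your proof is correct and takes a genuinely different route from the paper's. Both begin identically---assume a counterexample, take $j$ maximal, and use fullness to produce a $2$-cell $C$ through $v_{j+1}$ containing $v_k$ and $v_l$---but then diverge. The paper finishes in one line by saying this contradicts that $\{v_1,\dots,v_N\}$ is a $2$-shelling of $K$; however, that is not a hypothesis of the lemma as stated, but rather the content of the first half of the proof of Theorem~\ref{thm2shell} (so the paper's argument relies on a forward reference, or else the lemma's hypothesis was really intended to be ``$\{v_1,\dots,v_N\}$ is a $2$-shelling'' rather than ``$\kb$ is shellable''). You instead establish the $2$-cell base case directly from the shelling of $\kb$: the facets of $\kb$ through $\omega=\dual{C}$ correspond to the vertices of the cycle $K\cap C$, and the shelling condition (the step-$i_l$ intersection is pure of codimension one and contains $\omega$) forces each new facet through $\omega$ to share a codimension-one face through $\omega$ with an earlier one, which translates to $v_{i_l}$ being adjacent in $K\cap C$ to some earlier $v_{i_{l'}}$. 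This is self-contained and avoids the forward reference; you also correctly note the shortcut available once the $2$-shelling fact is in hand. Two small remarks: your verification that $C\subset\dual{\omega}$ via Lemma~\ref{lemaedgesetfullcc} is a detail the paper glosses over; and your assertion that $\kb$ being closed and shellable makes it non-pinching (so that the link is a single cycle) is not obviously justified, but fortunately your argument does not actually use it---connectedness of $K_j\cap C$ follows directly from ``each $v_{i_l}$ is adjacent to some earlier $v_{i_{l'}}$'' regardless of whether $K\cap C$ is one cycle or several.
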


The following theorem is analogous to Theorem 7 given in \cite{vin85} stated in terms of $n$-graphs. Our proof is novel as the definition of $n$-graph allows to use some equivalence relations between edges that do not exist in general for simple cell complexes.  This result given in terms of simple cc is nevertheless not a generalization of Vince's result since $n$-graphs are more general than our notion of graphs, as in particular they allow multiple edges to contain the same two vertices.

\begin{thm} \label{thm2shell}
A simple cell complex $K$ of rank $R \geq 2$ is 2-shellable if and only if $\kb$ is shellable.
\begin{proof}
Let $K$ be a simple cell complex such that $\kb$ is shellable with shelling $\{\sigma_1, \dots, \sigma_N\}$ and define
$v_i,$ $K_j$ and $\kb_j$ as in Lemma \ref{claim3prop2shel}.

We show that $\{v_1, \dots , v_N \}$ is a 2-shelling of $K.$
%(Notice that the set $\dual{(K_j)} = \{ \dual{x} ~|~ x \in K_j \}$ is not the same as the cell complex $\kb_j$ defined above, and we will only use the latter here.) 
For this, let us assume by contradiction that there is $C \in \kt$ such that $ K_j \cap (C \setminus \{v_{j+1}\})$ is not connected. We consider $j$ to be maximal with this property for $C$ fixed ($j$ is well defined since $K \cap C$ is connected). The maximality of $j$ implies that $K_{j+1} \cap C$ is connected, therefore $v_{j+1}$ has two neighbours in $K_j \cap C,$ linked to $v_{j+1}$ by the edges $e,e'.$ Since $\sigma_{j+1}$ is a simplex, the two faces $\dual{e}, \dual{e'}$ of $\sigma_{j+1}$ share the $(R-2)$-cell $\dual{C}.$ But $K_{j+1} \cap C \neq C$ by definition of $j,$ therefore $\dual{C} \notin \kb_{j+1},$ contradicting the intersecting property \ref{cccinter} of the $(R-1)$-cell complex
$$ \kb_{j+1} \cap \left( \sigma_{j+1} \cap \left( \bigcup_{i=1}^{j} \sigma_i \right) \right).$$

Conversely, assume that $\{v_1, \dots, v_N\}$ is a 2-shelling of $K.$ We show that $\{\sigma_1, \dots, \sigma_N\}$ is a shelling of $\kb$ where $\sigma_i = \dual{v_i}.$

Since any subset of the faces $\face{\sigma}$ of an $R$-simplex $\sigma \in \kb$ is a $(R-1)$-sub-cell complex that can be completed into a shelling of $\partial(\kb \cap \sigma),$ it is sufficient to show that for all $1 < j \leq N,$ if $\omega := \sigma_k \cap \sigma_j \neq \emptyset$ for some $1 \leq k < j$ then there exists  $1 \leq l < j$ such that $$w \subset \sigma_l \cap \sigma_j \in \face{\sigma_j}.$$
The latter implies that $\kb_{j-1} \cap \sigma_j$ is indeed a $(R-1)$-cc that has a shelling which can be completed in to a shelling of $\partial(\kb \cap \sigma_J).$
By Lemma \ref{claim3prop2shel}, such $\omega$ satisfies that $K_j \cap \dual{\omega}$ is connected, therefore there is $1 \leq k < j$ such that $\sigma_k \in \dual{\omega}$ and $\{ \sigma_k, \sigma_j\} \in K_j^{[1]},$ i.e. $\rk_{\kb}(\sigma_k \cap \sigma_j) = R-1,$ and this completes the proof.
\end{proof}
\end{thm}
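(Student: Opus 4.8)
The statement is an equivalence, and both directions follow the same pattern: translate the combinatorial condition on one side (connectivity of certain induced subgraphs, resp. the intersection/shellability condition on facets) into the other via the duality map $x \mapsto \ce{x}$, using that $\kb$ is a simplicial complex by Proposition \ref{propsimplecc} and that fullness of $K$ gives, for any two edges at a vertex, a $2$-cell containing both. The crucial auxiliary fact is Lemma \ref{claim3prop2shel}, which packages the inductive bookkeeping: once a vertex ordering $v_i = \ce{\sigma_i}$ comes from a shelling of $\kb$, the sets $K_j \cap \ce{\omega}$ stay connected for every $\omega \in \kb_j$. I would state and prove that lemma first (it is already done in the excerpt), and then run the two implications essentially as dictionary translations.

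For the direction ``$\kb$ shellable $\Rightarrow K$ $2$-shellable'': fix a shelling $\{\sigma_1,\dots,\sigma_N\}$ of $\kb$, set $v_i := \ce{\sigma_i}$, $K_j := K \cap \{v_1,\dots,v_j\}$. Suppose for contradiction that some $C \in \kt$ has $K_j \cap (C \setminus \{v_{j+1}\})$ disconnected, and take $j$ maximal with this property for that fixed $C$ (well-defined since $K \cap C$ is connected by cell-connectedness). Maximality forces $K_{j+1} \cap C$ connected, so $v_{j+1}$ has two neighbours in $K_j \cap C$ joined to it by edges $e, e'$. Dually, $\ce{e}, \ce{e'}$ are two facets of the simplex $\sigma_{j+1} = \ce{v_{j+1}}$ sharing the $(R-2)$-face $\ce{C}$. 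Since $K_{j+1} \cap C \neq C$, the cell $\ce{C}$ is not in $\kb_{j+1}$, which violates the intersection axiom \ref{cccinter} for the $(R-1)$-cell complex $\kb_{j+1} \cap \big(\sigma_{j+1} \cap \bigcup_{i=1}^{j}\sigma_i\big)$ that the shelling must produce. This is the contradiction.

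For the converse ``$K$ $2$-shellable $\Rightarrow \kb$ shellable'': given a $2$-shelling $\{v_1,\dots,v_N\}$ of $K$, put $\sigma_i := \ce{v_i}$. Since any subset of the facet set $\face{\sigma}$ of an $R$-simplex $\sigma \in \kb$ is an $(R-1)$-subcomplex that can be completed to a shelling of $\partial(\kb \cap \sigma)$, it suffices to check: for all $1 < j \leq N$, if $\omega := \sigma_k \cap \sigma_j \neq \emptyset$ for some $k<j$, then there is $l < j$ with $\omega \subset \sigma_l \cap \sigma_j \in \face{\sigma_j}$, i.e. $\rk_\kb(\sigma_l \cap \sigma_j) = R-1$. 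By Lemma \ref{claim3prop2shel} such an $\omega$ has $K_j \cap \ce{\omega}$ connected, hence some $\sigma_k \in \ce{\omega}$ with $k < j$ satisfies $\{v_k, v_j\} \in K_j^{[1]}$, i.e. $\rk_\kb(\sigma_k \cap \sigma_j) = R-1$, giving the required $l = k$. One should also note the base case $\partial(\kb \cap \sigma_1)$ is shellable (boundary of a simplex), and that the intermediate complexes have the correct (non-empty, except at the end) boundary, which follows from $\kb$ being a simplicial complex so that all the relevant intersections are restrictions of $\kb$ to proper subsimplices.

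The main obstacle, I expect, is not any single step but making sure the inductive-maximality arguments are airtight: in both directions one picks an index $j$ maximal (or minimal) with a disconnectedness property and must argue it is well-defined — this rests on $K \cap C$ (resp. $\ce{\omega}$) being connected, i.e. on cell-connectedness of $K$, which is part of being an lcc but is used only implicitly, and on fullness to produce the offending $2$-cell $C$ through two given edges at $v_{j+1}$. Getting the dual correspondence between ``$v_{j+1}$ has two neighbours in a $2$-cell'' and ``$\sigma_{j+1}$ has two facets sharing an $(R-2)$-face'' exactly right (using Lemma \ref{lemaedgesetfullcc} to know $\abs{\face{\sigma_{j+1}}} = R+1$ and that facets of a simplex pairwise meet in codimension-one faces) is where the care is needed; everything else is the routine dictionary $x \leftrightarrow \ce{x}$ and Proposition \ref{propsimplecc}.
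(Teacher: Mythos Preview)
Your proposal is correct and follows essentially the same argument as the paper's proof: both directions use the same duality dictionary, the same maximal-$j$ contradiction argument for ``$\kb$ shellable $\Rightarrow$ $K$ 2-shellable'', and the same reduction to the facet-intersection criterion via Lemma \ref{claim3prop2shel} for the converse. Your additional remarks about the base case $\partial(\kb \cap \sigma_1)$ and the non-emptiness of the intermediate boundaries are correct supplementary observations that the paper leaves implicit.
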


\begin{cor} \label{corthm2shell}
If $K$ is a local simple 2-shellable $R$-cell complex then $K$ satisfies Euler-Poincaré formula $$ \chi(K) = 1 + (-1)^R.$$
\begin{proof}
By Proposition \ref{propsimplecc} and Theorem \ref{thm2shell}, $\kb$ is a shellable simplicial complex, which satisfies Euler-Poincaré formula by Proposition \ref{propEPformula}. As noticed in Remark \ref{remaeulerchardual} and since $K$ is in particular closed, we have 
$$\chi(K) = (-1)^R \chi(\kb) = (-1)^R + 1.$$
\end{proof}
\end{cor}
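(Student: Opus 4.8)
The plan is to reduce the statement to the Euler--Poincar\'e formula for shellable complexes (Proposition~\ref{propEPformula}) applied to the dual complex $\kb$, and then transfer the computed Euler characteristic back to $K$ via Remark~\ref{remaeulerchardual}. So the proof will be a straightforward chaining of results already established in this chapter.

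First I would record the structural facts about $\kb$. Since $K$ is a simple cc of rank $R\geq 2$ (the rank bound being implicit in $K$ being 2-shellable), it is by definition closed, so Proposition~\ref{propdual} guarantees that $\kb$ is again a closed $R$-cc; in particular Remark~\ref{remaeulerchardual} applies and gives $\chi(K)=(-1)^R\chi(\kb)$. Next, since $K$ is local, hence connected, and simple, Proposition~\ref{propsimplecc} yields that $\kb$ is a simplicial complex; and since $K$ is 2-shellable, Theorem~\ref{thm2shell} yields that $\kb$ is shellable. Thus $\kb$ is a closed shellable simplicial $R$-cc.

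Then I would invoke the closed case of Proposition~\ref{propEPformula} for $\kb$ to obtain $\chi(\kb)=1+(-1)^R$, and conclude
\[
\chi(K)=(-1)^R\chi(\kb)=(-1)^R\bigl(1+(-1)^R\bigr)=(-1)^R+1.
\]

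I do not expect a genuine obstacle here, as every step is a direct application of a prior result; the only points requiring a moment's care are bookkeeping ones: checking that the hypotheses line up so that Theorem~\ref{thm2shell} applies ($R\geq 2$, guaranteed by 2-shellability), and checking that $\kb$ is closed rather than merely non-singular, so that the closed case of the Euler--Poincar\'e formula — rather than the $\partial\neq\emptyset$ case — is the relevant one; this follows from $K$ being closed together with Proposition~\ref{propdual}.
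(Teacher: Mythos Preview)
Your proposal is correct and follows essentially the same approach as the paper: invoke Proposition~\ref{propsimplecc} and Theorem~\ref{thm2shell} to see that $\kb$ is a shellable simplicial complex, apply the closed case of Proposition~\ref{propEPformula} to get $\chi(\kb)=1+(-1)^R$, and transfer back via Remark~\ref{remaeulerchardual}. Your write-up is simply a slightly more detailed version of the paper's argument, with the hypothesis checks (closedness of $\kb$, rank at least $2$) made explicit.
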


\subsection{Simply connected cc} \label{ssectionsimplyconccc}

In this section, we introduce a combinatorial notion of homotopy in cell complexes of rank higher or equal to 2 defined using $2$-cells and edges. Although the idea is simply to allow paths to be "deformed through" $2$-cells, the exact formulation of these notion is somewhat technical. For example, in order to define such deformation on every paths, we have to deal with the case were a path goes several times through a given edge. This can create ambiguities on which part of the path is to be deformed through a given $2$-cell. Two avoid this, we added a condition in the definition of such "path move" so that it acts on the part of the path which first goes through the $2$-cell that the move is based on. Similar questions arise when defining the move of paths through an edge $e,$ which acts non-trivially on paths that contains a sub-path of the form $(e,e)$ (using the notations from Section \ref{ssecgraphncc}), i.e. when a path goes through the edge $e$ in both directions in a row.

Let $K$ be a graph-based 2-cc and let $v,w \in \kz.$ We denote by $\paths$ the set of paths in $\kg$ and $\paths(v,w)$ the set of paths  from $v$ to $w$ in $\kg.$ For a 2-cell $C \in \ce{v} \cap \ce{w},$ a simple path $p \in \paths(v,w)$  in $\kg \cap C$ and $C_p$ the connected component of $C$ containing $p,$ we define the \textit{complementary path $\comp{p}{C}$ of $p$ in $C$} to be the sub-path of $l_{C_p}(v)$ from $v$ to $w$ such that $$l_C(v) = p \conc (\comp{p}{C})^{-1},$$
as in Figure \ref{complementarypathpic}, where $l_{C_p}(v)$ was introduced in Definition \ref{defiloops2c}.

\begin{figure}[!h]
\centering
\includegraphics[scale=0.38]{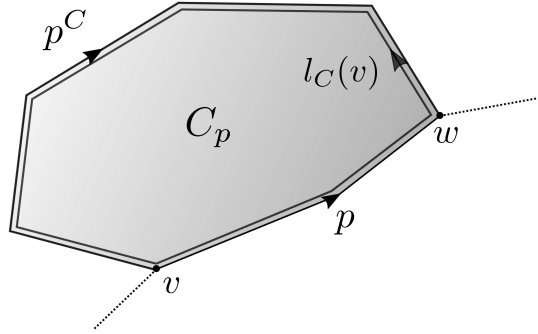}
\caption{\label{complementarypathpic}On this illustration we have a simple path $p$ of length 2 from the vertex $v$ to the vertex $w$ contained in a 2-cell $C.$ The path $p$ can be seen as as sub-path of a longer path going through other edges before $v$ and after $w$ (including the edges in $C$). Since we do not a priori assume cells to be connected, a $2$-cell can have more than one connected component, $C_p$ is therefore the connected component of $C$ containing $p.$ By Lemma \ref{lemrepr2cell} we have that the edges in $C_p$ form a loop $l_C(v)$ that we can define as based at the vertex $v$ and such that $p \leq l_C(v).$ The complementary path $p^C$ of $p$ through $C_p$ is then defined as the sub-path of $l_C(v)$ going from $v$ to $w$ and disjoint from $p.$}
\end{figure}

\begin{defi}[Moves of a path] \label{defmovespath}
Let $C \in \kt$ and let $p$ be a simple path in $\kg \cap C.$ The \textit{2-cell move of $p$ through $C$} is a function $m_C^p : \paths \dans \paths$ defined by
\begin{align*}
m_C^p(q) = \begin{cases}
q_1 \conc \comp{p}{C} \conc q_2, &\text{if } q = q_1 \conc p \conc q_2 \text{ with } q_1,q_2 \in \paths, p \not\leq q_1; \\
q &\text{otherwise.}
\end{cases}
\end{align*}
We can see that $m_C^p$ is a bijection with inverse $m_C^{\comp{p}{C}}.$

Let $e \in \ko$ and $v \in e.$ The \textit{edge-move through $e$ based at $v$} is a function $m_{e,v} : \paths \dans \paths$ defined by
\begin{align*}
m_{e,v}(q) = \begin{cases}
q_1 \conc q_2, &\text{if } q = q_1 \conc (e,e) \conc q_2 \text{ with } q_i \in \paths, e \cap q_i^{[0]} = {v}, i = 1,2,  (e,e) \not\leq q_1; \\
q &\text{otherwise.}
\end{cases}
\end{align*}
In this case, the inverse of $m_{e,v}(q)$ is the maps $m^{-1}_{e,v}$ which adds $e$ to a path $q$ if $e \cap q^{[0]} = \{v\}$ and leaves $q$ invariant otherwise.
We denote by $\M_1$ the set of edge-moves and their inverses and by $\M_2$ the set of 2-cell-moves and their inverses and define the \textit{set of moves} to be $\M = \M_1 \cup \M_2.$
Note that if $q$ is a path from $v$ to $w$ then so is $m(q)$ for all $m \in \M.$ Therefore moves of paths can be seen as functions on $\paths(v,w).$
\end{defi}

\begin{defi}[Homotopy, homotopy class of paths and contractible cycle] \label{defhomclasspath}
An \textit{homotopy} is a map $H : \paths(v,w) \dans \paths(v,w)$ given by a finite composition of moves in $\M.$ Two paths $p_1,p_2 \in \paths(v,w)$ are \textit{(homotopically) equivalent,} noted $p_1 \sim p_2,$ if there exists an homotopy $H$ such that $H(p_1) = p_2.$ The homotopy class of a path $p$ is denoted by $[p].$ A cycle $p$ based at $v$ is \textit{contractible} if $[p] = [\emptyset_v].$
\end{defi}

\begin{defi}[Simply connected cc]
A graph-based cc $K$ is \textit{simply connected} if every cycle in $K$ is contractible in $\kc.$
\end{defi}

The next remarks on simple connectedness will be used in the discussion at the end of this chapter.

\begin{rema} \label{remahomotopy}
It seems clear that the notion of homotopy used here is equivalent to the usual notion of homotopy of paths on a manifold, in the sense that if $K$ is a cell complex which geometrical realization is a discretization of a cellular manifold (implying that $K \cap x$ is geometrically realized as a ball for all $x \in K$) then $K^{(2)}$ is simply connected (in the sense of Definition \ref{defhomclasspath}) if and only if $M$ is simply connected in the usual topological sense. A sketch of an argument proving this can be formulated as follows, using that in this case $\dual{K}$ also admits a geometrical realization discretizing $M$ (this can be argued using the fact that $\bdiv{K} \cong \bdiv{\kb}$ by the Proposition \ref{propsubdivdual}). A path $p$ in $M$ can be projected onto $\kg$ by decomposing $p$ into sub-paths such that there is one sub-path for each crossing of $p$ from one maximal cell of $\dual{K}$ to another. One can then continuously deform each sub-path as to be included into the edge of (the geometric realization of) $K$ composed of the two corresponding maximal cells of $\dual{K}$ and the combination of these deformations defines a usual homotopoy of paths on $M.$ The resulting path can be interpreted as a path on the graph $\kg$ and an homotopy as defined in these notes also corresponds to a usual homotopy on $M.$
\end{rema}

\begin{rema} \label{remashellissimplyconn}
It is a well-known result (\cite{bj84}) that shellable pseudomanifolds are homeomorphic to a sphere (using the notion of geometrical realization mentioned in Section \ref{sectopoconsid}).
As a consequence of Theorem \ref{thm2shell},  a 2-shellable simple local cell complex $K$ satisfies that $\dual{K}$  is simply connected.
\end{rema}

\subsection{Monodromy-free and even cc} \label{ssecmonod}

We now turn to the notions of monodromy-freedom and evenness in a cc. The latter corresponds simply to the property of having an even number of vertices in each (connected component of each) $2$-cell. However we do not yet fully understand how evenness can be interpreted in the context of the reconstruction results of Section \ref{ssecreconstthm}, although we suspect that it has something to do with some notion of orientation. Monodromy-freedom  essentially means that the connection does not produce a monodromy around any $2$-cell and it is the notion we focus on first.

In the case of full graph-based cc, the notation for the connection introduced in Section \ref{ssecindedgemap} can be extended to paths in the following way.

\begin{defi}[$\nabla_p$] \label{definablapaths}
For a full graph-based 2-cc $K,$ we have that, for all $\{v,w\} \in \ko,$
$$\E_v^{\dual{w}} = \{ e \in \E_v ~|~ \ce{e} \cap \ce{\{v,w\}} \neq \emptyset \}= \E_v.$$ 
Therefore if $p = (e_1,\dots, e_k)$ is a path in $\kg,$ the map  $$\nabla_p := \nabla_{v_k}^{v_{k-1}} \dots \nabla_{v_1}^{v_0}$$ is a bijection from $\E_{v_0}$ to $\E_{v_k},$ where $e_i = \{v_{i-1},v_{i}\}$ for $1 \leq i \leq k.$ Also, $\nabla_p^{-1} = \nabla_{p^{-1}},$ $\nabla_{p \conc p'} = \nabla_{p'} \nabla_p$ and $\nabla_{\emptyset_v} = \id_{\E_v}.$
We say that $\nabla_p e \in \E_{v_k}$ is \textit{associated to} $e \in \E_{v_0}$ by $p.$

If $K$ is a full graph-based cc and $p$ is a path in $\kg$ then the connection $\nabla_p$ in $K$ is defined as the corresponding connection in $\kc.$
\end{defi}

Let $C$ be a connected component of a 2-cell of a full graph-based cc $K.$ Recall that $$\E_C := \{ e \in \ko ~|~ \abs{e \cap C} = 1 \}.$$ For $\{v,w\} \subset C,$  $\nabla_w^v$ maps $\E_v \cap \E_C$ bijectively onto $\E_w \cap \E_C,$ since if $e \not\subset C$ then $\nabla_w^ve \not\subset C$ by Lemma \ref{nabla} \ref{nabla1}. Therefore $\nabla_{l_C(v)}$ restricted to $\E_v \cap \E_C$ is a bijection from $\E_v \cap \E_C$ onto itself.

We can now introduce our definition of monodromy-free cell, illustrated in Figure \ref{monodromyfreecellpic} and define the notion of monodromy-free cc.

\begin{defi}[Monodromy-free cc] \label{defmonodromyfreecc}
A connected component $C$ of a 2-cell in a full graph-based cc $K$ is said to be \textit{monodromy-free} if $\nabla_{l_C(v)}$ restricted to $\E_v \cap \E_C$ is the identity map on $\E_v \cap \E_C$ for some $v \in C$ (and hence for all $v \in C$). We say that a full graph-based cc $K$ is \textit{monodromy-free} if every connected component of every 2-cell in $\kt$ is monodromy-free.
\end{defi}

\begin{figure}[H]
\centering
\includegraphics[scale=0.38]{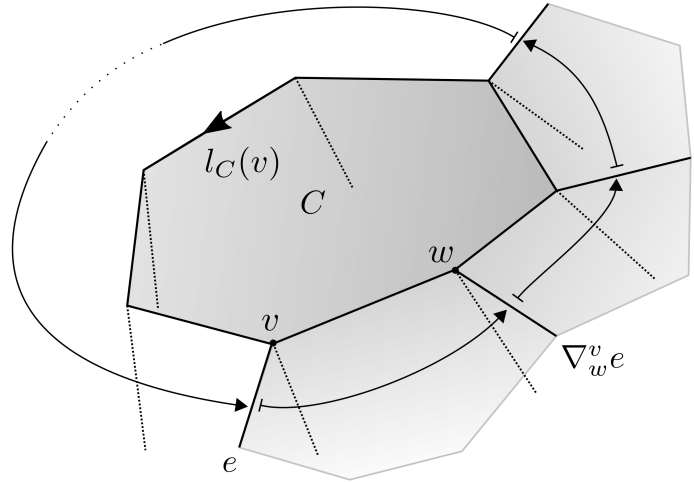}
\caption{\label{monodromyfreecellpic} In this figure we see an example of a monodromy free (connected component of a) 2-cell $C,$ seen as a cell in a 3 dimensional cc. We also pictured the loop $l_C(v)$ based at the vertex $v$ and how the connexion $\nabla_{l_C(v)}$ maps the edges of $\E_C$ around $C.$ In this case, since there are only two edges in $\E_{v'} \cap \E_C$ for all $v' \in C,$ it is sufficient to check that say $e$ as on the figure is mapped to itself by $\nabla_{l_C(v)}$ to deduce that $C$ is monodromy-free.}
\end{figure}

\begin{comment}
\begin{rema}
By the Definition \ref{definablapaths}, if $C$ is cyclic, then all $v \in C$ has a cyclic edge on $C,$ or if every edge in $\E_C \cap \E_v$ is cyclic for some $v \in C,$ then it is the case for all $v \in C.$ Therefore if $l_C(v)$ is monodromy-free for some $v \in C,$ then so is $l_C(v)$ for all $v \in C.$ A local $(n-1)$-edge-regular $n$-regular 2-cell complex (i.e. full by Lemma \ref{lemfullk}) such that each 2-cell is $(n-3)$-cyclic is then monodromy-free, since, for $v \in C,$ we have $\abs{\E_C \cap \E_v} = n-2$ and the bijectivity of $\nabla_{l_C(v)}$ ensures that the remaining - not necessarily cyclic - edge in $\E_C \cap \E_v$ is cyclic.
\end{rema}
\end{comment}

The following Lemma \ref{lemmonodrfree} shows that the monodromy-freedom condition naturally arises when considering $R$-full cell complexes for $R \geq 3.$ This fact motivates our definition of monodromy-free cell complexes, as it is a key assumption when considering the reconstruction of simple cell complexes from their 2-skeleton in Section \ref{ssecreconstthm}. We recently found that a similar notion was introduced in the last chapter of \cite{dpp00}, where several conditions are discussed in order to establish whether some space defined as "simplicial glueings" in dimensions 3 and 4 define a manifold. A simplicial glueing is simply a set of simplices of a given dimension where faces are identified according to some face-pairing. One of these conditions, called "$\mathsf{Cycl}$" is stated in dimension 4 and gives a condition on each triangle (i.e. 2-simplices) that is precisely the condition of monodromy-freedom as defined here. It is suggested that one can easily devise an extension to all dimensions and we believe that the definition of monodromy-free cell complex indeed corresponds to the extensions of the condition $\mathsf{Cycl}$ to dimensions higher or equal to 3.

\begin{lem} \label{lemmonodrfree}
A 3-full graph-based cell complex is monodromy-free.
\begin{proof}
Let $[v_1 \dots v_N]$ be a representation of  the connected component of a 2-cell $C \in \kt$ and let $l: v_1 \pathto v_1$ be the loop associated with this representation based at $v_1$ and starting with the edge $\{v_1,v_2\}.$ By the diamond property \ref{cccdiamond}, if $e \in \E_{v_1} \cap \E_C,$ there are 2 edges $e_0,e_1 \in \E_v$ included in $C$ and since $K$ is 3-full, there is a 3-cell $B$ containing $e_0,e_1$ and $e.$ Since $K$ is full, there is a unique 2-cell $C_e$ containing $e$ and $e_1$ and 
%by the diamond property, since $ \face{B} \cap \cface{e_1}$ must contain two elements and $\cface{e_1}$ contains $C$ and $C_e,$ 
by Axiom \ref{cccinter} since $e,e_1 \subset B \cap C_e$
we have $C_e \subset B.$ Therefore $\nabla_{v_2}^{v_1} e \subset B.$ By the same reasoning, there is a unique edge $e_2 \in \E_{v_2} \setminus \{e_1\}$ included in $C$ and there is a unique 2-cell $C_e'$ containing $\nabla_{v_2}^{v_1}e$ and $e_2$ and contained in $B,$ which implies $\nabla_{v_3}^{v_2}e \in B.$ By repeating this argument $N$ times we get $\nabla_{l} e \in B$ and this implies that $\nabla_{l} e = e,$ by Lemma \ref{lemaedgesetfullcc}, since $K$ is 3-full.
\end{proof}
\end{lem}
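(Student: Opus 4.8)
## Proof proposal for Lemma \ref{lemmonodrfree}

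The plan is to fix a connected component $C$ of a $2$-cell, pick a representation $[v_1 \dots v_N]$ of $C$ (which exists by Lemma \ref{lemrepr2cell}, since a $3$-full cc is in particular graph-based) and a base vertex, say $v_1$. One then picks an arbitrary edge $e \in \E_{v_1} \cap \E_C$ (the case $e \subset C$ is handled by $\nabla$ mapping $\E_{v_1} \cap C$ to $\E_{v_2} \cap C$, so we may restrict attention to the "transverse" edges), and tracks its image under the connection step by step around the loop $l := l_C(v_1)$. The goal is to show that after going all the way around, $\nabla_l e = e$. The key idea, exactly as hinted in the paragraph preceding the statement, is that $3$-fullness lets us produce a single $3$-cell $B$ that contains the two edges of $C$ at $v_1$ together with $e$, and then to argue that the entire parallel transport of $e$ around $C$ stays inside $B$.

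First I would record the local picture at $v_1$: by the diamond property \ref{cccdiamond} applied to $v_1 \subset C$, there are exactly two edges $e_0 = \{v_1,v_N\}$ and $e_1 = \{v_1,v_2\}$ of $C$ incident to $v_1$. Since $K$ is $3$-full, there is a (unique, by Lemma \ref{lemaedgesetfullcc}) $3$-cell $B \in \ks$ with $\E_{v_1}^B = \{e_0,e_1,e\}$; in particular $C \subset B$ by Lemma \ref{lemaedgesetfullcc} (the $2$-cell with edge set $\{e_0,e_1\}$ at $v_1$ is $C$, and it is a sub-cell of $B$). Now I would prove by induction on $j$ that $\nabla_{(e_1,\dots,e_{j-1})} e \subset B$, where $e_i = \{v_i,v_{i+1}\}$ are the edges of the loop $l$. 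The inductive step is the heart of the argument: suppose $e^{(j)} := \nabla_{(e_1,\dots,e_{j-1})} e$ lies in $B$ and is incident to $v_j$. By definition of $\nabla_{v_{j+1}}^{v_j}$, the edge $e^{(j+1)} = \nabla_{v_{j+1}}^{v_j} e^{(j)}$ is the unique edge at $v_{j+1}$ such that $\{e_j, e^{(j+1)}\}$ lies in the unique $2$-cell $C'$ containing $e_j$ and $e^{(j)}$. But $e_j \subset C \subset B$ and $e^{(j)} \subset B$, so $\rk(C' \cap B) \geq 2$ by Axiom \ref{cccinter}, forcing $C' \subset B$ (here one uses that $C'$ has rank $2$ and $B$ has rank $3$, together with Axiom \ref{cccrank}); hence $e^{(j+1)} \subset C' \subset B$. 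I should also note $e^{(j+1)}$ is indeed incident to $v_{j+1}$, so the induction continues. After $N$ steps we get $\nabla_l e \subset B$ and $\nabla_l e$ incident to $v_1$, i.e. $\nabla_l e \in \E_{v_1}^B = \{e_0, e_1, e\}$.

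Finally I would pin down which of the three it is. Since $\nabla_l$ restricted to $\E_{v_1} \cap \E_C$ is a bijection of $\E_{v_1} \cap \E_C$ onto itself (this is the observation, recalled just before Definition \ref{defmonodromyfreecc}, that $\nabla_w^v$ maps $\E_v \cap \E_C$ bijectively onto $\E_w \cap \E_C$), and $e \in \E_{v_1} \cap \E_C$ while $e_0, e_1 \in \E_C^{c}$ — more precisely $e_0, e_1 \subset C$ so they are not in $\E_C$ — the image $\nabla_l e$ must itself lie in $\E_{v_1} \cap \E_C$, ruling out $e_0$ and $e_1$. Therefore $\nabla_l e = e$. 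Since $e$ was an arbitrary edge of $\E_{v_1} \cap \E_C$, this shows $\nabla_{l_C(v_1)}$ is the identity on $\E_{v_1} \cap \E_C$, so $C$ is monodromy-free; as $C$ was arbitrary, $K$ is monodromy-free.

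The main obstacle I anticipate is the bookkeeping in the inductive step — specifically making sure the "unique $2$-cell containing $e_j$ and $e^{(j)}$" is well-defined (it is, because $e_j \neq e^{(j)}$, as $e^{(j)} \notin C$ would need checking, or rather $e^{(j)}$ transverse vs. $e_j \subset C$) and that the containment $C' \subset B$ genuinely follows from Axiom \ref{cccinter} plus the rank constraints rather than requiring more. A secondary subtlety is the very first step, identifying the $2$-cell with edges $\{e_0,e_1\}$ at $v_1$ with $C$ itself and concluding $C \subset B$ via Lemma \ref{lemaedgesetfullcc}; this is where $3$-fullness (not merely $2$-fullness) is essential, since $B$ must be a $3$-cell. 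Everything else is a routine application of the diamond property, Axiom \ref{cccinter}, and Lemma \ref{lemaedgesetfullcc}.
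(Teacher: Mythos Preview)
Your proposal is correct and follows essentially the same approach as the paper: build a $3$-cell $B$ from $\{e_0,e_1,e\}$ via $3$-fullness, then show by an inductive use of Axiom \ref{cccinter} that the parallel transport of $e$ around $C$ stays inside $B$, and conclude $\nabla_l e = e$ from $\E_{v_1}^B = \{e_0,e_1,e\}$. Your write-up is in fact somewhat more explicit than the paper's in two places --- establishing $C \subset B$ once at the outset (which is what makes each $e_j \subset B$ in the inductive step), and spelling out at the end why $\nabla_l e$ cannot be $e_0$ or $e_1$ --- but the argument is the same.
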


The notion of evenness of a $2$-cell and the corresponding notion for cc, introduced in the following definition, will also be crucial for the reconstruction in Section \ref{ssecreconstthm}. The idea is that the assumption of evenness leads to the invariance of the connection $\nabla_p$ under homotopies, as shown in Lemma \ref{nablamove}. One can better picture the effect of even $2$-cell versus odd (non-even) $2$-cell on the connection using Figure \ref{edgemapping2cellpic}.

\begin{defi}[Even $2$-cell and even cc]
A connected component of a 2-cell is said to be \textit{even} if it contains an even number of vertices. A cc $K$ is said to be \textit{even} if every connected component of every 2-cell in $\kt$ is even.
\end{defi}

\begin{rema}\label{remaeven2cell}
If $C$ is even and monodromy-free then for all $v \in C$ we have $$\nabla_{l_C(v)} = \id_{\E_v},$$ since $\nabla_{l_C(v)}$ induces a pairing of the edges contained in $C.$
\end{rema}

\begin{figure}[!h]
\centering
\includegraphics[scale=0.43]{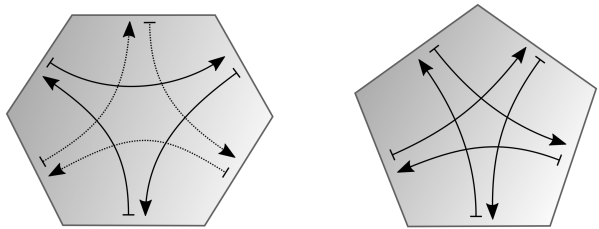}
\caption{\label{edgemapping2cellpic}In these pictures, we see  how the connection acts on the edges within an even 2-cell (on the left) versus within an odd 2-cell (on the right). The even case then leads to a pairing of the edges induced by the connection.}
\end{figure}

An important example of even cc is given in the following lemma.

\begin{lem} \label{lemdualbdiviseven}
The dual of the barycentric subdivision of a closed cc is an even cc.
\begin{proof}
Let $J$ be a closed $R$-cc and let $K = \dual{\bdiv{J}}.$ A 2-cell $C \in \kt$ is then of the form $C = \dual{y}$ where 
$$y = \{x_1 \subsetneq \dots \subsetneq x_{R - 1}\} \in \bdiv{J}^{[R-2]}.$$ 
We can then see that elements in $\dual{y}$ come in pairs by the following observation. Since $\bdiv{J}$ is a simplicial complex, every $z \in \dual{y} \subset \bdiv{J}^{[R]}$ satisfies $\abs{z \setminus y} = 2.$ Since $z$ defines a totally ordered subset of $J,$ the minimum rank of an element in $z \setminus y =: \{x, x'\},$ say $\rk(x),$ is strictly less than $R.$ This means that $z = y \cup \{x \subsetneq x'\}$ and by the diamond property \ref{cccdiamond} (and since $J$ is graph-based for the case $\rk(x) = 0$), there exists a unique $x''$ such that $\{ y \cup \{x \subsetneq x''\}, z\} \subset \dual{y}.$
\end{proof}
\end{lem}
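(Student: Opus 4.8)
The plan is to show that every 2-cell of $K = \dual{\bdiv{J}}$ has an even number of vertices by exhibiting a fixed-point-free involution on its vertex set. Recall that the vertices of $K$ are the maximal cells of $\bdiv{J}$, i.e.\ (since $\bdiv{J}$ is a simplicial complex of rank $R$, $J$ being a closed $R$-cc) the totally ordered chains in $J$ of length $R+1$, that is, maximal flags $x_0 \subsetneq x_1 \subsetneq \dots \subsetneq x_R$ with $\rk_J(x_i) = i$ (this last fact follows from Axioms \ref{cccrank} and \ref{cccenough} for $J$, exactly as in Lemma \ref{lembdivoriented1skel}). A 2-cell of $K$ has the form $\dual{y}$ where $y \in \bdiv{J}^{[R-2]}$ is a chain of length $R-1$; say $y = \{x_{i_1} \subsetneq \dots \subsetneq x_{i_{R-1}}\}$, so that exactly two ranks, call them $r$ and $s$ with $r < s$, are missing from the set of ranks occurring in $y$. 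The vertices of $\dual{y}$ are the maximal flags of $J$ that refine $y$, i.e.\ obtained from $y$ by inserting one cell of rank $r$ and one cell of rank $s$ in the appropriate gaps.

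First I would treat separately the easy case and the main case. If $s = r+1$ (the two missing ranks are consecutive), then refining $y$ to a maximal flag amounts to inserting a single pair $(a,b)$ with $\rk(a)=r$, $\rk(b)=r+1$, $a' \subsetneq a \subsetneq b \subsetneq b'$ where $a'$ (rank $r-1$) and $b'$ (rank $r+2$) are the cells of $y$ bounding the gap (with the obvious modifications if $r=0$, using that $J$ is graph-based, or if $r+1 = R$); by the diamond property \ref{cccdiamond} applied to the pair $a' \subsetneq b'$ there are exactly two choices of the middle cell of rank $r+1$, but we also need to count choices of $a$. This is essentially the argument sketched in the excerpt: given the refinement $z = y \cup \{a \subsetneq b\}$, the diamond property produces a unique $a'' \neq a$ of rank $r$ with $a'' \subsetneq b$ and $a'' \supset a'$ (if $r\geq 1$; if $r = 0$ one uses Lemma \ref{lemrepr2cell}, i.e.\ graph-basedness, to get the unique other vertex), so $z \mapsto y \cup \{a'' \subsetneq b\}$ is a fixed-point-free involution on the vertices of $\dual{y}$, whence $\abs{\dual{y}}$ is even.

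For the main case $s > r+1$: here the insertions at rank $r$ and at rank $s$ are independent (they occur in disjoint gaps of the chain $y$), and at rank $r$ the diamond property again gives, for each maximal flag refining $y$, a canonical "partner" obtained by flipping only the rank-$r$ cell while keeping everything else fixed — by the same diamond/graph-based argument as above. This again is a fixed-point-free involution, so $\abs{\dual{y}}$ is even. Thus in all cases the involution flips the inserted cell of the smaller of the two missing ranks, using Axiom \ref{cccdiamond} when that rank is positive and the representation lemma \ref{lemrepr2cell} (graph-basedness of $J$) when it is $0$. The one point that needs care — and is the only real obstacle — is verifying that "flip the rank-$r$ cell" is genuinely a well-defined involution without fixed points: one must check that the flipped cell $a''$ is uniquely determined, differs from $a$, and lies strictly between the same bounding cells of the flag, so that the construction is symmetric in $a$ and $a''$. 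All of this is immediate from the diamond property once the bounding cells (the cell of rank $r-1$ and the cell of rank $r+1$ in the flag, the latter possibly inserted at rank $s$ or already present) are pinned down; I would spell this out in one short paragraph, distinguishing $r = 0$ (use \ref{lemrepr2cell}) from $r \geq 1$ (use \ref{cccdiamond}), and that completes the proof.
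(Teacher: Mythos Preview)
Your argument is correct and is essentially the same as the paper's: both construct a fixed-point-free involution on the vertices of $\dual{y}$ by flipping the inserted cell of the smaller missing rank, using the diamond property \ref{cccdiamond} when that rank is $\geq 1$ and graph-basedness when it is $0$. Your write-up is more explicit about the two cases $s=r+1$ versus $s>r+1$ (the paper treats them uniformly), and one small quibble: for the $r=0$ case you should cite graph-basedness of $J$ directly rather than Lemma~\ref{lemrepr2cell}, which is a statement about $2$-cells rather than edges.
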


The following lemma states under what assumptions the connection $\nabla_p$ is invariant under homotopies, which is a key property in order to prove the Extension Lemma \ref{lemextension} in the next section. 

\begin{lem}\label{nablamove}
Let $K$ be a full, graph-based, monodromy-free, even cc and let $q \in \paths$ and $m\in \M.$ Then we have $$\nabla_q = \nabla_{m(q)}.$$ In particular if $p \sim q$ then $\nabla_p = \nabla_q.$
\begin{proof}
If $m(q) = q$ it is trivial. If not then either $m = m_C^p$ for some 2-cell $C$ and a simple path $p$ in $\kg \cap C$ and $q = q_1 \conc p \conc q_2,$ or $m = m_{e,v}$ (or $m = m_{e,v}^{-1}$) for some edge $e$ and vertex $v$ such that $v \in e \cap q^{[0]}$ and $q = q_1 \conc (e,e) \conc q_2.$

In the first case $\nabla_{l_C(v)} = \id_{\E_v}$ for all $v \in C,$ by Remark \ref{remaeven2cell}, therefore we can choose a representation of $C$ and some $v \in C$ such that $l_C(v) = (\comp{p}{C})^{-1} \conc p.$ One then obtains the following relations
\begin{align*}
\nabla_{m(q)} &= \nabla_{q_1} \nabla_{\comp{p}{C}} \nabla_{q_2} \\
&= \nabla_{q_1} \nabla_{l_C(v)} \nabla_{\comp{p}{C}} \nabla_{q_2}\\
&= \nabla_{q_1} \nabla_p \nabla_{\comp{p}{C}}^{-1} \nabla_{\comp{p}{C}} \nabla_{q_2} = \nabla_q.
\end{align*}

In the second case, we have that $\nabla_{(e,e)} = \nabla_v^w \nabla_w^v = \id_{\E_v}$ assuming $e = \{v,w\}$ and $(e,e)$ in $q = q_1 \conc (e,e) \conc q_2$ is a path from $v$ to $v,$ so that we have
\begin{align*}
\nabla_{m(q)} &= \nabla_{q_1} \nabla_{q_2}\\
&= \nabla_{q_1}  \nabla_{(e,e)} \nabla_{q_2} = \nabla_q.
\end{align*}
The case $m = m_{e,v}^{-1}$ is similar to the second case.
\end{proof}
\end{lem}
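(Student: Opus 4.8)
The plan is to reduce the first assertion to the two elementary types of moves in $\M$, verify it for each, and then obtain the statement about homotopies by iteration. First I would dispose of the trivial alternative: by Definition \ref{defmovespath}, if $m(q) = q$ there is nothing to prove. Otherwise $m$ is of exactly one of two kinds. Either $m = m_C^p$ is a $2$-cell move, so that $q = q_1 \conc p \conc q_2$ for a simple path $p$ from $v$ to $w$ in $\kg \cap C$ and $m(q) = q_1 \conc \comp{p}{C} \conc q_2$; or $m = m_{e,v}$ or $m = m_{e,v}^{-1}$ is an edge move through $e = \{v,w\}$, so that (after possibly interchanging $m$ and its inverse) $q = q_1 \conc (e,e) \conc q_2$ with the subpath $(e,e)$ read as a cycle based at $v$, and $m(q) = q_1 \conc q_2$. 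Using the functoriality relations $\nabla_{p\conc p'} = \nabla_{p'}\nabla_p$ and $\nabla_{p^{-1}} = \nabla_p^{-1}$ recorded in Definition \ref{definablapaths}, in both cases it suffices to compare the connection along the modified segment only.

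For the $2$-cell move I would invoke Remark \ref{remaeven2cell}: since $K$ is even and monodromy-free, $\nabla_{l_C(v)} = \id_{\E_v}$ for every $v \in C$. By the definition of the complementary path, $l_C(v) = p \conc (\comp{p}{C})^{-1}$, hence $\id_{\E_v} = \nabla_{l_C(v)} = \nabla_{\comp{p}{C}}^{-1}\nabla_p$, which gives $\nabla_p = \nabla_{\comp{p}{C}}$ and therefore $\nabla_q = \nabla_{q_2}\nabla_{p}\nabla_{q_1} = \nabla_{q_2}\nabla_{\comp{p}{C}}\nabla_{q_1} = \nabla_{m(q)}$. For the edge move, the relation $\nabla_{p^{-1}} = \nabla_p^{-1}$ applied to the one-edge path $(e)$ from $v$ to $w$ yields $\nabla_{(e,e)} = \nabla_v^w \nabla_w^v = \id_{\E_v}$, so deleting (or, in the case of $m_{e,v}^{-1}$, inserting) the subpath $(e,e)$ does not change $\nabla$; explicitly $\nabla_q = \nabla_{q_2}\nabla_{(e,e)}\nabla_{q_1} = \nabla_{q_2}\nabla_{q_1} = \nabla_{m(q)}$.

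Finally, the ``in particular'' clause is immediate from the first part: by Definition \ref{defhomclasspath} a homotopy is a finite composition $m_k \circ \cdots \circ m_1$ of moves in $\M$, so if $p \sim q$ then applying the first part successively along this composition gives $\nabla_p = \nabla_{m_1(p)} = \cdots = \nabla_{(m_k \circ \cdots \circ m_1)(p)} = \nabla_q$.

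I do not expect a genuine obstacle here; the one point deserving care is the bookkeeping of basepoints — making sure in the $2$-cell case that $l_C(v)$ is oriented so that the identity $l_C(v) = p \conc (\comp{p}{C})^{-1}$ holds (which is exactly how $\comp{p}{C}$ was defined), and in the edge case that the inserted or deleted segment $(e,e)$ is genuinely a cycle based at the vertex $v$ attached to the move, so that $\nabla_w^v$ and $\nabla_v^w$ are mutually inverse by Lemma \ref{nabla} \ref{nabla2}. Both are guaranteed by the way the moves are set up in Definition \ref{defmovespath}.
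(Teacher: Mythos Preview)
Your proof is correct and follows essentially the same approach as the paper's: both split into the $2$-cell and edge-move cases, invoke Remark~\ref{remaeven2cell} to get $\nabla_{l_C(v)} = \id_{\E_v}$ and hence $\nabla_p = \nabla_{\comp{p}{C}}$, and use $\nabla_v^w\nabla_w^v = \id_{\E_v}$ for the edge case. The only cosmetic difference is that you read off $\nabla_p = \nabla_{\comp{p}{C}}$ directly from the defining decomposition $l_C(v) = p \conc (\comp{p}{C})^{-1}$, whereas the paper first reorients the loop to $l_C(v) = (\comp{p}{C})^{-1}\conc p$; both routes yield the same identity.
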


\subsection{Extension Lemma  and  reconstruction} \label{ssecreconstthm}

In this section, we prove that local even full monodromy-free simply connected 2-cell complexes can be associated a corresponding "ambient" cell complex of rank $R \geq 3$ having the original 2-cell complex as its 2-skeleton. The main step for this construction, carried out in Theorem \ref{thminducedcc}, is to define a collection of "induced" sub-2-cell complexes  whose vertices then serve as higher dimensional cells of the ambient cell complex. In particular, Corollary \ref{correconst2} shows that the ambient cell complex associated to the 2-skeleton of a local even simple simply connected cell complex is equal to the original cell complex. This result is in line with a number of results about the reconstruction of the face lattice of (embedded) polytopes, see e.g. \cite{bay18} for a recent review on this topic. A well-known result from \cite{bm87} and \cite{kal88} states that the face lattice of a simple polytope, i.e. a polytope which dual is simplicial, is determined by its 1-skeleton. As we discussed in Section \ref{secpolytope}, here we are dealing with local cell complexes that are more general than the face lattice of complexes of polytopes but we have the additional assumptions of simple connectedness and evenness. We are not aware of an equivalent result in the literature for complex of polytopes.

The first key result of this section is the next "Extension Lemma" \ref{lemextension} which will be essential for the proof of Theorem \ref{thminducedcc}. The assumptions of fullness, evenness, monodromy-freedom and simple connectedness introduced in this chapter can therefore be considered as chosen so that this lemma holds. The proof of the Extension Lemma is efficiently forumulated using covariant edge fields. One reason for this is that when defined on an appropriate domain, covariant edge fields  satisfy the property that their value over their entire domain is determined by their value at one vertex. We formulate this fact more precisely in the following remark on which the proof of the Extension Lemma relies.

\begin{rema}\label{remedgefield}
If $\phi$ is a covariant edge field in a full graph-based cc $K$ and if $p$ is a path in $\kg \cap \dom_\phi,$ from $v$ to $w$ then $\nabla_p \phi(v) = \phi(w).$ In particular if $\kg \cap \dom_\phi$ is connected then $\phi$ is determined by its value at any given vertex in $\dom_\phi.$  
\end{rema}

We now formulate the Extension Lemma and its proof in terms of growing domains of covariant edge fields as it incorporates well the idea that will be used in our formulation of the proof of Theorem \ref{thminducedcc}. Considering that we assume the entire cc $K$ to be simply connected, one can also read the result simply using $\dom_\phi = \{v\}$ for any $v \in \kz$ and $D = \kz.$

\begin{lem}[Extension Lemma]\label{lemextension}
Let $K$ be a graph-based, full, even, monodromy-free, simply connected cc and let $\phi : \dom_\phi \dans \ko $ be an covariant edge field in $K$ such that $\kg \cap \dom_\phi$ is connected. If $D \subset \kz$ such that $\kg \cap (\dom_\phi \cup D)$ is connected then there exists a unique covariant edge field $\hat{\phi}: \dom_{\hat{\phi}} \dans \ko$ such that $\dom_{\hat{\phi}} = \dom_\phi \cup D$ and $\hat{\phi}|_{\dom_\phi} = \phi.$
\begin{proof}
Let $v \in D,$ $w \in \dom_\phi$ and let $p$ be a path from $w$ to $v$ in $\kg \cap (\dom_\phi \cup D),$ as in Figure \ref{extensionlemmapic}. The main idea of the proof is to define $\hat{\phi}(v) = \nabla_{p}\phi(w)$ and to show that it is well defined. This amounts to show that if $w' \in \dom_\phi$ and $p'$ is a path from $w'$ to $v$ then there is a path $q$ from $w$ to $w'$ in $\dom_\phi$ and we have
\begin{align*}
\nabla_{p} \phi(w) &= \nabla_{p^{-1} \conc q \conc p'} \nabla_{p} \phi(w) \\
&= \nabla_{p'} \nabla_q \phi(w) = \nabla_{p'} \phi(w').
\end{align*}
This is indeed the case by the following argument. Since $K$ is simply connected and even, Lemma \ref{nablamove} implies that every cycle $q'$ in $K$ satisfies $\nabla_{q'} = \nabla_{\emptyset_v} = \id_{\E_v}$ for any $v \in (q')^{[0]}.$  The first equality then follows from the fact that $p^{-1} \conc p \conc p'$ is a cycle. For the second equality, we recall that $\nabla_{p_1 \conc p_2} = \nabla_{p_2} \nabla_{p_1}$ and the last equality is a consequence of $\nabla_q \phi(w) = \phi(w')$ as noticed in Remark \ref{remedgefield}. The latter remark also gives an argument for the uniqueness of $\hat{\phi}.$ Finally $\hat{\phi}$ defined thereby is a covariant edge field since if $\{v,v'\}$ is an edge of $\kg \cap D$ and $p$ is a path linking $v$ with $w \in \dom_\phi$ then $p' = p \conc (\{v,v'\})$ is a path linking $v'$ with $w$ therefore 
$$\hat{\phi}(v') = \nabla_{p'} \phi(w) = \nabla_{v'}^v \nabla_{p} \phi(w) = \nabla_{v'}^v \hat{\phi}(v).$$
\end{proof}
\end{lem}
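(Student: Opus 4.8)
The plan is to construct $\hat{\phi}$ explicitly by transporting the value of $\phi$ along paths, and then verify it is well-defined, covariant, extends $\phi$, and is unique. First I would fix, for each vertex $v \in D$, a path $p$ from some $w \in \dom_\phi$ to $v$ inside $\kg \cap (\dom_\phi \cup D)$, which exists by the connectedness hypothesis, and define $\hat{\phi}(v) := \nabla_p \phi(w)$; for $v \in \dom_\phi$ I set $\hat{\phi}(v) := \phi(v)$. Observe first that this is consistent on the overlap: if $v \in \dom_\phi$ and $p$ is such a path from $w$ to $v$, then since $p$ can be chosen (or, once well-definedness is established, any path gives the same answer) and $\phi$ is covariant with $\kg \cap \dom_\phi$ connected, Remark \ref{remedgefield} gives $\nabla_p \phi(w) = \phi(v)$.

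The core step is well-definedness: if $w, w' \in \dom_\phi$ and $p, p'$ are paths to $v$ from $w$ and $w'$ respectively, then $\nabla_p \phi(w) = \nabla_{p'} \phi(w')$. I would pick a path $q$ from $w$ to $w'$ inside $\kg \cap \dom_\phi$ (possible since $\kg \cap \dom_\phi$ is connected) and write
\begin{align*}
\nabla_p \phi(w) &= \nabla_{p^{-1} \conc q \conc p'} \nabla_p \phi(w) \\
&= \nabla_{p'} \nabla_q \phi(w) = \nabla_{p'} \phi(w').
\end{align*}
The first equality uses that $p \conc p^{-1} \conc q \conc p'$ differs from $q \conc p'$ only by the cycle $p^{-1} \conc p$ prepended — more precisely, $p^{-1} \conc p \conc p'$ is a cycle based at $v$, and since $K$ is simply connected and even, Lemma \ref{nablamove} (applied to the homotopy contracting this cycle, each move leaving $\nabla$ invariant) gives $\nabla_{p^{-1} \conc p \conc p'} = \nabla_{p'}$, hence $\nabla_{p^{-1}\conc q \conc p'} = \nabla_{p^{-1}\conc q \conc p' } \nabla_{\emptyset_v}$ and the bookkeeping with $\nabla_{p_1 \conc p_2} = \nabla_{p_2}\nabla_{p_1}$ yields the displayed chain. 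The middle equality is the composition rule for $\nabla$ along concatenations, and the last uses $\nabla_q \phi(w) = \phi(w')$ from Remark \ref{remedgefield} since $q$ lies in $\dom_\phi$ and $\phi$ is covariant there.

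Once well-definedness holds, covariance of $\hat{\phi}$ is immediate: for an edge $\{v, v'\} \in \ko$ with $v, v' \in \dom_{\hat\phi}$, take a path $p$ from some $w \in \dom_\phi$ to $v$; then $p' := p \conc (\{v,v'\})$ is a path from $w$ to $v'$, so $\hat{\phi}(v') = \nabla_{p'}\phi(w) = \nabla_{v'}^v \nabla_p \phi(w) = \nabla_{v'}^v \hat{\phi}(v)$ (if $v \in \dom_\phi$ one uses the consistency observation above to see $\nabla_p\phi(w) = \hat\phi(v)$). Uniqueness follows from Remark \ref{remedgefield}: any covariant edge field extending $\phi$ with domain $\dom_\phi \cup D$ must, along any path from $\dom_\phi$, be determined by its values on $\dom_\phi$, which are pinned to $\phi$. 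The main obstacle is the well-definedness argument, and specifically making the cycle-contraction step rigorous — one needs that simple connectedness of $K$ gives contractibility of the cycle $p^{-1}\conc p \conc p'$ in $\kc$ (reducing first to the case where the relevant loop lives in the 2-skeleton) so that Lemma \ref{nablamove} applies; everything else is routine manipulation of the connection's functoriality under concatenation and inversion.
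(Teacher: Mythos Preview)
Your proof is essentially identical to the paper's: define $\hat\phi(v)=\nabla_p\phi(w)$ by parallel transport, prove well-definedness via the same displayed chain of equalities using that the loop $p^{-1}\conc q\conc p'$ is contractible (simple connectedness) so that Lemma~\ref{nablamove} gives $\nabla_{p^{-1}\conc q\conc p'}=\id_{\E_v}$, then check covariance and uniqueness exactly as you do. One small slip: in your verbal justification of the first equality you name the cycle as ``$p^{-1}\conc p\conc p'$'', which is not even a well-formed concatenation (the endpoints do not match); the relevant cycle is $p^{-1}\conc q\conc p'$ based at $v$, which is precisely what appears in your displayed equation and is what the paper uses.
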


\begin{figure}[!h]
\centering
\includegraphics[scale=0.43]{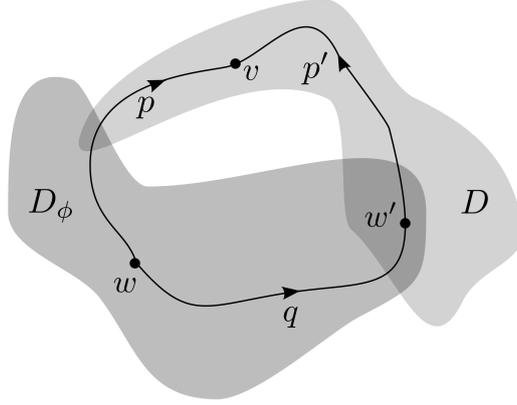}
\caption{\label{extensionlemmapic}Configuration of Lemma \ref{lemextension}.}
\end{figure}

\begin{rema}
As a consequence of Lemma \ref{lemextension}, a full, even, monodromy-free, simply connected and connected cc $K$ is \textit{properly $(R+1)$-edge-colorable,} meaning that one can color each edge of $K$ with $R+1$ colors so that no two edges sharing a vertex have the same color. Another way to say this is that $\kg$ is an $(R+1)$-graph, where the notion of $n$-graph is a tool used in works on so-called crystallizations of topological manifolds (see e.g. in \cite{lm85}, \cite{vin85} or \cite{aphos13}), introduced in the research related to minimal atlases of topological manifold initiated by Pezzana and al. (see  \cite{fgg86} for a review on this topic). However, not all $n$-regular graphs are $n$-graph, i.e. properly $n$-edge-colorable.
\end{rema}

Now comes the main result of this chapter, in which we define the notion of \textit{induced sub-cell complex.}

\begin{thm} \label{thminducedcc}
Let $K$ be a local even full monodromy-free simply connected 2-cell complex. Then for all $v \in \kz$ and all $S_v \subsetneq \E_v$ such that $r := \abs{S_v} \geq 2$ there is a unique sub-cell complex $J(v,S_v)$ of $K$ such that $J(v,S_v)$ is a $r$-regular local even full monodromy-free connected 2-cell complex and
\begin{equation} \label{conditionlemreconstr}
J(v,S_v)^{[1]} \cap \E_v = S_v.
\end{equation}
Moreover, $J(v,S_v)$ satisfies the following properties:
\begin{enumerate}[label=\textbf{\arabic*}),topsep=2pt, parsep=2pt, itemsep=1pt]
\item \label{thminducedcc1} $J(v,S_v) = J(v',S'_{v'})$ if and only if $v' \in J(v,S_v)^{[0]}$ and $S'_{v'} = J(v,S_v)^{[1]} \cap \E_{v'},$
\item \label{thminducedcc2} $S_v \subsetneq S'_v$ if and only if $J(v, S_v) \lneq J(v,S'_v).$
\item \label{thminducedcc3} If $\abs{S_v}=2$ then $S_v = \E_v^C$ for some $C \in \kt$ and $J(v,S_v) = K \cap C.$
\end{enumerate}
We call such $J(v,S_v)$ an induced sub-cell complex of $K.$
\begin{proof}
By Lemma \ref{lemfullk}, $K$ is $n$-regular and $(n-1)$-edge-regular for some $n \geq 2.$ 
%If $n = 2,$ $K$ is made of one 2-cell, and if $\abs{S_v} \leq 2,$ then $J$ is either a single edge or a single 2-cell of $K,$ and the claim is clear. Therefore we suppose $n \geq 4,$ and $m := \abs{S_v} \geq 3.$ 
Let $v \in \kz$ and $S_v \subset \E_v.$ By the Extension Lemma \ref{lemextension}, for any given $e_v \in S_v$ and for all $w \in \kz$ there is an edge $e_w$ in $\E_w$ associated to $e_v,$ meaning that the map $w \mapsto e_w$ is an induced edge field. Hence for all $w \in \kz$ we can define the set $S_w \subset \E_w$ of edges in $\E_w$ associated to the edges in $S_v.$ Moreover, for each $w \in \kz,$ we can define the following set of 2-cells
\begin{equation} \label{equdefCe}
\Ce_w := \{ C \in \ce{w} ~|~ \E_w^C \subset S_w\},
\end{equation}
where we used again the notation $\E_w^C := \{ e \in \E_w ~|~ e \subset C\}$ and $\dual{w}$ is the dual set of $w$ in the 2-cc $K.$

The strategy of the proof is to inductively define a finite sequence of connected domains $\dom_k \subset \kz, ~ 1 \leq k \leq N$ such that for all $1 \leq k \leq N$ we have
\begin{align}
\E^k_w &:= \E_w \cap \left( K \cap \dom_k \right)^{[1]}  \subset S_w, \label{lemedgewcond}\\
\Ce_w^k &:= \{ C \in \ce{w} ~|~ \E_w^C \subset \E_w^k\} \subset \Ce_w, \label{lemcellwcond}
\end{align}
for all $w \in \dom_k$ and such that $\dom_N = J(v,S_v)^{[0]}.$ Since the inclusion in (\ref{lemedgewcond}) implies the inclusion on line (\ref{lemcellwcond}), we only need to check that the first inclusion holds.

Define $\dom_1 = \{v\},$ let $k \geq 1$ and suppose that we defined the domains $\dom_j$ for $1 \leq j \leq k$ such that (\ref{lemedgewcond}) is satisfied for all $w \in \dom_j$ and for all $1 \leq j \leq k.$ Let us fix $w_0 \in \dom_k$ and a 2-cell
$$C_0 \in \Ce_w \setminus \Ce_w^k.$$ 
We then define $\dom_{k+1} := \dom_k \cup C_0,$ which implies that for all $w \in \dom_{k+1},$ $$\E_w^{k+1} = \E_w^k \cup \E_w^{C_0},$$
where $\E_w^{C_0} = \emptyset$ if $w \notin C_0.$ The definition of the sets $S_w, ~w \in \kz$ (relying on the induced edge field $w \dans \E_w$) and the fact that $C_0 \in \Ce_{w_0}$ imply that $C_0 \in \Ce_w$ for all $w \in C_0.$ We then have that $\E_w^{C_0} \subset S_w$ for all $w \in C_0$ and therefore (\ref{lemedgewcond}) is satisfied for all $w \in \dom_{k+1}.$

We can then define $N \in \N^*$ to be the first integer such that 
\begin{equation}\label{lemfinalcondcell}
\Ce_w^N = \Ce_w \quad \forall w \in \dom_N.
\end{equation}
An consquence of (\ref{lemfinalcondcell}) we also get 
\begin{equation} \label{lemcndtnedges}
\E_w^N = S_w \quad \forall w \in \dom_N,
\end{equation}
by the following argument. Since $\abs{S_w} \geq 2$ for all $w \in \kz,$ if $e \in S_w$ then there exists $e'\in S_w \setminus \{e\}.$ Since $K$ is full, we can also pick $C \in \ce{e} \cap \ce{e'}.$ Hence the definition of $\Ce_w$ implies that  $C \in \Ce_w$ and (\ref{lemfinalcondcell}) in turn implies that $C \in \Ce_w^N.$ The definition of $\Ce_w^k$ in (\ref{lemcellwcond}) for $k=N$ yields $\E_w^C \subset \E_w^N$ which in particular implies that $e \in \E_w^N$ since $\E_w^C = \{e,e'\}.$

We therefore have that $J = J(v, S_v) := (J^{[0]}, J^{[1]}, J^{[2]})$ defined by
$$J^{[0]} := \dom_N, \quad J^{[1]} := \bigcup_{w \in J^{[0]}} S_w, \quad J^{[2]} := \bigcup_{w \in J^{[0]}} \Ce_w,$$
is a sub-2-cell complex of $K.$ Axiom \ref{cccrank} is clearly satisfied for $J,$  as well as Axioms \ref{cccinter}, \ref{cccenough} and \ref{cccdiamond} since they are satisfied for $K$ and if $C \in J^{[2]}$ then the cc $K \cap C \subset J.$ The cc $J$ is clearly connected, cell-connected, even, full and monodromy-free. The property (\ref{lemcndtnedges}) then implies that 
\begin{equation} \label{equpropedgesetJ}
J^{[1]} \cap \E_w = S_w \quad \forall w \in J^{[0]}.
\end{equation}

Let's first note that point \ref{thminducedcc3} is clear as it corresponds to the case where $N=2$ in the above construction.

We can now argue that property \ref{thminducedcc1} is satisfied. Fix $(v,S_v)$ and $J= J(v,S_v)$ as defined above. Assume that we pick
$v' \in \kz$ and $S'_{v'} \subset \E_{v'}$ and let $J' = J(v',S'_{v'})$ be defined by the construction above, using $(v',S'_{v'})$ instead of $(v,S_v).$ It is clear that if $J = J'$ then $v' \in J^{[0]}$ and $S'_{v'} = J^{[1]} \cap \E_{v'}$ by property (\ref{equpropedgesetJ}) applied to $J' = J.$ Therefore it suffices to show to converse statement. For this we assume that $v' \in J^{[0]}$ and $S'_{v'} = J^{[1]} \cap \E_{v'}$ and we show that $J'= J(v', S'_{v'}) = J.$ First we notice that $S'_{v'} = S_{v'},$ the set of edges in $\E_{v'}$ associated to $S_v.$ Therefore, for all $w \in \kz,$ the set of edges in $\E_w$ associated to $S'_{v'}$ is identical to $S_w,$ the set of edges in $\E_w$ associated to $S_v.$ The corresponding sets of 2-cells also remain unchanged, i.e. 
$$\Ce'_w = \{ C \in \dual{w} ~|~ \E^C_w \subset S'_w \} = \Ce_w, \quad \forall w \in \kz.$$ 
Let's define an equivalence relation on $\kz$ from the set of edges in $\ko$ associated to a given set $S \subset \E_v$ as follows. Two vertices $v_0$ and $v_1$ are called "$S$-equivalent" if they can be linked by a sequence of 2-cells in the following manner: there are sequences $\{C_i\}_{i = 1}^M \subset K^{[2]}$ and $\{w_i\}_{i=0}^M \subset K^{[0]}$ such that $v_0 := w_0 \in C_1,$ $v_1 = w_M$ and 
$$C_i \in \Ce_{w_i} \cap \Ce_{w_{i-1}} \quad \forall i = 1, \dots, M,$$ where $\Ce_w$ for all $w \in \kz$ is defined by the formula (\ref{equdefCe}) choosing $S_w$ to be the set of edges in $\E_w$ associated to $S.$
We then have that $v_0$ and $v_1$ are vertices of $J$ if and only $v_0 \in J$ and $v_1$ is $S_v$-equivalent to $v_0.$
By our assumptions, $v$ and $v'$ are $S_v$-equivalent and, as we noticed before, $S'_{v'}$-equivalence classes are identical to $S_v$-equivalence classes and this indeed implies that $J' = J.$

Finally, the implication that $S_v \subsetneq S'_v$ if $J(v, S_v) \lneq J(v,S'_v)$ of point \ref{thminducedcc2} is clear and the reversed implication is a consequence of the following observation: the inclusion $S_v \subsetneq S'_v$  implies that for each vertex $w \in \kz$ the set $S_w$ of edges associated to $S_v$ is also strictly included in the set $S'_w$ of edges associated to $S'_v.$ Similarly, the corresponding sets of 2-cells satisfy $\Ce_w \subsetneq \Ce'_w$ and this implies that $J(v, S_v)^{[i]} \subsetneq J(v, S'_v)^{[i]}$ for $i = 0,1,2.$ 
\end{proof}
\end{thm}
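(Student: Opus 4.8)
The plan is to build the induced sub-cell complex $J(v,S_v)$ by a growing-domain construction powered entirely by the Extension Lemma \ref{lemextension}, then verify the cc axioms and the three extra properties by essentially bookkeeping arguments. First, I would invoke Lemma \ref{lemfullk} to get that $K$ is $n$-regular and $(n-1)$-edge-regular for some $n$, so the combinatorics of edge sets around each vertex is controlled. Then, fixing $v$ and $S_v\subsetneq\E_v$ with $r=\abs{S_v}\geq 2$, I would use the Extension Lemma to transport $S_v$ canonically to every vertex: for each edge $e_v\in S_v$ the assignment $w\mapsto \nabla_p e_v$ (for any path $p$ from $v$ to $w$ in $\kg$) is a well-defined covariant edge field on all of $\kz$, by simple connectedness and evenness via Lemma \ref{nablamove} and Remark \ref{remedgefield}. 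This produces, for each $w\in\kz$, a set $S_w\subset\E_w$ and then the natural set of $2$-cells $\Ce_w=\{C\in\ce{w}\mid \E_w^C\subset S_w\}$.

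The core of the argument is the inductive definition of a finite increasing sequence of connected domains $\dom_1=\{v\}\subset\dom_2\subset\cdots\subset\dom_N\subset\kz$, where at each step one picks a vertex $w_0\in\dom_k$ and a $2$-cell $C_0\in\Ce_{w_0}\setminus\Ce_{w_0}^k$ (with $\Ce_w^k$ the $2$-cells whose edges at $w$ already lie in the current restricted edge set) and sets $\dom_{k+1}=\dom_k\cup C_0$. The key invariant to maintain is that the edge set $\E_w^k:=\E_w\cap(K\cap\dom_k)^{[1]}$ stays inside $S_w$ for all $w\in\dom_k$; this follows because the covariance of the transported edge field forces $C_0\in\Ce_w$ for every $w\in C_0$ once $C_0\in\Ce_{w_0}$. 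The process terminates at some $N$ with $\Ce_w^N=\Ce_w$ for all $w\in\dom_N$, and from this one deduces $\E_w^N=S_w$ for all $w\in\dom_N$ using fullness ($\abs{S_w}\geq 2$ lets one find a $2$-cell through any prescribed edge of $S_w$). Defining $J(v,S_v)$ to have vertex set $\dom_N$, edge set $\bigcup_{w\in\dom_N}S_w$ and $2$-cell set $\bigcup_{w\in\dom_N}\Ce_w$, one checks the cc axioms: Axiom \ref{cccrank} is immediate, and Axioms \ref{cccinter}, \ref{cccenough}, \ref{cccdiamond} hold because whenever $C\in J^{[2]}$ the whole cc $K\cap C$ sits inside $J$, so the local structure is inherited from $K$. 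Connectedness, cell-connectedness, evenness, fullness, monodromy-freedom and $r$-regularity (equality $\abs{S_w}=r$ at every $w$, by the bijectivity of $\nabla_p$) then follow, together with $J^{[1]}\cap\E_w=S_w$, which is exactly \eqref{conditionlemreconstr} at $w=v$ and its uniqueness consequence.

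For the remaining properties: \ref{thminducedcc3} is just the case $N=2$ of the construction, since $\abs{S_v}=2$ forces $S_v=\E_v^C$ for the unique $C\in\ce{e}\cap\ce{e'}$ (fullness) and then $\dom_2=C$, giving $J=K\cap C$. For \ref{thminducedcc1}, the forward direction is \eqref{equpropedgesetJ}; for the converse I would introduce the ``$S$-equivalence'' relation on vertices (linkable by a chain of $2$-cells all lying in the relevant $\Ce_{w_i}$), observe that the transported edge sets and hence the $\Ce_w$'s depend only on the $S$-equivalence class data and not on the base point, and conclude that $v'\in J^{[0]}$ with $S'_{v'}=J^{[1]}\cap\E_{v'}$ forces $J(v',S'_{v'})=J$. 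For \ref{thminducedcc2}, one direction is trivial and the other follows because $S_v\subsetneq S'_v$ propagates to $S_w\subsetneq S'_w$ (again by the edge field transport) and hence to $\Ce_w\subsetneq\Ce'_w$, giving strict inclusion at every rank. I expect the main obstacle to be the well-definedness of the globally transported edge field and the verification that the growing-domain construction both stays inside the $S_w$'s and terminates with equality; once that invariant is nailed down, the axiom checks and properties \ref{thminducedcc1}--\ref{thminducedcc3} are comparatively routine, modulo the careful setup of $S$-equivalence for the uniqueness part.
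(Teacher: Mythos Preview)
Your proposal is correct and follows essentially the same approach as the paper: the paper likewise transports $S_v$ to every vertex via the Extension Lemma, runs the identical growing-domain construction $\dom_{k+1}=\dom_k\cup C_0$ with the invariant $\E_w^k\subset S_w$, terminates when $\Ce_w^N=\Ce_w$ and uses fullness to get $\E_w^N=S_w$, then verifies the axioms and properties \ref{thminducedcc1}--\ref{thminducedcc3} exactly as you outline (including the $S$-equivalence relation for \ref{thminducedcc1}). There is no substantive difference in strategy or execution.
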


Our formulation of the proof of Theorem \ref{thminducedcc} directly provides an algorithm to construct the induced sub-cell complexes associated to a local even full monodromy-free simply connected 2-cell complex. Using the same notations as in the proof, the algorithm simply consists in the following steps, starting with all couples of the form $(v,S_v)$ marked as "unselected" and running until all such couples have been selected:
\begin{itemize}[topsep=2pt, parsep=2pt, itemsep=1pt]
\item select one unselected couple $(v,S_v),$
\item construct $J(v,S_v)$ by iteratively adding 2-cells to a domain $\dom_k$ as in the proof,
\item at each step, mark each couple $(w,S_w)$ as "selected" where $w \in \dom_{k+1}$ and $S_w$ are the edge in $\E_w$ associated to $S_v.$
\end{itemize}

We introduce the following notion in order to simplify the formulation of the next proofs. The idea is simply to connect the induced cell complex defined in Theorem \ref{thminducedcc} to the Definition \ref{defccc} of a cell complex, where a cell is a subset of vertices.

\begin{defi}[Collection of induced $r$-cells]
Let $K$ be a local even full monodromy-free simply connected 2-cell complex. $K$ is $n$-regular for some $n \in \N^*$ by Lemma \ref{lemfullk}) and let us also fix an integer $2 \leq r \leq n -1.$ We denote by $\Ce^K_r$ the collection of vertex sets of all $r$-regular induced sub-cell complexes of $K$ given by Theorem \ref{thminducedcc}. $\Ce^K_r$  is called the \textit{set of induced $r$-cells of $K.$} By convention we set $\Ce_r^K = \emptyset$ for all $r > n.$
\end{defi}

Note that the point \ref{thminducedcc3} of Theorem \ref{thminducedcc} implies in particular that $\Ce_2^K = \kt.$ We also need the following condition on the induced cells of a local even full monodromy-free simply connected 2-cell complex in order to define an higher dimensional cell complex from these cells.

\begin{defi}[Non-singular collection of cells] \label{defnonsingcells}
Let $K$ be a local cell complex and let $\Ce$ be a collection of subsets of $\kz.$ $\Ce$ is \textit{non-singular} if  $K \cap (x \cap y)$ is connected or empty for all $x,y \in \Ce.$
\end{defi}

\noi We are now ready to formulate the following two corollaries of Theorem \ref{thminducedcc}. The Corollary \ref{correconst1} gives the sufficient conditions on a $2$-cc $K$ for the existence of an \textit{ambient cell complex $K_{amb}$} such that the $2$-skeleton of $K_{amb}$ is $K.$ Corollary \ref{correconst2} gives the uniqueness of $K_{amb}$ and can be interpreted as reconstructing a local even simply connected simple cc of rank higher or equal to 3 from the data of its 2-skeleton.

\begin{cor} \label{correconst1}
Let $K$ be a local even full monodromy-free simply connected 2-cell complex such that $\Ce^K_r$ is non-singular for all integer $2 \leq r \leq R,$ where $R$ is the largest integer $r$ such that $\Ce_r^K \neq \emptyset.$ Then there exists a local even simply connected simple $R$-cc $K_{amb}$ such that $$(K_{amb})^{(2)} = K.$$
\begin{proof}
By Lemma \ref{lemfullk}, $K$ is $n$-regular for some $n \geq 2.$ Define $(K_{amb},\rk_{K_{amb}})$ by $K_{amb}^{[r]} = \kr$ for $0 \leq r \leq 2$ and $K_{amb}^{[r]} = \Ce^K_r$ for $r \geq 3.$ By Theorem \ref{thminducedcc}, $\Rk(K_{amb}) = \abs{\E_v} - 1 =  n -1,$ where $v \in \kz.$
The  Axiom \ref{cccrank} is satisfied for $K_{amb}$ by point \ref{thminducedcc2} of Theorem \ref{thminducedcc}. In order to show Axiom \ref{cccinter}, consider $x, y \in K_{amb},$ where $x \supsetneq x \cap y \subsetneq y$ and, say, $\rk(x) \leq \rk(y).$ 

We first consider the case where  $x \cap y$ contains no 2-cell. Since $\Ce^K_r$ is non-singular for all $1 \leq r \leq n - 1$  then $x \cap y$ is either a point or contains a unique edge by the following arguments. If there are two or more edges in $x \cap y$ then $\rk(y) \geq 3$ (as otherwise it would contradict \ref{cccinter} for $K$) and the non-singularity of induced cells implies in particular that $x \cap y$ is connected, therefore there exist at least two edges $e$ and $e'$ in $x \cap y$ incident to the same vertex $v.$ Since $K$ is full and by definition of the induced cc $J(v, \{e,e'\}),$ the induced cell $y$ contains the 2-cell $C$ such that $\E_v^C = \{e,e'\}.$ If $\rk(x) \geq 3$ then we also have $C \subset x$ which contradicts our assumption and if $x \in \kt$ then there are two edges in $x \cap C$ and this contradicts Axiom \ref{cccinter} for $K.$ 

We then consider the case were there exists $C \in \left( K \cap (x \cap y) \right)^{[2]}$ and let $v \in C.$ 
By Theorem \ref{thminducedcc} there is an induced cell $w$ such that $\E^C_v \subset \E_v^w = \E_v^{x \cap y} = \E_v^x \cap \E_v^y.$ Suppose by contradiction that $w \subsetneq x \cap y.$ By the non-singularity of induced cells  $x \cap y$ is connected and this implies that there is a path $p:v \pathto v'$ in $ x \cap y$ to a vertex $v'$ such that $\E_{v'}^w \subsetneq \E_{v'}^{x \cap y}.$ This is a contradiction since $\abs{\E_v^{x'}} = \abs{\E_{v'}^{x'}}$ for all $x' \in K$ by the bijectivity of the connection $\nabla_p : \E_v \dans \E_{v'}.$  
Therefore $w = x \cap y$ and \ref{cccinter} is satisfied. Properties \ref{cccenough} and \ref{cccdiamond} are direct consequences of the point \ref{thminducedcc2} from Theorem \ref{thminducedcc}. $K_{amb}$ is local and full since so is $K.$
By Lemma \ref{lemregfullcc}, $K_{amb}$ is simple since it is $n$-regular with $n= \Rk(K_{amb}) + 1.$  $K_{amb}$ is clearly even, cell-connected and simply connected.
\end{proof}
\end{cor}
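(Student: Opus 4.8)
The plan is to build $K_{amb}$ directly out of the induced cells supplied by Theorem \ref{thminducedcc} and then verify the four axioms of Definition \ref{defccc} together with the asserted global properties. First I would invoke Lemma \ref{lemfullk} to fix $n \geq 2$ with $K$ being $n$-regular and $(n-1)$-edge-regular. Since an induced $r$-cell arises precisely from a proper subset $S_v \subsetneq \E_v$ of size $r \geq 2$, the largest rank for which $\Ce_r^K \neq \emptyset$ is $r = n-1$, so $R = n-1$. I would then set
$$K_{amb}^{[r]} := \kr \text{ for } 0 \leq r \leq 2, \qquad K_{amb}^{[r]} := \Ce_r^K \text{ for } 3 \leq r \leq R,$$
with rank function assigning rank $r$ to the elements of $K_{amb}^{[r]}$. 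Every element of $K_{amb}$ is the vertex set of an induced sub-cell complex, hence a subset of $\kz$, so $K_{amb} \subset \pws{\kz}$ as the definition of a cc requires, and $(K_{amb})^{(2)} = K$ holds by construction.

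Next I would check the axioms. Axiom \ref{cccrank} (strict compatibility of rank with inclusion), Axiom \ref{cccenough} (no rank gaps), and the diamond property \ref{cccdiamond} all follow from property \ref{thminducedcc2} of Theorem \ref{thminducedcc}, which matches strict inclusion of induced cells with strict inclusion of their defining edge sets; the requisite counting of faces and cofaces is then the same binomial bookkeeping already used in Lemma \ref{lemregfullcc}. The real work is Axiom \ref{cccinter}, closure under intersection, and this is exactly where the hypothesis that each $\Ce_r^K$ is non-singular becomes indispensable. Given $x, y \in K_{amb}$ with $\rk(x) \leq \rk(y)$, I would split into two cases according to whether $K \cap (x \cap y)$ contains a $2$-cell. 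If it contains none, non-singularity forces $x \cap y$ to be connected; two incident edges $e,e'$ in $x \cap y$ would, by fullness and the definition of $J(v,\{e,e'\})$, pull a common $2$-cell into $y$, contradicting the no-$2$-cell assumption (via Axiom \ref{cccinter} for $K$ when $x \in \kt$, or directly when $\rk(x) \geq 3$); hence $x \cap y$ is a vertex or a single edge, which is a cell. If $K \cap (x \cap y)$ does contain a $2$-cell $C$, I would pick $v \in C$ and apply Theorem \ref{thminducedcc} to obtain the induced cell $w$ with $\E_v^w = \E_v^x \cap \E_v^y$; assuming $w \subsetneq x \cap y$ and combining connectedness of $x \cap y$ with the bijectivity of the connection $\nabla_p : \E_v \dans \E_{v'}$ along a path $p : v \pathto v'$ in $x \cap y$ yields a vertex where the edge sets disagree, a contradiction, so $w = x \cap y$ is a cell.

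Finally I would collect the global properties: $K_{amb}$ is local and full because $K$ is; it is $n$-regular with $n = \Rk(K_{amb}) + 1 = R+1$, so Lemma \ref{lemregfullcc} makes it closed and therefore simple; and evenness, cell-connectedness, and simple connectedness are inherited from $K$. I expect the main obstacle to be Axiom \ref{cccinter}: the subtlety is ruling out intersections that are genuinely disconnected or that carry strictly fewer edges at one vertex than at a neighbouring one, and it is precisely the non-singularity of the collections $\Ce_r^K$ — a datum that is \emph{not} available for free from the $2$-skeleton and must be assumed — that forbids these pathologies and lets the connection transport the equality of edge sets across all of $x \cap y$.
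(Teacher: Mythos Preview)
Your proposal is correct and follows essentially the same route as the paper: define $K_{amb}$ by adjoining the induced $r$-cells for $r\geq 3$, derive Axioms \ref{cccrank}, \ref{cccenough}, \ref{cccdiamond} from property \ref{thminducedcc2} of Theorem \ref{thminducedcc}, and handle Axiom \ref{cccinter} by the same two-case split (no $2$-cell in $x\cap y$ versus a $2$-cell present) using non-singularity for connectedness and the connection $\nabla_p$ to propagate the equality $\E_v^w=\E_v^{x\cap y}$. The paper's argument and yours are in agreement both in structure and in identifying Axiom \ref{cccinter} as the place where the non-singularity hypothesis is genuinely needed.
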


\begin{cor} \label{correconst2}
$K_{amb}$ given by Corollary \ref{correconst1} is unique, i.e. if $L$ is a local even simply connected simple $R$-cell complex such that $R \geq 3$ then $$(L^{(2)})_{amb} = L.$$
\begin{proof}
Let $K= L^{(2)}.$ By Lemma \ref{lemmonodrfree} $K$ is monodromy-free. By Lemma \ref{lemaedgesetfullcc} the induced cells of $K$ are exactly the cells of $L$ of rank greater or equal to $3.$
\end{proof}
\end{cor}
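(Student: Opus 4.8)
The plan is to show that starting from a local even simply connected simple $R$-cc $L$ with $R \geq 3$, passing to the $2$-skeleton $K = L^{(2)}$ and then reconstructing the ambient complex yields $L$ back. First I would verify that $K = L^{(2)}$ actually satisfies the hypotheses of Corollary \ref{correconst1}, so that $(L^{(2)})_{amb}$ is defined: $K$ is local because $L$ is local and cell-connectedness of $L$ restricts to its $2$-skeleton; $K$ is even since $L$ is even; $K$ is full because $L$ is $R$-full hence in particular $2$-full, which is exactly fullness of the $2$-skeleton; and $K$ is simply connected since simple connectedness is a condition on cycles in $K^{(2)} = K$ itself. The one genuinely new input is that $K$ is monodromy-free, which is not assumed directly about $L$; this is supplied by Lemma \ref{lemmonodrfree}, since $L$ being simple of rank $R \geq 3$ is in particular $3$-full and graph-based, hence its $2$-skeleton $K$ is monodromy-free. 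Finally one needs $\Ce_r^K$ to be non-singular for all $2 \leq r \leq R$: this follows from the identification of induced cells with cells of $L$ described below, together with the fact that $L$ is a cc (so $K \cap (x \cap y)$ is connected or empty for cells $x, y$ of $L$, using Axiom \ref{cccinter} and cell-connectedness of $L$).

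The heart of the proof is the claim stated in the excerpt's one-line argument: \emph{the induced cells of $K = L^{(2)}$ are exactly the cells of $L$ of rank $\geq 3$}. I would prove both inclusions using Lemma \ref{lemaedgesetfullcc}. For one direction, let $x \in L$ with $\rk(x) = r \geq 3$ and pick $v \in x$; by Lemma \ref{lemaedgesetfullcc} (applied with $k = R$, valid since $L$ is $R$-full) we have $|\E_v^x| = r$ and $x$ is the unique $r$-cell with that edge set at $v$. Setting $S_v := \E_v^x$, I would check that the sub-cell complex $J(v, S_v)$ produced by Theorem \ref{thminducedcc} coincides with $L \cap x$: indeed $L \cap x$ is, by restriction, a local even full monodromy-free connected $2$-cc (it is $r$-regular at each vertex by Lemma \ref{lemaedgesetfullcc} again, evenness and monodromy-freedom pass to sub-complexes, and cell-connectedness of $L$ gives connectedness), and its $1$-skeleton meets $\E_v$ precisely in $S_v$; by the uniqueness clause of Theorem \ref{thminducedcc} we conclude $J(v,S_v)^{[0]} = x$, so $x \in \Ce_r^K$. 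Conversely, if $J$ is an $r$-regular induced sub-cell complex with vertex set $y \in \Ce_r^K$, I would take $v \in y$ and $S_v = J^{[1]} \cap \E_v$ with $|S_v| = r$; since $L$ is $R$-full and $r \leq R$, there is a (unique) $r$-cell $x \in L$ with $\E_v^x = S_v$, and by the argument just given $J(v, S_v) = L \cap x$, hence $y = x \in L$. This establishes $\Ce_r^K = L^{[r]}$ for all $3 \leq r \leq R$, and since $\Ce_2^K = \kt = L^{[2]}$ by point \ref{thminducedcc3} of Theorem \ref{thminducedcc} and $K^{[j]} = L^{[j]}$ for $j \leq 2$ by definition, the posets $(L^{(2)})_{amb}$ and $L$ have the same underlying cells with the same rank function, so $(L^{(2)})_{amb} = L$.

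The main obstacle I anticipate is making the identification $J(v, \E_v^x) = L \cap x$ fully rigorous: Theorem \ref{thminducedcc} characterizes $J(v,S_v)$ by the property \eqref{conditionlemreconstr} together with being a local even full monodromy-free connected $2$-cc, so I must confirm that $L \cap x$ has all of these properties, and the delicate point is \emph{fullness} of $L \cap x$ as a $2$-cc in its own right --- i.e. that for every vertex $w \in x$ and every $2$-element subset $\{e, e'\} \subset \E_w \cap (L\cap x)^{[1]}$ there is a $2$-cell $C$ of $L$ with $C \subset x$ and $\E_w^C = \{e,e'\}$. This follows from $R$-fullness of $L$: the $2$-cell $C$ with $\E_w^C = \{e,e'\}$ exists in $L$, and since $e, e' \subset x$ one gets $\rk(C \cap x) \geq 2$ hence $C \subset x$ by Axiom \ref{cccinter} and \ref{cccrank}. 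Once this is in place, everything else is a direct application of the uniqueness statements in Lemma \ref{lemaedgesetfullcc} and Theorem \ref{thminducedcc}.
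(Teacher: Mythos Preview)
Your proposal is correct and follows exactly the paper's approach, which tersely cites Lemma~\ref{lemmonodrfree} for monodromy-freedom and Lemma~\ref{lemaedgesetfullcc} for the identification of induced cells with the cells of $L$; you have simply unpacked the details the paper leaves implicit, including the verification of the hypotheses of Corollary~\ref{correconst1} and of the non-singularity condition. One small notational slip: when you compare $J(v,S_v)$ with ``$L \cap x$'' you should write $K \cap x = (L \cap x)^{(2)}$, since $J(v,S_v)$ is a $2$-cc whereas $L \cap x$ has rank $r \geq 3$.
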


\begin{comment}
\begin{rema}
As noted in Remark \ref{remaorient}, the assumption of evenness in Theorem \cite{correconst1} is manifestly related to the notion of transitive orientation as defined in Definition \ref{deforient}. But the link is not direct since if a cell complex admits a transitive orientation it can still have triangles (i.e. 2-cells containing 3 vertices).
\end{rema}
\end{comment}

\begin{rema} \label{remareconst}
By Theorem \ref{thmbayer} on the determination of local cell complexes by their barycentric subdivision and Lemma \ref{lemdualbdiviseven}, a consequence of Corollary \ref{correconst1} is that local simply connected cell complexes are determined by the 2-skeleton of the dual of their barycentric subdivision.
\end{rema}

Physicists are sometimes looking at discrete spaces as "Feynman diagrams" of some generalized matrix models called group field theories (see e.g. \cite{fre05} for a review about this topic, \cite{dpp00} for a discussion in dimension 4 and \cite{dfkr00} with a definition of 2-cell complexes similar to ours). Here, we would like to point out that, as a consequence of the results of these first two chapters, the set of 2-shellable even local full 2-cell complexes appears to be an interesting class of cell complexes to consider in this context. By Remark \ref{rem2shellable} we can see this class of 2-cc as being constructed by successively adding 2-cells while respecting some natural connectedness assumptions that involves only local conditions. On the other hand Corollary \ref{correconst1} allows to uniquely associate a higher dimensional "ambient" cell structure to such 2-cell complexes, which also satisfy the axioms of a cc if they fulfil the condition of non-singularity from Definition \ref{defnonsingcells}. If this is the case, Theorem \ref{thm2shell} and Remark \ref{remashellissimplyconn} imply that the dual of the resulting ambient cell complex is a shellable simplicial complex which is then a triangulation of an Euclidean ball or sphere.

\newpage
\vspace{3cm}

\section{Duality map for cobordisms} \label{secdualcob}

\vspace{2cm}

The goal of this chapter if to introduce the definition of a duality map that applies to non-singular cell complexes with arbitrary boundary and which is equivalent to the Definition \ref{defdualccc} for closed cell complexes. Since we are mainly interested in the case of local cell complexes and, as noticed in Remark \ref{remassumptlocal}, the dual of a closed local cc (lcc) is local only if the original lcc is also non-pinching, we will restrict our treatment to the set, noted $\nsc,$ of non-singular non-pinching lcc. We also denote by $\nsc^R$ the elements of rank $R$ in $\nsc.$ Boundaries of elements in $\nsc$ will also play an important role in what follows and, by Lemma \ref{rembdrycc}, these consist of closed but not necessarily connected cell complexes. We therefore also introduce the notations $\mlc$ for the set of closed non-pinching cell-connected cc and $\mlc^R$ for the set of elements in $\mlc$ of rank $R.$ Since $\nsc$ also includes the case of closed cell complexes, we assume by convention that the \textit{empty cell complex $\{\emptyset\}$} is included in $\mlc$ and $\mlc^R$ for all $R \geq 0.$

As we mentioned in the introduction, our goal is to mimic the behaviour of a Fourier-like transformation in certain physical models that provides a duality between two equivalent formulations of the same physical data. In this case, one has in particular that the double dual of a model is equivalent to the original model, an involution property we obtained for the cases of closed cc that one would like to preserve when the duality map acts on cell complexes in $\nsc,$ or at least for a certain sub-class thereof. As we will explain here, this idea is what lead us to consider the concept of cobordism, a geometrical space seen as interpolating between its boundary components, originally defined in the context of differentiable manifolds by Thom in \cite{tho54} and defining a category used in the formulation of the axioms of TQFT by Atiyah in \cite{ati88}. The parallel with TQFT motivated us to look for a suitable category of cobordisms defined from our combinatorial notion of cell complexes. This is the aim of the next Chapter \ref{seccausalstruct}, where we define a composition operation for cobordisms leading us to the introduction of the category of causal cobordisms.

In order to obtain an involutive duality map on a certain sub-class of cc in $\nsc$, we have to focus on how this duality acts on the cells of the boundary. This property is indeed already satisfied for the other cells, as a consequence of Lemma \ref{claim2propdual}. For this purpose, we introduce a different type of dual set, noted $\bual{A}$ for $A \subset \kz.$ If $K \in \nsc^R$ and $y \in (\partial K)^{[R-1]} $ is a maximal cell in the boundary, it must be in particular that the dual of $y$ is an edge of the dual of $K,$ as it becomes therefore again a sub-maximal cell of the double dual of $K.$ Since $\bual{y}$ therefore has to contain at least two elements, it is natural to define
$$ \bual{y} = \{ y , z \},$$
where $z$ is the unique element of $\kR$ containing $y.$ In fact, the "generalized dual" of a cell $x \in \partial K$ will be defined as
$$ \bual{x} := \{ y \in (\partial K)^{[R-1]} ~|~ x \subset y\} \sqcup \{z \in \kR ~|~ x \subset z\} = \cdual{x}{\partial K} \sqcup \cdual{x}{K},$$
where this definition also applies to any set $A \subset \kz,$ with the convention that $\cdual{A}{\partial K} = \emptyset$ whenever $A \not \subset \partial K^{[0]}.$

The first issue we encounter then is that $\bual{K}$ is not a cell complex, as for example in the case of $y \in (\partial K)^{[R-1]},$ $\bual{K}$ does not contain the vertex $\{y\} = \cdual{y}{\partial K}.$ One can easily fix this by considering the dual of $K$ to be $\bual{K} \sqcup \dual{\partial K}$ but then this notion of dual does not satisfy any involution property as it "adds" more cells to the original cell complex: for example the dual of a graph made of one edge, therefore having two boundary components each made of one vertex, is a graph made of two edges forming a simple path of length 2. The double dual would then be a graph forming a simple path of length 3, etc. The radically opposite alternative would be to "ignore" the cells in $\partial K$ and define the dual of $K$ to be $\dual{K \setminus \partial K},$ with obvious drawbacks when it comes to the involution property, as this definition "removes" cells. It turns out that alternating between these two definitions is essentially the idea behind the construction of the duality of this chapter: the dual of a path of length 1 is a path of length 2 and the double dual is again a path of length 1.

But how can we signal when to use the definition that adds cells around the boundary versus the definition that removes cells? We chose to do this by defining our duality map on "relative cell complexes", i.e. couples of the type $(K,J)$ where $J \leq K,$ in the particular case where $K \in \nsc$ and $J \in B ^{[R-1]},~ J \leq \partial K.$ The dual of such a relative cell complex would use the definition that adds cells on the cells in $\partial K \setminus J$ and use the definition that removes cell on the cells in $J.$

This is where the picture with cobordisms arises: an element $(K,J)$ where $K \in \nsc^R$ and $J \in \mlc^{[R-1]}, ~ J \leq \partial K,$ can be interpreted as a cobordism from some "ingoing components" $J$ to some "outgoing components" $L := \partial K \setminus J.$ In this picture, the duality maps the ingoing components to the outgoing components of the dual. For example, the two-dimensional lattice we used to define the Ising model in Section \ref{secising} can be seen as a cobordism $(K,J)$ where $J = \partial K$ is the only boundary component of $K,$ for which we indeed defined the dual using the definition that "ignores" the cells in $\partial K.$

To make this idea precise, we will need the notion of midsection, defined in the next Section \ref{susecmidsection}, a crucial notion for this chapter and Chapter \ref{seccausalstruct}. In  Section \ref{ssecrelcc}, we introduce the notions of non-degenerate, local and exactly collared relative cell complexes. Non-degenerate and local relative cc are used in Section \ref{ssectiondcob} to introduce our definition of cobordism and the corresponding notion of duality. Exactly collared relative cc will single out a sub-class of cobordisms having additional regularity around the ingoing boundary component. We state our main result in Section \ref{ssectiondcob} implying in particular that the dual of a cobordism is also a cobordism. As a corollary of this result we show that the double dual of an exactly collared cobordism is isomorphic to the original cobordism. In Section \ref{ssecexamdcob}, we illustrate some examples of duals of cobordisms, before giving the proof of the main theorem in Section \ref{secthmdualcob}.

\subsection{Midsections} \label{susecmidsection}

We recall that the notation $\nsc$ is used for the set of non-singular non-pinching lcc and $\nsc^R$ for the set of elements in $\nsc$ of rank $R.$ $\mlc^{R}$ is the set of closed non-pinching cell-connected $R$-cc and by convention contains the empty cc $\{\emptyset\},$ which is a sub-cc of every cc. Similarly $\mlc$ denotes the union of all elements in $\mlc^R,$ for $R \geq 0.$ As a consequence, the boundary of an element in $\nsc^R$ is an element of $\mlc^{R-1}.$

Let us introduce a generalization of the notations previously used, named "collar". This tool is a central element of the definition of a midsection and will be used throughout the rest of this work.

\begin{defi}[Collar] \label{defcollar}
Let $A$ be a finite set and $\A$ be a collection of subsets of $A.$ If $B \subset A$ we define the \textit{collar of $B$ in $\A,$} by
$$ \A_B := \{ x \in \A ~|~ \emptyset \neq x \cap B \neq x \}.$$
Moreover, if $C \in \A,$ we use the following definition and notation for the collar of $B$ in $\A$ restricted to $C:$
$$ \A_B^C := \{ x \in \A_B ~|~ x \subset C \}.$$
By abuse of notation, in case $\B$ is also a collection of subsets of $A,$ the notation $\A_\B$ is used to denote $\A_{B},$ where $B := \cup_{x \in \B} x$ is the set of elements of $A$ contained in a set in $\B.$
\end{defi}

For example if $K \in \nsc$ and $J \leq K$ we have $$K_J =  \{ x \in K ~|~ \emptyset \neq x \cap J^{[0]} \neq x \}.$$ 
Since we use the notation $\E$ for the set of edges in $K$ when there is no ambiguity on what the cell complex $K$ is, we will often use the following notation for $x \in K:$ $$\E_J^x = \{ e \in \ko ~|~ \emptyset \neq e \cap J^{[0]} \neq e, \quad e \subset x \}.$$

The following lemma is an important first step in order to endow the notion of midsection with the structure of cc.

\begin{lem} \label{lemmidsec}
Let $K$ be a local cell complex and $J \leq K.$ Then the map $$x \longmapsto \E_J^x, \quad x \in K_J$$ is an injective poset homomorphism such that the inverse is a poset homomorphism.
\begin{proof}
We first notice that since $K$ is cell-connected we have that $\E^x_J \neq \emptyset$ for all $x \in K_J.$ 
Let $x,y \in K_J$ be such that $\E_J^x = \E_J^y = \E_J^{x \cap y},$ where the second equality is simply a consequence of the definition of a collar. By this equality, in order to show that $\E_J^x = \E_J^y$ implies $x = y,$ it is sufficient to consider the case where say $x \subset y.$ For this we proceed by induction on $r = \rk(x) \geq 1.$

If $\rk(x)=1$ then $x = \{v,w\} \in \E_J,$ where $v \in J^{[0]}, ~ w \in \kz \setminus J^{[0]}$ and if we suppose by contradiction that $y \neq x$ then by axiom \ref{cccenough} we can without loss of generality suppose that $\rk(y) = \rk(x) + 1 = 2,$ so that $y \in \kt_J.$ By the diamond property \ref{cccdiamond} there exists $x' \in \ko $ such that $\cface{v} \cap \face{y} =\{x,x'\}.$ The edge $x'$ is not in $\E_J$ since otherwise it would imply that $\{x,x'\} \subset \E_J^y$ which contradicts $\E_J^x = \E_J^y.$ By Lemma \ref{lemrepr2cell} the 2-cell $y$ admits a representation such that $y = [x,x',e_1, \dots, e_k]$ where $e_k \cap x = \{w\}$ thus in particular $e_k \notin J^{[1]}.$ Therefore there is $1 \leq i \leq k$ such that $e_i \in \E_J^y$ which is also a contradiction with $\E_J^x = \E_J^y$ and this shows that $x = y.$

Let's assume the claim true for all $x, y$ such that $x \subset y$ and $\rk(x) \leq r-1.$ Take $x \in K_J^{[r]}$ and $y \in K_J$ such that $\E_J^x = \E_J^y$ and $x \subsetneq y.$ Like before we can without loss of generality assume that $\rk(y) = r + 1.$ By induction we have that $\E_J^w \subsetneq \E_J^x$ for all $w \subsetneq x,$ and, using this inclusion with $w = x \cap x',$ this implies in particular that $\E_J^{x'} \not \subset \E_J^x$ for all $x' \in K_J \cap K_x$ (since $\E_J^{x'} \neq \emptyset$ by locality of $K$). Since $\E_J^x \neq \emptyset$ there exists $w \in \face{x} \cap K_J.$ By the diamond property \ref{cccdiamond}, there exists $x' \in K^{[r]}$ such that $ \cface{w} \cap \face{y} = \{x,x'\}.$ But $x' \in K_J$ since so does $w$ and $w \subset x',$ therefore $\emptyset \neq \E_J^{x'} \subset \E_J^y = \E_J^x.$ This is a contradiction with our induction hypothesis and therefore $x = y.$

The definition of $\E_J^x$ and the injectivity proven above imply that if $\E_J^x \subset \E_J^y$ then 
$$\E_J^{(x \cap y)} = \E_J^x \cap \E_J^y = \E_J^x$$
and therefore $x = x \cap y$ or equivalently $x \subset y$ and this concludes the proof.
\end{proof}
\end{lem}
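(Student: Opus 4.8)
The claim asserts that the map $x \mapsto \E_J^x$ on $K_J$ is an injective poset homomorphism whose inverse (on the image) is also a poset homomorphism. The plan is to establish three things in turn: (1) the map is well-defined and monotone, which is essentially built into the definition of the collar; (2) the map is injective; (3) the inverse is monotone. The bulk of the work is in (2), and more precisely in showing that if $x \subset y$ with both in $K_J$ and $\E_J^x = \E_J^y$, then $x = y$; once that special case is available the general injectivity follows by intersecting, since $\E_J^{x \cap y} = \E_J^x \cap \E_J^y$ directly from the definition of collar (note $x \cap y \in K$ by Axiom \ref{cccinter}, and $x\cap y \in K_J$ as soon as $\E_J^{x\cap y}\neq\emptyset$).

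For step (2), I would first record the crucial consequence of cell-connectedness (locality): for every $x \in K_J$ we have $\E_J^x \neq \emptyset$, since $K \cap x$ is connected and $x$ meets $J^{[0]}$ non-trivially and properly, so some edge of $K\cap x$ has exactly one endpoint in $J^{[0]}$. Then I would prove the reduced statement ``$x \subsetneq y$, $x,y \in K_J$ implies $\E_J^x \subsetneq \E_J^y$'' by induction on $r = \rk(x) \geq 1$, and by Axiom \ref{cccenough} it suffices to treat the case $\rk(y) = \rk(x)+1$. The base case $r=1$: here $x = \{v,w\}$ with $v \in J^{[0]}$, $w \notin J^{[0]}$, and $y \in \kt_J$ is a $2$-cell containing $x$. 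Using the diamond property \ref{cccdiamond} applied to $v \subset y$ there is a second edge $x' \in \face{y}$ with $v \in x'$; if $x' \in \E_J$ we already have $\E_J^x \subsetneq \E_J^y$, and if $x'\notin\E_J$ then I invoke the representation Lemma \ref{lemrepr2cell}: writing $y = [v, \ldots]$ as a cycle, walking around from $v$ the edge containing $w$ away from $x$ is not in $J^{[1]}$ (it meets $J^{[0]}$ only possibly at $w$, but $w\notin J^{[0]}$), so somewhere along the cycle there is a transition edge lying in $\E_J$ and distinct from $x$ — giving $\E_J^x \subsetneq \E_J^y$. For the inductive step with $\rk(x) = r$, $\rk(y) = r+1$: pick $w \in \face{x} \cap K_J$ (exists because $\E_J^x \neq \emptyset$), apply \ref{cccdiamond} to $w \subset y$ to get $x' \in \cface{w}\cap\face{y}$ with $x' \neq x$; since $w \subset x'$ and $w \in K_J$ we get $x' \in K_J$, hence $\emptyset \neq \E_J^{x'} \subset \E_J^y$; by the induction hypothesis applied to $w \subsetneq x$ we have $\E_J^w \subsetneq \E_J^x$ in a strong enough form to conclude $\E_J^{x'}\not\subset\E_J^x$ (since $x'\cap x$ is a proper subcell of $x$ in $K_J$, its collar is strictly below $\E_J^x$), so $\E_J^y \supsetneq \E_J^x$.

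For step (3), the inverse is a poset homomorphism: given $x, y \in K_J$ with $\E_J^x \subset \E_J^y$, compute $\E_J^{x\cap y} = \E_J^x \cap \E_J^y = \E_J^x$; by injectivity from step (2), $x \cap y = x$, i.e. $x \subset y$, which is exactly monotonicity of the inverse map. The main obstacle I anticipate is the delicate bookkeeping in the base case of the induction, where one must be careful that the ``extra'' transition edge produced by going around the $2$-cell really does lie in $\E_J$ and really is distinct from $x$; this is where Lemma \ref{lemrepr2cell} and the hypothesis $w \notin J^{[0]}$ do the essential work. Everything else is a routine diamond-property chase, but it needs the locality hypothesis at exactly two points (non-emptiness of $\E_J^x$ and the existence of a face of $x$ in $K_J$), so this is where the assumption $K \in$ lcc cannot be dropped.
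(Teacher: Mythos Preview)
Your proposal is correct and follows essentially the same route as the paper: reduce injectivity to the nested case $x\subset y$ via $\E_J^{x\cap y}=\E_J^x\cap\E_J^y$, then induct on $\rk(x)$ using the diamond property together with Lemma~\ref{lemrepr2cell} for the base case, and deduce monotonicity of the inverse from injectivity plus the same intersection identity. The only cosmetic difference is that you phrase the inductive claim contrapositively (``$x\subsetneq y$ forces $\E_J^x\subsetneq\E_J^y$'') whereas the paper argues by contradiction from $\E_J^x=\E_J^y$.
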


The previous lemma ensures that the rank function of a midsection given in the following definition is well defined. This definition can be seen as a generalization of the notion of midsection used in \cite{dj15} to classify "causal slices" of causal triangulations and which was initially introduced in the form of a coloured graph dual to a causal slice in \cite{ajl01}.

\begin{defi}[Midsection]
Let $K$ be a local cell complex and $J \leq K.$ We define the \textit{midsection of $J$ in $K$} to be $(M_J^K, \rk_{M_J^K})$ where
$$M_J^K := \{\E_J^x  ~|~ x \in K_J \} \quad \text{and} \quad \rk_{M_J^K}(\E_J^x) = \rk_K(x) - 1.$$
\end{defi}

\begin{figure}[!h]
\centering
\includegraphics[scale=0.42]{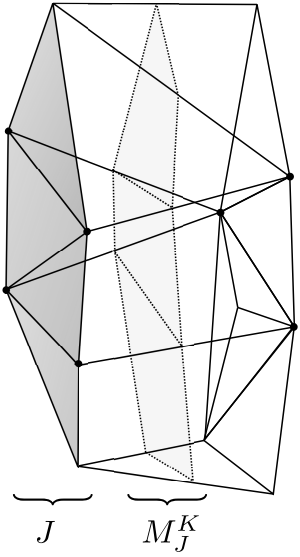}
\caption{\label{slicepic} Midsection $M_J^K$ associated to the boundary part $J$ of a 3-cc $K$ composed of three maximal cells (the center one having dots on its vertices).}
\end{figure}

We give an illustration of a midsection in Figure \ref{slicepic}. The next proposition shows that midsections are cc under rather general assumptions.

\begin{prop} \label{propmidsec}
Let $K$ be a local cell complex and $J \lneq K$ such that $x \not \subset J^{[0]}$ for all $x \in K \setminus J$ and such that $K_J \neq \emptyset.$ Then $(M_J^K, \rk_{M_J^K})$ is a cc and if $K$ is pure then so is $M_J^K$ and $\Rk(M_J^K) = \Rk(K)-1.$ Moreover if $J \cap x$ is connected for all $x \in \kt_J$ then $M_J^K$ is graph-based.
\begin{proof}
Since $K$ is cell-connected, $J \neq K,$ and by the condition $x \not \subset J^{[0]}$ for all $x \in K \setminus J,$ there is at least one edge of $K$ having one vertex in $J^{[0]}$ and one vertex in $\kz \setminus J^{[0]},$ therefore $K_J \neq \emptyset.$

By definition of $\E_J^x$ and $\rk_{M_J^K},$ the elements in $M_J^K$ of rank $0,$ i.e. vertices, are indeed 1-element sets and every set $\E_J^x$ is indeed a set of vertices of $M_J^K.$ Since for $x,y \in K,$ if $x \in K_J$ and $x \subset y$ then $y \in K_J,$ the axioms of the Definition \ref{defccc} are a direct consequence of Lemma \ref{lemmidsec} and the axioms for $K.$

Let $R = \Rk(K).$ By purity of $K,$ every cell in $x \in K_J$ is contained in a $R$-cell of $K,$ which must then belong to $K_J$ and constitute an $R-1$ cell in $M_J^K$ containing $\E_J^x,$ showing that $M_J^K$ is also pure.

By Lemma \ref{lemrepr2cell}, and since $K$ is local, $K \cap x$ is a cycle for every 2-cell $x$ of $K.$ Therefore, for every edge $\E^x_J$ of $M_J^K,$ i.e. for every $x \in \kt_J,$ if $J \cap x$ is connected then there are exactly two edges in $\ko_J$ included in $K \cap x,$ and therefore there are exactly two vertices of $M_J^K$ in $\E_J^x.$ 
\end{proof}
\end{prop}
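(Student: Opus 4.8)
The plan is to lean on Lemma~\ref{lemmidsec}, which already does the heavy lifting: the map $\Phi\colon K_J\to M_J^K$, $x\mapsto\E_J^x$, is an injective poset homomorphism whose inverse is a poset homomorphism, hence a poset isomorphism of $K_J$ onto $M_J^K$. The structural fact I would record first is that $K_J$ is an \emph{up-set} in $K$: if $x\in K_J$ and $x\subset y$, then $y\cap J^{[0]}\supseteq x\cap J^{[0]}\neq\emptyset$, and since $x\not\subset J^{[0]}$ (because $x\in K_J$) there is a vertex of $x\subseteq y$ outside $J^{[0]}$, so $y\not\subset J^{[0]}$ and $y\in K_J$. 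With $\Phi$ an order-isomorphism and $K_J$ closed upwards, each axiom of Definition~\ref{defccc} transfers from $K$ to $M_J^K$.

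After the routine bookkeeping -- a rank-$0$ element of $M_J^K$ is $\E_J^{\{v,w\}}=\{\{v,w\}\}$ with $v\in J^{[0]}$, $w\notin J^{[0]}$, so vertices are singletons; any $x\in K_J$ has $\rk_K(x)\geq1$ so $\rk_{M_J^K}$ is $\N$-valued and well defined by injectivity of $\Phi$; and every $e\in\E_J^x$ is an edge in $K_J$ with $\E_J^e=\{e\}$, so $\E_J^x\subseteq(M_J^K)^{[0]}$ -- I would check the four axioms. For~\ref{cccrank}: $\E_J^x\subsetneq\E_J^y$ gives $x\subsetneq y$ by Lemma~\ref{lemmidsec} and injectivity, hence $\rk_K(x)<\rk_K(y)$. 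For~\ref{cccinter}: from $\E_J^x\cap\E_J^y=\E_J^{x\cap y}$, I note $\E_J^{x\cap y}$ is a cell of $M_J^K$ when $x\cap y\in K_J$ and is empty otherwise (if $x\cap y\neq\emptyset$ but $x\cap y\subset J^{[0]}$ or $x\cap y\cap J^{[0]}=\emptyset$, no edge survives the collar), so $M_J^K\cup\{\emptyset\}$ is closed under intersection. For~\ref{cccenough}: given $\E_J^x\subsetneq\E_J^y$, Axiom~\ref{cccenough} for $K$ yields $w$ with $x\subsetneq w\subseteq y$ and $\rk_K(w)=\rk_K(x)+1$; the up-set property gives $w\in K_J$, so $\E_J^x$ is a face of $\E_J^w\subseteq\E_J^y$. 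For~\ref{cccdiamond}: given $\E_J^x\subseteq\E_J^y$ with rank difference $2$, Axiom~\ref{cccdiamond} for $K$ gives $\cface{x}\cap\face{y}=\{z_1,z_2\}$; both $z_i$ contain $x$, hence lie in $K_J$, and I would verify that $\cface{\E_J^x}\cap\face{\E_J^y}$ in $M_J^K$ equals $\{\E_J^w:w\in K_J,\ x\subsetneq w\subsetneq y,\ \rk_K(w)=\rk_K(x)+1\}=\{\E_J^{z_1},\E_J^{z_2}\}$, two distinct cells by injectivity of $\Phi$.

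For purity and rank, assume $K$ pure with $R=\Rk(K)$. Any $x\in K_J$ lies in a maximal cell $z\in\kR$, and $z\in K_J$ by the up-set property, so $\E_J^z$ is a cell of $M_J^K$ of rank $R-1$ above $\E_J^x$. Conversely, if $\E_J^w$ is maximal in $M_J^K$ and $w\subsetneq v$ in $K$, then $v\in K_J$ and $\E_J^w\subsetneq\E_J^v$, a contradiction, so $w$ is maximal in $K$ and $\rk_K(w)=R$ by purity. Hence every maximal cell of $M_J^K$ has rank $R-1$ and, since $K_J\neq\emptyset$, such a cell exists; thus $M_J^K$ is pure with $\Rk(M_J^K)=R-1$.

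Finally, for graph-basedness under the extra hypothesis: an edge of $M_J^K$ is $\E_J^x$ with $x\in\kt_J$. Since $K$ is local it is graph-based and $K\cap x$ is connected, so Lemma~\ref{lemrepr2cell} writes $x=[v_1\dots v_k]$ as a cycle with $k\geq3$ (a $2$-cell has at least three vertices by the diamond property~\ref{cccdiamond}). If $J\cap x$ is connected, the vertices of $x$ in $J^{[0]}$ form a nonempty proper consecutive arc of this cycle (nonempty since $x\in K_J$, proper since $x\not\subset J^{[0]}$), so exactly two edges of the cycle have one endpoint in $J^{[0]}$ and the other outside; hence $\abs{\E_J^x}=2$, i.e. $\E_J^x$ contains exactly two vertices of $M_J^K$, and $M_J^K$ is graph-based. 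I expect the only genuinely delicate points to be the diamond-property step -- keeping $z_1,z_2$ inside $K_J$ and ruling out spurious co-faces of $\E_J^x$ in $M_J^K$ -- and this arc-counting argument; everything else is a mechanical transfer through the isomorphism $\Phi$ of Lemma~\ref{lemmidsec}.
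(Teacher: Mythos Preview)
Your proof is correct and follows essentially the same route as the paper: both rely on Lemma~\ref{lemmidsec} together with the up-set property of $K_J$ to transfer the cc axioms from $K$, and both use Lemma~\ref{lemrepr2cell} for the graph-based claim. The only difference is presentational: where the paper compresses the verification of axioms~\ref{cccrank}--\ref{cccdiamond} into a single sentence (``a direct consequence of Lemma~\ref{lemmidsec} and the axioms for $K$''), you spell out each axiom explicitly, including the identity $\E_J^x\cap\E_J^y=\E_J^{x\cap y}$ for~\ref{cccinter} and the observation that the diamond cells $z_1,z_2$ stay in $K_J$ because they contain $x$.
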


\subsection{Non-degenerate, local and exactly collared relative cc} \label{ssecrelcc}

In this section we introduce what it means in our context to consider a "cell complex relatively to another cell complex containing it". We first explicitly define relative cell complexes, homomorphism between such objects as well as the notion of "relative homomorphism" between relative cell complexes. Relative homomorphism are used mostly in the next chapter, to define the notion of relative reduction and relative collapses at the end of Section \ref{ssecredncol}. 

\begin{defi}[Relative cc and homomorphism] \label{defirelccc}
A couple $(K,J)$ of cc is called a \textit{relative cc} if $J \leq K.$
We say that a map $\phi: (K, J) \dans (K', L)$ is a \textit{homomorphism of relative cc} if it is a cc-homomorphism $\phi : K \dans K'$ such that $\phi(J) = L.$ A \textit{relative cc-homomorphism} or a \textit{homomorphism of $J$ relative to $K$} is a homomorphism $\phi: (K,J) \dans (K',L)$ of relative cc such that $\phi|_{K \setminus J}: K \setminus J \dans K' \setminus L$ is a poset isomorphism. 
%In this case we will sometimes use the notation $K[J,\phi(L)]$ for $K'.$
\end{defi}

Such "relative objects" are present in many different contexts, notably in topology with based space or algebraic topology with relative homology group. However our definition in effect reverses the focus that is usually put on such relative objects. While in topology one tends to "ignore" the structure contained in the subspace, here the cell complex $K$ in a relative cc $(K,J)$ can be interpreted merely as a "background" cell complex structure on which a relative homomorphism acts trivially. This point of view remains valid in the following definitions of the specific types of relative cell complexes we introduce next.

\begin{defi}[Non-degenerate relative cc]
We say that a relative cc $(K,J)$ is \textit{non-degenerate} if for all $x \in K \setminus J$ then $x \not \subset J^{[0]}.$ We say that $(K,J)$ is \textit{degenerate} if it is not non-degenerate. 
\end{defi}

Examples of degenerate relative cell complexes $(K,J)$ can be easily constructed when $K$ has relatively few maximal cells and $J =\partial K.$ For instance if $K$ has only one maximal cell then such cell has automatically all its vertices on the boundary.

The following definition extends the notion of locality to relative cell complexes. It is the first time where the notion of midsection enters crucially in our formalism. As will be discussed after this definition, it was not obvious to us that this was the right notion of locality in this context. It will however become apparent from the proof of Theorem \ref{thmdualcob} that this notion of locality for relative cc contains exactly the assumptions needed for the duality to be well-behaved around the ingoing boundary of a cobordism.

\begin{defi}[Local relative cc]
We say that a relative cc $(K,J)$ is \textit{local} if it is non-degenerate, $K$ is local, every connected component $J_0$ of $J$ satisfies that $J_0 \cap x$ is connected for all $x \in K_{J_0}$ and $M_{J_0}^K$ is cell-connected. In particular $(K,\emptyset)$ is local for any lcc $K.$
\end{defi} 

The case in which $K \in \nsc^R$ and $J \in \mlc^{R-1}$ is a sub-cc of $\partial K$ will be our main focus in what follows. In this case, the assumption in the definition of local relative cc that the midsection is cell-connected is in fact not redundant with the assumption that $K \in \nsc^R.$ In other words, knowing that $K$ is both non-pinching and cell-connected together with the assumption that $(K,J)$ is non-degenerate is not sufficient to prove that $M_J^K$ is cell-connected. This apparently remains the case even when adding an assumption on the connectedness of 2-cells: for example we could require that $ \left[ v, x \right]^{(2)} := \{ y \in K^{(2)} ~|~ v \in y \subset x \}$ is connected (as a section, c.f. definition in Section \ref{secpolytope}) for all $x \in J^{[0]}.$ Also, the connectedness of the cells in $M_J^K$ seems closely related to the connectedness of $ \partial (J \cap x)$ when this cc is well-defined, i.e. when $J \cap x$ is pure. The purity of $J \cap x$ is however not guaranteed in general and not necessary in order for the cell in $M_J^K$ to be connected (this purity assumption will actually come up later in Section \ref{ssectrans} as an assumption allowing to define a rank function on a poset called transition, but is not required for the results of this chapter).
These observations support the interpretation of the locality assumption on a relative cc $(K,J)$ (more specifically the connectedness of the midsection $M_J^K$) as a condition allowing to freely choose the endpoint of any path in $K$ having its last edge in $K_J.$ If $p$ is such a path, an edge in $M_{J_0}^K$ indeed corresponds to a 2-cell $C \in K_{J_0},$ which defines moves of the form $m_{p_0}^C$ where $p_0$ is a simple subpath of $C \cap p$ (cf. Definition \ref{defmovespath} of moves).

We now introduce the notion of exactly collared relative cc. This notion will be used later on in Corollary \ref{corbidualcob} in order to prove that the duality map defines an involution on the set of cobordism introduced in the next section.

\begin{defi}[Exactly collared relative cc]
Let $(K,J)$ be a non-degenerate cc. We say that $(K,J)$ is \textit{exactly collared} if the map from  $M_J^K$ to $J$ defined by $\E_J^x \mapsto J \cap x$ is a cc-isomorphism. In particular an exactly collared relative cc $(K,J)$ such that $K$ is local is a local relative cc and $(K,\emptyset)$ is also exactly collared for all lcc $K.$
\end{defi} 

We give an illustration of a simple example of an exactly collared cobordism in Figure \ref{figdualcylinder}.
In the next sections of this chapter, we will be particularly interested in the case of relative cc $(K,J)$ where $K \in \nsc$ and $J \leq K$ is a (possibly empty) union of boundary components of $\partial K.$ We therefore mention the following lemma, as it can shed light on certain definitions given thereafter.

\begin{lem} \label{lemconcomp}
Let $K \in \nsc^R.$ Then $J \leq \partial K$ and $J \in \mlc^{R-1}$ if and only if $J$ is a (possibly empty) union of connected components of $\partial K.$
\begin{proof}
Let $R = \Rk(K).$ A union of boundary components of $K$ is clearly an element in $\mlc^{R-1},$ by Lemma \ref{rembdrycc}. In order to prove the reverse implication it is sufficient to prove the case where $\partial K$ is connected and to show that $J \neq \emptyset$ implies $J = \partial K.$ If $J \in \mlc^{R-1}$ is non-empty then it is a closed $(R-1)$-cc. If by contradiction there exists a vertex in $\partial K \setminus J,$ the connectedness of $\partial K$ implies that this vertex is linked to a vertex in $J$ via a path in $(\partial K)^{(1)}.$ Hence there is an edge $e = \{v,w\}$ in $\partial K$ such that $v \in J$ and $w \in \partial K \setminus J.$ Now we have that $\dual{J} \leq \dual{\partial K}$ and $\cdual{e}{\partial K}$ (the dual of $e$ in $\partial K$) is a $(R-2)$-cell of $\dual{\partial K}$ contained in $\cdual{v}{\partial K},$ an $(R-1)$-cell of $\dual{J}.$ Therefore it implies that $\cdual{e}{\partial K}=\cdual{e}{J} \in J^{[R-2]}$ but since $\dual{J}$ is closed we have that $\dual{e}$ is contained in two maximal cells of $J.$ Since $\cdual{w}{\partial K}$ is a maximal cell of $\dual{\partial K}\setminus \dual{J}$ containing $\cdual{e}{J}$ this contradicts the non-singularity of $\dual{\partial K}.$
\end{proof}
\end{lem}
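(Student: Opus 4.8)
The statement is an "if and only if": $J \leq \partial K$ with $J \in \mlc^{R-1}$ precisely when $J$ is a union of connected components of $\partial K$. The easy direction is the reverse implication. If $J$ is a union of connected components of $\partial K$, then by Lemma \ref{rembdrycc} each component of $\partial K$ is a closed $(R-1)$-cc, and since $K \in \nsc$ it is also non-pinching and cell-connected; a disjoint union of such components is therefore in $\mlc^{R-1}$, and it is obviously a sub-cc of $\partial K$. The empty case is covered by the convention that $\{\emptyset\} \in \mlc^{R-1}$.

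For the forward implication, the plan is to reduce to the case where $\partial K$ is connected: it suffices to show that if $J \in \mlc^{R-1}$ is a nonempty sub-cc contained in a connected $\partial K$, then $J = \partial K$. So assume $J \neq \emptyset$; being in $\mlc^{R-1}$ it is a closed $(R-1)$-cc. Suppose for contradiction that $\partial K \setminus J$ contains a vertex. Using connectedness of $\partial K$ — more precisely of the graph $(\partial K)^{(1)}$, which makes sense since $\partial K$ is non-singular hence graph-based — one can link this vertex to a vertex of $J$ by a path, and walking along that path produce an edge $e = \{v,w\} \in (\partial K)^{[1]}$ with $v \in J^{[0]}$ and $w \in (\partial K \setminus J)^{[0]}$.

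The heart of the argument is then to dualize inside $\partial K$ and derive a contradiction with non-singularity of $\dual{\partial K}$. Since $\partial K$ is closed (Lemma \ref{rembdrycc}), Proposition \ref{propdual} applies: $\dual{\partial K}$ is a closed $(R-1)$-cc, $\dual{J} \leq \dual{\partial K}$, and by Lemma \ref{claim2propdual} the dual is inclusion-reversing. The cell $\cdual{e}{\partial K}$ is an $(R-2)$-cell of $\dual{\partial K}$ sitting below $\cdual{v}{\partial K}$, which is an $(R-1)$-cell — a maximal cell — of $\dual{J}$; one checks $\cdual{e}{\partial K} = \cdual{e}{J}$ because the maximal cells of $\partial K$ above $e$ that involve $v$ are exactly the maximal cells of $J$ above $e$ (here one uses that $J$ is a full sub-cc containing $v$ and closed). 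Since $\dual{J}$ is closed, $\cdual{e}{J}$ lies below exactly two maximal cells of $\dual{J}$. But $\cdual{w}{\partial K}$ is a maximal cell of $\dual{\partial K} \setminus \dual{J}$ also lying above $\cdual{e}{J}$, giving at least three maximal cells of $\dual{\partial K}$ above the $(R-2)$-cell $\cdual{e}{\partial K}$, contradicting non-singularity (the non-branching property) of $\dual{\partial K}$. This forces $\partial K \setminus J$ to have no vertices, hence $J = \partial K$.

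The step I expect to be the main obstacle is making precise the identification $\cdual{e}{\partial K} = \cdual{e}{J}$ and, relatedly, the claim that $\cdual{v}{\partial K} \in J^{[R-1]}$ — i.e. that the dual of $v$ computed in $\partial K$ actually lies in the sub-complex $\dual{J}$. This is where one must carefully use that $J$ is closed and that every maximal cell of $\partial K$ containing $v$ already belongs to $J$ (which itself needs a small separate argument: if some maximal cell $z$ of $\partial K$ above $v$ were not in $J$, one could again produce a boundary-type edge of $J$ and iterate, or argue directly via non-singularity of $\dual{\partial K}$). Everything else is a routine application of Proposition \ref{propdual}, Lemma \ref{claim2propdual} and Lemma \ref{rembdrycc}.
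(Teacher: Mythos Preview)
Your plan mirrors the paper's proof step for step: reduce to connected $\partial K$, find an edge $e=\{v,w\}$ with $v\in J^{[0]}$ and $w\notin J^{[0]}$, dualize inside $\partial K$, and contradict non-singularity of $\dual{\partial K}$. You are also right that the passage through $\cdual{e}{\partial K}=\cdual{e}{J}$ and $\dual{J}\leq\dual{\partial K}$ is where the argument bites.

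The justification you sketch for that passage, however, does not work as written (and the paper is equally terse here). Since $w\notin J^{[0]}$, no cell of $J$ contains $e$, so $\cdual{e}{J}=\emptyset$ and the identity $\cdual{e}{\partial K}=\cdual{e}{J}$ is literally false. Likewise, any maximal cell of $\partial K$ containing $e$ contains $v$ but lies outside $J$, so $\cdual{v}{J}\subsetneq\cdual{v}{\partial K}$; hence $\cdual{v}{\partial K}$ is \emph{not} a cell of $\dual{J}$ and $\dual{J}\leq\dual{\partial K}$ fails in the sense you need. A clean way to close the gap is to avoid dualizing $e$ altogether: since both $J$ and $\partial K$ are closed $(R-1)$-cc, every $y\in J^{[R-2]}$ has exactly two cofaces in each, and those in $J$ are also cofaces in $\partial K$, hence coincide. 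Thus both $\partial K$-cofaces of any $(R-2)$-cell of $J$ lie in $J^{[R-1]}$. Using strong connectedness of $\partial K$ (connected, closed, non-pinching; cf.\ Remark~\ref{remassumptlocal}) one then finds adjacent $z\in J^{[R-1]}$, $z'\notin J^{[R-1]}$ sharing a face $y\in(\partial K)^{[R-2]}$, and the contradiction follows once $y\in J$. That last point---that every $\partial K$-face of a cell $z\in J^{[R-1]}$ already lies in $J$---is the real content; it can be obtained, for instance, by induction on $R$ applied to the closed sub-cc $\partial(J\cap z)\leq\partial(\partial K\cap z)$, the latter being connected by cell-connectedness of $\partial K$.
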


\subsection{Duality map for cobordisms and main results} \label{ssectiondcob}

The following set of relative cell complexes constitutes our notion of cobordism.

\begin{defi}[Cobordism] \label{deficob}
The set of \textit{combinatorial cobordisms}, or simply \textit{cobordisms}, is denoted and defined as 
$$ \cob := \{ (K-J) ~|~ K \in \nsc, ~ J \leq \partial K, ~ J \in \mlc^{\Rk(K)-1}, ~(K,J) \text{ non-degenerate and local}\}.$$
We also use the notation $\cob^R$ to denote the elements in $\cob$ such that $\Rk(K) = R,$ in which case $R$ is also called the \textit{rank} or \textit{dimension} of the cobordisms in $\cob^R.$ The boundary of a cobordism is simply defined as $\partial (K-J) := \partial K.$ The boundary components $J$ of a cobordism $(K-J)$ are called \textit{ingoing (boundary) components} or \textit{removed components} of the cobordism and the cells in $J$ are sometimes called the \textit{removed cells}. The boundary components in $\partial K \setminus J$ are called \textit{outgoing (boundary) components} of $(K -J).$ We say that $(K-J) \in \cob$ is \textit{exactly collared} if $(K,J)$ is exactly collared.
\end{defi}

We use the sign "$-$" in $(K-J)$ to distinguish such an element from a usual relative cc and we use the term "removed" for $J$ and its cells because the duality map we define later in Definition \ref{defdualcob} indeed acts on $(K-J)$ as if the cells in $J$ are removed from $K.$

%One can easily see that it is nevertheless necessary to keep the information of both $K$ and $J$ to specify what the dual of such element is. It is for example not sufficient to only consider the data from the cells in $K \setminus J,$ as in particular such cells can be represented by different $K'$ and $J'$ such that $K \setminus J \cong K' \setminus J'.$ 

There are numerous examples of cobordisms, as in particular every element $(K - J)$ where $K$ is a non-singular (i.e. non-branching) and non-pinching simplicial complex and $J$ is a union of connected components of $\partial K$ is an element of $\cob.$ In other words, the assumptions of locality and non-degeneracy are always satisfied for relative complexes $(K,J)$ where $K \in \nsc$ is simplicial.

Before introducing its generalization, we recall that in Definition \ref{defdualset} we defined the dual of a non-empty set $A \subset \kz$ for a cc $K$ by
$$ \cdual{A}{K} := \{ z \in K^{[\Rk(K)]} ~|~ A \subset z \},$$ also denoted $\dual{A}$ in case there is no ambiguity on the cc $K.$ The following definition constitutes a canonical way to extend the latter definition in order to associate a dual cell to the cells on the boundary of a non-singular cell complex.

\begin{defi}\label{defbdualset}
Let $K$ be a cc and let $A \subset \kz$ be non-empty. We define the \textit{$\sim$-dual of $A$} to be
$$\bdual{A}{K}:= \cdual{A}{K} \sqcup \cdual{A}{\partial K },$$
where by convention $\cdual{A}{\partial K} = \emptyset$ if $A \not\subset (\partial K)^{[0]}.$ We also write $\bual{x}$ for $\bdual{x}{K}$ when there is no ambiguity on $K.$
\end{defi}

As explained in the introductory paragraph of this chapter, this definition for example implies that the $\sim$-dual of a maximal cell of the boundary $y \in \partial K^{[R-1]}$ of a $R$-cc $K$ is given by
$$ \bual{y} = \{y, z\}$$
where $z \in \kR$ is the unique $R$-cell of $K$ containing $y.$ Other examples of $\sim$-dual are given in Figures \ref{figdualbitetrahedra} and \ref{figdualcylinder}.

For a collection of sets $\A \subset \pws{\kz},$ we use the notation $$ \bdual{\A}{K} := \{ \bdual{A}{K} ~|~ A \in  \A\}.$$
For example if $(K-J) \in \cob,$ we obtain $ \bdual{K \setminus J}{K} = \{ \bdual{x}{K} ~|~ x \in K \setminus J \}.$ We simply write $\bual{K \setminus J}$ instead of $\bdual{K \setminus J}{K}.$ Note that if $K$ is closed then $\bual{K} = \bual{K \setminus \emptyset} = \dual{K}.$

We would like to also define a rank function for elements in $\bual{K}$ when $K \in \nsc$ and this is done by mean of the following lemma, which generalizes Lemma \ref{claim2propdual}.

\begin{lem} \label{leminclbdual}
Let $K$ be a non-singular cc and let $x,y \in K.$ Then 
$$\bdual{x}{K} \subsetneq \bdual{y}{K} \quad \text{ if and only if } \quad y \subsetneq x.$$
\begin{proof}
First we note that Lemma \ref{lemcface} and Lemma \ref{claim2propdual} imply that if $x \in K \setminus \partial K$ then 
\begin{equation} \label{claim5lembdual}
\cdual{y}{K} \subsetneq \cdual{x}{K} \quad \Leftrightarrow \quad y \subsetneq x.
\end{equation}

We suppose first that $\bdual{x}{K} \subsetneq \bdual{y}{K},$  i.e. $\cdual{x}{K} \subset \cdual{y}{K}$ and $ \cdual{x}{ \partial K } \subset \cdual{y}{ \partial K}.$ If $x \in K \setminus \partial K$ then (\ref{claim5lembdual}) implies that $y \subset x.$ If $x \in \partial K$ then $\cdual{x}{\partial K} \subset \cdual{y}{\partial K}$ implies $y \subset x$ as a consequence of Lemma \ref{claim2propdual} for the case of the closed cc $\partial K.$ Since our assumption implies that $x \neq y$ we obtain $y \subsetneq x.$

Conversely, if $y \subsetneq x$ then the definition of dual set implies $\bdual{x}{K} \subset \bdual{y}{K}.$ If $x \in K \setminus \partial K$ then the equivalence (\ref{claim5lembdual}) implies that $\cdual{x}{K} \subsetneq \cdual{y}{K}$ and in particular $\bdual{x}{K} \neq \bdual{y}{K}.$ Similarly if $x \in \partial K$ then Lemma \ref{claim2propdual} implies that $\cdual{x}{\partial K} \subsetneq \cdual{y}{\partial K}$ and $\bdual{x}{K} \neq \bdual{y}{K}.$
\end{proof}
\end{lem}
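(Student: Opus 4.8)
The plan is to reduce the statement to the behaviour of the two summands of the $\sim$-dual taken separately. Writing $R = \Rk(K),$ we have $\bdual{x}{K} = \cdual{x}{K} \sqcup \cdual{x}{\partial K}$ with $\cdual{x}{K} \subseteq K^{[R]}$ and $\cdual{x}{\partial K} \subseteq (\partial K)^{[R-1]},$ and these two ambient sets are disjoint since their elements have distinct ranks in $K.$ Hence $\bdual{x}{K} \subseteq \bdual{y}{K}$ is equivalent to the conjunction of $\cdual{x}{K} \subseteq \cdual{y}{K}$ and $\cdual{x}{\partial K} \subseteq \cdual{y}{\partial K},$ and $\bdual{x}{K} \subsetneq \bdual{y}{K}$ just asks in addition that one of these be strict. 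I would also record two elementary remarks: a maximal cell of $\partial K$ is exactly an $(R-1)$-cell $w$ of $K$ with $\abs{\cface{w}} = 1,$ and the maps $A \mapsto \cdual{A}{K}$ and $A \mapsto \cdual{A}{\partial K}$ are both inclusion-reversing on the cells of $K,$ so that $y \subsetneq x$ already yields $\bdual{x}{K} \subseteq \bdual{y}{K}.$

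The workhorse will be a "climb while avoiding a vertex" construction, extending the induction used in the proof of Lemma \ref{claim2propdual}: given $a \in K$ and a vertex $v \notin a,$ I want a cell $c \supseteq a$ with $v \notin c$ that is either a maximal cell of $K$ or a maximal cell of $\partial K.$ One builds a strictly increasing chain $a = a_0 \subsetneq a_1 \subsetneq \cdots$ with $v \notin a_i$ throughout: stop if $a_i$ is maximal in $K,$ or if $\rk(a_i) = R-1$ and $\abs{\cface{a_i}} = 1$ (then $a_i$ is a maximal cell of $\partial K$); otherwise $\abs{\cface{a_i}} \geq 2$ — for $\rk(a_i) \leq R-2$ this follows from Axioms \ref{cccenough} and \ref{cccdiamond}, and for $\rk(a_i) = R-1$ it is what remains once the non-branching option $\abs{\cface{a_i}} = 1$ is excluded — so Lemma \ref{lemcface} gives $a_i = \bigcap_{w \in \cface{a_i}} w,$ and since $v \notin a_i$ some $w \in \cface{a_i}$ avoids $v;$ take $a_{i+1} := w.$ Since the rank strictly increases, this terminates at the desired $c.$

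Both implications then fall out quickly. For "$\Rightarrow$", assuming $\bdual{x}{K} \subsetneq \bdual{y}{K}$ I note $x \neq y$ and suppose for contradiction that $v \in y \setminus x;$ applying the climb to $a = x$ gives $c \supseteq x$ with $v \notin c,$ hence $y \not\subseteq c,$ and $c$ is either a maximal cell of $K$ lying in $\cdual{x}{K} \setminus \cdual{y}{K}$ or a maximal cell of $\partial K$ lying in $\cdual{x}{\partial K} \setminus \cdual{y}{\partial K},$ in both cases contradicting $\bdual{x}{K} \subseteq \bdual{y}{K}.$ For "$\Leftarrow$", given $y \subsetneq x$ the inclusion $\bdual{x}{K} \subseteq \bdual{y}{K}$ is the monotonicity remark; for strictness I pick $v \in x \setminus y,$ climb from $a = y$ to get $c \supseteq y$ with $v \notin c$ (so $x \not\subseteq c$), and the same dichotomy produces an element of $\bdual{y}{K} \setminus \bdual{x}{K}.$

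The only real obstacle is the second alternative of the climb: it may stall at a sub-maximal cell of $\partial K$ rather than reaching a maximal cell of $K$ — precisely where the naive dual $\cdual{\cdot}{K}$ degenerates, since such a cell shares its $\cdual{\cdot}{K}$ with its unique coface. The resolution, which is exactly why the $\sim$-dual carries the extra summand $\cdual{\cdot}{\partial K},$ is that this stalling cell is itself a maximal cell of $\partial K$ and therefore already serves as the separating element; the careful point is the rank bookkeeping that makes the climb well defined ($\abs{\cface{a_i}} \ge 2$ when $\rk(a_i)\le R-2,$ and non-branching at rank $R-1$). Alternatively one could handle the $\partial K$-summand by quoting Lemma \ref{claim2propdual} directly for the closed cc $\partial K$ (closed by Lemma \ref{rembdrycc} under the standing non-pinching hypotheses), at the cost of splitting the argument into the cases $x \in K \setminus \partial K$ and $x \in \partial K.$
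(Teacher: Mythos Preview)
Your proof is correct and takes a somewhat different route from the paper. The paper splits on whether $x \in \partial K$ or not: for $x \in K \setminus \partial K$ it invokes the induction behind Lemma~\ref{claim2propdual} inside $K$ (implicitly using that no $(R-1)$-cell above such an $x$ can lie in $\partial K$, so that climb never stalls before reaching a maximal cell of $K$), and for $x \in \partial K$ it invokes Lemma~\ref{claim2propdual} for the cc $\partial K$ directly. Your unified climb, which is allowed to halt at a boundary sub-maximal cell as well as at a maximal cell of $K$, replaces this case split by a single construction and makes transparent \emph{why} the $\sim$-dual carries the extra summand $\cdual{\cdot}{\partial K}$: it records precisely the alternative terminus of the climb. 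The trade-off is that the paper reuses Lemma~\ref{claim2propdual} wholesale, while your version is more self-contained---in particular it does not need any separate argument that $\partial K$ has enough cofaces for the induction to run (your rank bookkeeping via Axioms~\ref{cccenough}, \ref{cccdiamond} and non-branching handles this uniformly). Your closing paragraph already sketches the paper's case-split approach; one small remark there is that full closedness of $\partial K$ (hence the non-pinching hypothesis you mention) is not strictly needed, since the induction of Lemma~\ref{claim2propdual} only uses $\abs{\cface{\cdot}} \ge 2$, which holds in $\partial K$ by the first half of the proof of Lemma~\ref{rembdrycc}.
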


As a consequence of Lemma \ref{leminclbdual}, the map $x \mapsto \bual{x}$ is a bijection from the non-singular cc $K$ to $\bual{K}$ and the rank function $$ \rk_{\bual{K}}( \bual{x} ) := \Rk(K) - \rk_K(x)$$  where $x \in \bual{K}$ is well defined for all $\bual{x} \in \bual{K}.$

Let $K, K'$ be disjoint posets with rank functions $\rk_K$ and $\rk_{K'}.$ We define the disjoint union $K \sqcup K'$ to be a poset with rank function
$$\rk_{K \sqcup K'} := \rk_K + \rk_{K'}.$$
For an element $(K-J) \in \cob,$ this definition allows us to have a rank function on the poset defined as the disjoint union
$$ \bual{K \setminus J} \sqcup \dual{\partial K \setminus J} = \bdual{K \setminus J}{K} \sqcup \cdual{\partial K \setminus J}{\partial K}$$
used below in our definition of dual cobordism.
The latter union is indeed  disjoint, since by non-degeneracy of $(K,J)$ every cell $x \in \partial K \setminus J$ satisfies $\cdual{x}{\partial K} \subsetneq \bdual{x}{K},$ which implies in particular that $\cdual{x}{\partial K} \neq \bdual{x}{K}$ for all such $x$ and therefore
$\bdual{K \setminus J}{K} \cap \cdual{\partial K \setminus J}{\partial K} = \emptyset.$

More explicitly, if $(K,J)$ is non-degenerate our notations imply the following expressions for the rank function on $\bual{K \setminus J} \sqcup \dual{\partial K \setminus J}:$
$$ \rk_{\bual{K \setminus J} \sqcup \dual{\partial K \setminus J}}(x') = \begin{cases}
\rk_{\kb}(x') = R - \rk_K(x) &\textit{if } x' = \cdual{x}{K} \textit{ for } x \in K \setminus \partial K,\\
\rk_{\bual{K}}(x') = R - \rk_K(x) &\textit{if } x' = \bdual{x}{K} \textit{ for } x \in \partial K,\\
\rk_{\dual{\partial K}}(x') = (R - 1) - \rk_{\partial K}(x) &\textit{if } x' = \cdual{x}{\partial K} \textit{ for } x \in \partial K.
\end{cases}$$
where $R = \Rk(K).$ The particular case $J = \emptyset$ yields 
$$ (\bual{K} \sqcup \dual{\partial K})^{[0]} = \cdual{\kR}{K} \sqcup \cdual{(\partial K)^{[R-1]}}{\partial K},$$
for the dual sets of rank $0.$ For dual sets of rank $1 \leq r \leq R-1$ we have
$$(\bual{K} \sqcup \dual{\partial K})^{[r]} = \cdual{(K \setminus \partial K)^{[R-r]}}{K}  \sqcup \cdual{\partial K^{[R-r-1]}}{\partial K} \sqcup \bdual{ (\partial K)^{[R-r]}}{K}.$$
For example, if $e' \in (\bual{K} \sqcup \dual{\partial K})^{[1]}$ then $e'$ is a dual set having one of the three following forms:
\begin{itemize}[topsep=2pt, parsep=2pt, itemsep=1pt]
\item the "usual" dual of a sub-maximal cell of $K$ i.e. $e' = \cdual{y}{K}$ for $y \in K^{[R-1]},$
\item the "usual" dual of a sub-maximal cell of $\partial K,$ i.e. $e' = \cdual{y}{\partial K}$ for $y \in (\partial K)^{[R-2]},$
\item the $\sim$-dual of a maximal cell in $\partial K$, i.e. $e' = \bdual{y}{K} = \{ y, z\}$ where $y \in (\partial K)^{[R-1]}$ and $z \in \kR$ is the unique maximal cell of $K$ containing $y.$
\end{itemize}
With this in mind, the next definition of duality map for cobordism is more transparent.

\begin{defi}[Dual cobordism] \label{defdualcob}
Let $(K-J) \in \cob,$ we define the \textit{dual cobordism} to be 
$$ \dual{(K-J)} := \left( ( \bual{K \setminus J} \sqcup \dual{\partial K \setminus J}  ) - \dual{\partial K \setminus J} \right).$$ 
\end{defi}

This definition, above all, raises the following question: is $\dual{(K - J)}$ also a cobordism? This question is answered positively by the main results of this chapter stated in the following theorem. 

\begin{thm} \label{thmdualcob}
If $(K- J) \in \cob^R$ then:
\begin{itemize}
\item $\dual{(K - J)} \in \cob^R,$ 
\item $\dual{(K - J)}$ is exactly collared,
\item $\dual{K_J} \in \mlc^{R-1}$ and  
$$\partial \dual{(K -J)} = \dual{\partial K \setminus J} \sqcup \dual{K_J}.$$
\end{itemize}
\begin{proof} \renewcommand{\qedsymbol}{}
See Section \ref{secthmdualcob}.
\end{proof}
\end{thm}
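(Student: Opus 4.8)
The plan is to follow the same blueprint used for Proposition \ref{propdual}: first establish an equivalence between inclusions in the candidate dual poset and inclusions of cells in $K$ and $\partial K$ (Lemma \ref{leminclbdual} already provides the bulk of this), then verify the four axioms \ref{cccrank}--\ref{cccdiamond} for $\bual{K\setminus J}\sqcup\dual{\partial K\setminus J}$ by translating them through this equivalence into the corresponding (already known) properties of $K$ and $\partial K$. Once we know the candidate poset is a cc, we check it is non-singular (graph-based, pure, non-branching) and non-pinching and local, i.e. in $\nsc^R$, then identify its boundary, check $\dual{K_J}\in\mlc^{R-1}$, and finally show the relative cc $\bigl(\bual{K\setminus J}\sqcup\dual{\partial K\setminus J},\ \dual{\partial K\setminus J}\bigr)$ is non-degenerate, local and exactly collared.

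First I would set $\bual{K}_* := \bual{K\setminus J}\sqcup\dual{\partial K\setminus J}$ and sort its cells into three families according to the last displayed rank formula before Definition \ref{defdualcob}: duals of interior cells $\cdual{x}{K}$, $\sim$-duals $\bdual{x}{K}$ of boundary cells in $\partial K\setminus J$, and ordinary duals $\cdual{x}{\partial K}$ of cells of $\partial K\setminus J$. Axiom \ref{cccrank} follows from Lemma \ref{leminclbdual} (for inclusions involving a $\sim$-dual) together with Lemma \ref{claim2propdual} applied to the closed cc $\partial K$ (for inclusions among the $\cdual{\cdot}{\partial K}$), since the rank function is $R-\rk_K$ on the first two families and $(R-1)-\rk_{\partial K}$ on the third, and one checks these are compatible across families using that $\cdual{x}{\partial K}\subsetneq\bdual{x}{K}$ with $\rk$ dropping by exactly $1$. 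Axiom \ref{cccinter} requires showing closure under intersection: if both cells are interior duals this is Proposition \ref{propdual}'s argument applied to $K$ (reproving that $\dual{x}\cap\dual{y}=\dual{w}$ for $w$ the minimal cell above $x\cup y$, now using non-branchingness in place of closedness), if both are boundary duals in $\partial K$ it is the same argument inside $\partial K\in\mlc^{R-1}$, and the mixed case forces the intersection into $\partial K$ or handles the $\sim$-dual piece by noting $\bdual{x}{K}\cap\bdual{y}{K} = \cdual{w}{K}\sqcup\cdual{w}{\partial K}$ when $w\subset\partial K$. Axiom \ref{cccenough} is handled exactly as in Proposition \ref{propdual}, translating "there is a face $y'$ of $y$ containing $x$" in $K$ (or in $\partial K$) into the required intermediate cell in $\bual{K}_*$; the one extra point is verifying there is no "gap" at the interface between a $\sim$-dual and the ordinary dual of the same boundary cell, which is immediate since their ranks differ by $1$. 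Axiom \ref{cccdiamond} again dualizes the diamond property of $K$ and of $\partial K$; the delicate instances are a sub-maximal/sub-sub-maximal pair straddling $\partial K$, where one uses that $K$ is non-singular (so a boundary $(R-1)$-cell lies in exactly one $R$-cell) and that $\partial K$ is closed (so a boundary $(R-2)$-cell lies in exactly two boundary $(R-1)$-cells), giving the required two intermediate cells $\bdual{y}{K}$ and $\cdual{y}{\partial K}$.

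Next I would establish the structural properties. Purity: every cell of $\bual{K}_*$ lies below some $\cdual{v}{K}$ for $v\in\kz$ by the same maximal-cell argument as in Proposition \ref{propdual}, so $\Rk(\dual{(K-J)})=R$. Graph-based/non-branching: by Lemma \ref{leminclbdual} an edge of $\bual{K}_*$ is the $\sim$-dual of an $(R-1)$-cell of $K$, the dual of an $(R-1)$-cell of $K$, or the dual of an $(R-2)$-cell of $\partial K$; each has exactly two vertices (using that $K$ is graph-based and, for boundary cases, that $\partial K\in\mlc^{R-1}$ is closed) and each sub-maximal cell of $\bual{K}_*$ — i.e. the $\sim$-dual of an edge of $K$ or the dual of an edge of $K$ or the dual of a vertex of $\partial K$ — lies in at most two maximal cells, again by the non-singularity of $K$ and closedness of $\partial K$. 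Non-pinching and local: here the hypothesis that $(K,J)$ is local (equivalently, by Remark \ref{remassumptlocal}-style reasoning, that the midsections $M_{J_0}^K$ are cell-connected and the relevant cells of $K$ meet each $J_0$ connectedly) is exactly what is needed; I would use the dual graph $\dg{\cdot}$ from Definition \ref{defdualgraph} and translate cell-connectedness of $K$ and of $\partial K$, plus cell-connectedness of the midsection, into cell-connectedness and non-pinchingness of $\bual{K}_*$. Then I compute $\partial\dual{(K-J)}$: a sub-maximal cell of $\bual{K}_*$ lies in only one maximal cell precisely when it is the $\sim$-dual of a boundary edge of $K\setminus J$ (contributing $\dual{\partial K\setminus J}$) or the dual of an edge of $K$ that meets $J^{[0]}$ non-trivially, i.e. an edge in $\ko_J$ — and the collection of such duals is exactly the midsection $M_J^K$, whose cells, by the locality hypothesis and Proposition \ref{propmidsec}, form a closed non-pinching cell-connected $(R-1)$-cc; identifying this with $\dual{K_J}$ via Lemma \ref{lemmidsec} gives $\dual{K_J}\in\mlc^{R-1}$ and $\partial\dual{(K-J)}=\dual{\partial K\setminus J}\sqcup\dual{K_J}$. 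Finally, non-degeneracy of $\bigl(\bual{K}_*,\dual{\partial K\setminus J}\bigr)$ follows since a cell of $\bual{K}_*$ all of whose vertices are duals of maximal cells of $\partial K\setminus J$ would itself be an ordinary dual $\cdual{x}{\partial K}$, hence in the removed part; locality was just established; and exact collaredness is the assertion that $\E_{\dual{\partial K\setminus J}}^{\bullet}\mapsto \dual{\partial K\setminus J}\cap\bullet$ is a cc-isomorphism, which one reads off from the explicit three-family description: the collar of $\dual{\partial K\setminus J}$ in $\bual{K}_*$ consists exactly of the $\sim$-duals of cells of $\partial K\setminus J$, and $\bdual{x}{K}\mapsto\cdual{x}{\partial K}$ is a rank-shifting bijection intertwining inclusions by Lemma \ref{claim2propdual} applied to $\partial K$.

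The main obstacle, I expect, will be the bookkeeping at the interface between the three families of cells — in particular verifying axioms \ref{cccinter} and \ref{cccdiamond} for pairs of cells where one is a $\sim$-dual $\bdual{x}{K}$ of a boundary cell and the other is an interior dual or an ordinary boundary dual — because here the two "halves" $\cdual{x}{K}$ and $\cdual{x}{\partial K}$ of a $\sim$-dual interact with the non-singularity of $K$ (one $R$-cell above a boundary $(R-1)$-cell) and the closedness of $\partial K$ (two boundary $(R-1)$-cells above a boundary $(R-2)$-cell) in a way that must be combined carefully to produce exactly the right counts. The second genuinely substantive point is showing that the midsection cell-connectedness built into the definition of local relative cc is both used and sufficient to force $\bual{K}_*$ to be local and non-pinching around the boundary components that were removed; this is where the somewhat subtle discussion following the definition of local relative cc (about freely choosing endpoints of paths whose last edge lies in $K_J$, and the corresponding $2$-cell moves $m_{p_0}^C$) has to be turned into a precise argument, presumably by a path-lifting/connection argument through the midsection analogous to the Extension Lemma philosophy of Chapter \ref{secreconst}.
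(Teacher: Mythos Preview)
Your outline is sound and matches the paper's argument at the level of individual checks: axioms \ref{cccrank}--\ref{cccdiamond} via Lemma \ref{leminclbdual} and Lemma \ref{claim2propdual}, the three-family bookkeeping at the boundary interface, the identification $\dual{K_J}\cong\dual{M_J^K}$ via Lemma \ref{lemmidsec}, and the explicit bijection $\bdual{x}{K}\mapsto\cdual{x}{\partial K}$ for exact collaredness.

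The one organizational difference worth noting is that the paper does \emph{not} attack the general $J$ directly. It first proves the special case $J=\emptyset$, i.e.\ that $\bual{K}\sqcup\dual{\partial K}\in\nsc^R$, and then shows as a separate statement that if $K'\in\nsc$ and $J_0$ is a boundary component with $(K,J\sqcup J_0)$ local, then deleting $\bual{J_0}\sqcup\dual{J_0}$ from $K'$ again yields an element of $\nsc$. This two-step decomposition buys something concrete: in the $J=\emptyset$ step the locality hypothesis on $(K,J)$ plays no role, so all the interface bookkeeping you flag as the ``main obstacle'' is done once in the cleanest setting; then in the removal step the \emph{only} property that can fail is the appearance of a $\partial K'_0$-pinch on the newly exposed boundary piece $\dual{K_{J_0}}$, and this is exactly where cell-connectedness of $M_{J_0}^K$ is invoked. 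Your direct approach is not wrong, but it forces you to carry the locality hypothesis through all the axiom checks even though it is only needed at one spot.

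On your anticipated second obstacle: you over-estimate the difficulty. No path-lifting or Extension-Lemma-style argument is needed. The paper simply observes that the poset isomorphism $K_{J_0}\cong M_{J_0}^K$ from Lemma \ref{lemmidsec} is rank-preserving after dualizing, hence $\dual{K_{J_0}}\cong\dual{M_{J_0}^K}$ as cc, and non-pinchingness of $\dual{K_{J_0}}$ is then immediate from the hypothesis that $M_{J_0}^K$ is cell-connected (via Remark \ref{remassumptlocal}). The subtle discussion after the definition of local relative cc is motivational, not part of the proof. One small slip: in your boundary computation, the sub-maximal cells contributing to $\dual{\partial K\setminus J}$ are the $\cdual{v}{\partial K}$ for $v\in(\partial K\setminus J)^{[0]}$, not $\sim$-duals of edges.
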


In the proof of Theorem \ref{thmdualcob} we will see that, as a consequence of Lemma \ref{lemmidsec}, we have $\dual{K_J} \cong \dual{M_J^K}.$ In other words, the dual outgoing component $\dual{K_J}$ is better understood as the dual of the midsection associated to the ingoing components $J.$ We point out that although $\dual{K_J}$ is a (closed) cell complex, it is not the case of $K_J$ and in particular $\dual{\dual{K_J}} \not\cong K_J.$

Definition \ref{defdualcob} leads to the interpretation that the action of the duality map on cobordisms in effect exchanges the role of the ingoing and outgoing boundary components. This can be seen particularly well in the case of the cylinder depicted in Figure \ref{figdualcylinder} of the next section.

As a first example, we can look at what Definition \ref{defdualcob} implies for the special cases where $J = \emptyset$ and $J = \partial K.$ In the first case we get:
$$ \dual{(K - \emptyset)} =  \left( ( \bual{K} \sqcup \dual{\partial K}  ) - \dual{\partial K} \right).$$
We can think of this case as duality acting on a non-singular cc $K,$ as no other data is specified. Such an example is depicted in Figure \ref{figdualbitetrahedra}. As a consequence of Theorem \ref{thmdualcob}, we have that $\bual{K} \sqcup \dual{\partial K} \in \nsc$ and 
$$ \partial ( \bual{K} \sqcup \dual{\partial K}  )  = \dual{\partial K}.$$
The dual of $(K - \emptyset)$ is therefore a cobordism with ingoing components equal to the entire boundary. For such cobordisms, corresponding to our second case $J = \partial K,$ the duality gives
$$  \dual{(K - \partial K)} =  \left(  \bual{K \setminus \partial K}  - \emptyset \right).$$
An example of such a duality was given when discussing the Ising model in Section \ref{secising}. Another consequence of Theorem \ref{thmdualcob} is therefore that $\bual{K \setminus \partial K} \in \nsc$ for all $(K - \partial K) \in \cob.$

Finally, the following corollary concerns the special case of exactly collared cobordisms, for which the involution property holds. To state this result, we defined the isomorphism $(K - J) \cong (K' - J')$ between cobordism as $K \cong K'$ with an isomorphism mapping $J$ to $J'.$ A consequence of this corollary is that $(K-J)$ is exactly collared if and only if there exists $(K'-J') \in \cob$ such that $\dual{(K'-J')} = (K - J),$ a property that motivated us to include the notion of "exactness" in the term used to denote such cobordisms.

\begin{cor} \label{corbidualcob}
If $(K-J) \in \cob$ is exactly collared then we have
$$\partial \dual{(K -J)} \cong \dual{J} \sqcup \dual{\partial K \setminus J} \quad \text{and} \quad \dual{\dual{(K-J)}} \cong (K - J).$$
\begin{proof}
By Theorem \ref{thmdualcob} and using the assumption that $(K-J)$ is exactly collared, i.e. that $ J \cong M_J^K,$ and the identity $ \dual{M_J^K} \cong \dual{K_J},$ we directly obtain that $\partial \dual{(K -J)} \cong \dual{J} \sqcup \dual{\partial K \setminus J}.$

The identity $\dual{\dual{(K-J)}} \cong (K - J)$ is a consequence of the following computation, using the previous point and the assumption that $(K-J)$ is exactly collared:
\begin{align*}
\dual{\dual{(K-J)}} &= \dual{ ( \bual{ K \setminus J} \sqcup \dual{ \partial K \setminus J} - \dual{ \partial K \setminus J} )}\\
&= \left( \bual{ \left( ( \bual{ K \setminus J } \sqcup \dual{ \partial K \setminus J} ) \setminus \dual{ \partial K \setminus J} \right)} \sqcup \dual{\dual{ K_J }}  - \dual{ \dual{ K_J}} \right)\\
&\cong ( \bual{ \bual{ K \setminus J}} \sqcup \dual{ \dual{ J}} - \dual{\dual{J}} ) \cong ( K - J ),
\end{align*}
where we also used Lemma \ref{leminclbdual} and Proposition \ref{propdual} for the last identity.
\end{proof}
\end{cor}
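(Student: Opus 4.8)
The statement to prove is Corollary \ref{corbidualcob}: for an exactly collared cobordism $(K-J)$, we have $\partial \dual{(K-J)} \cong \dual{J} \sqcup \dual{\partial K \setminus J}$ and $\dual{\dual{(K-J)}} \cong (K-J)$.

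The proof has two pieces. The first piece — computing the boundary — should follow directly from Theorem \ref{thmdualcob}, which already tells us that $\partial \dual{(K-J)} = \dual{\partial K \setminus J} \sqcup \dual{K_J}$, combined with the defining property of "exactly collared" (that $\mathcal{E}_J^x \mapsto J \cap x$ is a cc-isomorphism $M_J^K \to J$, hence $J \cong M_J^K$), plus the fact noted after the statement of Theorem \ref{thmdualcob} that $\dual{K_J} \cong \dual{M_J^K}$ (which comes from Lemma \ref{lemmidsec}). Stringing these together: $\dual{K_J} \cong \dual{M_J^K} \cong \dual{J}$, the last isomorphism because duality respects cc-isomorphism (Proposition \ref{propdual} / Lemma \ref{claim2propdual} applied to the closed cc $M_J^K \cong J$). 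This gives $\partial \dual{(K-J)} \cong \dual{\partial K \setminus J} \sqcup \dual{J}$.

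The second piece — the involution property $\dual{\dual{(K-J)}} \cong (K-J)$ — is the substantive part. The plan is a direct computation unwinding Definition \ref{defdualcob} twice. Write $(K-J) \in \cob^R$. By definition, $\dual{(K-J)} = \big((\bual{K \setminus J} \sqcup \dual{\partial K \setminus J}) - \dual{\partial K \setminus J}\big)$. Call $K' := \bual{K \setminus J} \sqcup \dual{\partial K \setminus J}$ and $J' := \dual{\partial K \setminus J}$; by Theorem \ref{thmdualcob}, $(K' - J') \in \cob^R$, it is exactly collared, and $K'_{J'} \cong M_{J'}^{K'}$, with $\partial K' = \dual{\partial K \setminus J} \sqcup \dual{K_J}$ so that $\partial K' \setminus J' = \dual{K_J}$. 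Applying Definition \ref{defdualcob} again to $(K' - J')$ yields
$$\dual{(K'-J')} = \big((\bual{K' \setminus J'} \sqcup \dual{\partial K' \setminus J'}) - \dual{\partial K' \setminus J'}\big).$$
Now $K' \setminus J' = \bual{K \setminus J}$ (the $\bual{\cdot}$ cells), and its $\sim$-dual $\bual{\bual{K \setminus J}}$ should be $\cong K \setminus J$ by the involution on $\sim$-duals; here I'd invoke Lemma \ref{leminclbdual} (the order-reversing bijection $x \mapsto \bual{x}$) to see the composite $x \mapsto \bual{\bual{x}}$ is an order isomorphism $K \setminus J \to \bual{\bual{K \setminus J}}$, and Lemma \ref{lemisom} to upgrade it to a cc-isomorphism once ranks are checked. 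Also $\partial K' \setminus J' = \dual{K_J}$ so $\dual{\partial K' \setminus J'} = \dual{\dual{K_J}}$, and using $K_J \cong M_J^K \cong J$ (exactly collared) together with Proposition \ref{propdual} gives $\dual{\dual{K_J}} \cong \dual{\dual{J}} \cong J$. Assembling these identifications componentwise in the disjoint union gives $\dual{(K'-J')} \cong (\bual{\bual{K \setminus J}} \sqcup \dual{\dual{J}} - \dual{\dual{J}}) \cong (K - J)$, as displayed in the excerpt.

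The main obstacle I anticipate is being careful that the componentwise isomorphisms of posets glue into a genuine cc-isomorphism of the whole relative cobordism — i.e. that the maps on $\bual{K \setminus J}$, on $\dual{\partial K \setminus J}$, and on the midsection/boundary pieces agree on overlaps and respect the combined rank functions (those defined via $\rk_{\bual{K}}$, $\rk_{\dual{\partial K}}$, and the disjoint-union convention $\rk_{K \sqcup K'} = \rk_K + \rk_{K'}$). The exactly-collared hypothesis is exactly what makes the ingoing side behave: it forces $K_J \cong M_J^K \cong J$ so that the "removed" component survives the double dual up to isomorphism, whereas for a general cobordism $\dual{\dual{K_J}} \not\cong K_J$ and the involution fails. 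So I'd spend the bulk of the write-up checking that, on each of the three disjoint strata, the rank is halved-and-restored correctly and that inclusions are preserved both ways (appealing to Lemma \ref{leminclbdual}, Lemma \ref{claim2propdual}, Proposition \ref{propdual}, and Lemma \ref{lemisom}), and only then conclude by concatenating the three strata-isomorphisms.
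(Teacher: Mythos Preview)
Your proposal is correct and follows essentially the same route as the paper: invoke Theorem \ref{thmdualcob} together with the exactly-collared hypothesis $J \cong M_J^K$ and the identification $\dual{K_J} \cong \dual{M_J^K}$ for the boundary, then unwind Definition \ref{defdualcob} twice and use Lemma \ref{leminclbdual} and Proposition \ref{propdual} to conclude $\bual{\bual{K\setminus J}}\sqcup\dual{\dual{J}} \cong K$. Your extra care about checking that the strata-wise isomorphisms glue to a cc-isomorphism is reasonable (the paper handles this tersely), though note that $K_J$ itself is not a cc but only a poset, so the identification you need is $\dual{K_J}\cong\dual{M_J^K}$ rather than $K_J\cong M_J^K$ as cc.
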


\newpage

\subsection{Basic examples in pictures} \label{ssecexamdcob}

In this section, we use illustrations to give simple examples of the duality map on cobordisms of rank 2 and 3 and we describe the example of a torus-like $3$-cell having a dual representing a typical example of an inhomogeneous cell complex. We chose examples in these dimensions to give some intuition of the generality under which the duality is defined, while keeping the dimension of the representation under 3. In order to make the following examples more transparent, we use different symbols for the vertices of the dual cobordism, as explained in Figure \ref{figdcellconv}.

\begin{figure}[!h]
\centering
\includegraphics[scale=0.75]{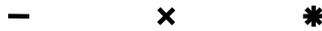}
\caption{\label{figdcellconv} These are the different illustrations chosen to represent the vertices in the dual of a cell complex, when such vertex is the dual of a 1-cell (stripe), 2-cell (cross) and 3-cell (star). }
\end{figure}

The first example is a $3$-dimensional cobordism with empty ingoing component made out of two tetrahedra sharing a triangle. We can think of this case simply as a non-singular cell complex. 
The dual of this cobordism is composed of five $3$-cells, one of which is illustrated below in the Figure \ref{figdualbitetrahedra}.

\begin{figure}[!h]
\centering
\includegraphics[scale=0.63]{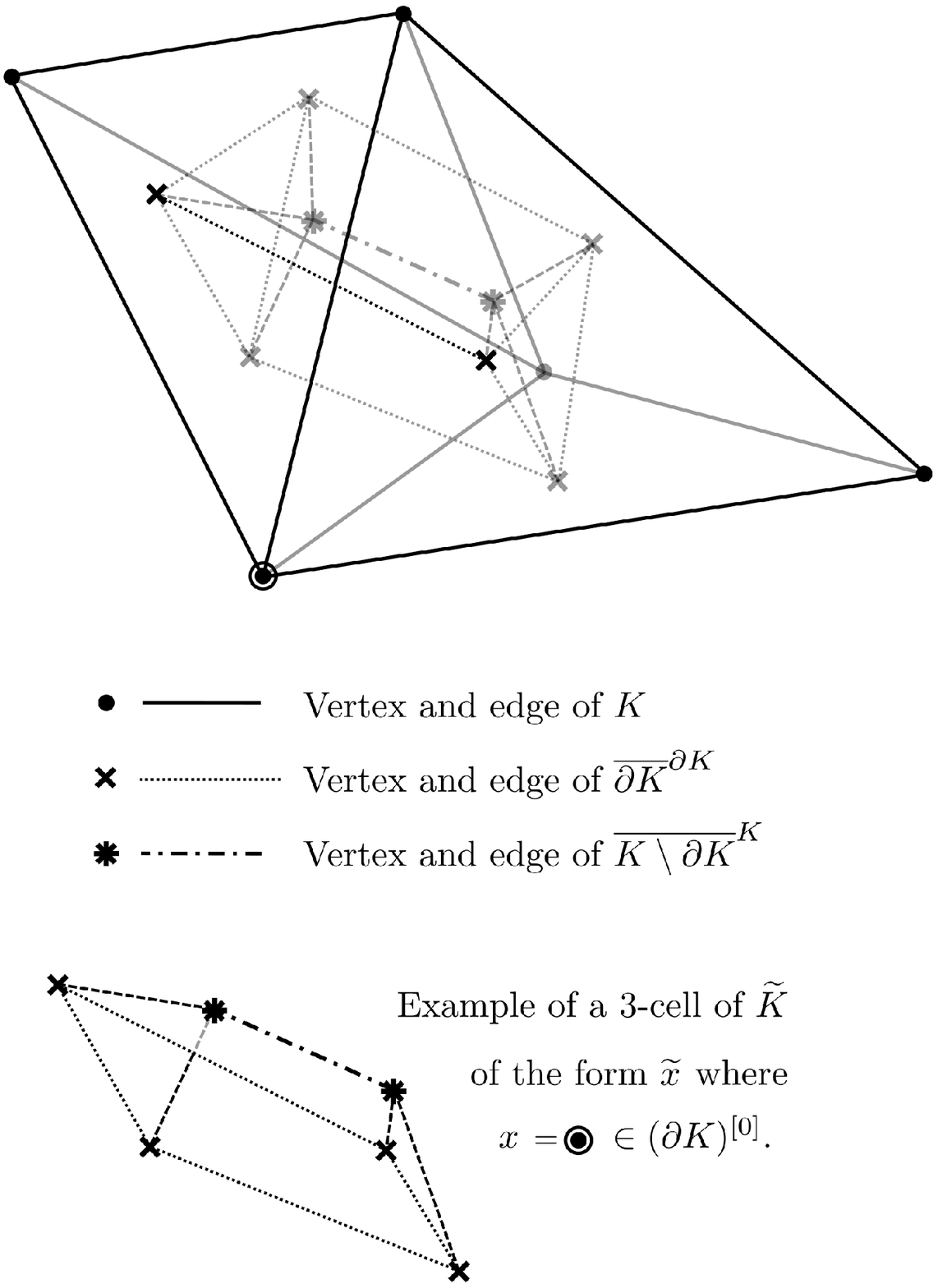}
\caption{\label{figdualbitetrahedra}A cc $K \in \nsc$ composed of two tetrahedra sharing a triangle and seen as a 3-dimensional cobordism $(K - \emptyset),$ i.e. with empty ingoing component. Its dual $\dual{(K - \emptyset)}$ is represented inside $K.$ In this example the entire boundary of the dual is made of removed cells (i.e. cells in the ingoing component). The cobordism $\dual{(K - \emptyset)}$ has only two vertices not lying on the boundary corresponding to the dual of the two tetrahedra in $K.$}
\end{figure}

\newpage

The second example is a 2-dimensional cobordism  $(K - J)$ where $K$ is a discretization of  a 2-dimensional cylinder where the ingoing boundary component $J$ is on the left of the illustration.
In this example, $\dual{(K - J)} \cong (K - J)$  and both $(K-J)$ and its dual have nine $2$-cells each containing four vertices. This is also an example of an exactly collared cobordism. In the lower part of Figure \ref{figdualcylinder} we illustrate a cell obtained as the dual of a vertex lying on the outgoing boundary component.

\begin{figure}[!h]
\centering
\includegraphics[scale=0.72]{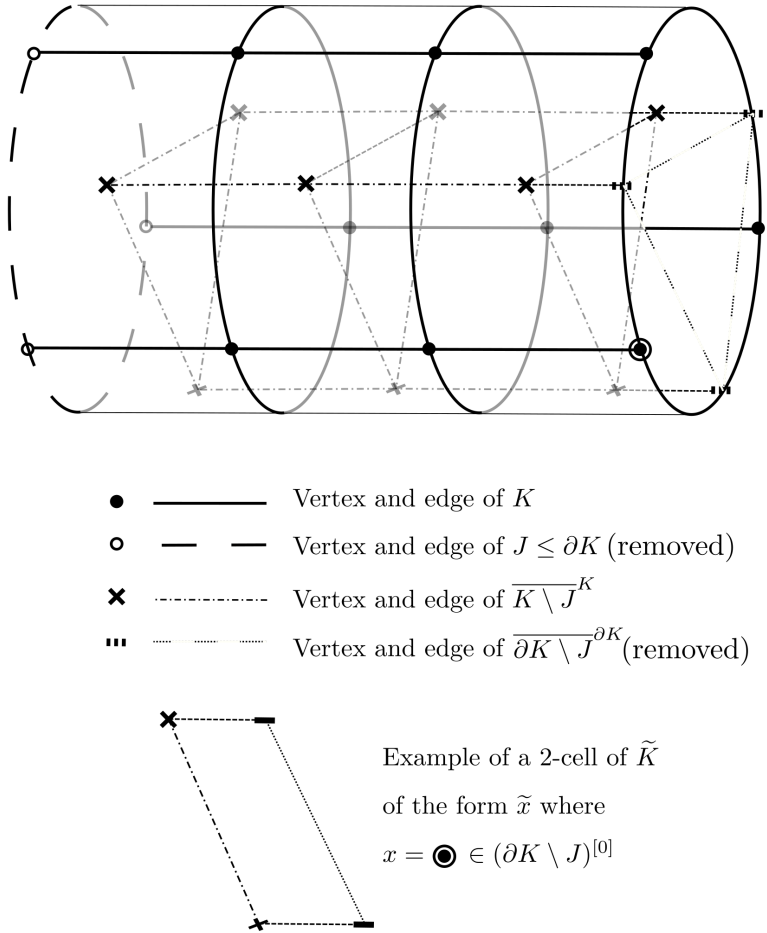}
\caption{\label{figdualcylinder}An example of a 2-dimensional cobordism that can be seen as a discretization of the boundary of a cylinder. In this case the primal and dual cobordisms are isomorphic, with the location of the ingoing and outgoing components exchanged, the ingoing component being represented using cells with white stripes. This is an example of an exactly collared cobordism.}
\end{figure}

\newpage

The last example is given by a simple discretization of a solid torus using a single $3$-cell, twelve 2-cells, twenty-four edges and twelve vertices, represented on the left of Figure \ref{figtorus}. The dual of this cell complex (seen as a cobordism of the form $(K- \emptyset)$), represented on the right of the figure, gives an example of a non-homogeneous cell complex which therefore does not admit a geometrical realization as a manifold. A similar construction is given in Figure 3 of \cite{dav95} and, if subdivided appropriately, constitutes an example of an inhomogeneous vertex in a closed cc.

\begin{figure}[!h]
\centering
\includegraphics[scale=0.72]{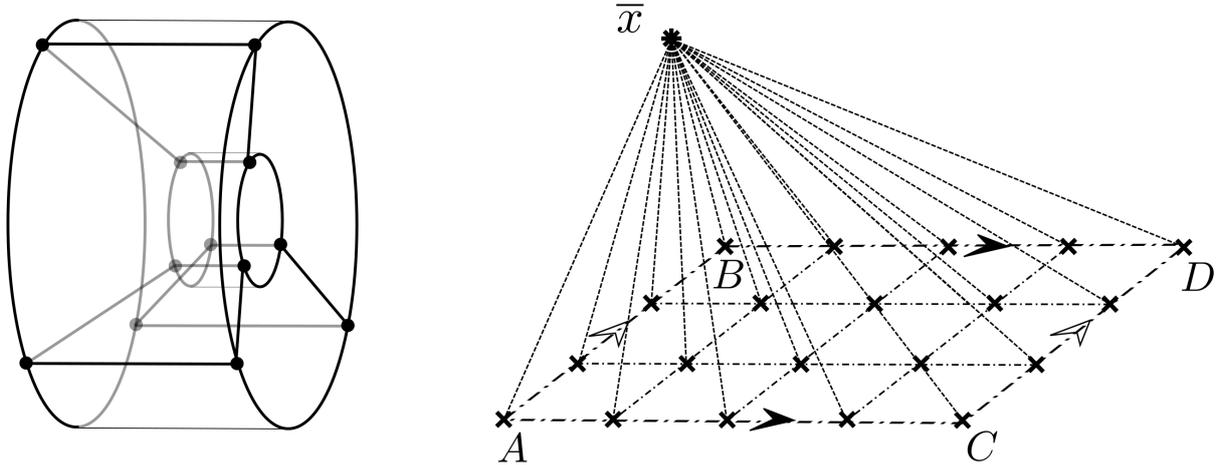}
\caption{\label{figtorus}The illustration on the left gives an example of a 3-cc $K$ composed of a single $3$-cell $x$ and having a boundary that defines a discretization of a 2-dimensional torus. On the right, we get a representation of $\dual{(K - \emptyset)} =  \left( ( \bual{K} \sqcup \dual{\partial K}  ) - \dual{\partial K} \right)$ by making an identification of the vertices on the boundary of the discretized square $ABDC$ following the usual construction of a 2-dimensional torus from a square hinted by the black and white arrows. That is to say: we identify the four vertices from the vertex $A$ to the vertex $B$ with the four vertices between $C$ and $D$ in this order and similarly for the vertices between $A$ and $C$ with the vertices between $B$ and $D.$ This identifications of vertices then automatically produce identifications of the edges having both vertices identified with two other vertices and similarly for the 2-cells (in this case triangles) having $\dual{x}$ as a vertex and an edge on the boundary of the square $ABDC.$ The top vertex $\dual{x}$ corresponds to a non-homogeneous point in a geometric realization of $\bual{K} \sqcup \dual{\partial K}.$ }
\end{figure}

\newpage

\subsection{Proof of the Theorem on duality} \label{secthmdualcob}

This section simply contains the proof of Theorem \ref{thmdualcob} stating that if $(K- J) \in \cob^R$ then:
\begin{itemize}
\item $\dual{(K - J)} \in \cob^R,$ 
\item $\dual{(K - J)}$ is exactly collared,
\item $\dual{K_J} \in \mlc^{R-1}$ and  $\partial \dual{(K -J)} = \dual{\partial K \setminus J} \sqcup \dual{K_J}.$
\end{itemize}

\begin{proof}[Proof of Theorem \ref{thmdualcob}]
Let $R = \Rk(K).$ We split the proof into three statements \ref{thmdualcobspta}, \ref{thmdualcobsptb} and \ref{thmdualcobpt2} as follows.

Suppose that $(K - J) \in \cob.$
\begin{enumerate}
\item \label{thmdualcobpt1}Then we have $K' := \bual{K \setminus J} \sqcup \dual{\partial K \setminus J} \in \nsc,$ 
\begin{enumerate}
\item \label{thmdualcobspta} for the special case $J = \emptyset,$
\item \label{thmdualcobsptb} and for $J$ arbitrary as a consequence of the following statement: if $K' \in \nsc$ and $J_0$ is a connected component of $\partial K \setminus J$ such that $(K, J \sqcup J_0)$ is local then
$$K'_0 := \bual{K \setminus (J \sqcup J_0)} \sqcup \dual{\partial K \setminus (J \sqcup J_0)} \in \nsc;$$
\end{enumerate}
\item \label{thmdualcobpt2} If we define $ J' := \dual{\partial K \setminus J},$ then $ J' \in \mlc^{R-1},$ $(K',J')$ is exactly collared and $$\partial K' = J' \sqcup \dual{K_J}.$$
\end{enumerate}
To prove the point \ref{thmdualcobspta}, we show that $K' := \bual{K} \sqcup \dual{\partial K}$ satisfies Axioms \ref{cccrank}, \ref{cccinter}, \ref{cccenough} and \ref{cccdiamond} and is a local non-singular non-pinching cell complex of rank $R.$

Axiom \ref{cccrank} is a consequence of Lemma \ref{leminclbdual} for the cells of $K'$ included in $\bual{K}$ and is a consequence of Lemma \ref{claim2propdual} for the cells included in $\dual{\partial K}.$ To prove \ref{cccinter}, we take $x', y' \in K'$ such that $x' \cap y' \neq \emptyset.$ If $x'= \bual{x}$ and $y' = \bual{y}$ are included in $\bual{K}$ then there exists a cell $w \supset x \cup y $ of minimal rank, and by Lemma \ref{leminclbdual} we get $\bual{w} = \bual{x} \cap \bual{y}$ (using the same arguments as in the proof of Proposition \ref{propdual}). If $x' = \dual{x}$ and $y' = \dual{y}$ are cells in $\dual{\partial K}$ then \ref{cccinter} is a consequence of Proposition \ref{propdual}. If say $x' = \bual{x} \in \bual{K}$ and $y' = \dual{y} \in \dual{\partial K}$ then
$$ \bual{x} \cap \dual{y} = (\cdual{x}{K} \cup \cdual{x}{ \partial K}) \cap \cdual{y}{ \partial K} = \cdual{x}{\partial K} \cap \cdual{y}{\partial K},$$
so $\cdual{x}{\partial K} \cap \cdual{y}{\partial K} \neq \emptyset.$ Therefore $x,y \in \partial K$ and
$$\cdual{x}{\partial K} \cap \cdual{y}{\partial K} = \cdual{w}{\partial K},$$ where $w = \dual{ \dual{x} \cap \dual{y}} \in \partial K.$ Axiom \ref{cccenough} is satisfied for elements in $\bual{K \setminus \partial K}$ or in $\dual{\partial K}$ as a consequence of Axiom \ref{cccenough} for $K$ and $\partial K$ and by Lemma \ref{leminclbdual} and Proposition \ref{propdual} respectively. The case $\dual{x} \subsetneq \bual{y}$ is shown by finding a cell $x' \in K'$ included in $\bual{y}$ and having $\dual{x}$ as a face by the following arguments. We have $$ \cdual{x}{\partial K} \subsetneq ( \cdual{y}{K} \cup \cdual{y}{\partial K})$$ which in particular implies $\cdual{x}{\partial K} \subset \cdual{y}{\partial K}.$ This in turn implies that $\cdual{y}{\partial K} \neq \emptyset$ and thus $y \in \partial K.$ If $\dual{x} \subsetneq \dual{y}$ then the existence of $x'$ is given by Axiom \ref{cccenough} for $\dual{\partial K}.$ If $\dual{x} = \dual{y}$ then $x = y$ and since $\rk_{K'}(\bual{y}) = \rk_{\dual{\partial K}}(\dual{y}) + 1$ then $\dual{x}$ is a face of $\bual{y}$ and $x'= \bual{y}$ satisfies the desired conditions. The diamond property \ref{cccdiamond} is proven in a similar way. The only non-direct case to check is when $\dual{x} \in \dual{\partial K},$ $\bual{y} \in \bual{K},$ $\dual{x} \subset \bual{y}$ and 
\begin{equation} \label{diamondrkprfpropbdual}
\rk_{K'}(\dual{x}) = \rk_{K'}(\bual{y}) - 2.
\end{equation}
By the same argument as before, $y \in \partial K$ and (\ref{diamondrkprfpropbdual}) implies that $y$ is a face of $x.$ We also have that $\bual{x}$ is a coface of $\dual{x}$ and a face of $\bual{y}$ by (\ref{diamondrkprfpropbdual}). Since $y \subset x$ this implies that 
\begin{equation} \label{diamondfacepropbdual}
\{\dual{y}, \bual{x}\} \subset \cface{\dual{x}} \cap \face{\bual{y}}.
\end{equation}
If $z \in \cface{\dual{x}} \cap \face{\bual{y}}$ then either $z \in \dual{\partial K}$ and $$\dual{x} \subsetneq z \subset \dual{y}$$ which implies $z= \dual{y}$ or $z \in \bual{K}$ and $$\bual{x} \subset z \subsetneq \bual{y}$$ which implies $z = \bual{x}.$ Therefore the inclusion in (\ref{diamondfacepropbdual}) is in fact an equality and this shows Axiom \ref{cccdiamond} as well as that $K'$ is a cc. 

We then prove that $K'$ is cell-connected.  $K'$ is graph-based since the non-singularity of $K$ implies that the edges of $K'$ are either composed of two maximal cells of $K$ or of one maximal cell $y$ of $\partial K$ and the unique maximal cell of $K$ containing $y.$ The cc $K'$ is cell-connected by the following argument. If $x' \in \bual{K}$ is a cell of the form $x' = \cdual{x}{K}$ where $x \in K \setminus \partial K$ then $x'$ is connected since $K$ is non-pinching. If $x' = \bual{x}$ where $x \in \partial K$ then $x'$ is connected if and only if there is and edge $e'$ of $K'$ with a vertex in $\cdual{x}{K}$ and in $\cdual{x}{\partial K},$ as both of these sets are connected since $K$ is non-pinching. Let $y$ be a maximal cell of $\partial K$ containing $x.$ The cell $y$ is then  included in a maximal cell $z$ of $K,$ we can then take $e' = \{y,z\}.$ Cells in $\dual{\partial K}$ are connected since $\partial K$ is non-pinching and this shows that $K'$ is cell-connected. 

The cc $K'$ is connected by the following argument. If $v_1',v_2'$ are vertices of $K'$ then by purity of $K$ these vertices are included in maximal cells $z_1'$ and $ z_2'$ of $\bual{K}.$ These maximal cells are duals to vertices $v_1,v_2$ of $K$ which are connected by a path in $\kg.$ This path provides a finite sequence of maximal cells of $\bual{K}$ starting with $z_1'$ and ending with $z_2'$ such that each element in the sequence shares a sub-maximal cell with the next element in the sequence and since the cells in $\bual{K}$ are connected this ensures that there is a path linking $v_1'$ to $v_2'$ in $\bual{K}.$ 

The following arguments show that $K'$ is a non-pinching non-singular cell complex of rank $R.$ In order to show that $K'$ is non-pinching, we first note that $\partial K' = \dual{\partial K}$ therefore there is no $\partial K'$-pinch in $K'$ since $\partial K$ is cell-connected. Let us assume by contradiction that there is a $K'$-pinch in $K',$ i.e. a cell $x' \in K'$ such that $\dual{K'}^{(1)} \cap \dual{x'}$ is not connected. If we take $x \in K $ such that $ \bual{x} = x',$ we can see that a path between two vertices $v,w$ in $x$ corresponds to a path between two vertices $\dual{\bual{v}}$ and $\dual{\bual{w}}$ in $\dual{K'}^{(1)} \cap \dual{x'}.$ Therefore if $\dual{K'}^{(1)} \cap \dual{x'}$ is not connected then so is $\kg \cap x,$ which contradicts the assumption that $K$ is cell-connected.  Every cell in $K'$ is included in a $R$-cell since this is clear for the cells in $\bual{K}$ and every $(R-1)$-cell $\dual{v} \in \dual{\partial K}$ is contained in $\bual{v} \in (K')^{[R]},$ which is then pure of dimension $R.$ Finally $K'$ is also non-singular since $K$ is graph-based.

We then turn to the point \ref{thmdualcobsptb}. Proving this statement amounts to verify that all the properties shown for the point \ref{thmdualcobspta} are still valid when restricted to cells not included in the sets $\bual{J_0} \sqcup \dual{J_0}.$ Axioms \ref{cccrank}, \ref{cccenough} and \ref{cccdiamond} are simply a consequence of the cc structure of $K'$ and the fact that whenever $x',y' \in K'_0$ are two cells such that $x' \subset y'$ and $z' \in K'$ satisfies $x' \subset z' \subset y'$ then $ z' \in K'_0.$ Axiom \ref{cccinter} is also verified for $K'_0$ since if $x',y' \in K'_0$ then $x' \cap y'$ contains no maximal cell of $J_0$ and therefore cannot be included in $\bual{J_0} \sqcup \dual{J_0}.$ We have that $K'_0$ is clearly graph-based since so is $K'$ and the set of edges of $K'_0$ is a subset of the set of edges of $K'.$ 
Also $K_0'$ is cell-connected since so is $K'.$ The purity of $K'_0$ is verified if we can show that for every cell $x'$ of $K'_0$ contained in a maximal cell of $K'$ of the form $\bual{v}$ where $v \in J_0^{[0]},$ we have that $x'$ is also contained in a maximal cell of the form $\bual{w},$ where $w \in \kz \setminus (J_0^{[0]} \sqcup J^{[0]}).$ This is ensured by the non-degeneracy of $(K,J_0 \sqcup J)$ together with the connectedness of the cells in $K_0'.$
The cc $K'_0$ is also non-singular since the set of $(R-1)$-cells of $K_0'$ is a subset of the set of $(R-1)$-cells of $K'$ and their set of cofaces in $K_0'$ is identical to their set of cofaces in $K'.$

Now comes a key point of this proof where the locality of $(K,J)$ enters crucially: showing that $K'_0$ is non-pinching. First we can see that there is no $K'_0$-pinch in $K_0'.$ Indeed $K'$ has no $K'$-pinch and $\dg{K'_0} \cap \bual{x'} = \dg{K'} \cap \bual{x'}$ for all $x' \in K'_0$ since if two maximal cells of $K'_0$ contain $x'$ and share a sub-maximal cell $y' \in K'$ then $y'$ is also in $K'_0.$ It is therefore sufficient to show that $\partial K'_0$ contains no $\partial K'_0$-pinch. For this we first prove the following relation:

\begin{equation} \label{claim6thmdual}
\partial K_0' = \dual{\partial K \setminus (J \sqcup J_0)} \sqcup \dual{K_J} \sqcup \dual{K_{J_0}}.
\end{equation}

In order to prove (\ref{claim6thmdual}) we first observe that $\dual{\partial K \setminus (J \sqcup J_0)} \leq \partial K_0'$ as a consequence of the following: every vertex $v \in (\partial K \setminus (J \sqcup J_0)^{[0]}$ satisfies that $y' := \cdual{v}{\partial K} \in (\partial K \setminus (J \sqcup J_0))^{[R-1]}$ and $\bual{v}$ is the only $R$-cell of $K'$ containing $y'.$ Since $(K, J \sqcup J_0)$ is non-singular, every $(R-1)$-cell of $K_0'$ not in $\dual{ \partial K \setminus (J \sqcup J_0)}$ is of the form $y' = \cdual{e}{K}$ for some $e \in \ko$ such that $\abs{e \cap (J \sqcup J_0)^{[0]}} \leq 1.$ Hence $\abs{\cface{y'}} = 1$ if and only if $e \in K_{(J \sqcup J_0)}$ and this proves (\ref{claim6thmdual}).

We can now turn back to showing that there is no $\partial K'_0$-pinch on $\partial K'_0.$ Since we assume that $K' \in \nsc,$ the relation (\ref{claim6thmdual}) implies that it is sufficient to prove that $\dual{K_{J_0}}$ is non-pinching. By Lemma \ref{lemmidsec}, we have that the map $ (M_{J_0}^K) \ni \E_{J_0}^x\mapsto x \in K_{J_0}$ is an isomorphism of posets. By Proposition \ref{propdual}, $(\dual{M_{J_0}^K}, \rk_{\dual{M_{J_0}^K}})$ is a closed cc and for all $x \in K_{J_0}$ we have
$$\rk_{\dual{M_{J_0}^K}}( \dual{\E_{J_0}^x} ) = R - 1 - \rk_{M_{J_0}^K}(\E_{J_0}^x) = R - \rk_K(x) = \rk_{\dual{K}}(\dual{x}) =: \rk_{\dual{K_{J_0}}}(\dual{x}).$$ 
Hence we have that $(\dual{K_{J_0}}, \rk_{\dual{K_{J_0}}})$ is a cc and $\dual{K_{J_0}} \cong \dual{M_{J_0}^K}.$ Therefore $\dual{K_{J_0}}$ is non-pinching since $M_{J_0}^K$ is cell-connected by locality of $(K - J).$
We can then conclude that the point \ref{thmdualcobpt1} is verified for all possible $J.$

The point \ref{thmdualcobpt2} is shown using (\ref{claim6thmdual}) (with $J_0 = \emptyset$) and observing that $J':= \dual{ \partial K \setminus J}$ is an element of $\mlc^{R-1}$ since it is the dual of a union of local non-pinching closed cell complexes. Finally $\dual{(K -J)}$ is exactly collared since, using the notation $L = \partial K \setminus J,$ the map 
$$ \{ \{y,z\} ~|~ L^{[R-1]} \ni y \subset z \in \kR\} \ni \{y,z\} \longmapsto y \in L^{[R-1]}$$
is a bijection and we have the isomorphism $\cong$ in the following derivation: 
\begin{align*}
M_{\dual{L}}^{\bual{K \setminus J} \sqcup \dual{L}} &= \{ \E_{\dual{L}}^{\bual{x}} ~|~ x \in L \}\\
 &= \{ \{ \dual{e} \in \bual{K}^{[1]}_{\dual{L}} ~|~ \dual{e} \subset \bual{x} \} ~|~ x \in L \} \\
 &= \{ \{ \{y,z\} ~|~ y \in L^{[R-1]}, ~ z \in \kR_L ~:~ x \subset y \subset z \} ~|~ x \in L \} \\
 &\cong \{ \cdual{x}{L} ~|~ x \in L \} = \dual{L}.
\end{align*}
\end{proof}

\newpage
\vspace{3cm}

\section{Composition of cobordisms and causal structure} \label{seccausalstruct}

\vspace{2cm}

The main goal of this chapter is to define a category whose morphisms are defined from the notion of cobordism introduced in Chapter \ref{secdualcob} and whose objects are associated with boundary components of cobordisms. 
%In such a category, the domain, or source object, of a cobordism would correspond to the object associated to its ingoing component(s). Similarly the image, or target object, would correspond to the object associated to its outgoing boundary component.  
Defining a category in particular requires to define the composition of two cobordisms whenever these cobordisms correspond to two morphisms satisfying that the target object of the first is the source object of the second. The definition of an object must then include enough information about the structure of any cobordism having this object as its image or as its source in order to guarantee that it is possible to "compose" any two such cobordisms so that the union of the cells of both cobordism can be given the structure of a cobordism. 

Defining how to compose cobordisms will therefore require a precise description of the structure of cobordism around their boundary components. The fundamental tools we introduce in order to describe such structure are two types of cc-homomorphisms called reductions and collapses, introduced in Section \ref{ssecredncol}. A reduction between two cc allows to consider the first cc as a subdivision of the second and a collapse corresponds to the notion dual to that of a reduction when defined on closed cc. Proposition \ref{propprecpartialorder}, the main result of this first section, shows that reductions and collapses induce partial orders on the set of pure cc.

Section \ref{ssectrans} introduces a number of technical assumptions defined on relative cc constituting the notion of "uniformity". These assumptions allow to introduce a cc called transition $J(K)$ associated to a uniform relative cc $(K,J).$ The main result of this second section implies that using the uniformity assumption one can relate both a boundary component $J$ of a cc $K$ in $\nsc$ and the associated midsection $M_J^K$ to the transition $J(K)$ using respectively a reduction $\sphi_J^K$ and a collapse $\cphi_J^K.$

The maps $\sphi_J^K$ and $\cphi_J^K$ satisfy the additional property of being "compatible" poset homomorphisms, a notion we define in Section \ref{sseccompnorthhom} together with the notion of orthogonal poset homomorphisms. Section \ref{sseccompnorthhom} also includes other notions and technical results introduced in preparation for the next section.

%If we consider two cobordisms $K, H \in \nsc$ such that $$L = K \cap H = \partial K = \partial H \quad \text{and} \quad K \cup H \in \mlc,$$ we can then consider the dual $\dual{K \cup H}$

In Section \ref{ssecslice}, we prove a number of important results related to the notion of slice, a type of cobordism having in particular all its vertices on the boundary. These results include the "Correspondence Theorem" \ref{thmcorresp}, which establishes the existence of a bijection between slices and sequences of reductions and collapses called "slice sequences".

Section \ref{sseccompcob} starts by introducing the notion dual to that of slice sequences called connecting sequences. Using the Correspondence Theorem we are then able to define the union of to elements in $\nsc$ intersecting on one of their boundary components in such a way that it defines a connecting sequence and this can be directly translated into a composition operation on cobordisms.

Finally, in Section \ref{sseccatcausalcob} we introduce braket notation to define a category whose morphisms correspond to a specific type of cobordisms called causal cobordisms. 
%We use a formulation in terms of sequences of states, where each state corresponds to either a slice or a connecting sequence. 
We then use the previous results of this chapter to explain how this category can be interpreted as having connecting sequences and slice sequences as its set of objects and morphisms corresponding to cobordisms constructed as a succession of slices. We finish by showing that the duality map defines a duality on this category and that it admits two other idempotent functors, one of which is analogous to a "time reversal" transformation. Since the composition of these three idempotent functors also gives the identity, we named these functors "time reversal", "parity transformation" and "charge conjugation", the latter being the functor associated to the duality map.

%Moreover, an object which is the "interface" between two cobordisms, having this object as a source and as a target, becomes the midsection of a "slice" lying in between the dual of both cobordisms. Hence the conditions that establish when two cobordims can be composed are in fact dual to the conditions that establish when a sequence of three closed cc can be the boundary and midsection of a slice.

% we will also restrict our focus to cobordisms having a causal structure as it will single out a class of cobordism have particularly interesting symmetries.
 
%Causal cobordisms will turn out to be a generalization of the notion of causal triangulations introduced by Ambjørn and Loll in \cite{al98} (see \cite{agjl12} for a review on Causal Dynamical Triangulations) and used e.g. by Durhuus and Jonsson in \cite{dj15}.

%In this chapter we will refer to \textit{a collection of mutually disjoint sets} as a collection $\{A_i\}_{i \in I}$ satisfying that $A_i \cap A_j =\emptyset$ whenever $i \neq j.$

\subsection{Reductions and collapses} \label{ssecredncol}

In this section, we introduce two types of surjective cc-homomorphisms called reductions and collapses.
A reduction can be understood as a map $\sphi : J \dans L$ between two cc obeying certain conditions that allow to interpret $J$ as a subdivision of the cc $L.$ The barycentric subdivision $\bdiv{L}$ defined in Chapter \ref{secgennot} will turn out to be a particular example of a subdivision. A collapse can be understood as a map $\cphi: J \dans L$ such that if $J, L \in \mlc$ then the dual map $\dual{\cphi}: \dual{J} \dans \dual{L},$ defined by $\dual{\cphi}(\dual{x}) = \dual{\cphi(x)}$ for $x \in J,$ is a reduction. 
Reductions and collapses are nevertheless both defined using five conditions allowing these maps to be defined on arbitrary cc.

These conditions may at first seem arbitrary, we therefore illustrated in Figure \ref{figcondred} some motivations for introducing four of these conditions for the case of a reduction using an example in dimension 1.
The second condition \ref{cond2red}, which is not represented on this picture, can be interpreted as follows. If all the co-faces of a given cell $x$ are mapped to the same cell $y$ then the image of $x$ has to be $y.$ This is a desirable property for a notion of subdivision that may be a consequence of the other four conditions in general, but we have not been able to prove this except for the case of cc in $\mlc$ as shown in Lemma \ref{lemcond2redcol}.
A number of basic results related to these maps are included in this section, among which Lemma \ref{lemdualredcol} shows that a reduction is indeed dual to a collapse for the case of closed cc. The more involved proof of Proposition \ref{propprecpartialorder} shows that reductions and collapses
have the important property of inducing partial orders on the set of pure cc. 

In the following sections the domain $J$ and image $L$ of a reduction and a collapse will generally consist of elements in $\mlc.$ A direct consequence of one of their five defining conditions is that $\Rk(J) = \Rk(L)$ and by Lemma \ref{lemcond2redcol} only four of these conditions are needed to characterize these maps for the case of closed cc. We close Section \ref{ssecredncol} by discussing the notions of relative reduction and relative collapse as maps between relative cc, as introduced in Section \ref{ssecrelcc}. The notion of relative reduction will be used to consider subdivisions of a given boundary component $J$ of a cc $K \in \nsc$ as a map acting trivially on the cells in $K \setminus J.$ 

As noted later in Remark \ref{remcond3redncol}, condition \ref{cond3red} (resp. \ref{cond3col}) in particular implies that if a map $\phi: J \dans K$ is a reduction (resp. a collapse) then it is in particular a cc-homomorphisms, i.e.  $\phi$ satisfies that for all $y \in \phi(J),$ there exists $x \in J^{[\rk_K(y)]}$ such that $\phi(x) = y.$ 

For the purpose of this section, let us recall the following notations from Section \ref{ssecbdiv} for the set of cells respectively above and below a cell $x$ in a poset $K:$
$$ A(x) := \{ y \in K ~|~ x \subset y \}, \quad B(x) := \{ y \in K ~|~ y \subset x \}. $$
It will also be convenient to denote by $S^{[r]}$ the set of $r$-cells in a subset $S$ of a cc of rank $R \geq r.$

Let us start by introducing the notion of reduction.

\begin{defi}[Reduction and subdivision]\label{defreduction}
Let $J,K,$ be cc. We say that a map $\sphi: J \dans K$ is a \textit{reduction} if $\sphi$ is a surjective poset homomorphism satisfying the following five conditions. 
\begin{enumerate} [label=(\textbf{r\arabic*}),topsep=2pt, parsep=2pt, itemsep=1pt]
\item \label{cond1red} $\abs{\sphi^{-1}(v)} = 1,$ for all $v \in \kz$ (i.e. $\sphi|^{\kz}$ is injective).
\item \label{cond2red} $\sphi(\bigwedge \cface{x}) = \bigwedge \sphi( \cface{x})$ for all $x \in J;$
\item \label{cond3red} $A( \sphi(x) )^{[r]} \subset \sphi(A(x)^{[r]}),$ for all $x \in J, ~r \geq 0;$
\item \label{cond4red} if $x \in J$ satisfies $\rk_J(x)= \rk_K(\sphi(x)) - 1$ then $\abs{ \cface{x} \cap \sphi^{-1}(\sphi(x))} = 2;$
\item \label{cond5red} if $x \in J$ satisfies $\rk_J(x)= \rk_K(\sphi(x))$ then 
$$\abs{ \cface{x} \cap \sphi^{-1}(y)} = 1, \quad \forall y \in \cface{\sphi(x)}.$$
\end{enumerate}
We shall use the notation $J \red_\sphi K$ if $\sphi: J \dans K$ is a reduction and call \textit{$J$ a subdivision of $K$} or equivalently say that \textit{$K$ is a reduction of $J,$} written $J \red K,$ if there exists a reduction $\sphi : J \dans K.$ A cell $y \in K$ will be called \textit{subdivided by} $\sphi$ if $\abs{\sphi^{-1}(y)} > 1.$ For example, a subdivision does not subdivide vertices.
\end{defi}

\begin{figure}[!h]
\centering
\includegraphics[scale=0.72]{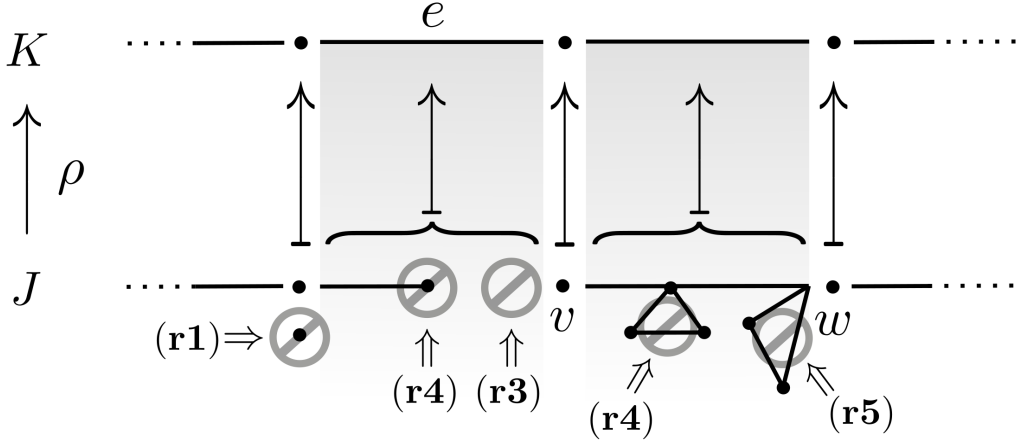}
\caption{\label{figcondred} In this picture, we illustrate what undesirable features the conditions \ref{cond1red}, \ref{cond3red}, \ref{cond4red} and \ref{cond5red} of a reduction are preventing, in the case of a reduction $\sphi: J \dans K$ between two 1-dimensional cell complexes. The vertical arrows indicate which cells in $J$ are mapped to each cell in $K,$ for example all the cells contained in the grey areas are mapped to an edge in $K.$ More explicitly, condition \ref{cond1red} prevents more than two vertices to be mapped to the same vertex. Condition \ref{cond3red} ensures that the vertex $v \in J$ is contained in an edge in the pre-image of the edge $e \in K$ since $\sphi(v) \in e.$ Condition \ref{cond4red} does not allow branchings or boundaries in the pre-image of a cell and condition \ref{cond5red} with $x = w \in J^{[0]}$ ensures in particular that the type of branching on $w$ shown on the figure does not occur or that a cell is not subdivided multiple times.}
\end{figure}

Since the definition of reductions involves the greatest lower bounds $\bigwedge \cface{x}$ and $\bigwedge \sphi( \cface{x})$ it is worth noting that such an element always exists by the following remark.

\begin{rema} \label{remaxred} %used 1t
As noted in Remark \ref{remcface} $\bigwedge \cface{x}$ always exists. As concerns $\bigwedge \sphi( \cface{x})$  we first note that since $\cface{ \sphi(x)} \subset \sphi( \cface{x} )$ by \ref{cond3red}, a lower bound $b$ of $\sphi(\cface{x})$ is also a lower bound of $\cface{\sphi(x)}.$ This implies that $b \subset \bigwedge \cface{\sphi(x)},$ which exists by Remark \ref{remcface}. If $\cface{x} \in \{\emptyset, \{y\}\}$ for some $y \in J$ then $\bigwedge \sphi(\cface{x})$ is equal to $\emptyset$ or $y,$ respectively, so we can assume $\abs{ \cface{x}} \geq 2.$ If $\sphi(x) = \bigwedge \cface{ \sphi(x)},$ which is a lower bound of $\sphi(\cface{x})$ since $\cface{\sphi(x)}\subset \sphi(\cface{x}),$ then every lower bound of $\sphi( \cface{x} )$ is contained in $\sphi(x),$ hence $\bigwedge \sphi(\cface{x}) = \sphi(x).$ If we assume that $\sphi(x) \subsetneq \bigwedge \cface{ \sphi(x)}$ then $\cface{ \sphi(x)} = \{ \sphi(y)\}$ for some $y \in K$ and therefore $\bigwedge \cface{ \sphi(x)} = \sphi(y).$ Using again that $\cface{ \sphi(x)} \subset \sphi(\cface{x}),$ we have that $y \in \cface{x}.$ If $y' \in \cface{x} \setminus \{y\}$ then we have
\begin{equation} \label{equreaxred}
\sphi(x) \subset \sphi(y') \notin \cface{\sphi(x)}.
\end{equation}
There are then two possibilities. If there exists $y' \in \cface{x}$ such that $\sphi(x) = \sphi(y')$ then $\bigwedge \sphi(\cface{x}) = \sphi(x)$ and, if not, then (\ref{equreaxred}) implies that $\sphi(y) \subset \sphi(y')$ for all $y' \in \cface{x}$ and therefore $\bigwedge \sphi( \cface{x} ) = \sphi(y),$ which concludes our argument, showing that $\bigwedge \sphi(\cface{x})$ always exists.
\end{rema}

\begin{comment}
\begin{rema}
If $J$ and $K$ are non-singular, the conditions of a reduction imply that $K_\sphi := (K \cap y) \sqcup \sphi^{-1}(y)$ is a closed cc  for all $y \in K,$ where if $x \in \sphi^{-1}(y), ~ w \in K \cap y$ then  $x \subset w$ iff $\exists x' \subset w$ such that $\sphi(x') = x.$   \compl
\end{rema}
\end{comment}

\begin{lem}
The barycentric subdivision of a cc $K$ is a subdivision of $K,$ i.e. $\bdiv{K} \red K.$
\begin{proof}
Define $\sphi : \bdiv{K} \dans K$ by $\sphi( \{x_0 \subsetneq \dots \subsetneq x_r\} ) := x_r,$ clearly a surjective poset homomorphism and injective on $\sphi^{-1}(\kz) = \{ \{v\} ~|~ v \in \kz\}.$
Condition \ref{cond2red} follows from the definition of $\sphi,$ which, for the case $\cface{x_r} \neq \emptyset,$ implies that 
$$\sphi(\cface{\{ x_0 \subsetneq \dots \subsetneq x_r\}}) = \sphi(\{\{ x_0 \subsetneq \dots \subsetneq x_r \subsetneq y\} ~|~ y \in \cface{x_r}\}) = \cface{x_r},$$ 
hence we get
$$ \bigwedge \sphi( \cface{\{ x_0 \subsetneq \dots \subsetneq x_r\}} ) = \bigwedge \cface{x_r} = \cface{\sphi(\{x_0 \subsetneq \dots \subsetneq x_r\subset \bigwedge \cface{x_r}\})}.$$ The case $\cface{x_r} = \emptyset$ holds trivially.

Let $\{x_0 \subsetneq \dots \subsetneq x_r\} \in \bdiv{K}$ and $y \in K$ such that $\sphi(\{x_0 \subsetneq \dots \subsetneq x_r\}) \subset y$ then by \ref{cccenough} there exists a cell of $\bdiv{K}$ of the form $ \sigma = \{x_0 \subsetneq \dots \subsetneq x_r \subset \dots \subset y \}$ such that $\rk_{\bdiv{K}}(\sigma) = \rk_K(y)$  and $\sphi(\sigma) = y$ and therefore $\sphi$ also satisfies condition \ref{cond3red} of a reduction.
For condition \ref{cond4red}, if $r := \rk_{\bdiv{K}}(\{x_0 \subsetneq \dots \subsetneq x_r\}) = \rk_K(x_r) + 1$ then there exists a unique $0 \leq i \leq r$ such that $\rk_K(x_i) = \rk_K(x_{i+1})-2$ and therefore
\begin{align*}
& \abs{\cface{\{x_0 \subsetneq \dots \subsetneq x_r\}} \cap \sphi^{-1}(x_r) } \\
&= \abs{\{ \{x_0 \subsetneq \dots \subsetneq x_i \subsetneq y \subsetneq x_{i+1} \subsetneq \dots \subsetneq x_r \} ~|~ y \in \cface{x_i} \cap \face{x_{i+1}} \}}\\
&= \abs{\cface{x_i} \cap \face{x_{i+1}}} = 2.
\end{align*}

Finally, for condition \ref{cond5red}, if $ r:=  \rk_{\bdiv{K}}( \{x_0 \subsetneq \dots \subsetneq x_r \}) = \rk_K(x_r)$ then $\rk(x_i) = i $ for all $0 \leq i \leq r.$ Hence for all $x_{r+1} \in \cface{x_r},$ we have
\begin{align*}
\abs{\cface{\{x_0 \subsetneq \dots \subsetneq x_r\}} \cap \sphi^{-1}(x_{r+1}) }
= \abs{\{ \{x_0 \subsetneq \dots \subsetneq x_{r+1} \} \} } = 1.
\end{align*}

\end{proof}
\end{lem}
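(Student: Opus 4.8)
The plan is to write down the natural candidate and check the five conditions of Definition~\ref{defreduction} in turn. I would take $\sphi : \bdiv{K} \dans K$ to be the map sending a totally ordered subset to its largest element,
\[ \sphi(\{x_0 \subsetneq \dots \subsetneq x_r\}) := x_r . \]
This is surjective because $x = \sphi(\{x\})$ for every $x \in K$, and it is a poset homomorphism: if $\sigma \subset \tau$ are chains in $K$ then $\max \sigma$ belongs to $\tau$, hence is contained in $\max \tau$. Condition~\ref{cond1red} is then immediate, since a vertex $v \in \kz$ has no cell of $K$ strictly below it, so the only chain with top $v$ is $\{v\}$ and $\sphi^{-1}(v) = \{\{v\}\}$.

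For condition~\ref{cond3red} I would invoke Axiom~\ref{cccenough} directly: given a cell $\sigma = \{x_0 \subsetneq \dots \subsetneq x_r\}$ of $\bdiv{K}$ and $y \in K$ with $\sphi(\sigma) = x_r \subset y$, repeated insertion of cells — permitted by \ref{cccenough} between $x_r$ and $y$, between consecutive $x_i$, and below $x_0$ — yields a saturated chain running from a vertex through all the $x_i$ up to $y$; this chain has length $\rk_K(y)+1$, so it is a cell of $\bdiv{K}$ of rank $\rk_K(y)$ that contains $\sigma$ and is sent to $y$. For condition~\ref{cond2red} I would describe $\sphi(\cface{\sigma})$ explicitly: a coface of $\sigma$ in $\bdiv{K}$ is obtained by inserting one extra cell into the chain, and its image under $\sphi$ equals $x_r$ unless that extra cell is appended above $x_r$, in which case the image is that cell. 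In particular $\cface{x_r} \subset \sphi(\cface{\sigma})$, so by the reasoning of Remark~\ref{remaxred} one has $\bigwedge \sphi(\cface{\sigma}) \subset \bigwedge \cface{x_r}$, while $x_r$ is obviously a lower bound of $\sphi(\cface{\sigma})$; feeding this into Remark~\ref{remcface}, together with the diamond property~\ref{cccdiamond} to exclude the degenerate situation that $x_r$ has a single, non-maximal coface, gives $\bigwedge \sphi(\cface{\sigma}) = \sphi(\bigwedge \cface{\sigma})$ once the cases $\cface{\sigma} = \emptyset$ and $\abs{\cface{\sigma}} = 1$ are disposed of separately.

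For conditions~\ref{cond4red} and~\ref{cond5red} the key is that the rank hypothesis pins down the shape of the chain $\sigma = \{x_0 \subsetneq \dots \subsetneq x_r\}$. If $\rk_{\bdiv{K}}(\sigma) = \rk_K(x_r) - 1$, the $r+1$ strictly increasing ranks $\rk_K(x_0) < \dots < \rk_K(x_r) = r+1$ leave room for exactly one jump of size $2$ (with $x_0$ a vertex); the diamond property~\ref{cccdiamond} supplies exactly two cells that can be inserted at that jump, and inserting a cell anywhere else changes the top of the chain, so $\abs{\cface{\sigma} \cap \sphi^{-1}(x_r)} = 2$. If instead $\rk_{\bdiv{K}}(\sigma) = \rk_K(x_r)$, then necessarily $\rk_K(x_i) = i$ for every $i$, the chain is saturated, and for each $y \in \cface{x_r}$ the unique coface of $\sigma$ with top $y$ is $\sigma \cup \{y\}$, whence $\abs{\cface{\sigma} \cap \sphi^{-1}(y)} = 1$.

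I expect the most delicate step to be condition~\ref{cond2red}, because of the bookkeeping of greatest lower bounds in $\bdiv{K}$ versus in $K$ (cases according to whether $\cface{\sigma}$ is empty, a singleton, or larger, and whether $x_r$ is maximal in $K$). A secondary subtlety lies in \ref{cond4red}: one must also rule out the bottom cell of $\sigma$ having rank $1$ rather than $0$, which is harmless when $K$ is graph-based, since that cell then has exactly two vertices and again contributes exactly two cofaces.
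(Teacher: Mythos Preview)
Your proposal is correct and follows the same approach as the paper: define $\sphi$ as the ``take the maximal element'' map and verify the five conditions of Definition~\ref{defreduction} directly, using Axiom~\ref{cccenough} for \ref{cond3red} and the diamond property for \ref{cond4red}. Your treatment of \ref{cond4red} is in fact more careful than the paper's: you notice the case where the bottom cell $x_0$ of the chain has rank~$1$ rather than~$0$ (so the ``missing rank'' is~$0$ and there is no index $i$ with $\rk_K(x_{i+1}) - \rk_K(x_i) = 2$), which the paper's argument tacitly skips; as you observe, this case gives exactly two cofaces precisely when $K$ is graph-based.
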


Let us now turn to the notion dual to that of a reduction for the case of closed cc. Since reductions are defined on general cc we decided to formulate the definition of collapse also using five conditions and to prove in Lemma \ref{lemdualredcol} that these are indeed dual notions when  defined on elements in $\mlc.$ 
%The conditions for collapses uses the dual formulation of the conditions for reduction, since we have that $x = \bigvee \face{x}$ for all cells $x$ in a cc, as noted in Remark \ref{remdefccc}. As illustrated on the Figure \ref{figcondred}, the condition \ref{cond4red} of a reduction can be interpreted as imposing that the subdivision of a $r$-cell do not contain any $r$-dimensional branching. 

Thus, in analogy with the interpretation of \ref{cond4red} in Figure \ref{figcondred}, the dual condition \ref{cond4col} of a collapse states that if a $r$-cell collapses into a cell of rank $r-1$ then it must be the result of an identification of two of its faces.

\begin{defi}[Collapse and expansion] \label{defcollapse}
Let $J,K,$ be cc. We say that $\cphi: J \dans K$ is a \textit{collapse} if $\cphi$ is a surjective poset homomorphism and 
\begin{enumerate} [label=(\textbf{c\arabic*}),topsep=2pt, parsep=2pt, itemsep=1pt]
\item \label{cond1col}$\abs{\cphi^{-1}(z)} = 1,$ for all maximal cell $z \in K;$
\item \label{cond2col}$\cphi(x) = \bigvee \cphi(\face{x})$ for all $x \in J$ such that $\rk_J(x) \geq 1;$
\item \label{cond3col}$B(\cphi(x))^{[r]} \subset \cphi(B(x)^{[r]})$ for all $x \in J, ~r \geq 0;$
\item \label{cond4col}if $x \in J$ satisfies $\rk_J(x) = \rk_K(\cphi(x)) + 1$ then $\abs{ \face{x} \cap \cphi^{-1}(\cphi(x))} = 2;$
\item \label{cond5col}if $x \in J$ satisfies $\rk_J(x)= \rk_K(\cphi(x))$ then 
$$\abs{ \face{x} \cap \cphi^{-1}(y)} = 1, \quad \forall y \in \face{\cphi(x)}.$$
\end{enumerate}
We will use the notation $J \col_\cphi K$ if $\cphi: J \dans K$ is a collapse and say that \textit{ $J$ is an expansion of $K$} or equivalently that \textit{$K$ is a collapse of $J,$} written $J \col K,$ if there exists a collapse $ \cphi : J \dans K.$
\end{defi}

%Let us make the two following observations concerning the fourth and fifth conditions of both reductions and collapses. For condition \ref{cond4red}, we have that if $\rk(x) = \rk(\sphi(x)) - 1 \leq \Rk(K) - 1,$ then $\abs{ \cface{x} } \geq 2,$ hence the condition only states that there is no more that two elements in  $\cface{x} \cap \sphi^{-1}(\sphi(x)).$Similarly, for condition \ref{cond4col} we have that if $\rk(x) = \rk(\cphi(x)) + 1 \geq 0,$ then $\abs{\face{x} } \geq 2.$The same remark applies to condition \ref{cond5red}, since if $\cface{\sphi(x)} \neq \emptyset,$ then $\cface{x} \neq \emptyset,$ and similarly for condition \ref{cond5col}.
%if $\face{\cphi(x)} \neq \emptyset,$ then $\face{x} \neq \emptyset.$

Since the definition of collapse involves a least upper bound $\bigvee \cphi(\face{x}),$ we will also justify that condition \ref{cond2col} makes sense.

\begin{rema} \label{remaxcol} %used in prop \ref{propprecpartialorder}
To see that if $\rk(x)_J \geq 1$ then $\bigvee \cphi(\face{x})$ exists in general, we first note that if $\cface{\cphi(x)} = \emptyset$ then $\cphi(x)$ is a vertex and therefore $ \cphi(y) = \cphi(x)$ for all $y \in \face{x}$ implying that $\bigvee \cphi(\face{x}) = \cphi(x).$ On the other hand if $\face{\cphi(x)} \neq \emptyset$ then $\abs{\face{\cphi(x)}} \geq 2$ and $\bigvee \face{\cphi(x)} = \cphi(x).$ Moreover, the fact that $ \face{ \cphi(x)} \subset \cphi(\face{x})$ by condition \ref{cond3col} implies that $\bigvee \cphi(\face{x}) \subset b$ for all upper bounds $b$ of $\cphi(\face{x}).$ Hence we have that $\cphi(x)$ is an upper bound for $\cphi(\face{x})$ which is contained in every upper bound of $\cphi(\face{x}),$ in other words, $\bigvee \cphi(\face{x}) = \cphi(x).$ 
\end{rema}

Let us point out that a cc $J$ can both be a reduction and a collapse of a given cc $K,$ as illustrated in Figure \ref{figexredncol}.

%The next remark points at the inequalities $\rk_K(\sphi(x)) \geq \rk_J(x)$ and $\rk_J(x) \leq r = \rk_K(\sphi(x))$ which follow directly from the third condition of reductions and collapses, as well as the fact that these maps define cc-homomorphisms. We will therefore often refer to this remark to refer to this statements.

The next remark includes several observations to which we will refer several times in the next proofs. 

\begin{rema}\label{remcond3redncol} %used many times
A consequence of condition \ref{cond3red} is that $\rk_K(\sphi(x)) \geq \rk_J(x),$ for a reduction \mbox{$\sphi: J \dans K$} and \ref{cond3col} implies that $\rk_K(\cphi(x)) \leq \rk_J(x)$ for a collapse $\cphi : J \dans K.$ This can be seen for example for the case of a reduction $\sphi$ by considering the instance $r = \rk_K(\sphi(x))$ in condition \ref{cond3red} according to which
$$ \{\sphi(x)\} = A( \sphi(x) )^{[r]} \subset \sphi(A(x)^{[r]}).$$
This implies $A(x)^{[r]} \neq \emptyset$ and therefore $\rk_J(x) \leq r = \rk_K(\sphi(x))$ for all $x \in J.$ The case of a collapse is similar. This argument also shows that conditions \ref{cond3red} and \ref{cond3col} in particular imply that reductions and collapses are cc-homomorphisms.
\end{rema}

\begin{figure}
\centering
\includegraphics[scale=0.4]{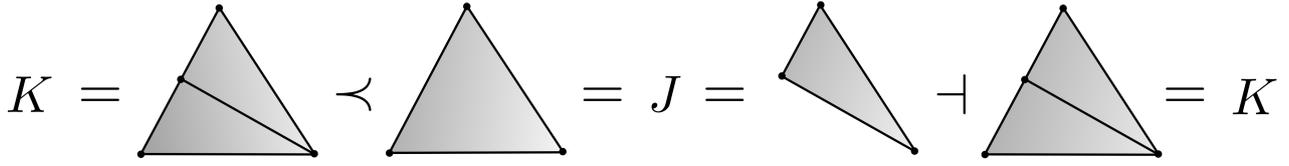}
\caption{\label{figexredncol} Example of cell complexes $J$ and $K$ such that $J \red K$ and $J \col K.$}
\end{figure}

%The next remark is also a fundamental observation about reduction valid for any cc in the case of reduction and for pure cc in the case of collapse: the existence of a reduction or a collapse between two cc implies that these cc have the same rank.

\begin{rema} \label{remrksredncol} %used many times, IMPORTANT
If $J \red K$ or $J \col K$ and $J$ is pure then $\Rk(J) = \Rk(K).$ Indeed, for the case $J \red K,$ $\sphi = \sphi_J^K$ is in particular a cc-homomorphism hence there exists a $\Rk(K)$-cell in $\sphi^{-1}(z),$ for all $z \in K^{[\Rk(K)]}$ and therefore $\Rk(J) \geq \Rk(K).$ Remark \ref{remcond3redncol} also implies $\rk_J(x) \leq \rk_K(\sphi(x))$ for all $x \in J$ and we get $\Rk(J) = \Rk(K).$ 

For the case $J \col K$ we need to assume that $J$ is pure. Figure \ref{fignonequidimcol} illustrates why this hypothesis is necessary. By Remark \ref{remcond3redncol} $\cphi = \cphi_J^K$ is a cc-homomorphism and we also directly get the inequality $\Rk(J) \geq \Rk(K).$ Since $\cphi^{-1}(z) = \{w\} \subset J^{[\rk_K(z)]}$ for any maximal cell $z \in K$ then $w$ is a maximal cell of $J$ as otherwise if $w \subsetneq z'$ then $ z = \cphi(w) \subset \cphi(z')$ and therefore $\cphi(z') = z$ by maximality of $z,$ which is a contradiction. Since $J$ is pure, $\rk_J(w) = \Rk(J),$ and the previous argument implies in particular that $\Rk(J) \leq \Rk(K)$ and therefore $\Rk(J) = \Rk(K).$
\end{rema}

\begin{figure}
\centering
\includegraphics[scale=0.42]{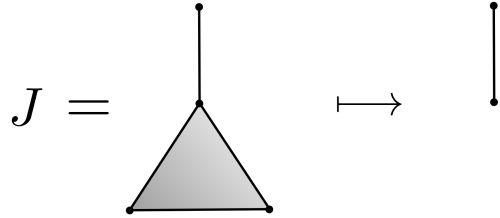}
\caption{\label{fignonequidimcol} Example of a collapse between cell complexes of different ranks. By Remark \ref{remrksredncol} this requires the domain of the collapse $J$ to be not pure. The triangle in $J$ and all the cells contained in it are mapped to the lower vertex.}
\end{figure}

We also mention the following technical result that will be used in the proof of Lemma \ref{lemcond2redcol}. It implies in particular that if there exists a reduction from a pure cc $J$ to $K \in \mlc$ then $\partial J = \emptyset.$

\begin{lem} \label{claimbdryredcol}
Let $J$ be a pure cc and let $K \in \mlc.$ If $\sphi: J \dans K$ is a reduction then
$\abs{ \cface{x} } \geq 2$ for all non-maximal cell $x$ in  $J.$
\begin{proof}
To prove this claim, we assume by contradiction that there exists $x \in J$ such that $\cface{x} =\{y\}.$ Since $\cface{\sphi(x)} \subset \sphi(\cface{x})$ by condition \ref{cond3red}, we have that $\abs{\cface{\sphi(x)}} \leq 1,$ hence $\cface{\sphi(x)} = \emptyset,$ since $K \in \mlc.$ The inclusion $\sphi(x) \subset \sphi(y)$ then implies that $\sphi(x) = \sphi(y).$ We also have that $y$ is a maximal cell, since if there exists $z \in J$ such that $x \subsetneq y \subsetneq z$ then the diamond property for $J$ implies that $\abs{\cface{x}} \geq 2$ which is a contradiction. Since both $\sphi(x) =  \sphi(y)$ and $y$ are maximal cells in respectively $K$ and $J$, we have that $\rk_J(y) = R = \rk_K(\sphi(x))$ by Remark \ref{remrksredncol} and therefore $$\rk_J(x) = \rk_J(y) - 1 = \rk_K(\sphi(x)) -1.$$ Condition \ref{cond4red} then implies that $$\abs{\cface{x} \cap \sphi^{-1}(\sphi(x))} = 2,$$ which is also a contradiction.
\end{proof}
\end{lem}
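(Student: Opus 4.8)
The plan is to argue by contradiction. Suppose $x$ is a non-maximal cell of $J$ with $\abs{\cface{x}}\neq 2$. Since $x$ is non-maximal, Axiom \ref{cccenough} forces $\cface{x}\neq\emptyset$, so the only case to exclude is $\cface{x}=\{y\}$ for a single cell $y$. First I would feed this into condition \ref{cond3red} of the reduction: it gives $\cface{\sphi(x)}\subset\sphi(\cface{x})=\{\sphi(y)\}$, so $\cface{\sphi(x)}$ has at most one element. On the other hand, in a closed cc every non-maximal cell has at least two cofaces — for a sub-maximal cell this is closedness itself, and for a cell of rank at most $R-2$ it follows by picking (via purity and Axiom \ref{cccenough}) a cell of rank two higher containing it and applying the diamond property \ref{cccdiamond}. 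Since $K\in\mlc$ is closed, we conclude $\cface{\sphi(x)}=\emptyset$, i.e. $\sphi(x)$ is a maximal cell of $K$; and as $\sphi(x)\subset\sphi(y)$ with $\sphi(x)$ maximal, this yields $\sphi(x)=\sphi(y)$.

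Next I would check that $y$ itself must be a maximal cell of $J$. If it were not, then by Axiom \ref{cccenough} there would be a cell of $J$ of rank $\rk_J(x)+2$ containing $x$, and the diamond property \ref{cccdiamond} applied to that inclusion would produce two distinct cofaces of $x$, contradicting $\cface{x}=\{y\}$. So $y$ is maximal in $J$, whence $\rk_J(y)=\Rk(J)$ by purity of $J$; similarly $\rk_K(\sphi(x))=\Rk(K)$ since $K$ is pure and $\sphi(x)$ is maximal. Remark \ref{remrksredncol} (this is exactly where purity of $J$ together with $J\red K$ is used) gives $\Rk(J)=\Rk(K)=:R$, so $\rk_J(y)=R=\rk_K(\sphi(x))$ and therefore $\rk_J(x)=\rk_J(y)-1=\rk_K(\sphi(x))-1$.

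Finally I would invoke condition \ref{cond4red}: since $\rk_J(x)=\rk_K(\sphi(x))-1$, it yields $\abs{\cface{x}\cap\sphi^{-1}(\sphi(x))}=2$, so $\cface{x}$ contains at least two cells, contradicting $\cface{x}=\{y\}$. This closes the argument. Essentially everything is a direct chaining of the reduction axioms \ref{cond3red} and \ref{cond4red} with purity via Remark \ref{remrksredncol}; the only point requiring a moment's care is the auxiliary fact that non-maximal cells of a closed cc have at least two cofaces, and I expect that (together with keeping the rank bookkeeping of Remark \ref{remrksredncol} honest about where purity of $J$ is invoked) to be the main, though routine, thing to get right.
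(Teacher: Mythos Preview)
Your proof is correct and follows essentially the same route as the paper's: contradiction via \ref{cond3red} to force $\sphi(x)$ maximal, the diamond property to force $y$ maximal, then Remark \ref{remrksredncol} and \ref{cond4red} for the contradiction. One small slip: the negation of $\abs{\cface{x}}\geq 2$ is $\abs{\cface{x}}<2$, not $\abs{\cface{x}}\neq 2$; but you immediately reduce to the single-coface case anyway, so the argument is unaffected.
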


%This lemma allows us to simply consider reduction and collapse as defined by four condition, since the second condition for both of these maps becomes redundant in the case when their image is an element of $\mlc.$

\begin{lem} \label{lemcond2redcol}
Let $J$ be a pure cc and let $K \in \mlc.$ If $\sphi: J \dans K$ is a reduction then condition \ref{cond3red} implies condition \ref{cond2red}. Similarly, if $\cphi: J \dans K$ is a collapse then condition \ref{cond3col} implies condition \ref{cond2col}.
\begin{proof}
Let us first prove the statement for reductions.
Since $K \in B$ then either $\abs{\cface{\sphi(x)}} \geq 2$ and $ \bigwedge \cface{\sphi(x)} = \sphi(x)$  or $\cface{\sphi(x)} = \emptyset.$ The same is also true for the cells in $J,$ by Lemma \ref{claimbdryredcol}.

Moreover, we have the two following properties. If $ \bigwedge S, \bigwedge \sphi(S)$ exists then since $\sphi$ is a poset homomorphism we have $\sphi(\bigwedge S) \subset \bigwedge \sphi(S)$ and if $S, T \in \pws{\kz}$ such that $S \subset T$ and $\bigwedge S, \bigwedge T$ exist and are non-empty then $\bigwedge T \subset \bigwedge S.$ Using these properties and the fact that condition \ref{cond3red} implies that $\cface{\sphi(x)} \subset \sphi(\cface{x}),$ for the case $\abs{ \cface{\sphi(x)} } \geq 2$ we have that $\abs{\cface{x}} \geq 2$ and
$$\sphi(x) = \sphi( \bigwedge \cface{x} ) \subset \bigwedge \sphi(\cface{x}) \subset \bigwedge \cface{\sphi(x)} = \sphi(x),$$
hence $\sphi(\bigwedge \cface{x}) = \sphi(x) = \bigwedge \sphi( \cface{x} ).$ If $\cface{ \sphi(x)} = \emptyset$ and $\abs{\cface{x}} \geq 2$ then $\sphi(y) = \sphi(x)$ for all $y \in \cface{x}$ by maximality of $\sphi(x)$ and therefore we can also conclude that
$$ \sphi( \bigwedge \cface{x}) = \sphi(x) = \bigwedge \sphi(\cface{x})$$
and this concludes the proof for the case of reductions.

If $\cphi : J \dans K$ is a collapse then the dual statement of Lemma \ref{claimbdryredcol} is clearly true, namely that every cell which is not a vertex has more than two faces. The rest of the proof then follows using the dual version of the arguments above.
\end{proof} 
\end{lem}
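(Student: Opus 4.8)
The plan is to prove Lemma \ref{lemcond2redcol} by showing that under the hypotheses ($J$ pure, $K \in \mlc$, $\sphi$ a reduction), the condition \ref{cond2red} which asserts $\sphi(\bigwedge \cface{x}) = \bigwedge \sphi(\cface{x})$ follows automatically from \ref{cond3red} together with the structural consequences of $K$ being closed. The first observation I would record is a dichotomy available in both $J$ and $K$: since $K \in \mlc$ is closed, for every cell $y \in K$ either $\cface{y} = \emptyset$ (i.e. $y$ is maximal) or $\abs{\cface{y}} \geq 2$ (since a submaximal cell of a closed cc is contained in exactly two maximal cells, and inductively every non-maximal cell has at least two cofaces by the diamond property applied upward). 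The corresponding dichotomy holds in $J$ thanks to Lemma \ref{claimbdryredcol}: a non-maximal cell of $J$ never has exactly one coface. Combined with Remark \ref{remcface}, this means $\bigwedge \cface{x}$ equals $x$ when $\abs{\cface{x}} \geq 2$ and equals $\emptyset$ when $x$ is maximal, and similarly $\bigwedge \cface{\sphi(x)}$ equals $\sphi(x)$ or $\emptyset$.

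Next I would isolate two elementary lattice facts about greatest lower bounds under a poset homomorphism: (i) if $\bigwedge S$ and $\bigwedge \sphi(S)$ both exist then $\sphi(\bigwedge S) \subset \bigwedge \sphi(S)$, since $\sphi(\bigwedge S)$ is a lower bound of $\sphi(S)$; and (ii) if $S \subset T$ and $\bigwedge S, \bigwedge T$ exist and are non-empty then $\bigwedge T \subset \bigwedge S$. These are exactly the tools needed to "sandwich" the relevant element. Recall that \ref{cond3red} gives the inclusion $\cface{\sphi(x)} \subset \sphi(\cface{x})$ (this is the $r = \rk_K(\sphi(x))+1$ instance of \ref{cond3red}, or can be extracted from Remark \ref{remcond3redncol}).

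Now the argument splits along the dichotomy for $\sphi(x)$. If $\abs{\cface{\sphi(x)}} \geq 2$, then $\sphi(x)$ is not maximal in $K$, so $\sphi(x)$ is not maximal and $x$ is not maximal in $J$ either (a reduction is a cc-homomorphism and sends maximal cells to maximal cells, using purity of $J$ and Remark \ref{remrksredncol}), hence $\abs{\cface{x}} \geq 2$ and $\bigwedge \cface{x} = x$. Then
\begin{align*}
\sphi(x) = \sphi\left( \bigwedge \cface{x} \right) \subset \bigwedge \sphi(\cface{x}) \subset \bigwedge \cface{\sphi(x)} = \sphi(x),
\end{align*}
where the first inclusion is fact (i), the second is fact (ii) applied to $\cface{\sphi(x)} \subset \sphi(\cface{x})$ (both greatest lower bounds existing and being non-empty by the dichotomy), and the last equality is Remark \ref{remcface}; so equality holds throughout and \ref{cond2red} is verified. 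If instead $\cface{\sphi(x)} = \emptyset$, i.e. $\sphi(x)$ is maximal in $K$, then for any $y \in \cface{x}$ we have $\sphi(x) \subset \sphi(y)$ and maximality forces $\sphi(y) = \sphi(x)$, so $\sphi(\cface{x}) = \{\sphi(x)\}$ and thus $\bigwedge \sphi(\cface{x}) = \sphi(x)$; on the other hand $\sphi(\bigwedge \cface{x})$ is either $\sphi(x)$ (if $\abs{\cface{x}}\geq 2$) or $\sphi(\emptyset)$-convention, but since $x$ maps to a maximal cell and $J$ is pure, $x$ itself is maximal, so $\cface{x} = \emptyset$ and $\bigwedge \cface{x} = \emptyset$, consistent with $\bigwedge\sphi(\cface{x})$ being computed over the empty set — one just needs to check the bookkeeping of the empty-coface cases matches, which is routine. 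This finishes the reduction case. For the collapse case, the statement and proof dualize verbatim: the dual of Lemma \ref{claimbdryredcol} (every non-vertex cell of $J$ has at least two faces) holds since $K \in \mlc$ is closed and hence self-dual in the relevant sense, and one repeats the sandwiching argument with $\bigvee$, $\face{}$, $B(\cdot)$ in place of $\bigwedge$, $\cface{}$, $A(\cdot)$, using \ref{cond3col} and Remark \ref{remaxcol}.

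The main obstacle I anticipate is not any single hard step but rather the careful handling of the empty-coface / maximal-cell boundary cases and making sure the conventions in Remark \ref{remcface} (where $\bigwedge \cface{x} = \emptyset$ for $x$ maximal) interact correctly with the definition of a reduction, in particular verifying that a reduction from a pure cc to a closed cc really does send maximal cells to maximal cells — this uses Remark \ref{remrksredncol} ($\Rk(J) = \Rk(K)$) and the fact that reductions are cc-homomorphisms (Remark \ref{remcond3redncol}). Once that structural input and the dichotomy (via Lemma \ref{claimbdryredcol}) are in place, the computation is the short sandwich displayed above and its dual.
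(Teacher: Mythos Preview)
Your approach is essentially identical to the paper's: the same dichotomy via Lemma \ref{claimbdryredcol}, the same two lattice facts, the inclusion $\cface{\sphi(x)} \subset \sphi(\cface{x})$ from \ref{cond3red}, and the sandwich
\[
\sphi(x) = \sphi\!\left(\bigwedge \cface{x}\right) \subset \bigwedge \sphi(\cface{x}) \subset \bigwedge \cface{\sphi(x)} = \sphi(x).
\]

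There is, however, a genuine slip in your handling of the case $\cface{\sphi(x)} = \emptyset$. You write ``since $x$ maps to a maximal cell and $J$ is pure, $x$ itself is maximal'', but this implication is false: a reduction can send non-maximal cells to maximal ones (e.g.\ a midpoint vertex in a subdivision of a cycle maps to an edge). The correct implication, which you used in the first case, goes the other way ($x$ maximal $\Rightarrow$ $\sphi(x)$ maximal). What actually happens when $\sphi(x)$ is maximal is exactly what the paper does: if $\abs{\cface{x}} \geq 2$ then maximality of $\sphi(x)$ forces $\sphi(y) = \sphi(x)$ for every $y \in \cface{x}$, so $\bigwedge \sphi(\cface{x}) = \sphi(x) = \sphi(\bigwedge \cface{x})$; the residual case $\abs{\cface{x}} < 2$ forces $\cface{x} = \emptyset$ \emph{by Lemma \ref{claimbdryredcol}} (not by the maximality of $\sphi(x)$), and is then vacuous. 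Once you replace your incorrect justification with this one, your argument is complete and matches the paper's.
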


\begin{lem} \label{lemdualredcol}
Let $K, L \in \mlc.$ Then $L \red_\sphi K$ if and only if $\dual{L} \col_{\dual{\sphi}} \dual{K},$ where $ \dual{\sphi}: \dual{J} \dans \kb$ is defined by
\begin{equation*}
x \longmapsto \dual{\sphi}(x) := \dual{\sphi(x)}.
\end{equation*}
\begin{proof}
By Remark \ref{remrksredncol}, $\Rk(K) = \Rk( L ) =:R.$ Let $\sphi : L \dans K$ be a reduction. Then map $\dual{\sphi}$ is well defined by Lemma \ref{claim2propdual}, is clearly surjective and is a poset homomorphism since if $\dual{x} \subset \dual{y}$ then $y \subset x$ and
$$ \dual{\sphi}(\dual{x}) = \dual{\sphi(x)} \subset \dual{\sphi(y)} = \dual{\sphi}(\dual{y}).$$

First, if $\dual{y} \in \dual{K}^{[R]}$ then $y = v \in \kz$ and $\abs{\sphi^{-1}(v)}=1,$ therefore 
$$\abs{(\dual{\sphi})^{-1}(\dual{y})} = \abs{\dual{ \sphi^{-1}(v)}} = 1,$$ which shows that $\dual{\sphi}$ satisfies condition \ref{cond1col} of a collapse (we recall that according to our notation we have $\dual{S} =\{ \dual{s} ~|~ s \in S\}$ for a subset $S \subset K$).

Let $y \in L^{[r]}.$ We have the following relation
\begin{align*}
(\dual{\sphi})^{-1}( \dual{y} ) &= \{ \dual{x} \in \dual{K} ~|~ \dual{\sphi}(\dual{x}) = \dual{ \sphi(x)} = \dual{y} \}\\
&= \{ \dual{x} ~|~ x \in \sphi^{-1}(y) \} =: \dual{\sphi^{-1}(y)}.
\end{align*}
Therefore, if $\dual{y} \subset \dual{\sphi}(\dual{x}) = \dual{\sphi(x)}$ then $\sphi(x) \subset y,$ hence there exists an $r$-cell $w \in \sphi^{-1}(y)$ such that $x \subset w.$ This means that $\dual{w} \in \dual{\sphi}^{-1}(\dual{y})$ has rank $R - r = \rk_{\dual{L}}(\dual{y})$ and $ \dual{w} \subset \dual{x}$ and $\dual{\sphi}$ satisfies condition \ref{cond3col}. 

The map $\dual{\sphi}$ satisfies condition \ref{cond4col} since if $\rk_{\dual{L}}(\dual{y}) = \rk_{\dual{K}}(\dual{\sphi}(\dual{y})) + 1$ then $\rk_L(y) = \rk_K(\sphi(y)) - 1$ and we have the following relation:
$$ \abs{\face{\dual{y}} \cap \dual{\sphi}^{-1}(\dual{\sphi}(\dual{y})} = \abs{ \dual{\cface{y}} \cap \dual{ \sphi^{-1}(\sphi(y))}} = \abs{\dual{ \cface{y} \cap \sphi^{-1}(\sphi(y))}} = \abs{ \cface{y} \cap \sphi^{-1}(\sphi(y))} = 2,$$
by condition \ref{cond4red} of a reduction.

Finally, condition \ref{cond5col} is checked as follows. If $\rk_{\dual{J}}(\dual{x}) = \rk_{\dual{K}}(\dual{\sphi}(\dual{x}))$ then $\rk_J(x) = \rk_K(\sphi(x))$ and if $\dual{y} \in \face{\dual{\sphi}(\dual{x})}$ then $y \in \cface{\sphi(x)}$ and
$$ \abs{\face{\dual{y}} \cap \dual{\sphi}^{-1}(\dual{y})} = \abs{\dual{\cface{y} \cap \sphi^{-1}(y)}} = 1.$$

By Lemma \ref{lemcond2redcol} condition \ref{cond2col} is automatically satisfied and we have therefore proven that $\dual{\sphi}$ is a collapse. 

The reverse implication is proven using the same arguments while reversing the inclusion relations between cells and replacing vertices with maximal cells (i.e. $R$-cells) and vice versa.
\end{proof}
\end{lem}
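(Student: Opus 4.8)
The plan is to prove the equivalence by \emph{transcribing}, one condition at a time, the five defining properties of a reduction into the corresponding properties of a collapse, using the inclusion-reversing bijection $x\mapsto\dual{x}$ of Lemma \ref{claim2propdual}. First I would set up the elementary dictionary this bijection provides. Since $K$ and $L$ are closed they are in particular pure, so by Remark \ref{remrksredncol} $\Rk(L)=\Rk(K)=:R$. By Lemma \ref{claim2propdual}, on both $L$ and $K$ the map $x\mapsto\dual{x}$ is a bijection with $\dual{x}\subsetneq\dual{y}\Leftrightarrow y\subsetneq x$; consequently $\rk_{\dual{L}}(\dual{x})=R-\rk_L(x)$, $\face{\dual{x}}=\dual{\cface{x}}$, $\cface{\dual{x}}=\dual{\face{x}}$, and, with the sets $A(\cdot),B(\cdot)$ formed in the respective complexes, $A(\dual{x})=\dual{B(x)}$ and $B(\dual{x})=\dual{A(x)}$. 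Moreover duality commutes with intersections and with preimages, so $\dual{\sphi}^{-1}(\dual{y})=\dual{\sphi^{-1}(y)}$ for every $y\in K$. From this it is immediate that $\dual{\sphi}$ is well defined, surjective, and a poset homomorphism (if $\dual{x}\subset\dual{y}$ then $y\subset x$, hence $\sphi(y)\subset\sphi(x)$ and $\dual{\sphi(x)}\subset\dual{\sphi(y)}$).

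Next I would check conditions \ref{cond1col}, \ref{cond3col}, \ref{cond4col}, \ref{cond5col} for $\dual{\sphi}$. A maximal cell of $\dual{K}$ is $\dual{v}$ for some $v\in\kz$, and $\abs{\dual{\sphi}^{-1}(\dual{v})}=\abs{\dual{\sphi^{-1}(v)}}=\abs{\sphi^{-1}(v)}=1$ by \ref{cond1red}; this is \ref{cond1col}. For \ref{cond3col}, a cell below $\dual{\sphi}(\dual{x})$ in $\dual{K}$ of rank $r$ is $\dual{y}$ with $\sphi(x)\subset y$ and $\rk_K(y)=R-r$, so condition \ref{cond3red} applied at rank $R-r$ yields $w$ with $x\subset w$, $\rk_L(w)=R-r$ and $\sphi(w)=y$; then $\dual{w}$ lies below $\dual{x}$ in $\dual{L}$, has rank $r$, and $\dual{\sphi}(\dual{w})=\dual{y}$. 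The rank hypotheses of \ref{cond4col} and \ref{cond5col} translate, through the rank reversal $\rk(\dual{x})=R-\rk(x)$, \emph{exactly} into those of \ref{cond4red} and \ref{cond5red}; and using $\face{\dual{x}}=\dual{\cface{x}}$, $\cface{\dual{x}}=\dual{\face{x}}$, $\dual{\sphi}^{-1}(\dual{y})=\dual{\sphi^{-1}(y)}$ and the compatibility of duality with intersection, the cardinality equalities in \ref{cond4red} and \ref{cond5red} become those of \ref{cond4col} and \ref{cond5col}. Finally $\dual{L}$ is again closed (Proposition \ref{propdual}), hence pure, and a short argument with dual graphs shows $\dual{K}\in\mlc$; therefore the collapse half of Lemma \ref{lemcond2redcol} supplies \ref{cond2col} for free, and $\dual{L}\col_{\dual{\sphi}}\dual{K}$.

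For the converse I would run the same transcription with the roles of faces and cofaces, of vertices and maximal cells, and of $\bigwedge$ and $\bigvee$ interchanged, showing symmetrically that $\dual{L}\col_{\dual{\sphi}}\dual{K}$ forces $\sphi$ to be a reduction; alternatively, once both directions of the transcription are in hand one may simply invoke the involutivity $\dual{\dual{L}}\cong L$, $\dual{\dual{\sphi}}=\sphi$ from Proposition \ref{propdual}. I do not expect a genuine obstacle here: the argument is essentially bookkeeping once the dictionary is fixed. The only points demanding care are keeping the rank inversion straight in the translations \ref{cond3red}$\leftrightarrow$\ref{cond3col} and \ref{cond4red}$\leftrightarrow$\ref{cond4col}, and verifying the hypotheses (purity of $\dual{L}$ and $\dual{K}\in\mlc$) that allow Lemma \ref{lemcond2redcol} to be used so as to discharge condition \ref{cond2col}.
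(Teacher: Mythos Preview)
Your proposal is correct and follows essentially the same approach as the paper: set up the duality dictionary via Lemma \ref{claim2propdual}, verify that $\dual{\sphi}$ is a surjective poset homomorphism, translate conditions \ref{cond1red}, \ref{cond3red}, \ref{cond4red}, \ref{cond5red} into \ref{cond1col}, \ref{cond3col}, \ref{cond4col}, \ref{cond5col} one by one, and then invoke Lemma \ref{lemcond2redcol} to obtain \ref{cond2col} without further work. You are in fact slightly more careful than the paper in noting explicitly that the hypotheses of Lemma \ref{lemcond2redcol} (purity of $\dual{L}$ and $\dual{K}\in\mlc$) need to be checked, which the paper leaves implicit.
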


The following proposition is an important property of reductions and collapses. It will be used in particular in Proposition \ref{propredbdry} where we show that subdivisions of a boundary component defines an operation on elements in $\nsc.$ A sizeable part of the proof of the next result is about proving that condition \ref{cond2red} is transitive and by Lemma \ref{lemcond2redcol} this is only necessary for the case of pure cc with non-empty boundary.

\begin{prop} \label{propprecpartialorder}
$\red$ and $\col$ are  partial orders on the set of pure cc, up to cc-isomorphism.
\begin{proof}
We will show this result for $\red.$ The proof applies also to $\col$ using the same arguments while reversing inclusion relations between cells and replacing vertices with maximal cells. Also, the arguments in the following proof involving condition \ref{cond2red} are simplier for the case of collapses, since condition \ref{cond2col} can be expressed without involving a least upper bound, as noted in Remark \ref{remaxcol}.

The relation $\red$ is clearly reflexive. The main part of this proof focuses on showing that $\red$ is transitive. For this, we assume that $J \red_{\sphi_1} K,$ $K \red_{\sphi_2} L$ and check the five conditions for $\sphi_2 \circ \sphi_1.$

Condition \ref{cond1red} for $\sphi_2 \circ \sphi_1$ is checked easily: if $y \in L^{[0]}$ we have $\abs{\sphi_2^{-1}(y)} = 1$ since $\sphi_2$ is a reduction, i.e. $ v = \sphi_2^{-1}(y) \in \kz$ and we have $\abs{\sphi_1^{-1}(v)} = 1$ since $\sphi_1$ is a reduction, hence $\abs{(\sphi_2 \circ \sphi_1)^{-1}(y)} = 1.$

Proving condition \ref{cond3red} for $\sphi_2 \circ \sphi_1$ is relatively direct as well. If $y \in L^{[r]}, x \in J$ are such that $\sphi_2 \circ \sphi_1 (x) \subset y$ then since $\sphi_2$ is a reduction there exists $w' \in \sphi_2^{-1}(y)^{[r]}$ such that $\sphi_1(x) \subset w'.$ Also, since $\sphi_1$ is a reduction there exists $w \in \sphi^{-1}(w')^{[r]} \subset (\sphi_2 \circ \sphi_1)^{-1}(y)^{[r]}$ such that $x \subset w$ and this proves that $$A( \sphi_2(\sphi_1(x)))^{[r]} \subset \sphi_2(\sphi_1(A(x)^{[r]})).$$

The transitivity of the second condition is more technical. We first note that condition \ref{cond2red} for $\sphi_1$ implies the following equality
\begin{equation} \label{eqtransred}
\sphi_2( \sphi_1 ( \bigwedge \cface{x})) = \sphi_2(\bigwedge \sphi_1( \cface{x})).
\end{equation}
In addition, since $\sphi_2$ is a poset homomorphism, we have
\begin{equation} \label{incltransred}
\sphi_2(\bigwedge \sphi_1(\cface{x}) \subset \bigwedge \sphi_2(\sphi_1(\cface{x})),
\end{equation} 
hence it remains to show the reversed inclusion. By condition \ref{cond3red} we have in particular that 
\begin{equation}\label{inclr3transred}
\cface{\sphi_1(x)} \subset \sphi_1(\cface{x})
\end{equation}
which impies that 
\begin{equation}\label{equtransred}
\bigwedge \sphi_2(\sphi_1(\cface{x})) \subset \bigwedge \sphi_2( \cface{ \sphi_1(x)}) = \sphi_2( \bigwedge \cface{ \sphi_1(x)}),
\end{equation}
where the equality is due to condition \ref{cond2red} for $\sphi_2.$

The inclusion (\ref{inclr3transred}) implies that $\abs{\cface{\sphi_1(x)}} \leq \abs{\cface{x}}.$ Therefore we can split the remainder of the proof of the transitivity of \ref{cond2red} in four cases:
\begin{itemize} [topsep=2pt, parsep=2pt, itemsep=3pt]
\item $\abs{\cface{x})} \geq \abs{\sphi_1(x))} \geq 2,$
\item $\abs{\cface{\sphi_1(x)}} = \abs{\cface{x}} = 1,$
\item $\abs{\cface{x}} > 0, ~\abs{\cface{\sphi_1(x)}} = 0$ (uses the assumption of $J$ pure!),
\item $\abs{\cface{x}} \geq 2, ~ \abs{\cface{\sphi_1(x)}} =1.$
\end{itemize}

In the first case, when $\abs{\cface{\sphi_1(x)}} \geq 2$ and $\abs{\cface{x}} \geq 2,$ the Lemma \ref{lemcface} directly gives
$$ \bigwedge \cface{ \sphi_1(x) } = \sphi_1(x) = \sphi_1( \bigwedge \cface{x})$$ and we get the desired equality using  (\ref{eqtransred}), (\ref{incltransred}) and (\ref{equtransred}). 

In the second case we get $\abs{ \cface{x}} = \cface{\sphi_1(x)} = 1,$ hence we can pick $y \in \cface{x}$ to be the unique co-face of $x.$ Then the inclusion (\ref{inclr3transred}) implies that $\cface{\sphi_1(x)} \subset \sphi_1(\cface{x}) =\{y\}.$ Therefore we obtain $\cface{\sphi_1(x)} = \{\sphi_1(y)\}$ and
$$ \bigwedge \cface{\sphi_1(x)} = \sphi_1(y) = \sphi_1(\bigwedge \cface{x}).$$
We can then also concludes the argument using (\ref{eqtransred}), (\ref{incltransred}) and (\ref{equtransred}).

Third, if $\cface{x} \neq \emptyset$ and $\cface{\sphi_1(x)}= \emptyset,$ an argument  similar as in the proof of Lemma \ref{claimbdryredcol} shows that $\abs{\cface{x}} \geq 2,$ by the purity of $J.$ We also have that $\sphi(y) = \sphi(x)$ for all $y \in \cface{x},$ by maximality of $\sphi(x).$ Therefore we get
$$ \sphi_2(\sphi_1(\bigwedge \cface{x})) = \sphi_2(\sphi_1(x)) = \bigwedge \{\sphi_2( \sphi_1(x))\} = \bigwedge \sphi_2(\sphi_1(\cface{x})),$$
which corresponds exactly to condition \ref{cond2red} for $\sphi_2 \circ \sphi_1.$

For the last case we have $ \abs{\cface{x}} \geq 2$ and $\abs{\cface{\sphi_1(x)}} =1$ and condition \ref{cond2red} for $\sphi_1$ implies that $$ \sphi_2( \bigwedge \sphi_1( \cface{x} )) = \sphi_2( \sphi_1( \bigwedge \cface{x})) = \sphi_2(\sphi_1(x)).$$  This implies that there exists $y \in \cface{x}$ such that $\sphi_1(y) = \sphi_1(x).$ Therefore $$\sphi_2(\sphi_1(\bigwedge \cface{x})) = \sphi_2(\sphi_1(x)) \subset \bigwedge \sphi_2(\sphi_1(\cface{x})),$$ and we can conclude using the inclusion (\ref{incltransred}).

We will show condition \ref{cond5red} for $\sphi_2 \circ \sphi_1$ and finish the proof of transitivity with condition \ref{cond4red} below. 
If we assume that $\rk_J(x) = \rk_L(\sphi_2(\sphi_1(x))),$ it implies that $\rk_K(\sphi_1(x)) = \rk_J(x)$ since the rank is increasing under subdivision as observed in Remark \ref{remrksredncol}. Hence conditions \ref{cond5red} for $\sphi_1$ and $\sphi_2$ imply respectively the two following conditions:

\begin{equation}\label{enumcond5_1}
\abs{ \cface{x} \cap \sphi_1^{-1}(y)} = 1 \quad \text{for all } y \in \cface{ \sphi_1(x)};
\end{equation}

\begin{equation}\label{enumcond5_2}
\abs{ \cface{\sphi_1(x)} \cap \sphi_2^{-1}(y)} = 1 \quad \text{for all } y \in \cface{ \sphi_2(\sphi_1(x))}.
\end{equation}

Using that 
\begin{equation}
f( A \cap f^{-1}(B)) = f(A) \cap B
\end{equation}
for a function $f : S \dans T$ between two sets $S,T$ and $A \subset S,~ B \subset T$ and the fact that condition \ref{cond3red} implies $\cface{\sphi_1(x)} \subset \sphi_1(\cface{x}),$ we get the following inequality for all $y \in \cface{ \sphi_2( \sphi_1(x))}:$
\begin{align*}
\abs{ \cface{x} \cap (\sphi_2 \circ \sphi_1)^{-1}(y)} &\geq \abs{ \sphi_1( \cface{x} \cap (\sphi_2 \circ \sphi_1)^{-1}(y))} \\
&= \abs{ \sphi_1(\cface{x}) \cap \sphi_2^{-1}(y)}\\
&\geq \abs{ \cface{x} \cap \sphi_2^{-1}(y)} = 1,
\end{align*}
where the last equality is due to condition (\ref{enumcond5_1}).

Moreover if we suppose that for some $y \in \cface{\sphi_2(\sphi_1(x))}$ there are two distinct elements 
$$w,w' \in \cface{x} \cap (\sphi_2 \circ \sphi_1)^{-1}(y)$$ 
then since $\cface{ \sphi_2(\sphi_1(x)} \subset \sphi_2(\cface{\sphi_1(x)})$ by condition \ref{cond3red}, there exists $y' \in \cface{\sphi_1(x)} \cap \sphi_2^{-1}(y)$ and this element is unique by condition (\ref{enumcond5_2}). This implies that
$$ \{w,w'\} \subset \abs{ \cface{x} \cap \sphi_1^{-1}(y')}$$ which contradicts condition (\ref{enumcond5_1}) and therefore we have proven \ref{cond5red} for $\sphi_2 \circ \sphi_1.$ 

Finally, we show condition \ref{cond4red} for $\sphi_2 \circ \sphi_1$ as follows. If $x \in J$ satisfies 
$$\rk_J(x) = \rk_L(\sphi_2(\sphi_1(x))) - 1$$
then we can have the two following possible cases.

The first possible case is
$$rk_J(x) = \rk_K(\sphi_1(x)), \quad \rk_K(\sphi_1(x)) = \rk_L(\sphi_2(\sphi_1(x))) - 1,$$
which respectively implies condition (\ref{enumcond5_1}) as well as condition \ref{cond4red} for $\sphi_2,$ i.e.
\begin{equation}\label{equsphi2proppreord}
\abs{ \cface{\sphi_1(x)} \cap \sphi_2^{-1}(\sphi_2(\sphi_1(x)))} = 2.
\end{equation}
Every element $w \in \cface{x} \cap (\sphi_2 \circ \sphi_1)^{-1}(\sphi_2(\sphi_1(x)))$ satisfies $\sphi(x) \subset \sphi(w)$ and $\sphi_2(\sphi_1(w)) = \sphi_2(\sphi_1(x)),$ hence
$$r + 1 = \rk_J(w) \leq \rk_K(\sphi_1(w)) \leq \rk_L(\sphi_2(\sphi_1(w))) = \rk_L(\sphi_2(\sphi_1(x))) = r + 1.$$
Therefore $\rk_K(\sphi_1(w)) = r + 1$ and $\sphi_1(w) \in \cface{\sphi_1(x)}.$  Condition (\ref{enumcond5_1}) implies that $\sphi_1^{-1}(\sphi_1(w)) = \{w\}$ for every such $w$ and using (\ref{equsphi2proppreord}) we get
$$ \abs{ \cface{x} \cap (\sphi_2 \circ \sphi_1)^{-1}(\sphi_2(\sphi_1(x)))} = \abs{ \cface{\sphi_1(x)} \cap \sphi_2^{-1}( \sphi_2(\sphi_1(x)))} = 2.$$

The second possible case is
$$rk_J(x) = \rk_K(\sphi_1(x)) - 1, \quad \rk_K(\sphi_1(x)) = \rk_L(\sphi_2(\sphi_1(x))).$$
This case implies condition \ref{cond4red} for $\sphi_1:$
\begin{equation}\label{equsphi1proppreord}
\abs{\cface{x} \cap \sphi_1^{-1}(\sphi_1(x))} = 2.
\end{equation}
If $w \in \cface{x} \cap (\sphi_2 \circ \sphi_1)^{-1}(\sphi_2(\sphi_1(x)))$ then $\sphi(x) \subset \sphi(w)$ and $\rk_K(\sphi(w)) = r + 1 = \rk_K(\sphi_1(x))$ by the same argument as above, therefore $\sphi_1(w) = \sphi_1(x).$
This implies that
$$\cface{x} \cap (\sphi_2 \circ \sphi_1)^{-1}(\sphi_2(\sphi_1(x))) = \cface{x} \cap \sphi_1^{-1}(\sphi_1(x))$$
and we get the desired result by (\ref{equsphi1proppreord}). This concludes the proof that $\sphi_2 \circ \sphi_1$ is a reduction and therefore $\red$ is transitive.

Finally, the relation $\red$ is antisymmetric by the following argument. Suppose $ J \red_\phi K$ and $K \red_\gamma J$  (which by Remark \ref{remrksredncol} implies $\Rk(J) = \Rk(K)$) and define $\phi_r := \phi|^{K^{[r]}}$ and $\gamma_r := \gamma|^{J^{[r]}}.$ We will prove by induction on $r$ that $\phi_r : J^{[r]} \dans \kr$ is a bijection for all $0 \leq r \leq R$ implying that $\phi$ is an isomorphism by Lemma \ref{lemisom}. For the case $r=0$ we know that $\phi_0$ and $\gamma_0$ are surjective and this implies that $\abs{J^{[0]}} = \abs{\kz}.$ Hence $\phi_0$ and $\gamma_0$ are also injective and therefore bijective. Suppose that $\phi_s : J^{[s]} \dans K^{[s]}$ and $\gamma_s : K^{[s]} \dans J^{[s]}$ are bijective for all $s = 0,1 \dots r-1.$ As noted in Remark \ref{remcond3redncol} since $\phi$ is a reduction we have $\phi^{-1}( \kr ) \subset J^{(r)}$ and $J^{(r-1)} = \phi^{-1}(K^{(r-1)}).$ Hence $\phi^{-1}(\kr) \subset J^{[r]}$ and $\abs{\kr} \leq \abs{J^{[r]}}.$ By the same argument for $\gamma$ we also get the reverse inequality and therefore $\phi_r : J^{[r]} \dans \kr$ and $\gamma_r : \kr \dans J^{[r]}$ are indeed bijections. 
\end{proof}
\end{prop}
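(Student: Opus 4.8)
The plan is to verify the three axioms of a partial order — reflexivity, transitivity, and antisymmetry — for the relation $\red$, and then to deduce the statement for $\col$ by running the same argument with all inclusion relations between cells reversed and the roles of vertices and maximal cells interchanged; the case of $\col$ is in fact slightly easier, since by Remark \ref{remaxcol} condition \ref{cond2col} can be phrased without a least upper bound. Reflexivity is immediate: the identity map $\id_J : J \dans J$ satisfies \ref{cond1red}--\ref{cond5red} trivially (the hypotheses of \ref{cond4red} and \ref{cond5red} are never met, and \ref{cond2red}, \ref{cond3red} hold automatically), so $J \red J$ and $J \col J$.

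For transitivity, assume $J \red_{\sphi_1} K$ and $K \red_{\sphi_2} L$, and check the five conditions for $\sphi_2 \circ \sphi_1$. Conditions \ref{cond1red} and \ref{cond3red} follow by straightforward composition of the corresponding properties of $\sphi_1$ and $\sphi_2$: a vertex-fibre of the composite is a singleton because each factor is vertex-injective, and the lifting property \ref{cond3red} is transported through two successive reductions by lifting first along $\sphi_2$ and then along $\sphi_1$. For \ref{cond5red} I would use the elementary set identity $f(A \cap f^{-1}(B)) = f(A) \cap B$ together with the inclusion $\cface{\sphi_1(x)} \subset \sphi_1(\cface{x})$ coming from \ref{cond3red}: this bounds $\abs{\cface{x} \cap (\sphi_2 \circ \sphi_1)^{-1}(y)}$ below by $1$, and two distinct preimages are excluded by combining the uniqueness statements of \ref{cond5red} for $\sphi_1$ and for $\sphi_2$. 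For \ref{cond4red} I would split into the two subcases according to whether the rank jump $\rk_J(x) = \rk_L(\sphi_2(\sphi_1(x))) - 1$ is realised already by $\sphi_1$ or only by $\sphi_2$; in each case the fact that rank is non-decreasing under a reduction (Remark \ref{remrksredncol}) pins down $\rk_K(\sphi_1(x))$, after which one invokes \ref{cond4red} for the relevant factor together with \ref{cond5red} for the other.

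The delicate point of transitivity will be condition \ref{cond2red}, the one involving $\bigwedge \cface{\cdot}$; by Lemma \ref{lemcond2redcol} it suffices to treat pure cc with non-empty boundary, which is precisely the case where the condition has real content. Applying $\sphi_2$ to the identity $\sphi_1(\bigwedge \cface{x}) = \bigwedge \sphi_1(\cface{x})$ reduces the problem to proving $\bigwedge \sphi_2(\sphi_1(\cface{x})) \subset \sphi_2(\bigwedge \sphi_1(\cface{x}))$, and using that $\sphi_2$ is a poset homomorphism together with $\cface{\sphi_1(x)} \subset \sphi_1(\cface{x})$ this comes down to comparing $\bigwedge \cface{\sphi_1(x)}$ with $\sphi_1(\bigwedge \cface{x})$. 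I expect to organise this into four cases indexed by the pair $(\abs{\cface{x}}, \abs{\cface{\sphi_1(x)}})$: both $\geq 2$, handled via Lemma \ref{lemcface}; both $= 1$; $\cface{x} \neq \emptyset$ while $\cface{\sphi_1(x)} = \emptyset$, where purity of $J$ forces $\abs{\cface{x}} \geq 2$ (by the argument of Lemma \ref{claimbdryredcol}) and maximality of $\sphi_1(x)$ collapses every value to $\sphi_1(x)$; and finally $\abs{\cface{x}} \geq 2$, $\abs{\cface{\sphi_1(x)}} = 1$, where condition \ref{cond2red} for $\sphi_1$ yields a co-face $y$ of $x$ with $\sphi_1(y) = \sphi_1(x)$, giving the required inclusion. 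This case analysis is the main obstacle of the proof.

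Finally, antisymmetry: suppose $J \red_\phi K$ and $K \red_\gamma J$, so that $\Rk(J) = \Rk(K) =: R$ by Remark \ref{remrksredncol}. I would show by induction on $r$ that $\phi|^{J^{[r]}} : J^{[r]} \dans \kr$ is a bijection for all $0 \leq r \leq R$, which by Lemma \ref{lemisom} makes $\phi$ a cc-isomorphism. For $r = 0$, both $\phi$ and $\gamma$ are surjective on vertices, forcing $\abs{J^{[0]}} = \abs{\kz}$ and hence bijectivity of $\phi|^{\kz}$. For the inductive step one uses Remark \ref{remcond3redncol}: since $\phi$ is a reduction, $\phi^{-1}(\kr) \subset J^{[r]}$, so $\abs{\kr} \leq \abs{J^{[r]}}$, and the symmetric inequality obtained from $\gamma$ gives equality, whence $\phi|^{J^{[r]}}$ and $\gamma|^{\kr}$ are bijective. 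This completes the argument for $\red$, and the dual argument, as indicated above, handles $\col$.
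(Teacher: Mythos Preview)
Your proposal is correct and follows essentially the same approach as the paper: the same reduction to $\red$, the same verification of the five conditions for $\sphi_2 \circ \sphi_1$ with the identical case splits (the four-case analysis for \ref{cond2red} governed by $\abs{\cface{x}}$ and $\abs{\cface{\sphi_1(x)}}$, the two-case split for \ref{cond4red} according to where the rank jump occurs, and the set identity $f(A \cap f^{-1}(B)) = f(A) \cap B$ for \ref{cond5red}), and the same rank-by-rank counting argument for antisymmetry via Remark \ref{remcond3redncol} and Lemma \ref{lemisom}. One trivial slip: for the identity map the hypothesis of \ref{cond5red} is actually always met (ranks agree), but its conclusion is vacuously satisfied, so reflexivity is unaffected.
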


The proof of antisymmetry in Proposition \ref{propprecpartialorder} only requires condition \ref{cond3red} of reductions (and the same is true for collapse). As a consequence, the third condition together with the surjective poset homomorphism property are sufficient to define a partial order on the set of cc. 
%In other words, if we define a relation defined by $J \stackrel{\sim}{\red} K$ if and only if there exists poset homomorphism $ \phi : J \dans K$ satisfying condition \ref{cond3red} of a reduction, then the relation $\stackrel{\sim}{\red}$ is also a partial order, and $J \red K$ implies $J \stackrel{\sim}{\red} K.$

Before giving the definitions of relative reduction and collapse, we recall that by Definition \ref{defirelccc} a relative cc-homomorphism or a homomorphism of $J$ relative to $K$ is a homomorphism $\phi :(K,J) \dans (H,L)$ of relative cc (i.e. a cc-homomorphism $\phi : K \dans H$ such that $\phi(J) = L$) such that $\phi|_{K \setminus J} : K \setminus J \dans H \setminus L$ is a poset isomorphism.

\begin{defi}[Relative reduction and collapse] \label{defirelredncol}
Let $(K,J), (H,L)$ be relative cc. A \textit{ relative reduction (respectively a relative collapse)} or a \textit{reduction (respectively collapse) of $J$ relative to $K$} is a cc-homomorphism $\phi : (K,J) \dans (H,L)$  of $J$ relative to $K$ such that $\phi : K \dans H$ is a reduction (respectively a collapse). We will also write $(K,J) \red (H,L)$ and say that $(K,J)$ is \textit{a subdivision of $L$ relative to $H$} if there exists a reduction $\sphi : (K,J) \dans (H,L)$ of $J$ relative to $K.$ Similarly we will write $(K,J) \col (H,L)$ and say that $(K,J)$ is \textit{an expansion of $L$ relative to $H$} if there exists a collapse $\cphi: (K,J) \dans (H,L)$ of $J$ relative to $K.$
\end{defi}

Clearly Definition \ref{defirelredncol} implies in particular that a relative cc-homomorphism $\phi : (K,J) \dans (H,L)$ is a relative reduction (respectively a relative collapse) if and only if $\phi|_J : J \dans L$ is a reduction (respectively a collapse). By abuse of notation, we will sometimes also use the notation $\phi$ to designate $\phi|_J$ when $\phi : (K,J) \dans (H,L)$ is a relative homomorphism.

It will be convenient to use the following notation on a cc $K$ in order to specify how it is related to a relative cc-homomorphism.

\begin{defi}[$K_\phi$ and $K^\phi$] \label{defikphi}
Let $\phi: J \dans L$ be a surjective cc-homomorphism and let $H$ be a cc such that $ L \leq H.$ If there exists a cc $K \geq J$ and a relative cc-homomorphism $\phi^H : (K,J) \dans (H,L)$ such that $\phi^H|_J = \phi$ then $K$ is uniquely determined by $H$ and $\phi : J \dans L$ up to cc-isomorphism, we will therefore use the notation $K = H^\phi.$

Similarly, let $\phi: J \dans L$ be a cc-homomorphism and let $K$ be a cc such that $ J \leq K.$ If there exists a cc $H \geq L$ and a relative cc-homomorphism $\phi_K : (K,J) \dans (H,L)$ such that $\phi_K|_J = \phi$ then $H$ is uniquely determined by $K$ and $\phi : J \dans L$ up to cc-isomorphism, we will therefore use the notation $H = K_\phi.$
\end{defi}

The uniqueness in Definition \ref{defikphi} follows from the observation that, say in the first case, whenever we take a relative cc $(K',J)$ such that there exists a relative cc-homomorsphism $ \phi_{K'} : (K',J) \dans (H,L)$ then by Definition \ref{defirelccc} the map $\psi : K \dans K'$ defined by $\psi|_J = \id_J$ and $\psi|_{K \setminus J} = \phi_{K'}^{-1} |_{H \setminus L} \circ \phi_K|_{K \setminus J},$ is a poset isomomorphism such that $\psi^{-1}$ is also a poset isomomorphism, hence $\psi$ is a cc-isomorphism by Lemma \ref{lemisom}. Moreover the isomorphism class of $K$ does not depend on the choice of the realtive cc-homomorphisms $\phi_K$ and $\phi_{K'}.$

Let us make the following remark about the previous definitions considering the example of a reduction  $\sphi : J \dans L.$ In general it is wrong that given a relative cc $(H ,L)$ there exists a cc $K$ such that $(K, J) \red_{\sphi} (H,L)$ is a relative reduction or that given a relative cc $(K,J)$ there exists a cc $H$ such that $(K, J) \red_{\sphi} (H,L)$ is a relative reduction.

One reason that this is wrong for the latter case is that it is possible for the reduction $\sphi$ to reduce two $r$-cells $x_1,x_2$ in $J$ each sharing a face $y_1,y_2 \in J^{[r-1]}$ with a $r$-cell in $y \in K_J$ into one cell $x =\sphi(x_1) = \sphi(x_2),$ hence $\{x_1,x_2\} \subset x \cap y,$ violating the intersection Axiom \ref{cccinter} for $H.$

Similarly, a reason for the first case is that a relative subdivision of $(H,L)$ can subdivide a cell $x$ of $L$ contained in a cell of $y \in H_L$ only if every cell $w$ of $L$ strictly containing $x$ is also subdivided in such a way that any intersection of an element in $\sphi^{-1}(y)$ with an element in $\sphi^{-1}(w)$ contains at most one cell, which is not the case in general. 

We close this section with a remark about the interpretation of the notion of relative cc- automorphisms in our framework. 
%We will refer to this observation in the endnote \ref{sseccovcauscob}.

\begin{rema}\label{remautomrelsubd} %used in an other remark in \ref{sseccompcob}
Consider two cc $J,L$ such that for example $J \red L.$ If $J, L$ satisfy the stronger condition that there exist $K,H$ such that $(K,J) \red (H,L)$ then this introduces some freedom in the choice of $K$ and $H$ in the following sense. If we take any relative cc-automorphism $\alpha : (K,J) \dans (K_\alpha,J)$ (i.e. a relative cc-homomorphism $\alpha : (K,J) \dans (K_\alpha,J) $ such that $\alpha|_J: J \dans J$ is an cc-automorphism) and any relative cc-automorphism $\beta : (H,L) \dans (H_\beta, L)$ then  $(K_\alpha, J) \red (H_\beta, L).$ Similarly if $(K,J) \col (H,L)$ then $(K_\alpha, J) \col (H_\beta, L).$ But in general $K \not \cong K_\alpha.$ This freedom in the choice of $K$ and $H$ when obtaining a relative reduction could be interpreted in the context of cobordisms as a discrete notion of gauge symmetry.
\end{rema}

\subsection{Transition and uniform relative cc} \label{ssectrans}

An important motivation for introducing reductions and collapses is to establish a relation between a boundary component $J$ of a cc $K \in \nsc$ and the associated midsection $M_J^K.$ In order to do that, we need to add a number of additional assumptions on the relative cc $(K,J)$ that will go under the name of "uniformity". These include several rather technical assumptions related to a poset called "transition", written $J(K),$ constructed from the cells in the collar $K_J.$ The uniformity of $(K,J)$ in particular implies that $J(K) \in \mlc$ and in Proposition \ref{propredbdry} we show that in the case $K \in \nsc^{R+1}$ and $J \leq \partial K,$ $J \in \mlc^R$ the assumption of uniformity also implies that there exist a reduction $\sphi_J^K$ and a collapse $\cphi_J^K$ such that $$J \red_{\sphi_J^K} J(K) \loc_{\cphi_J^K} M_J^K.$$

%We start by introducing the transition associated to a relative cc $(K,J)$ as a poset containing cells called "reduced cells".

%However, the implication in the Correspondence Theorem required to define that category of causal cobordisms, the main goal of this chapter, will not require the results from this section.

In view of the definition of relative reduction and collapse given at the end of the last section, it is natural to ask if there exists a "coarsest" relative reduction of a relative cell complex $(K,J)$ i.e. some maximal cell complex $J(K)$ with respect to the partial order $\red$ such that $(K,J) \red (K',J(K))$ for some $K'.$ In order to keep aside any question related to the existence of $J(K)$ when referring to it, it will be more convenient to use an explicit definition of $J(K)$ simply as a poset that one can associate to a relative cell complex. 

\begin{defi}[Transition] \label{defreducedbdry}
Let $(K,J)$ be a relative cc. For $x \in K_J,$ we define the \textit{reduced cell of $x$} to be
$$ x_J := \{ e \cap J^{[0]} ~|~ e \in \E_J^x \}.$$
We then define the \textit{transition of $(K,J)$} to be the poset
$$J(K) := \{ x_J ~|~ x \in K_J \}.$$
\end{defi}

\begin{figure}[!h]
\centering
\includegraphics[scale=0.42]{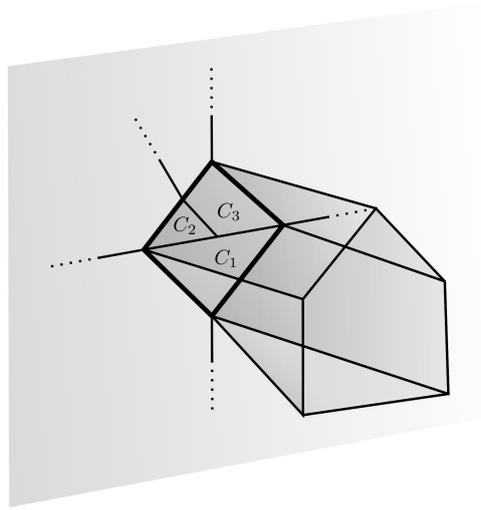}
\caption{\label{figtrans} This picture represents an example of a reduced cell $x_J$ made of four vertices and four edges (thicker lines) where $x$ is a 3-cell with 9 faces including $C_1, C_2$ and $C_3.$ The cell $x$ can be seen as the only 3-cell represented from the collar $K_J$ of a 3-cc $K$ where $J$ is a 2-cc made of all the cells whose vertices are drawn on the grey plane. For example $J \cap x$ contains the three 2-cells $C_1, C_2$ and $C_3$ and their neighbouring 2-cells in $J$ contain the "unending" edges represented using the dots $\cdot \cdot \cdot  \cdot \cdot.$}
\end{figure}

An example of a reduced cell is given in Figure \ref{figtrans}. We would like to also associate a rank function to $J(K),$ but it is not clear in general how to associate a rank to a reduced cell as it can be constructed from different cells in $K_J.$ Therefore we introduce the following notion of pure relative cc.
%Since we are interested in properties that are preserved under duality, we will use a definition of purity which, when applied to a cobordism $(K - J),$  implies that $\dual{(K - J)}$ is also pure. 

Let us first introduce some notations that will make the following formulations more convenient. If $A \subset J^{[0]},$ we can consider the set 
%$$K_J[A] := \{ y \in K_J ~|~ J \cap y = J \cap A \},$$ 
%as well as the set 
$$ K_J(A) := \{y \in K_J ~|~ A \subset y\}.$$
We also define $K^-_J(A)$ to be the set of minimal elements in $K_J(A).$

\begin{defi}[Pure relative cc and relative rank]\label{defpurerelcc}
A relative cc  $(K,J)$ is said to be \textit{pure} if $J \cap x$ is pure for all $x \in K_J$ and, dually, for all $x \in J,$ all cells in $K^-_J(x)$ have equal rank $r(x).$ It is clear that in this case $r(x) \geq 1$ for all $x \in J$ and we define $\rk_J^K(x) := r(x) - 1$ to be the \textit{relative rank of $x \in J$ in $K.$} 
\end{defi}
This definition in particular implies that $\rk_J^K(x) \geq \rk_K(x) = \rk_J(x)$ for all $x \in J.$

The two conditions of relative purity from Definition \ref{defpurerelcc} are always satisfied for example in the case of simplicial complexes, but there are instances of cc that do not satisfy either of the two conditions. An example of a relative cc $(K,J)$ that does not satisfy the first condition is given in Figure \ref{fignopure}, in which there exists $x \in K_J$ such that $J \cap x$ is not pure.   
Figure \ref{fignocopure} provides an example of relative cc $(K,J)$ that does not satisfy the second "dual condition" of pure relative cc, i.e. such that there exists $x \in K_J$ (the central vertex) such that $K_J^-[x]$ contains two elements with different ranks (the thicker edge and the wavy triangle). One can in fact relate the example of Figure \ref{fignocopure} to the example of the previous Figure \ref{fignopure} by using the duality map and defining a subdivision using the 2-cells $C_1,C_2,C_3.$ 

\begin{figure}
\centering
\includegraphics[scale=0.42]{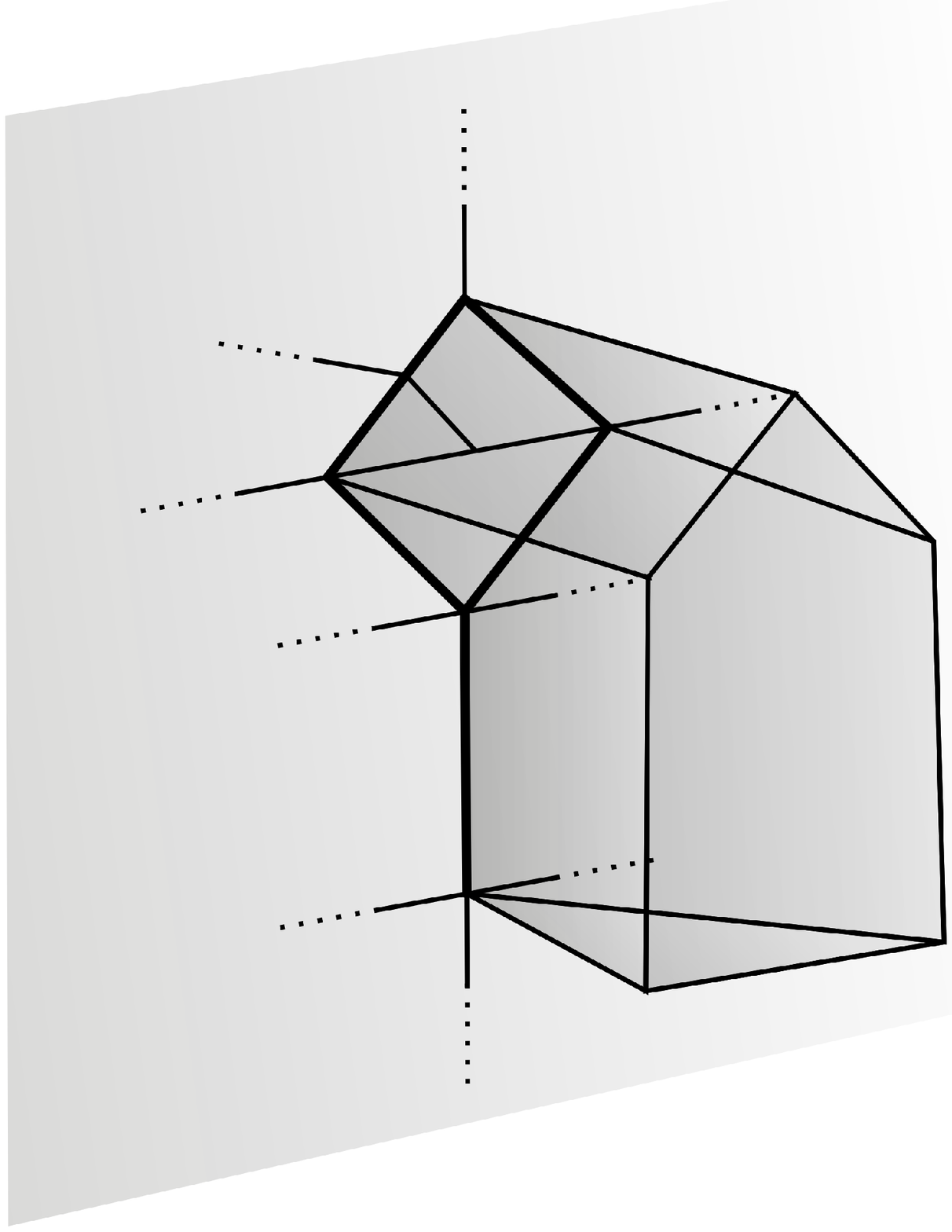}
\caption{\label{fignopure}  An example of a 3-cell $x \in K_J$ of a relative cc that does not satisfy the first condition for pure relative cc since $J \cap x,$ whose edges are represented with thicker lines, is not pure.}

%(This illustration uses a similar representation of $K$ and $J$ as in Figure \ref{figtrans})

\includegraphics[scale=0.47]{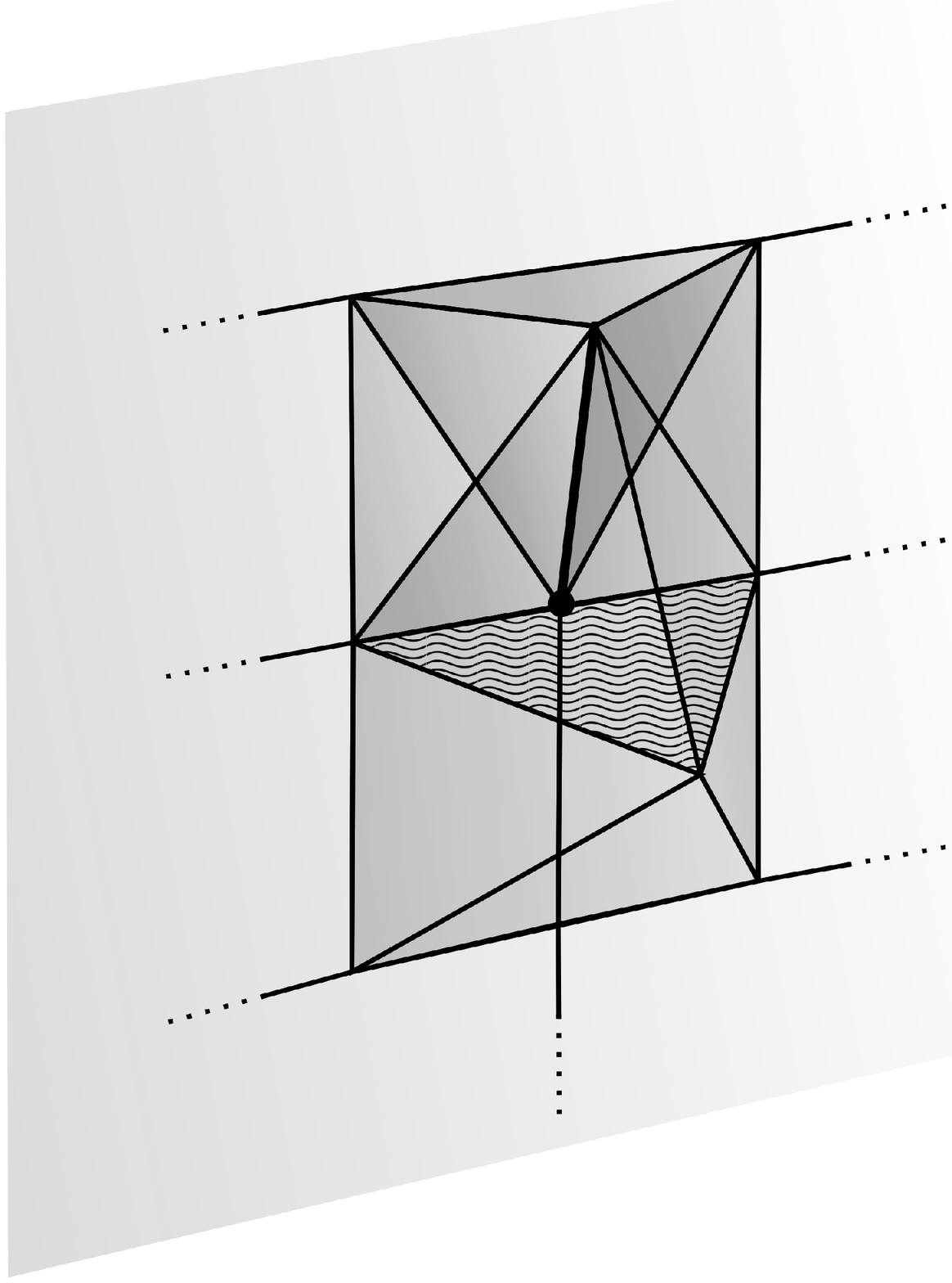}
\caption{\label{fignocopure}  An example of a cell $x \in J$ that does not satisfy the second condition of purity for the relative cc $(K,J),$ i.e. $x$ is the central vertex (black dot) and is both included in an edge (thicker line) and a 2-cell (wavy triangle) in $K_J.$ There are five 3-cells in this picture: four tetrahedra and one 3-cell below the wavy triangle having seven vertices.}
\end{figure}

Since we would like to manipulate reduced cells as the cells of the transition, we need to ensure that these contain "enough vertices". Indeed if $x \in K_J$ then $x_J \subset x \cap J^{[0]},$ but in general there can be vertices in $x \cap J^{[0]}$ not contained in an edge in $\E_J^x.$ 

We  therefore add an additional condition to our framework,  namely that the cells of a relative cc $(K,J)$ satisfy what we will refer to as the \textit{uniformity condition}:
\begin{equation} \label{conduniform}
J \cap x \subset J \cap y \quad \text{if and only if} \quad x_J \subset y_J, \quad \forall x,y \in K_J. \tag{\textbf{U}}
\end{equation}

We shall now see what the assumption of purity and condition (\ref{conduniform}) imply for the cells of $J(K).$ For this, we also introduce the following set associated to a cell $x \in J(K):$
$$K_J[x] = \{ y \in K_J ~|~ y_J = x\}$$
and denote by $K_J^-[x]$ the set of minimal elements in $K_J[x].$
The next lemma provides us with more information about the cells in $K_J^-[x]$ for $x \in J(K).$ In particular it implies that if $y \in K_J$ contains $x \in J(K)$ then there exists an element $x' \in K_J^-[x]$ such that $\rk_J^K(x) = \rk_K(x') - 1$ and $x \subset x' \subset y.$ This property is reminiscent of condition \ref{cond3col} from the definition of a collapse and it will indeed allow us to show that the map $\cphi_J^K$ we will define in Proposition \ref{proptransit} is a collapse.

\begin{lem}\label{lemunifrelcc}
Let $(K,J)$ be a pure relative cc such that (\ref{conduniform}) is satisfied and let $x \in J(K).$ Then for all $x' \in K_J^-[x]$ and all maximal cells $w \in J \cap x'$ we have
$$  \rk_J^K(x) := \rk_K(x') - 1 = \rk_K(w) .$$ 
Moreover,  for all $x' \in K_J^-[x]$ we have that $K_J(x) = K_J(x' \cap J^{[0]})$ and
$$K_J^-(x) = K_J^-[x] = K_J^-(x' \cap J^{[0]}),$$
which implies that for any given $y' \in K_J(x)$ there exists $x' \in K_J^-[x]$ such that $x' \subset y'.$
\begin{proof}
First, it is clear that $K_J[x] \subset K_J(x).$ An element $x' \in K_J^-[x]$ satisfies that $x \subset x',$ which implies the inclusion 
$$K_J(x) \supset K_J(x' \cap J^{[0]}).$$
The reverse inclusion is also valid by the following argument. Note that an element $y' \in K_J(x)$ satisfies
$$ x'_J = x \subset y'_J$$ 
and (\ref{conduniform}) applied to $(K,J)$ gives
$$ x' \cap J^{[0]} \subset y' \cap J^{[0]},$$
which in particular implies that $y \in K_J(x' \cap J^{[0]}).$ Therefore we indeed have $K_J(x) = K_J(x' \cap J^{[0]}).$

Moreover, a minimal element in $K_J[x]$ is also a minimal element in $K_J(x).$ This is a consequence of the fact that if $x' \in K_J[x]$ and  $y' \in K_J(x)$ satisfy $y' \subset x'$ then $$x \subset y'_J \subset x'_J = x$$
and therefore $y'_J = x$ i.e. $y \in K_J[x].$
As a consequence of the results above, the purity of $(K,J)$ implies that for all $x' \in K_J^-[x]$ we have
$$K_J^-[x] \subset K_J^-(x) = K_J^-(x' \cap J^{[0]}).$$

Therefore if $x' \in K_J^-[x]$ then $x' \in K_J^-(x' \cap J^{[0]})$ and since $J \cap x'$ is pure, Axiom \ref{cccenough} implies that a maximal cell $w$ of $J \cap x'$ satisfies
$$\rk_K(w) = \rk_K(x') - 1.$$ 

Suppose by contradiction that there exists $w' \in K_J^-(x) \setminus K_J^-[x]$ and let $x' \in K_J^-[x].$ Then $w' \cap x'$ cannot be an element of $K_J$ as it would in particular contradict the minimality of $x',$ hence $w' \cap x' \in J.$ But this implies that $J \cap w' = J \cap x'$ which by condition (\ref{conduniform}) implies $x = x'_J = w'_J.$ This contradicts our assumptions and concludes the proof.
\end{proof}
\end{lem}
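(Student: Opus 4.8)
The plan is to establish the two collar identities first — these are essentially formal consequences of the uniformity condition (\ref{conduniform}) — and then to read off the rank statement from the purity of $J\cap x'$ together with Axiom \ref{cccenough}. I would begin with the elementary observations: for $x'\in K_J^-[x]$ one has, straight from Definition \ref{defreducedbdry}, $x=x'_J\subset x'\cap J^{[0]}\subset x'$, and $K_J[x]\subset K_J(x)$ since $y'_J=x$ forces $x=y'_J\subset y'$ (and $K_J[x]\ne\emptyset$ because $x\in J(K)$). I would also isolate the small auxiliary fact that for any $y'\in K_J$ with $x\subset y'$ one actually has $x\subset y'_J$ — every vertex of a reduced cell $x=z_J$ lies on a boundary-crossing edge, and once $x\subset y'$ this property is inherited, using Axiom \ref{cccenough}; this is one of the points that needs genuine care.

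Next comes the identity $K_J(x)=K_J(x'\cap J^{[0]})$ for $x'\in K_J^-[x]$. The inclusion $\supset$ is immediate from $x\subset x'\cap J^{[0]}$. For $\subset$: given $y'\in K_J(x)$, the auxiliary fact gives $x'_J=x\subset y'_J$, so the equivalence (\ref{conduniform}), in the form $x'_J\subset y'_J \Leftrightarrow J\cap x'\subset J\cap y'$, yields $x'\cap J^{[0]}\subset y'\cap J^{[0]}\subset y'$, i.e. $y'\in K_J(x'\cap J^{[0]})$. Then I would show that $K_J[x]$ is downward closed inside $K_J(x)$: if $z'\in K_J[x]$, $y'\in K_J(x)$ and $y'\subset z'$, then $x\subset y'_J\subset z'_J=x$, so $y'_J=x$, i.e. $y'\in K_J[x]$. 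This gives at once $K_J^-[x]\subset K_J^-(x)$, hence, combined with the previous identity, $K_J^-[x]\subset K_J^-(x)=K_J^-(x'\cap J^{[0]})$. For the reverse inclusion $K_J^-(x)\subset K_J^-[x]$ I would argue by contradiction: if $w'\in K_J^-(x)\setminus K_J^-[x]$ and $x'\in K_J^-[x]$, then $w'\cap x'\supset x\ne\emptyset$; if $w'\cap x'$ were in $K_J$, minimality of both $w'$ and $x'$ in $K_J(x)$ would force $w'\cap x'=w'=x'$, impossible; so $w'\cap x'$ has all its vertices in $J^{[0]}$ and therefore lies in $J$ (non-degeneracy of the relative cc), which forces $J\cap w'=J\cap x'$, whence $w'_J=x'_J=x$ by (\ref{conduniform}) — contradicting $w'\notin K_J[x]$ after noting $w'$ is then minimal in $K_J[x]$ as well. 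This yields $K_J^-(x)=K_J^-[x]=K_J^-(x'\cap J^{[0]})$, and since $K_J$ is finite, every $y'\in K_J(x)$ contains a minimal element of $K_J(x)$, i.e. some element of $K_J^-[x]$.

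Finally, the rank statement: fix $x'\in K_J^-[x]$, so by the above $x'$ is a minimal cell of $K_J$ over $x'\cap J^{[0]}$. Since $(K,J)$ is pure, $J\cap x'$ is pure; applying Axiom \ref{cccenough} to a maximal chain of $J\cap x'$ ending at a maximal cell $w$ and using Axiom \ref{cccrank} together with minimality of $x'$, one gets $\rk_K(w)=\rk_K(x')-1$, and moreover $x'\in K_J^-(w)$ (any $K_J$-cell strictly between $w$ and $x'$ would violate Axiom \ref{cccrank}). The dual purity condition of Definition \ref{defpurerelcc} applied to $w\in J$ then makes all cells of $K_J^-[x]\subset K_J^-(w)$ share the common rank $r(w)=\rk_K(x')$, so that $\rk_J^K(x):=\rk_K(x')-1=\rk_K(w)$ is independent of the choices of $x'$ and $w$. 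I expect the main obstacle to be the reverse inclusion $K_J^-(x)\subset K_J^-[x]$ — specifically the implication "$w'\cap x'\in J\Rightarrow J\cap w'=J\cap x'$" — since that is exactly where the uniformity condition and the intersection Axiom \ref{cccinter} do the real work; the auxiliary fact $x\subset y'\Rightarrow x\subset y'_J$ is the other place where one cannot just quote definitions.
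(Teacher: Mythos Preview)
Your approach matches the paper's essentially step for step: both prove $K_J(x)=K_J(x'\cap J^{[0]})$ from (\ref{conduniform}), obtain $K_J^-[x]\subset K_J^-(x)$ by the same downward-closure argument, read off the rank identity from purity of $J\cap x'$ together with Axiom~\ref{cccenough}, and finish the reverse inclusion with the identical contradiction on $w'\cap x'$. You are more explicit than the paper on two points it uses without comment---the auxiliary fact $y'\in K_J(x)\Rightarrow x\subset y'_J$ and the non-degeneracy needed to conclude $w'\cap x'\in J$---though note that non-degeneracy is not formally among the lemma's hypotheses (the paper's own argument tacitly relies on it at the same step, and the lemma is only invoked in the uniform setting where it holds).
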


As a consequence of Lemma \ref{lemunifrelcc}, if $(K,J)$ is a uniform relative cc then the map
$$ J(K) \ni x_J \longmapsto \Rk(J \cap x) $$
is well defined and we have $\Rk(J \cap x) = \rk_K(x) - 1$ whenever $x \in K_J^-[x_J].$ We can therefore associate the rank function 
$$ \rk_{J(K)}( x_J) := \Rk(J \cap x)$$
to $J(K).$

\begin{defi}[Uniform relative cc]
A relative cc $(K,J)$ is called \textit{uniform} if the following four conditions are fulfilled:
\begin{itemize}[topsep=2pt, parsep=2pt, itemsep=3pt]
\item $(K,J)$ non-degenerate,
\item $(K,J)$ is pure,
\item the uniformity condition (\ref{conduniform}) holds for $(K,J),$
\item $(J(K), \rk_{J(K)}) \in \mlc.$
\end{itemize}
\end{defi}

It is nevertheless not clear to us whether the uniformity condition (\ref{conduniform}) and the assumption that $J(K) \in \mlc$ are independent from the assumptions of purity and non-degeneracy. From the result of Proposition \ref{proppolytope} on polytopes, it is reasonable to believe that every non-degenerate pure relative cc $(K,J)$ where $K \in \nsc^R$ for $R \leq 4$ is uniform.
%However, we believe that it is related to the assumption that $J_Y := \partial (J \cap y)$ is non-pinching for all $y \in K_J.$

Let us introduce the corresponding notion for cobordisms.

\begin{defi}[Uniform cobordism]
A cobordism $(K - J) \in \cob,$ will be called \textit{uniform} if $(K, \partial K)$ uniform. 
\end{defi}

Note that in general, the dual of a uniform cobordism is not uniform. The next lemma provides an extension of the definition of \textit{reduced cell $x_J$} for cells  $x \in J.$ 

\begin{lem} \label{lemredcell}
If $(K,J)$ is uniform and $x \in J$ then there exists a unique cell $x_J \in J(K)$ such that
$$K_J^-[x_J] = K_J^-(x) \quad \text{and} \quad \rk_{J(K)}(x_J) = \rk_J^K(x).$$
$x_J$ will be called the reduced cell associated to $x.$
\begin{proof}
In order to prove the first statement, it is sufficient to show that if $y,y'$ are distinct elements in $K_J^-(x)$ then $y_J = y'_J.$ 
%The purity of $(K,J)$ implies that $\rk(y) = \rk(y').$
Since $y$ and $y'$ are minimal elements in $K_J(x)$ then $y \cap y'$ cannot be contained in $K_J,$ so $y \cap y' \in J.$ Hence $J \cap y = J \cap y'$ and condition (\ref{conduniform}) implies that $y_J = y'_J.$

Also, if $y \in K_J^-(x)$ then by definition $\rk_J^K(x) = \rk_K(y) - 1$ and by Lemma \ref{lemunifrelcc} we have $\rk_K(y) = \rk_K(w) + 1$ for all maximal cells in $J \cap y.$ The second statement then follows since $y \in K_J^-[x_J]$ by the first statement and therefore
$$\rk_{J(K)}(x_J) = \rk_{J(K)}(y_J) = \rk_K(w)  = \rk_J^K(x).$$
\end{proof} 
\end{lem}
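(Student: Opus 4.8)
The plan is to define $x_J$ as the common reduced cell of the cells in $K_J^-(x)$ and then to check it has the required collar and rank; note first that $K_J^-(x)$ is non-empty, since the relative purity of $(K,J)$ attaches to $x$ the integer $r(x)$ (the common rank of the cells of $K_J^-(x)$) with $r(x)\geq 1$, so that $\rk_J^K(x)=r(x)-1$ is defined.

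First I would show that any two distinct cells $y,y'\in K_J^-(x)$ satisfy $y_J=y'_J$. By relative purity $\rk_K(y)=\rk_K(y')=r(x)$, and since $x\subset y$ and $x\subset y'$, Axiom \ref{cccinter} makes $y\cap y'$ a non-empty cell containing $x$. It cannot belong to $K_J$: as $y\neq y'$ and $\rk_K(y)=\rk_K(y')$, Axiom \ref{cccrank} forces $y\cap y'\subsetneq y$, so if $y\cap y'\in K_J$ it would be an element of $K_J(x)$ strictly below $y$, contradicting minimality of $y$. Hence $y\cap y'\subseteq J^{[0]}$, and by non-degeneracy of $(K,J)$ we get $y\cap y'\in J$. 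Using minimality together with relative purity and the uniformity condition (\ref{conduniform}) one deduces $J\cap y=J\cap(y\cap y')=J\cap y'$, and feeding these inclusions into (\ref{conduniform}) for the pair $y,y'\in K_J$ gives $y_J=y'_J$. I set $x_J$ to be this common value, which lies in $J(K)$ since $y\in K_J$.

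Next I would verify $K_J^-[x_J]=K_J^-(x)$, which in particular yields uniqueness of $x_J$ (if $z\in J(K)$ also satisfies $K_J^-[z]=K_J^-(x)$, pick $y$ in this non-empty set; then $y_J=z$ and $y_J=x_J$, so $z=x_J$). For $K_J^-(x)\subseteq K_J^-[x_J]$: each $y\in K_J^-(x)$ has $y_J=x_J$, so $y\in K_J[x_J]$, and if $z\in K_J[x_J]$ satisfied $z\subsetneq y$ then $z_J=x_J=y_J$, so (\ref{conduniform}) gives $J\cap z=J\cap y\ni x$, hence $x\subset z$ and $z$ would lie in $K_J(x)$ strictly below $y$, impossible. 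For the reverse inclusion: if $y\in K_J^-[x_J]$, fix any $y_0\in K_J^-(x)$; then $(y_0)_J=x_J=y_J$, so (\ref{conduniform}) gives $J\cap y=J\cap y_0\ni x$, whence $y\in K_J(x)$, and $y$ is minimal there, since a cell of $K_J^-(x)$ lying strictly below $y$ would (having reduced cell $x_J$) contradict minimality of $y$ in $K_J[x_J]$. Finally, for the rank: $x_J\in J(K)$ is realized minimally by any $y\in K_J^-(x)=K_J^-[x_J]$, so $\rk_{J(K)}(x_J)=\Rk(J\cap y)=\rk_K(w)$ for a maximal cell $w$ of $J\cap y$, and Lemma \ref{lemunifrelcc} gives $\rk_K(w)=\rk_K(y)-1=r(x)-1=\rk_J^K(x)$.

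I expect the main obstacle to be the identity $J\cap y=J\cap(y\cap y')=J\cap y'$ in the first step: one must argue that for two minimal cells of $K_J$ over $x$ the boundary cell $y\cap y'\in J$ is already the maximal cell of $J\cap y$ (and of $J\cap y'$), so that $J\cap y=B(y\cap y')=J\cap y'$. This is exactly where the uniformity hypotheses — purity of the restrictions $J\cap y$, the condition (\ref{conduniform}), and $(J(K),\rk_{J(K)})\in\mlc$ — have to be combined; everything else is bookkeeping with the definitions of $K_J(x)$, $K_J[x_J]$ and their sets of minimal elements, together with Lemma \ref{lemunifrelcc}.
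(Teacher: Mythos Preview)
Your proof takes the same route as the paper's: show that distinct $y,y'\in K_J^-(x)$ satisfy $y\cap y'\in J$ by minimality, deduce $J\cap y=J\cap y'$, apply (\ref{conduniform}) to get $y_J=y'_J$, and read the rank identity off Lemma \ref{lemunifrelcc}. Your explicit verification of $K_J^-[x_J]=K_J^-(x)$ is more detailed than the paper's (which folds it into ``it is sufficient to show''), and you are right to single out the identity $J\cap y=J\cap(y\cap y')=J\cap y'$ as the point where the uniformity hypotheses do the real work---the paper also asserts this step without further justification.
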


\begin{prop}\label{proptransit}
Let $K \in \nsc^{R+1}$ and $J \leq \partial K,$ $J \in \mlc^R$ such that $(K,J)$ is uniform and local.
Then we have the two following results.
\begin{enumerate}[label=\arabic*) , topsep=2pt, parsep=2pt, itemsep=1pt]
\item \label{proptransitred} The poset homomorphism $\sphi_J^K: J \dans J(K)$ defined by
$$\sphi_J^K(x) = x_J$$
is a reduction.
\begin{comment}
$$\sphi_J^K(x) := \begin{cases}
 x_J & \text{ if } x \in J,\\
 (x \setminus J^{[0]}) \sqcup x_J &\text{ if } x \in K_J,\\
 x & \text{ if } J \cap x = \emptyset,
\end{cases}$$
satisfies that 
$$\sphi_J^K(K) \in \nsc^{R+1}, \quad \sphi_J^K(J) = J(K) \leq \partial (\sphi_J^K(K)),$$
and defines a reduction of $J$ relative to $K.$ 
\end{comment}

\item \label{proptransitcol}The poset homomorphism $ \cphi_J^K: M_J^K \dans J(K)$ defined by $$\cphi_J^K(\E^x_J) = x_J$$ is a collapse.
\end{enumerate}

\begin{proof}
In order to prove the point \ref{proptransitred} we first prove that $\sphi_J^K: J \dans J(K)$ is a reduction. The map $\sphi_J^K$ is clearly a surjective poset homomorphism. Condition \ref{cond1red} follows from the fact that if a vertex $v \in J^{[0]}$ satisfies $v_J \in J(K)^{[0]}$ then we have $v_J = \{v\}.$

Condition \ref{cond3red} is obtained in the following way. Suppose $x \in J$ satisfies
$$ x_J \subset y \in J(K)^{[r]}.$$
For $y' \in K_J^-[y]$ we then have that $\Rk(J \cap y') = r$ and $x \in J \cap y'.$ By purity of $(K,J)$ there exists  an $r$-cell $x'$ in $J \cap y'$ such that $x \subset x'.$ If we name $w = x_J'$ then Lemma \ref{lemredcell} implies that
$$K_J^-[w] = K_J^-(x').$$
Since $x'$ is a maximal cell of $J \cap y'$ we have that $y' \in K_J^-(x')$ and by the previous relation this implies
$$ y = y'_J = w = x'_J,$$
which proves condition \ref{cond3red} for $\sphi_J^K.$
Condition \ref{cond2red} then follows from Lemma \ref{lemcond2redcol}.

We then turn to condition \ref{cond4red}. Let $x \in J$ such that $\rk_J(x) = \rk_{J(K)}(x_J) - 1$ and let $x' \in K_J^-[x_J].$ Then $\rk_K(x') = \rk_{J(K)}(x_J) + 1,$ hence $\rk_K(x') = \rk_K(x) - 2.$ Since $x \subset x',$ Axiom \ref{cccdiamond} for $K$ provides us with two cells $y,y'$ such that 
$$ \cface{x} \cap \face{x'} = \{y,y'\}.$$
By minimality of $x',$ we have that $y$ and $y'$ are in $J$ and 
$$x' \in K_J^-(y) \cap K_J^-(y') = K_J^-[y_J] \cap K_J^-[y'_J],$$
where the equality is due to Lemma \ref{lemredcell}. Also, if $w \in \cface{x}$ satisfies $w_J = x_J$ then $w \subset x'$ because $x' \in K_J^-[x_J].$ Since $\rk_J(w) = \rk_J(x) + 1 = \rk_K(x') - 1,$ we have $w \in \face{x'}$ and therefore $w \in \{y,y'\}.$ This completes the proof that
$$ \abs{ \cface{x} \cap (\sphi_J^K)^{-1}(x_J)} = 2.$$

Then condition \ref{cond5red} can be proven by taking $x \in J$ such that $\rk_J(x) = \rk_{J(K)}(x_J).$ This implies that $x$ is a maximal cell of $J \cap x'$ for all $x' \in K_J^-[x_J].$ If we take $y \in \cface{x_J},$  $y' \in K_J^-[y]$ and $x' \in K_J^-[x_J]$ such that $x' \subset y'$ then we obtain $$\rk_K(y') = \rk_{J(K)}(y) + 1 = \rk_K(x) + 2.$$ 
Therefore there exists $x'' \in \cface{x}$ such that
$$ \cface{x} \cap \face{y'} = \{x',x''\}.$$
%If $x'' \in J$ then $x''_J = y'_J = y$ and we have $$\cface{x} \cap (\sphi_J^K)^{-1}(y) = \{x''\}.$$ If not then $x'' \in K_J$ and $x' \cap x''  \in J$ which by condition (\ref{conduniform}) implies that $x' \cap x'' = x.$ Since $x$ is then the unique maximal cell of $J \cap x'$ and since $J \cap y'$ is a pure cc of rank $\rk_J(x) + 1$ there is a maximal cell $w$ in $J \cap y'$ containing $x$ and such that $w_J = y.$
If $x'' \in K_J$ then $x' \cap J = x'' \cap J = x$ which implies $y' \cap J = x$ but this contradicts the fact that $\Rk(J \cap y') = \rk_J(x) + 1.$
Therefore $x'' \in J,$ hence  $x''_J = y'_J = y$ and we have $$\cface{x} \cap (\sphi_J^K)^{-1}(y) = \{x''\}.$$ 

Next we prove that $\cphi_J^K : M_J^K \dans J(K)$ is a collapse. Condition \ref{cond1col} is clear since $\abs{\cface{z}} = 1$ for all $z \in J^{[R-1]}$ and therefore $K_J^-[z] = K_J^-(z)$ contains a unique cell $z'$ which is a maximal cell of $K,$ in other words $$(\cphi_J^K)^{-1}(z) = \{ \E_J^{z'}\}.$$

Condition \ref{cond3col} is a consequence of Lemma \ref{lemunifrelcc}, since if $x \in J(K)$ and $\E_J^{y'} \in M_J^K$ are such that $x \subset y_J'$ then there exists $x' \in K_J^-[x]$ such that $x' \subset y'.$ Hence we have  $\rk_{M_J^K}(\E_J^{x'}) = \rk_{J(K)}(x)$ and Lemma \ref{lemmidsec} implies that $\E_J^{x'} \subset \E_J^{y'}.$ Condition \ref{cond2col} is again a consequence of Lemma \ref{lemcond2redcol}.

In order to prove condition \ref{cond4col}, we take $\E_J^{x'} \in M_J^K$ such that
$$\rk_{M_J^K}(\E_J^{x'}) = \rk_{J(K)}(x'_J) + 1 = \rk_K(x') - 1.$$
If $w \in J \cap x'$ is maximal cell then Axiom \ref{cccdiamond} for $K$ gives
$$ \cface{w} \cap \face{x'} = \{y, y'\},$$
for $y,y' \in K.$ The maximality of $w$ then implies that $\{y,y'\} \subset K_J.$ 
We therefore have that 
$$w \subset y \cap y' \in J,$$ 
which by Axiom \ref{cccrank} implies that $w = y \cap y'.$  Therefore $y_J = y'_J = x_J$ and we indeed have
$$\abs{ \face{\E_J^{x'}} \cap (\cphi^{-1})(x)} = 2.$$

Finally, we can prove condition \ref{cond5col} in the following way. Suppose that $\E_J^{x'} \in M_J^K$ satisfies $ \rk_{M_J^K}(\E_J^{x'}) = \rk_{J(K)}(x'_J).$ If $w$ is a maximal cell in $J \cap x'$ we have that $\rk_K(w) = \rk_K(x') - 1,$ i.e. $w \in \face{x'}.$ By Lemma \ref{lemredcell} we also have that
$$x' \in K_J^-(w) = K_J^-[w_J].$$
If we fix an element $y \in \face{x}$ where $x := x'_J = w_J$ then by Lemma \ref{lemunifrelcc} there exists $y' \in K_J^-[y]$ such that $y' \subset x'$ and
$$\rk_K(y') = \rk_{J(K)}(y) + 1 = \rk_{J(K)}(x)  = \rk_K(x') - 1.$$
This implies that $y' \in \face{x'},$ i.e. $\E_J^{y'} \in \face{\E_J^{x'}}$  by Lemma \ref{lemmidsec}. Moreover if $\E_J^{w'} \in \face{\E_J^{x'}} \setminus \{\E_J^{y'}\}$ satisfies $w'_J = y$ then by minimality of $y'$ we have $y' \cap w' \in J.$ Hence it must be that $J \cap (y' \cap w')$ has a unique maximal cell $u$ and
$$  \{y' , w'\} \subset \cface{u} \cap \face{x'} .$$
This is a contradiction with Axiom \ref{cccdiamond} for $K$ since $J \cap x'$ is a pure cc such that
$$ \Rk(J \cap x') = \rk_K(x') - 1 = \rk_K(y') = \rk_K(u) + 1$$ and therefore there exists a maximal cell $w$ of $J \cap x'$ such that $w \in \cface{u} \cap \cface{x'}\setminus \{y',w'\}.$
Thus we can indeed conclude that
$$ \face{\E_J^{x'}} \cap (\cphi_J^K)^{-1}(y) = \{\E_J^{y'}\}.$$

\end{proof}
\end{prop}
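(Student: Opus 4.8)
The goal is to prove Proposition~\ref{proptransit}: given $K \in \nsc^{R+1}$ and $J \leq \partial K$, $J \in \mlc^R$, with $(K,J)$ uniform and local, the maps $\sphi_J^K : J \dans J(K)$, $x \mapsto x_J$, and $\cphi_J^K : M_J^K \dans J(K)$, $\E_J^x \mapsto x_J$, are respectively a reduction and a collapse.

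The plan is to verify, one by one, the five defining conditions of a reduction (Definition~\ref{defreduction}) for $\sphi_J^K$, and then the five conditions of a collapse (Definition~\ref{defcollapse}) for $\cphi_J^K$, leaning throughout on the structural facts already established about reduced cells, namely Lemma~\ref{lemunifrelcc}, Lemma~\ref{lemredcell} and Lemma~\ref{lemmidsec}, together with the diamond property \ref{cccdiamond} of $K$ and the uniformity condition (\ref{conduniform}). First I would note that both maps are surjective poset homomorphisms essentially by construction (surjectivity of $\sphi_J^K$ is immediate from the definition $J(K) = \{x_J \mid x \in K_J\}$ combined with Lemma~\ref{lemredcell}, and monotonicity follows from the uniformity condition together with the definition of $x_J$). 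For condition~\ref{cond1red} I would use that a vertex $v \in J^{[0]}$ has $v_J = \{v\}$, so the map is injective on vertices. Condition~\ref{cond2red} I would dispatch immediately via Lemma~\ref{lemcond2redcol}, since $(K,J)$ being uniform makes $J$ pure and $J(K) \in \mlc$, so \ref{cond3red} implies \ref{cond2red}.

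The heart of the reduction argument is condition~\ref{cond3red}: if $x \in J$ with $x_J \subset y \in J(K)^{[r]}$, then picking $y' \in K_J^-[y]$ one has $\Rk(J \cap y') = r$ and $x \in J \cap y'$; purity of $(K,J)$ gives an $r$-cell $x'$ of $J \cap y'$ with $x \subset x'$, and Lemma~\ref{lemredcell} applied to $x'$ (whose $K_J^-$-set is $K_J^-(x') \ni y'$) forces $x'_J = y'_J = y$, so $x'$ lies in $\sphi^{-1}(y)^{[r]}$ above $x$. For condition~\ref{cond4red} I would take $x \in J$ with $\rk_J(x) = \rk_{J(K)}(x_J) - 1$, pick $x' \in K_J^-[x_J]$ so that $\rk_K(x') = \rk_K(x) - 2$ (using the relative-rank identity from Lemma~\ref{lemunifrelcc}), apply the diamond property \ref{cccdiamond} of $K$ to the pair $x \subset x'$ to get exactly two cells $y,y'$ in $\cface{x} \cap \face{x'}$, observe minimality of $x'$ forces $y,y' \in J$, and then show these are precisely the two cells in $\cface{x}$ mapped to $x_J$ — the key point being that any $w \in \cface{x}$ with $w_J = x_J$ must lie below $x'$ (since $x' \in K_J^-[x_J]$) and have the right rank, hence $w \in \face{x'} \cap \cface{x} = \{y,y'\}$. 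Condition~\ref{cond5red} is analogous: for $x$ with $\rk_J(x) = \rk_{J(K)}(x_J)$ and $y \in \cface{x_J}$, choosing $y' \in K_J^-[y]$ and $x' \in K_J^-[x_J]$ with $x' \subset y'$ gives $\rk_K(y') = \rk_K(x)+2$, the diamond property yields $\cface{x} \cap \face{y'} = \{x',x''\}$, and a uniformity argument (if $x'' \in K_J$ then $x' \cap J = x'' \cap J = x$, contradicting $\Rk(J \cap y') = \rk_J(x)+1$) forces $x'' \in J$ with $x''_J = y$, giving uniqueness.

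For the collapse $\cphi_J^K$: condition~\ref{cond1col} follows because $\abs{\cface{z}} = 1$ for each $z \in J^{[R-1]}$ (as $J = \partial K$ locally implies these sit below a unique maximal cell of $K$), so $K_J^-[z]$ is a singleton consisting of a maximal cell of $K$. Condition~\ref{cond3col} is a direct reading of Lemma~\ref{lemunifrelcc} combined with Lemma~\ref{lemmidsec}: if $x \in J(K)$ and $\E_J^{y'} \in M_J^K$ with $x \subset y'_J$, Lemma~\ref{lemunifrelcc} produces $x' \in K_J^-[x]$ with $x' \subset y'$, and Lemma~\ref{lemmidsec} translates this into $\E_J^{x'} \subset \E_J^{y'}$ with the correct rank. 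Condition~\ref{cond2col} again follows from Lemma~\ref{lemcond2redcol}. Conditions~\ref{cond4col} and~\ref{cond5col} mirror their reduction counterparts: one takes $\E_J^{x'} \in M_J^K$ with the appropriate rank relation, picks a maximal cell $w$ of $J \cap x'$, applies the diamond property of $K$ to $w \subset x'$ to get $\cface{w} \cap \face{x'} = \{y,y'\} \subset K_J$, and then uses Axiom \ref{cccrank} (for \ref{cond4col}, $w \subset y \cap y' \in J$ forces $w = y \cap y'$, so $y_J = y'_J$) or a minimality-plus-diamond contradiction argument (for \ref{cond5col}) to pin down the cardinalities. I expect the main obstacle to be condition~\ref{cond5red} (and dually \ref{cond5col}): the bookkeeping of which sub-/co-faces stay inside $J$ versus $K_J$ is delicate, and getting the rank arithmetic to rule out the "wrong" configurations requires careful simultaneous use of purity of $J \cap x'$, the uniformity condition, and the diamond property — this is where the local hypothesis and the precise definition of $K_J^-[\,\cdot\,]$ really earn their keep.
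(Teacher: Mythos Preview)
Your proposal is correct and follows essentially the same approach as the paper's proof: you verify the five reduction conditions for $\sphi_J^K$ and the five collapse conditions for $\cphi_J^K$ in the same order, invoking the same supporting results (Lemma~\ref{lemunifrelcc}, Lemma~\ref{lemredcell}, Lemma~\ref{lemmidsec}, Lemma~\ref{lemcond2redcol}) and the diamond property~\ref{cccdiamond} at the same places. The arguments you sketch for \ref{cond3red}, \ref{cond4red}, \ref{cond5red}, \ref{cond4col}, and \ref{cond5col} match the paper's reasoning essentially line for line.
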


\subsection{Compatible, reflective and orthogonal homomorphisms} \label{sseccompnorthhom}

For this section and what follows, we introduce the notation $x \vee y$ to designate $\bigvee \{x,y\},$ the \textit{least upper bound (or supremum) of ${x,y}$}, i.e. the smallest cell containing both $x$ and $y$ in a cc $K$ in which an upper bound of $x$ and $y$ exists. If no cell contains both $x$ and $y$ we simply write $x \vee y = \all.$

In addition to being a reduction and a collapse, we will see in this section that the reduction and collapse $\sphi_J^K$ and $\cphi_J^K$ introduced in Proposition \ref{proptransit} of Section \ref{ssectrans} have a property that will be described as "$\sphi_J^K$ is compatible with $\cphi_J^K$" and constitutes the first definition of this section. Compatibility is therefore in particular a condition on two poset homomorphisms having the same image and this section also introduces the notions of reflectivity and orthogonality, which involve a condition on two homomorphisms having the same domain. These conditions will later on be key in defining specific sequences of reductions and collapses between elements in $\mlc^R$ that will be used to construct cells of dimension $R+1.$ The notion of "augmented poset" introduced in the later part of this section will also be used for these constructions, together with a number of technical results proven here.

\begin{defi}[Compatible homomorphisms, $\cpa$]  \label{defcoorthom}
Let $j : J \dans I$  and $l : L \dans I$ be cc-homomorphisms. Then we say that $j$ is \textit{compatible with} $l$ if the three following conditions are satisfied:
\begin{enumerate}[label=\arabic*) , topsep=2pt, parsep=2pt, itemsep=1pt]
\item \label{cpa} For all $w \in I$ either $\abs{j^{-1}(w)} = 1$ or $\abs{l^{-1}(w)}=1.$
\item \label{cpau} If $x,x' \in J$ are such that $x \vee x' \neq \all,$ then $$j(x \vee x') = j(x) \vee j(x').$$
\item \label{cpai} If $y,y' \in L$ are such that $y \cap y' \neq \emptyset$ then $$l(y \cap y') = l(y) \cap l(y').$$
\end{enumerate}
We write $j \cpa l$ to denote that $j$ is compatible with $l.$ Note that if $j$ and $l$ are poset isomorphisms then $j$ is compatible with $l.$
\end{defi}

The next corollary provides typical examples of compatible homomorphisms.

\begin{cor} \label{cortransitmidsec}
Let $K \in \nsc^R$ and $J \leq \partial K,$ $J \in \mlc^{R-1}$ such that $(K,J)$ is uniform and local then
$$J \red_{\sphi_J^K} J(K) \loc_{\cphi_J^K} M_J^K \quad \text{and} \quad \sphi_J^K \cpa \cphi_J^K.$$
\begin{proof}
The first statement is a direct consequence of Proposition \ref{proptransit}.
As for the first condition of compatibility, suppose by contradiction that there exists $x \in J(K)$ such that there are two distinct elements $x_1,x_2 \in J$ satisfying $(x_1)_J = (x_2)_J = x,$ as well as two distinct elements $\E_J^{y_1},\E_J^{y_2} \in M_J^K$ such that $(y_1)_J = (y_2)_J = x.$ By the conditions \ref{cond3red}, \ref{cond3col}, \ref{cond4red} and \ref{cond4col} we can assume that 
$$ r:= \rk_{J(K)}(x) = \rk_J(x_i)  = \rk_{M_J^K}(\E_J^{y_i})  = \rk_K(y_i)-1,$$
for $i=1,2$ where the two last equalities come from the definition of the midsection $M_J^K.$ This implies that $y_1,y_2 \in K_J^-[x]$ and therefore $y_1 \cap y_2 \in J^{[r]}.$ But we have $\{x_1, x_2\} \subset J \cap (y_1 \cap y_2)$ so this contradicts the intersection axiom \ref{cccinter} of $K.$

In order to prove the condition \ref{cpau} of compatibility, take $x,x' \in J$ such that $x \vee x' \neq \all.$ We need to show that 
$$\sphi_J^K(x \vee x') = (x \vee x')_J = x_J \vee x'_J = \sphi_J^K(x) \vee \sphi_J^K(x'),$$ where the map $J \ni x \mapsto x_J \in J(K)$ is defined in Lemma \ref{lemredcell}, which also states that $K_J^-[x_J] = K_J^-(x)$ and $\rk(y)-1=\rk(x)$ for all $y \in K_J^-(x).$ 

The inclusion $x_J \vee x'_J \subset (x \vee x')_J$ is clear (and in particular $x_J \vee x'_J \neq \all$) since if $y \in K_J^-(x \vee x')$ then in particular $x\subset y$ and $x' \subset y$ hence the definition of reduced cells easily leads $x_J \subset y_J$ and $x'_J \subset y_J$ and we have $$x_J \vee x'_J \subset y_J = (x \vee x')_J.$$ 

Now consider $w \in K_J^-[x_J \vee x'_J],$ hence we have that $x_J \subset w_J$ and $x'_J \subset w_J.$ The uniformity condition \ref{conduniform} implies that  $J \cap y' \subset J \cap w$ for all $y' \in K_J^-(x) \cup K_J^-(x')$ which in particular implies that $x \subset w$ and $x' \subset w.$  Hence we have $x \vee x' \subset w$ and this implies that $y_J = (x \vee x')_J \subset w_J$ which leads the desired equality.

The condition \ref{cpai} of compatibility is a direct consequence of Lemma \ref{lemmidsec} since if $\E_J^y,\E_J^{y'} \in M_J^K$ are such that $\E_J^y \cap \E_J^{y'} \neq \emptyset,$ then
\begin{align*}
\cphi_J^K(\E_J^y \cap \E_J^{y'}) &= \cphi_J^K(\E_J^{y \cap y'}) = (y \cap y')_J\\
&= \{ e \cap J^{[0]} ~|~ e \in \E_J^{y \cap y'} \} = \{e \cap J^{[0]} ~|~ e \in \E_J^y \cap \E_J^{y'}\}\\
&= \{e \cap J^{[0]} ~|~ e \in \E_J^y \} \cap \{e \cap J^{[0]} ~|~  e \in \E_J^{y'}\} \\
&= x_J \cap y_J = \cphi_J^K(\E_J^y) \cap \cphi_J^K(\E_J^{y'}).
\end{align*} 

\end{proof}
\end{cor}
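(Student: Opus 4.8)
The first assertion, namely that $J \red_{\sphi_J^K} J(K)$ and $J(K) \loc_{\cphi_J^K} M_J^K$, is immediate: it is exactly the content of parts \ref{proptransitred} and \ref{proptransitcol} of Proposition \ref{proptransit}. So the real task is to verify the three conditions of Definition \ref{defcoorthom} for the pair $(\sphi_J^K,\cphi_J^K)$, which share the codomain $J(K)$. Throughout I would work with the explicit formulas $\sphi_J^K(x)=x_J$ for $x\in J$ (with $x_J$ the reduced cell of Lemma \ref{lemredcell}, for which $K_J^-[x_J]=K_J^-(x)$ and $\rk_{J(K)}(x_J)=\rk_J^K(x)$) and $\cphi_J^K(\E_J^x)=x_J$ for $\E_J^x\in M_J^K$, together with the uniformity condition (\ref{conduniform}) and Lemmas \ref{lemunifrelcc} and \ref{lemmidsec}.

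For condition \ref{cpa} I would argue by contradiction. Suppose some $w\in J(K)$ has both $\abs{(\sphi_J^K)^{-1}(w)}\ge 2$ and $\abs{(\cphi_J^K)^{-1}(w)}\ge 2$. Using conditions \ref{cond3red} and \ref{cond4red} (resp. \ref{cond3col} and \ref{cond4col}) to align ranks, one reduces to two distinct $x_1,x_2\in J$ with $(x_i)_J=w$ and $\rk_J(x_i)=\rk_{J(K)}(w)=:r$, and two distinct $\E_J^{y_1},\E_J^{y_2}\in M_J^K$ with $(y_i)_J=w$ and $\rk_K(y_i)=r+1$. Then $y_1,y_2$ are minimal in $K_J[w]$, hence (Lemma \ref{lemunifrelcc}, \ref{lemredcell}) minimal in $K_J(x_1)$, so $y_1\cap y_2$ cannot belong to $K_J$; since it contains $x_1$ it is, by Axiom \ref{cccinter}, a nonempty cell meeting $J^{[0]}$ but not in $K_J$, hence a cell of $J$, of rank $\le r$ because $y_1\ne y_2$. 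On the other hand $K_J^-[w]=K_J^-(x_i)$ by Lemma \ref{lemredcell}, so $x_i\subseteq y_1$ and $x_i\subseteq y_2$, whence $x_i\subseteq y_1\cap y_2$; Axiom \ref{cccrank} then forces $x_i=y_1\cap y_2$ for $i=1,2$, contradicting $x_1\ne x_2$. I expect this step to be the main obstacle, since it is where uniformity, relative purity, and the cc axioms for $K$ all have to be combined carefully.

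For condition \ref{cpau}, given $x,x'\in J$ with $x\vee x'\ne\all$ I would establish the two inclusions between $(x\vee x')_J$ and $x_J\vee x'_J$ separately. For the first, any $y\in K_J^-(x\vee x')$ satisfies $x,x'\subseteq y$, hence $x_J,x'_J\subseteq y_J=(x\vee x')_J$ by Lemma \ref{lemredcell}, so $x_J\vee x'_J\ne\all$ and $x_J\vee x'_J\subseteq(x\vee x')_J$. For the reverse, take $w\in K_J^-[x_J\vee x'_J]$; then $x_J,x'_J\subseteq w_J$, and applying (\ref{conduniform}) to cells of $K_J^-(x)=K_J^-[x_J]$ and of $K_J^-(x')=K_J^-[x'_J]$ gives $x\subseteq w$ and $x'\subseteq w$, so $x\vee x'\subseteq w$; sliding down via Lemma \ref{lemunifrelcc} to a minimal cell of $K_J$ over $x\vee x'$ lying inside $w$ and using monotonicity of $y\mapsto y_J$ yields $(x\vee x')_J\subseteq w_J=x_J\vee x'_J$.

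Condition \ref{cpai} is the most routine: if $\E_J^y\cap\E_J^{y'}\ne\emptyset$ then $y\cap y'$ is a cell by Axiom \ref{cccinter}, and since $x\mapsto\E_J^x$ is a poset isomorphism onto $M_J^K$ by Lemma \ref{lemmidsec} we get $\E_J^y\cap\E_J^{y'}=\E_J^{y\cap y'}$. Unwinding the definition of the reduced cell and commuting the intersection through $e\mapsto e\cap J^{[0]}$ then gives
\[
\cphi_J^K(\E_J^y\cap\E_J^{y'})=(y\cap y')_J=\{\,e\cap J^{[0]}\mid e\in\E_J^y\cap\E_J^{y'}\,\}=y_J\cap y'_J=\cphi_J^K(\E_J^y)\cap\cphi_J^K(\E_J^{y'}),
\]
which completes the verification that $\sphi_J^K\cpa\cphi_J^K$.
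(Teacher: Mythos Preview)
Your proof is correct and follows essentially the same approach as the paper's own proof: both reduce the first assertion to Proposition \ref{proptransit}, handle condition \ref{cpa} by the same contradiction (two minimal $y_1,y_2\in K_J^-[w]$ force $y_1\cap y_2\in J^{[r]}$ while containing both $x_1,x_2$), establish \ref{cpau} via the two inclusions using uniformity (\ref{conduniform}) and Lemma \ref{lemredcell}, and derive \ref{cpai} directly from Lemma \ref{lemmidsec}. The only cosmetic difference is that you invoke Axiom \ref{cccrank} for the final contradiction in \ref{cpa}, whereas the paper phrases it via \ref{cccinter}; the underlying argument is identical.
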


Next we  define the notions of reflective and orthogonal cc-homomorphisms. We will see later that midsections of slices provide examples of orthogonal collapses, as a consequence of the Correspondence Theorem \ref{thmcorresp}. Reflective reductions will typically be used in the context of unions of cell complexes in $\nsc$ as introduced in Definition \ref{defunioncc}.

\begin{defi}[Reflective ($\refl$) and Orthogonal ($\perp$) homomorphisms] \label{defrefnorth}
Let $j : I \dans J$ and $l : I \dans L$ be two cc-homomorphisms.
\begin{enumerate}[label=\arabic*) , topsep=2pt, parsep=2pt, itemsep=1pt]
\item \label{condoref} We say that $j$ and $l$ are \textit{reflective}, noted $j \refl l,$ if whenever two elements $w,w' \in I$ satisfy that both $j(w) \vee j(w') \neq \all$ and $l(w) \vee l(w') \neq \all$ then we have that $w \vee w' \neq \all,$ as well as the two equalities  
$$ j(w \vee w') = j(w) \vee j(w') \quad \text{ and } \quad l(w \vee w') = l(w) \vee l(w').$$
\item \label{condorth} Dually, we say that $j$ and $l$ are \textit{orthogonal}, noted $j \perp l,$ if whenever two elements $w,w' \in I$ satisfy that both $j(w) \cap j(w') \neq \emptyset$ and $l(w) \cap l(w') \neq \emptyset$ then we have that $w \cap w' \neq \emptyset,$  as well as the two equalities $$ j(w \cap w') = j(w) \cap j(w') \quad \text{ and } \quad l(w \cap w') = l(w) \cap l(w').$$
\end{enumerate}
Note that if $j$ and $l$ are poset isomorphisms then $j$ and $l$ are reflective and orthogonal.
\end{defi}

In the next lemma we introduce the notion of augmented poset with a corresponding rank function. These will be key tools in the proof of the Correspondence Theorem \ref{thmcorresp} in the next section.

\begin{lem} \label{lemsqcup}
Let $M,J,L$ be mutually disjoint cc and let $\phi_J : M \dans J$ and $\phi_L:M \dans L$ be two orthogonal cc-homomorphisms. Then the map from $M$ to $\pws{J^{[0]} \sqcup L^{[0]}}$ defined by 
\begin{equation}\label{equlemcorresp}
x \mapsto \phi_J(x) \sqcup \phi_L(x) 
\end{equation}
is an injective poset homorphism such that the inverse is a poset homomorphism. The poset
$$ (\phi_J \sqcup \phi_L)(M) := \{ \phi_J(m) \sqcup \phi_L(m) ~|~ m \in M\}$$
will be called the \textit{augmented poset associated to $\phi_J$ and $\phi_L$} and will be assigned the rank function
$$\rk_{(\phi_J \sqcup \phi_L)(M)}(x \sqcup y) = \rk_M(m(x,y)) + 1,$$
where $m(x,y)$ is the unique element in $M$ such that 
$$\phi_J(m(x,y)) \sqcup \phi_L(m(x,y)) = x \sqcup y.$$ 
\begin{proof}
It is sufficient to show that for all $x,x' \in M,$ $$\phi_J(x) \sqcup \phi_L(x) \subset \phi_J(x') \sqcup \phi_L(x') \quad \Rightarrow \quad x \subset x'.$$
Suppose by contradiction that there exists $x,x' \in M$ such that $\phi_J(x) \sqcup \phi_L(x) \subset \phi_J(x') \sqcup \phi_L(x')$ and there exists $v \in x \setminus x'.$ Hence we get $\phi_I(v)\subset \phi_I(x)\subset \sphi_I(x'), ~I=J,L$ and this implies that $\phi_J(v) \cap \phi_J(x') \neq \emptyset$ and $\phi_L(v) \cap \phi_L(x') \neq \emptyset.$ The orthogonality condition \ref{condorth} implies that $v \in x'$ which is a contradiction.
\end{proof}
\end{lem}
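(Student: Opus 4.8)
The statement to prove is Lemma \ref{lemsqcup}: given mutually disjoint cc $M,J,L$ and orthogonal cc-homomorphisms $\phi_J: M \dans J$, $\phi_L: M \dans L$, the map $x \mapsto \phi_J(x) \sqcup \phi_L(x)$ is an injective poset homomorphism whose inverse is a poset homomorphism, so that the augmented poset $(\phi_J \sqcup \phi_L)(M)$ is well defined together with its rank function. As the excerpt already indicates, the only non-trivial point is injectivity together with the fact that the inverse preserves the order — equivalently, that $\phi_J(x) \sqcup \phi_L(x) \subset \phi_J(x') \sqcup \phi_L(x')$ implies $x \subset x'$. (That the map itself is a poset homomorphism is immediate: if $x \subset x'$ then $\phi_J(x) \subset \phi_J(x')$ and $\phi_L(x) \subset \phi_L(x')$ since $\phi_J,\phi_L$ are poset homomorphisms, and disjoint unions of subsets respect inclusion.)

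For the key implication, I would argue by contradiction exactly along the lines sketched in the excerpt. Suppose $\phi_J(x) \sqcup \phi_L(x) \subset \phi_J(x') \sqcup \phi_L(x')$ but $x \not\subset x'$, so there is a vertex $v \in x \setminus x'$. Since $v \subset x$ we have $\phi_I(v) \subset \phi_I(x)$ for $I = J, L$ (poset homomorphism applied to the singleton cell $\{v\}$, which is a cell of $M$ by the axioms of cc). Because $J$ and $L$ are disjoint, the inclusion $\phi_J(x) \sqcup \phi_L(x) \subset \phi_J(x') \sqcup \phi_L(x')$ forces $\phi_J(x) \subset \phi_J(x')$ and $\phi_L(x) \subset \phi_L(x')$ separately; hence $\phi_J(v) \subset \phi_J(x')$ and $\phi_L(v) \subset \phi_L(x')$. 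In particular $\phi_J(v) \cap \phi_J(x') = \phi_J(v) \neq \emptyset$ and $\phi_L(v) \cap \phi_L(x') = \phi_L(v) \neq \emptyset$. Now the orthogonality condition \ref{condorth} applied to the pair $\{v\}, x' \in M$ yields $\{v\} \cap x' \neq \emptyset$, i.e. $v \in x'$, contradicting $v \in x \setminus x'$. This proves $x \subset x'$, so the map is injective and its inverse is a poset homomorphism.

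Finally I would note that the rank function $\rk_{(\phi_J \sqcup \phi_L)(M)}$ is well defined precisely because the map $x \mapsto \phi_J(x) \sqcup \phi_L(x)$ is a bijection from $M$ onto $(\phi_J \sqcup \phi_L)(M)$: the element $m(x,y) \in M$ with $\phi_J(m(x,y)) \sqcup \phi_L(m(x,y)) = x \sqcup y$ exists and is unique by injectivity just established, so $\rk_M(m(x,y)) + 1$ is unambiguously assigned. I do not expect a genuine obstacle here; the one point to state carefully is the use of disjointness of $J$ and $L$ to decompose the inclusion of disjoint unions into the two separate inclusions, and the observation that singletons $\{v\}$ are cells of $M$ (so that $\phi_J, \phi_L$ may legitimately be evaluated on them and condition \ref{condorth} applied with one argument equal to $\{v\}$). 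Everything else is a direct unwinding of the definition of orthogonal homomorphisms.
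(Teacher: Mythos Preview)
Your proof is correct and follows essentially the same argument as the paper: assume $\phi_J(x)\sqcup\phi_L(x)\subset\phi_J(x')\sqcup\phi_L(x')$ yet $v\in x\setminus x'$, deduce $\phi_I(v)\subset\phi_I(x')$ for $I=J,L$, and apply orthogonality to the pair $\{v\},x'$ to get the contradiction $v\in x'$. Your added remarks on disjointness of $J,L$ (to split the inclusion), on $\{v\}$ being a cell, and on well-definedness of the rank function are helpful clarifications but do not change the strategy.
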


The next lemma indicates how compatibility, reflectivity and orthogonality of cc-homomorphisms behave under duality, and is a relatively direct consequence of the definitions.

\begin{lem} \label{lemdualcompnorth}
Let $I,J,L \in \mlc^R.$ Then the two following statements hold.
\begin{enumerate}[label=\Alph*) , topsep=2pt, parsep=2pt, itemsep=1pt]
\item \label{lemdualcomp} If $j: J \dans I$ and $\l : L \dans I$ are cc-homomorphisms then
$$ j \cpa l \quad \text{if and only if} \quad \dual{j} \apc \dual{l}.$$
\item \label{lemdualorth} If $j: I \dans J$ and $l : I \dans L$ are cc-homomorphisms then
$$ j \perp l \quad \text{if and only if} \quad \dual{j} \refl \dual{l}.$$
\end{enumerate}

\begin{proof}
For the claim \ref{lemdualcomp}, the first condition of the compatibility of $\dual{l}$ with $\dual{j}$ is obtained from the fact that if $A$ is subset of cells in a closed cc then $\abs{\dual{A}} = \abs{A}$ by Lemma \ref{claim2propdual}. As for the other conditions for compatibility and the claim \ref{lemdualorth}, they follow from the fact that if $x,y$ are cells in a closed cc the $\dual{x}\cap \dual{y} = \dual{x \vee y}$ and equivalently $\dual{x}\vee \dual{y} = \dual{x \cap y}.$ More precisely if $ x \vee y = \all$ then no maximal contains both $x$ and $y$ hence $\dual{x} \cap \dual{y} = \emptyset,$ and if otherwise $\dual{x} \cap \dual{y}$ is the set of maximal cells containing both $x$ and $y,$ i.e. containing $x \vee y.$
\end{proof}
\end{lem}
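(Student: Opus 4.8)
The plan is to read the statement as a purely formal translation under the order-reversing bijection $x \mapsto \dual{x}$, once the interaction of this bijection with binary suprema and infima has been recorded. First I would set down the following dictionary, valid in any $H \in \mlc^R$ and its dual $\dual{H}$ (which is again closed by Proposition \ref{propdual}): by Lemma \ref{claim2propdual} the map $x \mapsto \dual{x}$ is an anti-isomorphism of posets, so $\abs{\dual{\A}} = \abs{\A}$ for any set of cells $\A$; for a cc-homomorphism $\phi$ one has $(\dual{\phi})^{-1}(\dual{x}) = \dual{\phi^{-1}(x)}$ and hence $\abs{(\dual{\phi})^{-1}(\dual{x})} = \abs{\phi^{-1}(x)}$; and for cells $x,y \in H$,
$$ \dual{x} \cap \dual{y} = \dual{x \vee y} \qquad \text{and} \qquad \dual{x} \vee \dual{y} = \dual{x \cap y}, $$
with the convention that $x \vee y = \all$ corresponds to $\dual{x} \cap \dual{y} = \emptyset$ and $x \cap y = \emptyset$ corresponds to $\dual{x} \vee \dual{y} = \all$. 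The second identity, together with the existence of the suprema appearing in it, is the only point needing a short argument: a cell contains both $\dual{x}$ and $\dual{y}$ if and only if its dual is a cell contained in both $x$ and $y$, the largest such being $x \cap y$ when this is non-empty and there being none otherwise.

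Granting this dictionary, claim \ref{lemdualcomp} is proven by matching the three clauses of Definition \ref{defcoorthom}. Unwinding notation, $\dual{j} \apc \dual{l}$ means that $\dual{l}$ is compatible with $\dual{j}$. Clause \ref{cpa} for the pair $(\dual{l},\dual{j})$ asks that for every $\dual{w}$ either $\abs{(\dual{l})^{-1}(\dual{w})} = 1$ or $\abs{(\dual{j})^{-1}(\dual{w})} = 1$, which by the cardinality identity is verbatim clause \ref{cpa} for $(j,l)$. Clause \ref{cpau} for $(\dual{l},\dual{j})$ concerns $\dual{x}, \dual{x'} \in \dual{L}$ with $\dual{x} \vee \dual{x'} \neq \all$, i.e. $x \cap x' \neq \emptyset$, and asserts $\dual{l}(\dual{x} \vee \dual{x'}) = \dual{l}(\dual{x}) \vee \dual{l}(\dual{x'})$; applying the dictionary, the left side is the dual of $l(x \cap x')$ and the right side is the dual of $l(x) \cap l(x')$, so by injectivity of duality this is exactly clause \ref{cpai} for $l$. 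Symmetrically, clause \ref{cpai} for $(\dual{l},\dual{j})$, which involves $\dual{y}, \dual{y'} \in \dual{J}$ with $\dual{y} \cap \dual{y'} \neq \emptyset$, i.e. $y \vee y' \neq \all$, reduces to clause \ref{cpau} for $j$. Thus the three clauses defining $j \cpa l$ are equivalent, clause by clause, to the three clauses defining $\dual{l} \cpa \dual{j}$, and the reverse implication requires no separate work since applying the construction twice returns the original homomorphisms up to the isomorphism of Proposition \ref{propdual}.

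Claim \ref{lemdualorth} follows in the same fashion: $j \perp l$ with common domain $I$ dualizes to a pair $\dual{j}, \dual{l}$ with common domain $\dual{I}$, and the two clauses of orthogonality in Definition \ref{defrefnorth}\ref{condorth} — the hypotheses $j(w) \cap j(w') \neq \emptyset$, $l(w) \cap l(w') \neq \emptyset$ and the conclusions $w \cap w' \neq \emptyset$, $j(w \cap w') = j(w) \cap j(w')$, $l(w \cap w') = l(w) \cap l(w')$ — translate term by term, via the dictionary, into the hypotheses $\dual{j}(\dual{w}) \vee \dual{j}(\dual{w'}) \neq \all$, $\dual{l}(\dual{w}) \vee \dual{l}(\dual{w'}) \neq \all$ and the conclusions $\dual{w} \vee \dual{w'} \neq \all$, $\dual{j}(\dual{w} \vee \dual{w'}) = \dual{j}(\dual{w}) \vee \dual{j}(\dual{w'})$, $\dual{l}(\dual{w} \vee \dual{w'}) = \dual{l}(\dual{w}) \vee \dual{l}(\dual{w'})$, which is precisely the defining property of $\dual{j} \refl \dual{l}$ in Definition \ref{defrefnorth}\ref{condoref}. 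I do not expect a genuine obstacle here: the entire content is the supremum–infimum dictionary of the first paragraph, and the only care needed is bookkeeping of the degenerate cases ($\all$ versus $\emptyset$) and checking that passing to preimages commutes with the duality bijection.
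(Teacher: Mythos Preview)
Your proof is correct and follows essentially the same approach as the paper's own proof, resting on the same two ingredients: the cardinality preservation $\abs{\dual{\A}} = \abs{\A}$ from Lemma~\ref{claim2propdual} and the lattice identities $\dual{x} \cap \dual{y} = \dual{x \vee y}$, $\dual{x} \vee \dual{y} = \dual{x \cap y}$. The only difference is that you spell out the clause-by-clause correspondence and the handling of the degenerate cases explicitly, whereas the paper leaves these as immediate consequences of the stated identities.
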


\subsection{Slices and slice sequences} \label{ssecslice}

In this section we prove an number of important results related to the notion of slice, a uniform cobordism having in particular the property of having all vertices on its boundary. Proposition \ref{propslice} shows that a slice has exactly two boundary components and Proposition \ref{propredbdry} proves that one can define the subdivision of a boundary component of an element in $\nsc$ using a relative reduction. Using these two propositions we prove the main result of this chapter: the Correspondence Theorem \ref{thmcorresp}. This theorem establishes a bijection between slices (up to cc-isomorphism) and sequences of the form 
$$J \red J' \loc M \col L' \der L$$
where $J,J'M,L',L \in \mlc,$ introduced in Definition \ref{defsliceseq} as "slice sequences".

The notion of slice we introduce next is to be interpreted as a fundamental building bloc of what will become our notion of causal cobordism. 

\begin{defi}[Slice]\label{defslice}
A cobordism $(S - J)$ is a \textit{slice} if it is uniform and  
\begin{equation}\label{condslice}
S^{[0]} = (\partial S)^{[0]}, \quad J \notin \{ \partial S, \emptyset\}.
\end{equation}
We will denote the set of slices by $\cob_s.$
\end{defi}

An example of a sub-cc of a slice of rank 3 composed of three maximal cells was given in Figure \ref{slicepic} from Section \ref{susecmidsection} on midsections. We give a similar illustration of a slice in Figure \ref{figslice}. 

In particular every maximal cell of a slice $S$ in the previous definition has at least one vertex in $J$ and one vertex in $\partial S \setminus J$ by the non-degeneracy of $(S,\partial S).$ 
%This property is analogous to the property of having no monochromatic maximal cells in a causal slice as defined in \cite{dj15}.

%If one interprets the structure of causal triangulation as a prototype of a time coordinate, the following proposition can be interpreted as the time reversal symmetry of slices. 
%and is a reason for requiring $(K,L)$ to be non-degenerate in the definition of a slice. 
The following result establishes a number of important properties of slices.
Note that this result does not use the assumption that $(S - J)$ is uniform, except for the fact that $(S, \partial S)$ is non-degenerate.

\begin{prop} \label{propslice}
Let $(S - J)$ be a slice and let $L := \partial S \setminus J.$ Then $M_L^S = M_J^S$ and $(S - L)$ is a slice. Moreover $J$ and $L$ are connected, i.e. $\partial S$ has only two connected components.
\begin{proof}
Proving that $M_L^S = M_J^S$ amounts to prove that $S_L = S_J.$ The latter is clear from the property (\ref{condslice}), which implies that $\emptyset \neq x \cap J^{[0]} \neq x $ if and ony if $ \emptyset \neq x \cap L^{[0]} \neq x,$ for all $x \in S.$

We next show that $L$ is connected.  Suppose by contradiction that $L = L_1 \sqcup L_2$ where $L_i \in \mlc$ for $i= 1,2$ and let $R =\Rk(L) = \Rk(J).$ The case $R = 0$ is clear since the connectedness of $S$ implies that there is a path from $L_1$ to $L_2$ going through $J,$ which means that there is a vertex in $J$ included in two edges and this is a contradiction with the definition of a boundary cell, i.e. that a sub-maximal cell of the boundary - here a 0-cell - is included in only one maximal cell. The strategy of the proof for the case $R \geq 3$ is to show that our hypothesis implies that there exists an edge of $S$ having one vertex in $L_1$ and the other vertex in $L_2,$ which contradicts the non-degeneracy of $(S,L).$

By connectedness of $S,$ there exists a path $p : v_1 \pathto v_2$ in $S$ from $v_1 \in L_1^{[0]}$ to $v_2 \in L_2^{[0]}$ and we may suppose that $p$ is simple and $\abs{p^{[0]} \cap L_i^{[0]}} = 1$ for $i= 1,2,$ that is to say, $p$ intersects $L$ only on two vertices. Let $\{v_1,w\}$ be the first edge of $p.$ If $w \in L_2$ we are done, so we suppose that $w \in J^{[0]}.$ By our assumptions, $p$ enters a connected component $J_0$ of $J$ from the vertex $v$ and leaves $J_0$ through an edge of the form $\{w',v_2\},$ where $w \in J_0^{[0]}.$ By locality of $(S,J)$ the midsection $M := M_{J_0}^S$ is connected, hence there exists a path $p_M :\{w,v_1\} \pathto \{w',v_2\}$ in $M.$ Each edge of the path $p_M$ corresponds to a 2-cell in $S_J$ which by locality intersects $J$ in a simple path $p_0,$ or a single vertex $v,$ in which case we set $p_0 = \emptyset_v$ (cf. definitions in Section \ref{ssecgraphncc}). So using each such 2-cell $C$ in $p_M$ we can successively deform $p$ using a path move of the form $m_C^{p_1}$ where $p_1 \subset p$ is the simple path $p_1 = \{v_1,w\} \ast p_0.$ At each iteration the resulting path might cross $L$ in more than two vertices. If it does so by also crossing through $L_2$ this would imply that there is an edge of $S$ crossing from $L_1$ to $L_2$ which would produce the desired contradiction. We can therefore assume that deforming $p$ can only increase the number of vertices it crosses in $L_1,$ in which case we can simply re-define $v_1$ to be the last vertex the path crosses in $L_1.$ In this way we can successively deform and shorten the path $p$ until the last step, corresponding to the last edge in $p_M$ which again produces an edge of $S$ crossing from $L_1$ to $L_2,$ which concludes our argument.

In order to show that $(S-L) \in \cob$ we note that since $(S,L)$ is non-degenerate and $M_L^S = M_J^S$ it only remains to show that $x \cap L$ is connected for all $x \in S_L.$ Assume on the contrary that $L \cap x = X_1 \sqcup X_2$ for some $x \in S_L,$ where $X_i \leq L$ for $i= 1,2$ and $X_1$ and $X_2$ are not connected. Then since $S$ is cell-connected there exists a path $p : v_1 \pathto v_2$ in $S \cap x$ from $v_1 \in X_1$ to $v_2 \in X_2.$ By locality of $(S,J)$ we have that $\E_L^x = \E_J^x$ is a connected cell of $M_L^S = M_J^S.$ Therefore the previous argument again implies that there must exist an edge of the form $\{v_1, v_2\}$ where $v_i \in X_i, ~ i = 1,2,$ which is a contradiction.
As a consequence $(S-L) \in \cob$ is clearly also a slice. 

By exchanging the roles of $J$ and $L$ we have that $J$ is also connected, which concludes our proof.
\end{proof}
\end{prop}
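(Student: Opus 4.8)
\textbf{Proof proposal for Proposition \ref{propslice}.}

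The plan is to prove the three assertions in the order they appear: first the equality of midsections, then the connectedness of $L$ (and by symmetry of $J$), and finally that $(S-L)$ is a genuine slice. The first claim is the cheapest: since $(S-J)$ is a slice, the defining condition (\ref{condslice}) says every vertex of $S$ lies on $\partial S = J \sqcup L$, so a cell $x \in S$ has a vertex in $J^{[0]}$ and a vertex outside $J^{[0]}$ if and only if it has a vertex in $L^{[0]}$ and a vertex outside $L^{[0]}$; hence $S_J = S_L$ and consequently $\E_J^x = \E_L^x$ for all such $x$, giving $M_J^S = M_L^S$ from the definition of the midsection. The same observation shows $(S,L)$ is non-degenerate: if $x \in S \setminus L$ had $x \subset L^{[0]}$, then, since $x$ also cannot be contained in $J^{[0]}$ unless $x$ itself is a cell of $J$, we would contradict either non-degeneracy of $(S,J)$ or the fact that $x\notin L$; this needs to be written out carefully using that $\partial S$ is closed (Lemma \ref{rembdrycc}).

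The heart of the argument is connectedness of $L$. I would argue by contradiction: suppose $L = L_1 \sqcup L_2$ with $L_i \in \mlc$. Handle $R := \Rk(L) = 0$ separately, where connectedness of $S$ forces a path through $J$ and hence a vertex of $\partial S$ inside two edges, contradicting the boundary condition. For $R \geq 1$ the strategy is to produce an edge of $S$ with one endpoint in $L_1$ and one in $L_2$, contradicting non-degeneracy of $(S,L)$. Take a simple path $p : v_1 \pathto v_2$ in $\kg$ realizing $S$ with $v_i \in L_i^{[0]}$, chosen so that $p$ meets $L$ in exactly these two vertices. If the first edge of $p$ already goes into $L_2$ we are done; otherwise $p$ enters some connected component $J_0$ of $J$ at $v_1$ through an edge $\{v_1,w\}$ and eventually exits $J_0$ through an edge $\{w',v_2\}$. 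By locality of $(S,J)$ the midsection $M := M_{J_0}^S$ is cell-connected, so there is a path $p_M$ in $M$ from the edge (of $M$) corresponding to the $2$-cell through $\{v_1,w\}$ to the one corresponding to the $2$-cell through $\{w',v_2\}$; each edge of $p_M$ is a $2$-cell $C \in S_{J_0}$ which, by locality, meets $J$ in a simple path $p_0$ (possibly a single vertex). Using the moves $m_C^{p_1}$ of Definition \ref{defmovespath} with $p_1$ the relevant simple subpath, one successively deforms $p$ along $p_M$; at each step one either creates an edge crossing directly between $L_1$ and $L_2$ (done) or one only adds crossings through $L_1$, in which case redefine $v_1$ as the last $L_1$-vertex visited. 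The final move (the last edge of $p_M$) necessarily produces an edge of $S$ joining $L_1$ and $L_2$, the desired contradiction.

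For the last statement, that $(S-L) \in \cob$ and is a slice: having established $M_L^S = M_J^S$ and non-degeneracy of $(S,L)$, what remains for $(S,L)$ to be local is that $L \cap x$ is connected for all $x \in S_L$ (cell-connectedness of $M_{L_0}^S$ for each component $L_0$ of $L$ is inherited since $M_{L_0}^S = M_{J_0}^S$ for the corresponding component, via the same reasoning as for the full midsection). Suppose $L \cap x = X_1 \sqcup X_2$ disconnected for some $x \in S_L$; since $S$ is cell-connected there is a path $p$ in $\kg \cap x$ from $X_1$ to $X_2$, and since $\E_L^x = \E_J^x$ is a connected cell of $M_L^S$, the deformation argument above (now entirely inside $K \cap x$) again forces an edge $\{v_1,v_2\}$ with $v_i \in X_i$, contradicting non-degeneracy. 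Uniformity of $(S,L)$ follows because $S_L = S_J$ and the transition and purity conditions are symmetric in $J$ and $L$ under this identification, so $(S,L)$ is uniform; combined with $S^{[0]} = (\partial S)^{[0]}$ and $L \notin \{\partial S, \emptyset\}$ (which holds since $J \notin \{\partial S,\emptyset\}$ and $\partial S = J \sqcup L$), this makes $(S-L)$ a slice. Finally, exchanging the roles of $J$ and $L$ in the connectedness argument shows $J$ is connected, so $\partial S$ has exactly two components. The main obstacle is the deformation-of-paths step: making precise that each path move can be chosen so the number of $L_1$-crossings does not decrease except at the final step where a forbidden edge is forced, i.e. controlling how $m_C^{p_1}$ interacts with the two pieces $L_1$ and $L_2$ of the boundary.
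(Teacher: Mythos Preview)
Your proof follows essentially the same approach as the paper's: the equality $S_J = S_L$ from (\ref{condslice}), the path-deformation argument through the midsection to produce a forbidden edge between $L_1$ and $L_2$, and the analogous local version inside a single cell $x$, followed by the symmetry swap of $J$ and $L$. One small slip worth correcting: the endpoints of $p_M$ are the \emph{vertices} $\{v_1,w\}$ and $\{w',v_2\}$ of $M$ (i.e.\ edges of $S$ in $\E_{J_0}$), not edges of $M$, and what you actually need to invoke is the \emph{connectedness} of $M_{J_0}^S$ rather than cell-connectedness, exactly as the paper does.
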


%We decided to consider a slice as a cobordism between non-trivial boundary components as it will fit our purposes here, but it is also possible to consider an "origin point", the case where there is only one non-trivial boundary component (i.e. the case where $J = \emptyset$), and instead of the condition (\ref{condslice}), setting that all but on vertex are contained in the boundary. Proposition \ref{propslice} would then also implies the connectedness of the outgoing - and only - boundary component of such cobordism.

%It is possible to extend the previous proposition to a generalized notion of slice in which the assumption "$J \in \{ \partial S, \emptyset\}$" from Definition \ref{defslice} is dropped. The proof of Proposition \ref{propslice} would imply that such type of slice could then have only one boundary component $J = \partial S $ and the non-degeneracy condition would imply that there exists exactly one vertex in $S \setminus \partial S.$

As a consequence of Proposition \ref{propslice}, a cobordism $(S - L)$ is a slice if and only if $S \in \nsc,$ $S^{[0]} = (\partial S)^{[0]},$ $(S, \partial S)$ is uniform and local and $\partial S$ has exactly two connected components, one of which is $L.$ We will therefore \textit{also call such cc $S$ a slice} and denote \textit{the set of slices} by $\nsc_s.$ We will also denote the \textit{midsection associated to a slice} $S$ by 
$$M^S := M_J^S = M_L^S.$$  

The remainder of this section focuses on a how to characterize a slice using the notion of a slice sequence. The correspondence between these two objects will also allow us to provide  sufficient conditions for the composition of combordisms, as done in the next section, and will be at the basis for the construction of the category of causal cobordisms given in Section \ref{sseccatcausalcob}.

\begin{defi}[Slice sequence] \label{defsliceseq}
A sequence of local mutually disjoint cell complexes $M,J,J',L,L'$ in $\mlc$ and maps
$$J \red_{\sphi_J} J' \loc_{\cphi_J} M \col_{\cphi_L} L' \der_{\sphi_L} L$$
will be called a \textit{slice sequence} if $\sphi_J, \cphi_J, \sphi_L, \cphi_L$ satisfy 
$$ \sphi_J \cpa \cphi_J, \quad \sphi_L \cpa \cphi_L, \quad  \cphi_J \perp \cphi_L.$$ 
\end{defi}

The next results are included as a preparation for the Correspondence Theorem \ref{thmcorresp}.

\begin{lem}\label{lemredbdry}
Let $K \in \nsc$ and let $J'$ be a connected component of $\partial K.$ If $\sphi: J \dans J'$ is a reduction where $J$ is disjoint from $K$ then the map from $K_{J'}$ to $\pws{\left(\kz \setminus(J')^{[0]}\right) \sqcup J^{[0]}}$ defined by
\begin{equation} \label{maplemredbdry}
 y' \longmapsto \left( y' \setminus (J')^{[0]} \right) \sqcup \left( \bigcup_{x \in J, ~ \sphi(x) \subset y'} x \right)
\end{equation}
is an injective poset homomorphism such that the inverse is also a poset homomorphism.
\begin{proof}
Let us denote by $\phi$ the map (\ref{maplemredbdry}), which clearly is a poset homomorphism since so is $\sphi.$

The injectivity of $\phi$ is a direct consequence of condition \ref{cond1red} and the observation that any vertex $v \in \sphi^{-1}( (J')^{[0]} \cap y')$ satisfies that $\phi(v) \subset y'$ hence $v$ is included in $\left( \bigcup_{x \in J, ~ \sphi(x) \subset y'} x \right).$

\begin{comment}
Let $y' \in K'_J$ and let us pick an arbitrary vertex $v \in  \cup_{x \in J, ~ \sphi(x) \subset y'} x.$ Then we have that $v \in x$ for some $x \in J$ such that $\sphi(x) \subset y',$ hence $\sphi(v) \subset y'$ and therefore 
$$  \left( \bigcup_{x \in J, ~ \sphi(x) \subset y'} x \right) \subset  \left( \bigcup_{v \in J^{[0]}, ~ \sphi(v) \subset y'} \{v\} \right).$$
But the reverse of the last inclusion clearly also holds so that
\begin{align*}
 \phi(y') &= \left( y' \setminus (J')^{[0]} \right) \sqcup  \left( \bigcup_{x \in J, ~ \sphi(x) \subset y'} x \right)\\
 &=  \left( y' \setminus (J')^{[0]} \right) \sqcup  \left( \bigcup_{v \in J^{[0]}, ~ \sphi(v) \subset y'} \{v\} \right) \\
 &\supset   \left( y' \setminus (J')^{[0]} \right) \sqcup \sphi^{-1}( (J')^{[0]} \cap y')
\end{align*}
and by condition \ref{cond1red} $\phi(y')$ uniquely determines $y'.$
\end{comment}

From the last observation one also get that the inverse of $\phi$ is a poset homomorphism since if $x',y' \in K_{J'}$ satisfy that $\phi(x') \subset \phi(y')$ then it implies that $x' \setminus (J')^{[0]} \subset y' \setminus (J')^{[0]}$ and
$$ \sphi^{-1}( (J')^{[0]} \cap x') \subset \sphi^{-1}( (J')^{[0]} \cap y').$$
Therefore \ref{cond1red} implies that $ (J')^{[0]} \cap x' \subset (J')^{[0]} \cap y'$ and it follows that $x' \subset y'.$
\end{proof}
\end{lem}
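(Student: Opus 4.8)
\textbf{Proof plan for Lemma \ref{lemredbdry}.}

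The plan is to verify directly the three assertions about the map $\phi$ defined by (\ref{maplemredbdry}): that it is a poset homomorphism, that it is injective, and that its inverse (on the image) is a poset homomorphism. The first assertion is essentially immediate: if $x' \subset y'$ in $K_{J'}$, then $x' \setminus (J')^{[0]} \subset y' \setminus (J')^{[0]}$, and any $x \in J$ with $\sphi(x) \subset x'$ also satisfies $\sphi(x) \subset y'$, so the union appearing in the definition of $\phi(x')$ is contained in the corresponding union for $\phi(y')$; hence $\phi(x') \subset \phi(y')$. The subtlety, and what I would emphasize, is the interplay between the two ``halves'' of the disjoint union: one must check that the union over cells $x \in J$ with $\sphi(x) \subset y'$ really does capture exactly the vertices one wants, namely the preimages under $\sphi$ of the boundary vertices of $y'$.

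The key simplifying observation is that, for $y' \in K_{J'}$, one has
$$ \bigcup_{x \in J,~ \sphi(x) \subset y'} x = \bigcup_{v \in J^{[0]},~ \sphi(v) \subset y'} \{v\} = \sphi^{-1}\bigl( (J')^{[0]} \cap y' \bigr). $$
Indeed the first equality holds because every vertex of a cell $x$ with $\sphi(x) \subset y'$ is itself a vertex $v$ with $\sphi(v) \subset \sphi(x) \subset y'$ (using that $\sphi$ is a poset homomorphism), and conversely each such $v$ is a cell $x = \{v\}$ of $J$; the second equality is just the definition of $\sphi^{-1}$ restricted to vertices, using that $\sphi|^{\kz}$ lands in vertices. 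With this, $\phi(y') = \bigl( y' \setminus (J')^{[0]} \bigr) \sqcup \sphi^{-1}\bigl( (J')^{[0]} \cap y' \bigr)$, which makes both remaining points transparent.

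For injectivity: condition \ref{cond1red} of a reduction says $\sphi|^{\kz}$ is injective, so the assignment $(J')^{[0]} \cap y' \mapsto \sphi^{-1}\bigl( (J')^{[0]} \cap y' \bigr)$ is injective as a map on subsets of $(J')^{[0]}$; combined with the fact that $\phi(y')$ records $y' \setminus (J')^{[0]}$ directly in the first summand, we recover $y' = \bigl( y' \setminus (J')^{[0]} \bigr) \cup \bigl( (J')^{[0]} \cap y' \bigr)$ from $\phi(y')$, giving injectivity. For the inverse being a poset homomorphism: if $\phi(x') \subset \phi(y')$, then comparing the first summands gives $x' \setminus (J')^{[0]} \subset y' \setminus (J')^{[0]}$, and comparing the second summands gives $\sphi^{-1}\bigl( (J')^{[0]} \cap x' \bigr) \subset \sphi^{-1}\bigl( (J')^{[0]} \cap y' \bigr)$; applying $\sphi$ and using \ref{cond1red} once more yields $(J')^{[0]} \cap x' \subset (J')^{[0]} \cap y'$, and taking the union of the two inclusions gives $x' \subset y'$.

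I do not anticipate a genuine obstacle here; the only mild care needed is bookkeeping with the disjoint-union notation and making sure the disjointness of $\kz \setminus (J')^{[0]}$ from $J^{[0]}$ is used (this is where the hypothesis that $J$ is disjoint from $K$ enters, guaranteeing the codomain $\pws{(\kz \setminus (J')^{[0]}) \sqcup J^{[0]}}$ makes sense and that the two summands of $\phi(y')$ never interfere). The reduction axioms beyond \ref{cond1red} play no role in this particular lemma — only injectivity of $\sphi$ on vertices is used — so the argument is robust.
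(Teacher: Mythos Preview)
Your proof is correct and follows essentially the same approach as the paper: both hinge on identifying the second summand of $\phi(y')$ with $\sphi^{-1}\bigl((J')^{[0]}\cap y'\bigr)$ and then invoking condition \ref{cond1red} for injectivity and for the inverse being order-preserving. Your version is slightly more explicit about that identification and about where the disjointness hypothesis enters, but the argument is the same.
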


\begin{prop} \label{propredbdry}
Let $K \in \nsc$ and let $J'$ be a connected component of $\partial K$ such that $(K , J')$ is uniform. If $\sphi : J \dans J'$ is a reduction such that $\sphi \circ \sphi_{J'}^K : J \dans J'(K)$ is compatible with $\cphi_{J'}^K : M_{J'}^K \dans J'(K)$ and $J$ is disjoint from $K$ then the cc $K^\sphi$ (introduced in Definition \ref{defikphi}) is an element in $\nsc$ and $J$ is a connected component of $\partial K^\sphi,$ in other words 
$$( K^\sphi, J) \red_\sphi (K, J')$$
defines a relative reduction. Moreover, we have that $\sphi_J^{K^\sphi} = \sphi \circ \sphi_{J'}^K$ and $(K^\sphi, J)$ is uniform.
\begin{proof}
Since Proposition \ref{propprecpartialorder} implies that $\sphi \circ \sphi_{J'}^K$ is also a reduction we can without loss of generality assume that $\sphi_{J'}^K = \id_{J'(K)},$ i.e. $J' = J'(K)$ and $\sphi = \sphi_{J'}^K \circ \sphi.$

Define $(J')^c := K \cap ( \kz \setminus (J')^{[0]}),$ i.e. the sub-cc of $K$ containing the cells in $K$ not intersecting $J'.$ Let $\phi$ be the map defined by (\ref{maplemredbdry}) which by Lemma \ref{lemredbdry} defines a poset isomorphism 
$$ \phi : K_{J'} \dans \phi(K_{J'}).$$
We define $(K^\sphi, \rk_{K^\sphi})$ by
\begin{align*}
K^\sphi &:= (J')^c \sqcup J \sqcup \phi(K_{J'}) \\
\rk_{K^\sphi} &:= \rk_K|_{(J')^c} + \rk_J + \rk_K|_{K_{J'}} \circ \phi^{-1}.
\end{align*}
It then follows that $K_J^\sphi = \phi(K_{J'}).$ Also, since $J' = \cphi_{J'}^K(M_{J'}^K)$ as $\sphi_{J'}^K = \id_{J'}$ we have that
\begin{equation}\label{condapropredbdry}
\text{for all } y' \in K_{J'}, (y')_{J'} = y' \cap ({J'})^{[0]}.
\end{equation}

Let's now check the axioms of cc for $K^\sphi.$ To make our notations more explicit, for a cell $x$ in a cc $L$ we will denote by $\kcface{x}{L}$ the set of co-faces of $x$ in $L$ and similarly for faces. Since these axioms are satisfied for $K$ they are automatically satisfied when all cells involved are in $J$ or $(J')^c \sqcup \phi(K_J)$ by the poset isomorphism property of $\phi.$ This allows us to only check the axioms in the case of inclusions of cells between the sets $J$ and $(J')^c \sqcup \phi(K_J).$

To check Axiom \ref{cccrank} we only need to consider the case where $x \in J,$ $y \in K^\sphi_{J}$ and $x \subsetneq y.$ In this case we have that $\sphi(x) \subsetneq \phi^{-1}(y)$ and therefore
$$ \rk_{K^\sphi}(x) \leq \rk_{K}(\sphi(x)) < \rk_K( \phi^{-1}(y) ) = \rk_{K^\sphi}(y).$$

To check Axiom \ref{cccinter} for the non-direct cases amounts to show that $x \cap y \in K^\sphi$ for $x \in K_J^\sphi \sqcup J$ and $y \in K_J^\sphi.$ If $x \in K_J^\sphi$ and $x',y' \in K_{J'}$ are such that $\phi(x') = x $ and $\phi(y') = y$ then we have two possible cases: either $\E_{J'}^{x'} \cap \E_{J'}^{y'} \neq \emptyset$ or $\E_{J'}^{x'} \cap \E_{J'}^{y'} = \emptyset.$ In the first case one has that $x' \cap y' \in K_{J'}$ and $$\phi(x' \cap y') = \phi(x') \cap \phi(y') = x \cap y \in K^\sphi_J.$$
For the case $\E_{J'}^{x'} \cap \E_{J'}^{y'} = \emptyset,$ then by property (\ref{condapropredbdry}), we have that $$x' \cap y' = (x')_{J'} \cap (y')_{J'} \in J'(K) = J'.$$
By Proposition \ref{proptransit} we know that $\cphi_{J'}^K$ is a collapse, hence by \ref{cond3col} we can pick $x'_0 \in B(x')^{[r+1]}$ and $y'_0 \in B(y')^{[r+1]}$ where $r = rk(x' \cap y') = \rk_{M_{J'}^K}(\E_{J'}^{x'_0}) = \rk_{M_{J'}^K}(\E_{J'}^{y'_0}).$ By Lemma \ref{lemunifrelcc} this implies that $x'_0 ,y'_0 \in K^-_{J'}[x' \cap y'],$ and since $\E_{J'}^{x'} \cap \E_{J'}^{y'} = \emptyset,$ we also have that $\E_{J'}^{x_0'} \cap \E_{J'}^{y_0'} = \emptyset.$ The latter implies that $\abs{(\cphi_{J'}^K)^{-1}(x'\cap y')}>1$ which by condition \ref{cpa} of compatibility implies that there exists $x \in J$ such that $\sphi^{-1}(x' \cap y')=\{w_0\}.$ Therefore we obtain $$\phi(x') \cap \phi(y') = \bigcup_{w\in \sphi^{-1}(B(x' \cap y'))} w = w_0 \in J.$$
For the case where $x \in J$ and if $y' = \phi^{-1}(y) \in K_{J'}$ then one has to show that $$x \cap \left(  \bigcup_{w \in J ~:~ \sphi(w) \subset y'} w \right) \in J.$$ Let $w_0 := (y')_{J'},$ then $\{w \in J ~|~ \sphi(w) \subset y'\} = \sphi^{-1}(w_0),$ hence we need to show that there exists a unique maximal cell in $B(x) \cap \sphi^{-1}(w_0).$ Suppose by contradiction that $w_1,w_2$ are two such maximal cells. Since we have $w_i \subset x$ for $i=1,2,$ we have that $w_1 \vee w_2 \neq \all.$ Condition \ref{cpau} of compatibility then implies that $$\sphi_{J'}^K(w_1 \vee w_2) = \sphi(w_1) \vee \sphi(w_2) \subset w_0,$$ where the inclusion is due to the fact that $\sphi(w_i) \subset w_0$ for $i=1,2.$ But this implies that $w_1 \vee w_2 \in \sphi^{-1}(B(w_0))$ and contradicts the maximality of $w_1$ and $w_2.$ This finishes the proof of Axiom \ref{cccinter} for $K^\sphi.$

Axiom \ref{cccenough} can be inferred from the case $J \ni x \subsetneq y \in K^\sphi_J$ which is proven as follows. By the poset isomorphism property of $\phi$ we have that $\sphi(x) \subset \phi^{-1}(y) \in K_{J'}.$ Therefore by Axiom \ref{cccenough} for $K$ we can suppose without loss of generality that $\rk_K(y) = \rk_K(\sphi(x)) +1.$ If $\rk_J(x) = \rk_K(\sphi(x))$ then we have that $y \in \kcface{x}{K^\sphi}$ as desired. If $\rk_J(x) < \rk_K(\sphi(x))$ then, as noted in Remark \ref{remcond3redncol}, there exists $w \in \sphi^{-1}(\sphi(x))$ such that $\rk_J(w) = \rk_K(\sphi(x)).$ Therefore using \ref{cccenough} for $J$ we can find a co-face of $x$ contained in $w$ which is then also contained in $y.$

For Axiom \ref{cccdiamond} it is also sufficient to focus on the case where $x \in J, y \in K_J^\sphi,$ $x \subsetneq y$ and 
$$\rk_{K^\sphi}(x) = \rk_{K^\sphi}(y) - 2.$$
Since $\rk_{K^\sphi}(y) = \rk_K(\phi^{-1}(y)),$ this leave two possibilities: either we have $\rk_K(\sphi(x))= \rk_{K^\sphi}(x)$ or $\rk_K(\sphi(x)) = \rk_{K^\sphi}(x) +1.$ In the first case where $\rk_K(\sphi(x)) = \rk_{K^\sphi}(x),$ Axiom \ref{cccdiamond} for $K$ implies that 
$$\abs{ \kcface{\sphi(x)}{K} \cap \kface{\phi^{-1}(y)}{K}} = 2.$$
To conclude this first case, it remains to show that there is a bijection between $\kcface{\sphi(x)}{K} \cap \kface{\phi^{-1}(y)}{K}$ and $\kcface{x}{K^\sphi} \cap \kface{y}{K^\sphi}.$
If $w' \in \kcface{\sphi(x)}{K} \cap \kface{\phi^{-1}(y)}{K}$ then either $w' \in K_{J'}$ or $w' \in J'.$ Since $w' \in K_{J'}$ if and only if
$ \phi(w') \in \kcface{x}{K^\sphi} \cap \kface{y}{K^\sphi}$ and $\phi$ is bijection, we are good for this case.
If $w '\in J'$ then $w' \in \kcface{\sphi(x)}{J'}$ and \ref{cond5red} implies that the map 
$$ \kcface{x}{J} \cap \sphi^{-1}(\kcface{\sphi(x)}{K} ) \ni w \longmapsto \sphi(w) \in \kcface{\sphi(x)}{J'} $$
is a bijection, and this concludes the case  $\rk_K(\sphi(x)) = \rk_{K^\sphi}(x).$ 

In the second case where $\rk_K(\sphi(x)) = \rk_{K^\sphi}(x) + 1$ condition \ref{cond4red} implies that
$$\abs{ \kcface{x}{J} \cap \sphi^{-1}(\sphi(x))} = 2.$$
Moreover, if $w \in \kcface{x}{K^\sphi} \cap \kface{y}{K^\sphi}$ then $\sphi(w) \subset \phi^{-1}(y)$ and 
$$\rk_J(x) + 1 = \rk_J(w) \leq \rk_K(\sphi(w)) < \rk_K(y) = \rk_J(x) + 2.$$
Therefore $\rk_K(\sphi(w)) = \rk_K(\sphi(x))$ and $\sphi(w) = \sphi(x).$ Thus $w \in \kcface{x}{J} \cap \sphi^{-1}(\sphi(x))$ and this also proves that $$\abs{\kcface{x}{K^\sphi} \cap \kface{y}{K^\sphi}}=2,$$
which concludes the proof of \ref{cccdiamond} and shows that $K^\sphi$ is a cc

The cc $K^\sphi$ is graph-based if $K^\sphi_J$ is graph-based, which is the case since $e \in (K^\sphi_J)^{[1]}$ if and only if $e' := \phi^{-1}(e) \in \ko_{J'}$ and therefore 
$$\abs{e} = \abs{e' \setminus (J')^{[0]} \sqcup \sphi^{-1}(J^{[0]} \cap e)} = 2$$
by condition \ref{cond1red}. 

The previous argument also implies that  $K^\sphi_J$ is cell-connected. Since $K$ is connected and cell-connected, this shows that $K^\sphi$ is connected and cell-connected.

One has that $K^\sphi$ is clearly pure and $R := \Rk(K^\sphi) = \Rk(K).$ And $K^\sphi$ is non-branching if $\abs{\cface{y'}} = 2$ for all $y' \in (K^\sphi_J)^{[R-1]}$ which is a direct consequence of the poset isomorphism property of $\phi.$

A $K^\sphi$-pinch cannot occur in $(J')^c$ nor $K^\sphi_J$ since the poset isomorphism property of $\phi$ would produce a $K$-pinch. If $x \in J$ then
$$ \dg{K^\sphi} \cap \cdual{x}{K^\sphi} = \phi(\cdual{\sphi(x)}{K}) \sqcup \{ \{\phi(z),\phi(z')\} \subset \kR ~|~ \sphi(x) \subset z \cap z' \in K^{[R-1]}\} \cong \dg{K} \cap \cdual{\sphi(x)}{K}.$$
Therefore there also cannot be a $K^\sphi$-pinch in $J$ since this would contradict that $K$ has no $K$-pinch on $J'.$
This in addition shows that $K^\sphi \in \nsc.$

By Remark \ref{remrksredncol} we have $\Rk(J) = \Rk(J') = R - 1$ and every cell in $J$ is included in a cell $y \in J^{[R-1]}.$ By Remark \ref{remcond3redncol} such a $y \in J^{[R-1]}$ satisfies that $\sphi(y) \in (J')^{[R-1]}.$ Therefore we have
$$\abs{\kcface{y}{K^\sphi}} = \abs{\kcface{\sphi(y)}{K}} = 1,$$
hence $J$ is indeed a connected component of $\partial K^\sphi.$

We have that $\sphi_J^{K^\sphi} = \sphi$ by the following arguments. 
%Since $\phi$ is a poset isomorphism, and since $K^\sphi$ is graph-based, we have that every edge in $\E_J$ is of the form $\phi(e)$ for some $e \in \E'_{J'},$ where $\E = (K^\sphi)^{[1]},$ and $\E' = \ko.$ As a consequence, for all $y \in K^\sphi_J$ we have
%$$y_J = \{ \phi(e) \cap J^{[0]} ~|~ e \in \E'_{J'},~ e \subset \phi^{-1}(y)\}.$$
If $v'$ is the vertex in $e \cap J'$ for $e \in  \E'_{J'}$ then by \ref{cond1red} we have that $\sphi^{-1}(\{v'\})$ contains exactly one vertex $v$ from $J.$ We will therefore consider the vertices $v'$ and $v$ as identical to simplify the discussion. With this convention, a cell $y \in J'(K)$ can be seen as having its vertices in $J.$ Moreover, if $y' \in K_{J'}^-[y]$ then $\phi(y') \in (K^\sphi)_J^-(x)$ for all $x \in J$ such that $\sphi(x) \subset y.$ Since we also assume that $J' = J'(K)$ and $\sphi_{J'}^K = \id_{J'}$ we have that $y = y_{J'}$ for all $y \in J'.$ Therefore, if we take $x \in J$ and $y' \in K_{J'}^-[\sphi(x)_{J'}]$ then
$$ x_J = \phi(y')_J = y'_{J'} = \sphi(x)_{J'} = \sphi(x)$$
and this proves that $\sphi = \sphi_J^{K^\sphi}.$
This also directly implies that $J(K^\sphi) = J'(K)$ and that $(K^\sphi, J)$ is uniform.
\begin{comment}
Recalling that $K^\sphi_J(x) :=\{ y \in K^\sphi_J ~|~ x \subset y_J\},$ the latter expression for $y_J$ in particular also implies  
$$ K_J^\sphi(x) = \phi(K_{J'}(\sphi(x))),$$
since $x \subset y \in K_J^\sphi$ if and  only if $\sphi(x) \subset \phi^{-1}(y) \in K_{J'},$ by the definition of $\phi.$

Therefore, we have that if $y = \phi(y') \in K_J^\sphi(x)$ then
$$\sphi(x) = y_J := \sphi_J^{K^\sphi}(x) \quad \text{if and only if} \quad \sphi(x) = y'_{J'},$$
(Where we identified the vertices of $J'$ as some vertices of $K^\sphi$ using the bijection induced by the condition \ref{cond1red} for $\sphi$). But this implies that $y' \in K_{J'}(\sphi(x)),$ and since we assumed that $\sphi_{J'}^K = \id_{J'(K)},$ we indeed have that
$$y'_{J'} = \sphi_{J'}^K(\sphi(x)) = \sphi(x),$$
and this concludes the proof.
\end{comment}
\end{proof}
\end{prop}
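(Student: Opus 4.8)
The plan is to construct $K^\sphi$ explicitly by grafting $J$ onto $K$ in place of the boundary component $J'$, and then to verify all the asserted properties. First I would pass to a convenient normal form: since $\sphi\circ\sphi_{J'}^K$ is again a reduction by Proposition \ref{propprecpartialorder}, there is no loss of generality in assuming $\sphi_{J'}^K=\id_{J'(K)}$, i.e. $J'=J'(K)$ and the composite reduction equals $\sphi$ itself; the compatibility hypothesis then reads $\sphi\cpa\cphi_{J'}^K$ with $\cphi_{J'}^K\colon M_{J'}^K\dans J'(K)=J'$. Writing $(J')^c:=K\cap(\kz\setminus(J')^{[0]})$ and letting $\phi$ be the poset isomorphism of $K_{J'}$ onto its image furnished by the map \eqref{maplemredbdry} of Lemma \ref{lemredbdry}, I set
\[
K^\sphi:=(J')^c\sqcup J\sqcup\phi(K_{J'}),\qquad
\rk_{K^\sphi}:=\rk_K|_{(J')^c}+\rk_J+\rk_K|_{K_{J'}}\circ\phi^{-1},
\]
so that $K^\sphi_J=\phi(K_{J'})$ and, because $\sphi_{J'}^K=\id$, every $y'\in K_{J'}$ satisfies $(y')_{J'}=y'\cap(J')^{[0]}$.

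Next I would check the four cell-complex axioms for $K^\sphi$. Inclusions internal to $J$, or internal to $(J')^c\sqcup\phi(K_{J'})$, follow at once from the axioms for $K$ and for $J$ together with the poset-isomorphism property of $\phi$, so the real work concerns inclusions straddling $J$ and $\phi(K_{J'})$. Axiom \ref{cccrank} is immediate from $\rk_{K^\sphi}(x)\le\rk_K(\sphi(x))$ (Remark \ref{remcond3redncol}); Axiom \ref{cccenough} follows by lifting a co-face through $\sphi$ using \ref{cccenough} for $J$; Axiom \ref{cccdiamond} splits according to whether $\rk_K(\sphi(x))=\rk_{K^\sphi}(x)$ or $=\rk_{K^\sphi}(x)+1$, handled respectively by \ref{cccdiamond} for $K$ combined with the bijection of condition \ref{cond5red}, and by condition \ref{cond4red}. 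The delicate axiom is \ref{cccinter}: for $x=\phi(x'),\,y=\phi(y')\in\phi(K_{J'})$ one distinguishes whether $\E_{J'}^{x'}\cap\E_{J'}^{y'}$ is empty; if not, $\phi$ commutes with the intersection; if so, $x'\cap y'$ lies in $J'=J'(K)$, and invoking that $\cphi_{J'}^K$ is a collapse (Proposition \ref{proptransit}), condition \ref{cond3col}, Lemma \ref{lemunifrelcc}, and finally condition \ref{cpa} of compatibility, one shows $\abs{(\cphi_{J'}^K)^{-1}(x'\cap y')}>1$, which forces $\sphi^{-1}(x'\cap y')$ to be a single cell of $J$. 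The mixed case $x\in J$, $y=\phi(y')\in\phi(K_{J'})$ requires exactly condition \ref{cpau} to rule out two maximal cells in $B(x)\cap\sphi^{-1}((y')_{J'})$.

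Once $K^\sphi$ is known to be a cc, I would record that it is graph-based, connected, cell-connected, pure of rank $\Rk(K)$, non-branching and non-pinching, each property being transferred from $K$ through $\phi$ (the non-pinching statement using the isomorphism $\dg{K^\sphi}\cap\cdual{x}{K^\sphi}\cong\dg{K}\cap\cdual{\sphi(x)}{K}$ for $x\in J$), hence $K^\sphi\in\nsc$. Since $\Rk(J)=\Rk(J')=\Rk(K)-1$ by Remark \ref{remrksredncol} and $\abs{\cface{\sphi(y)}}=1$ for every $(R-1)$-cell $y$ of $J$, the component $J$ is a boundary component of $K^\sphi$, so $(K^\sphi,J)\red_\sphi(K,J')$ is a relative reduction. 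Finally, identifying the vertices of $J'(K)$ with vertices of $K^\sphi$ via condition \ref{cond1red} and using $J'=J'(K)$, a direct computation $x_J=\phi(y')_J=y'_{J'}=\sphi(x)_{J'}=\sphi(x)$ for $y'\in K_{J'}^-[\sphi(x)_{J'}]$ gives $\sphi_J^{K^\sphi}=\sphi$, hence $\sphi_J^{K^\sphi}=\sphi\circ\sphi_{J'}^K$ in general, and $J(K^\sphi)=J'(K)\in\mlc$ yields uniformity of $(K^\sphi,J)$. I expect the verification of Axiom \ref{cccinter} — specifically the disjoint-collar subcase, where conditions \ref{cpa} and \ref{cpau} of compatibility are the only available tools and must be combined with the collapse structure of $\cphi_{J'}^K$ — to be the main obstacle.
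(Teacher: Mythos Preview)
Your proposal is correct and follows essentially the same approach as the paper's own proof: the same normalization via Proposition~\ref{propprecpartialorder}, the same explicit construction $K^\sphi=(J')^c\sqcup J\sqcup\phi(K_{J'})$ using Lemma~\ref{lemredbdry}, the same case split for each axiom (including the two-way split of Axiom~\ref{cccinter} on whether $\E_{J'}^{x'}\cap\E_{J'}^{y'}=\emptyset$ and the two-way split of Axiom~\ref{cccdiamond} on $\rk_K(\sphi(x))$), and the same closing computation $x_J=\sphi(x)$ for uniformity. You have also correctly identified the intersection axiom in the disjoint-collar subcase as the crux where the compatibility conditions \ref{cpa} and \ref{cpau} do the real work.
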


\begin{comment}

The following lemma is a further observation we will use in the proof of the Correspondence Theorem.

\begin{lem} \label{lemcorrespthm}
Let $S$ be a slice with boundary components $J$ and $L$ and let $x,y \in S_J = S_L.$
Then
$$ x_J \subset y_J, \quad x_L \subset y_L \quad \Leftrightarrow \quad x \subset y.$$
\begin{proof}
If we suppose that $ x_J \subset y_J$ and $x_L \subset y_L$ then for all $e \in \E^x_J = \E^x_L$ we have that
$$ \emptyset \neq e \cap J^{[0]} \in y_J, \quad \emptyset \neq e \cap L^{[0]} \in y_L,$$
therefore $e \subset y.$ This implies that $\E^x_J \subset \E^y_J$ and we can conclude using Lemma \ref{lemmidsec}.
\end{proof}
\end{lem}

\end{comment}

\begin{thm}[Correspondence] \label{thmcorresp}
There is a one-to-one correspondence between slice sequences and slices, up to cc-isomorphism. More precisely, a sequence
$$J \red_{\sphi_J} J' \loc_{\cphi_J} M \col_{\cphi_L} L' \der_{\sphi_L} L$$
is a slice sequence if and only if the poset $(S, \rk_S)$ defined by
$$S :=  (\cphi_J \sqcup \cphi_L)(M) \sqcup J \sqcup L,$$
$$\rk_S :=  \rk_{(\cphi_J \sqcup \cphi_L)(M)} + \rk_J + \rk_L, $$ 
is an element in $\nsc_s$ such that
$$\partial S = J \sqcup L, \quad M = M^S, \quad J(S) = J', \quad L(S) = L',$$
$$ \phi_I = \phi_I^S ~~\text{ for all }~~ \phi \in \{\sphi,\cphi\}, ~~ I \in \{J,L\}.$$

\begin{figure}
\centering
\includegraphics[scale=0.47]{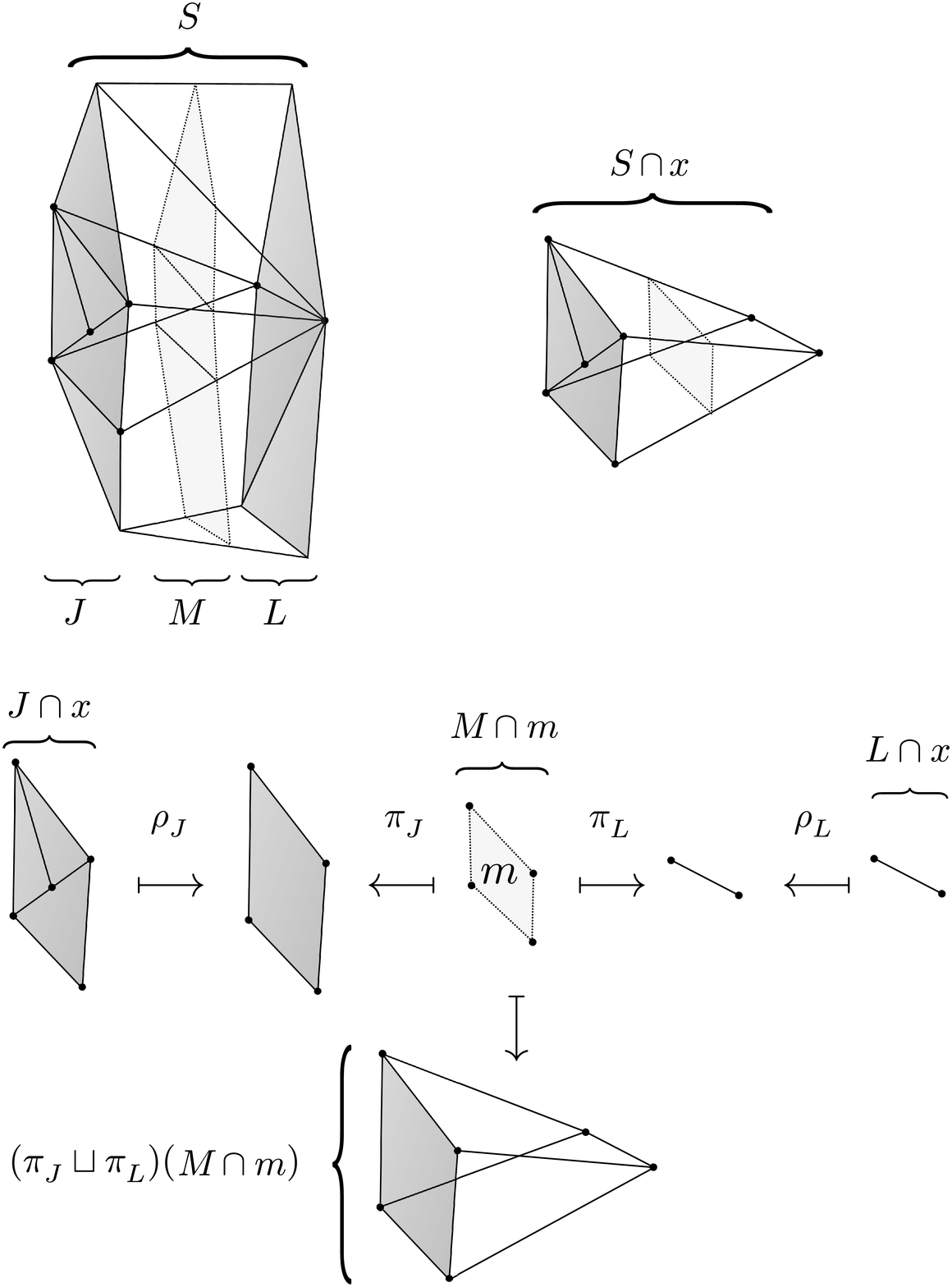}
\caption{\label{figslice}  On the top of this illustration we represented the portion of a slice $S$ of rank 3 with boundary components $J$ and $L$ and midsection $M = M^S.$ For the sake of clarity, we did not draw the 2-cells in the collar $S_J = S_L.$ We singled out one maximal cell $x$ of $S$ and represented the cells in $S \cap x.$ The lower part of the figure explains how, starting with a slice sequence as in Theorem \ref{thmcorresp}, the cells in $M \cap m$ for some $m \in M$ are used to construct the cells in $S \cap x$ using the reductions $\sphi_J, \sphi_L$ and the collapses $\cphi_J, \cphi_L.$ We also specified how the cells in $(\cphi_J \sqcup \cphi_L)(M \cap m)$ look like in this case.}
\end{figure}

\begin{proof}
We first show that a slice sequence is uniquely defined from a slice. This can be obtained directly as a consequence of Corollary \ref{cortransitmidsec} if we also show that if $S$ is a slice with boundary components $J$ and $L$ then 
$$ \cphi_J^S \perp  \cphi_L^S.$$
 
In order to prove that $\cphi_J^S$ and $\cphi_L^S$ are orthogonal, i.e. satisfy the point \ref{condorth} of Definition \ref{defrefnorth}, let us consider $m_1,m_2 \in M^S$ such that $\cphi_J^S(m_1) \cap \cphi_J^S(m_2) \neq \emptyset$ and $\cphi_L^S(m_1) \cap \cphi_L^S(m_2) \neq \emptyset.$  By Proposition \ref{propslice} we have that $m_i = \E_J^{y_i} = \E_L^{y_i}$ for some $y_i \in S_J = S_L,$ for $i=1,2,$ and the definition of $\cphi_J^S$ in Proposition \ref{proptransit} implies that $\cphi_J^S(m_i) = (y_i)_J$ for $i=1,2.$ By assumption we therefore have that $(y_1)_J \cap (y_2)_J \neq \emptyset$ and $(y_1)_L \cap (y_2)_L \neq \emptyset,$ hence $y_1 \cap y_2 \neq \emptyset.$ Lemma \ref{lemmidsec} implies $\E_J^{y_1 \cap y_2} = \E_J^{y_1} \cap \E_J^{y_2}$ hence we get the following inclusion for $I=J,L$ :
\begin{align*}
(y_1 \cap y_2)_I &= \{ e \cap J^{[0]} ~|~ e \in \E_J^{y_1 \cap y_2}\} = \{ e \cap J^{[0]} ~|~ e \in \E_J^{y_1} \cap \E_J^{y_2}\}\\
&\subset \{ e \cap J^{[0]} ~|~ e \in \E_J^{y_1}\} \cap \{ e \cap J^{[0]} ~|~ e \in  \E_J^{y_2}\} = (y_1)_I \cap (y_2)_I.
\end{align*}
To prove the reverse inclusion, and conclude that $\cphi_J^S$ and $\cphi_L^S$ are orthogonal, suppose by contradiction that there exists a vertex $v$ in $(y_1)_I \cap (y_2)_I \setminus	(y_1 \cap y_2)_I$ for $I = J$ or $I=L.$ If $\rk(y_1 \cap y_2) = 1$ we directly obtain a contradiction as in this case $v \in y_1 \cap  y_2 = (y_1 \cap y_2)_I.$ If $\rk(y_1 \cap y_2) \geq 2,$ we have in particular that $v$ is by assumption contained in an edge $e \subset \E_I^{y_1}$ and since $v \notin (y_1 \cap y_2)_I,$ there is no edge in $\E_I^{y_1 \cap y_2}$ containing $v.$ As a consequence, if $x$ is a minimal cell in $A(v) \cap S_I^{y_1 \cap y_2}$ then $\rk(x) \geq 2.$ But $x$ is then also a minimal cell in $S(v)$ and this contradicts the assumption that $(S,\partial S)$ is pure, since both $e$ and $x$ are minimal cells and $1 = \rk(e) \neq \rk(x).$

We then show how to obtain a slice from a slice sequence. By Proposition \ref{propredbdry}, it is sufficient to consider the case $J = J'$ and $L = L'$ since if in this case we obtain a slice $S$ (satisfying the desired conditions) the latter result using $\sphi = \sphi_I$ for  $I = J,L$ indeed leads $\sphi_I^S = \sphi_I$ and $I(S) = I'.$

Our goal is then to prove that $(S, \rk_S) \in \nsc_s$ where we recall that $(S, \rk_S)$ is defined by
$$S :=  (\cphi_J \sqcup \cphi_L)(M) \sqcup J \sqcup L,$$
$$\rk_S :=  \rk_{(\cphi_J \sqcup \cphi_L)(M)} + \rk_J + \rk_L, $$
and $(\cphi_J \sqcup \cphi_L)(M)$ is the augmented poset associated to $\cphi_J$ and $\cphi_L.$
As usual we start by proving that $S$ is a cc. By Lemma \ref{lemsqcup} the Axioms \ref{cccrank} and \ref{cccenough} are directly derived from the same axioms for the cc $J,L$ and $M.$ 

As for Axiom \ref{cccinter}, it is sufficient to show that if $m,m' \in M,$ then
$$x := (\cphi_J \sqcup \cphi_L)(m) \cap (\cphi_J \sqcup \cphi_L)(m') \in S \cup \{ \emptyset \}.$$
If either $(\cphi_J(m) \cap \cphi_J(m') = \emptyset$ or $(\cphi_L(m) \cap \cphi_L(m') = \emptyset$ then Axiom \ref{cccinter} for respectively $L$ and $J$ implies that $x \in L$ or $x \in J.$ If both $(\cphi_J(m) \cap \cphi_J(m')$ and $(\cphi_L(m) \cap \cphi_L(m')$ are non-empty, then as a consequence of the orthogonality of $\cphi_J$ and $\cphi_L$ we have that $m \cap m' \neq \emptyset$ and 
$$x = (\cphi_J \sqcup \cphi_L)( m \cap m') \in S.$$

And in order to show Axiom \ref{cccdiamond} for $S$ the only cases that are not directly obtained from Axiom \ref{cccdiamond} for $J,L$ or $M$ are inclusions of the form $w \subsetneq (\cphi_J \sqcup \cphi_L)(m),$ for some $m\in M$ and $w \in J \sqcup L$ such that 
$$\rk_S(w) = \rk_S((\cphi_J \sqcup \cphi_L)(m)) - 2 = \rk_M(m) - 1.$$ 
To deal with these cases we set $(\cphi_J \sqcup \cphi_L)(m):= x \sqcup y$ and assume without loss of generality that $w \in J,$ hence $w \subset x = \cphi_J(m).$ As noted in Remark \ref{remcond3redncol}, condition \ref{cond3col} implies that $\rk_J(\cphi_J(m)) \leq \rk_M(m).$ Therefore we have
$$ \rk_S(w) = \rk_J(w) \leq \rk_J(x)  \leq \rk_M(m)$$
and this leaves us with two possibilities: either $\rk_J(x) = \rk_M(m) - 1$ or $\rk_J(x) = \rk_M(m).$ In the first case we have $w = x,$ so that \ref{cond4col} implies
$$ \abs{\face{m} \cap (\cphi_J)^{-1}(x)} = 2.$$
By Lemma \ref{lemsqcup} an element $m' \in M$ belongs to $\face{m} \cap (\cphi_J)^{-1}(x)$ if and only if it satisfies that $\rk_S(\cphi_J(m') \sqcup \cphi_L(m')) = \rk_S(x) + 1$  and $\cphi_J(m') = x.$ Moreover any such $m'$ satisfies $\cphi_L(m') \subset y$ hence $m' \in \face{m} \cap (\cphi_J)^{-1}(x)$ if and only if
$$ \cphi_J(m') \sqcup \cphi_L(m') \in \cface{x} \cap \face{ x \sqcup y}.$$
Therefore we indeed have
$$ \abs{\face{m} \cap (\cphi_J)^{-1}(x)}  = \abs{ \cface{x} \cap \face{ x \sqcup y}}$$
which concludes the case when $x = w.$

The second case is when $w \subsetneq x = \cphi_J(m)$ and therefore $\rk_J(\cphi_J(m)) = \rk_M(m).$ In this case \ref{cond5col} implies
$$ \abs{\face{m} \cap (\cphi_J)^{-1}(x')} = 1 \quad \forall x' \in \face{x}.$$
Since $w \in \face{x}$ we have $\face{m} \cap (\cphi_J)^{-1}(w) = \{w'\}$ and therefore
$$ \cface{w} \cap \face{ x \sqcup y} = \{ x, \cphi_J(w') \sqcup \cphi_L(w')\}.$$
This concludes the proof that $S$ is a cc.

$S$ is graph-based since every edge in $(\cphi_J \sqcup \cphi_L)(M)$ is of the form $\cphi_J(v) \sqcup \cphi_L(v)$ for some vertex $v \in M^{[0]}$ and $\rk_J(\cphi_J(v)) = \rk_L(\cphi_L(v)) = 0$ by Remark \ref{remcond3redncol}.

$S$ is also clearly connected and cell-connected, as every cell $\cphi_J(m) \sqcup \cphi_L(m) \in (\cphi_J \sqcup \cphi_L)(M)$ satisfies that $J \cap \cphi_J(m)$ and $L \cap \cphi_L(m)$ are connected and can be linked by an edge of the form $\cphi_J(v) \sqcup \cphi_L(v)$ where $v \in m.$

By Remark \ref{remrksredncol} we have $\Rk(J) = \Rk(M) = \Rk(L) = R.$ It follows that $S$ is pure of rank $R+1$ since  the surjectivity of $\cphi_J$ and $\cphi_L$ implies that every cell in $J \sqcup L$ is contained in a cell of the form $\cphi_J(m) \sqcup \cphi_L(m)$ for some $m \in M^R$ and hence
$$ \rk_S(\cphi_J(m) \sqcup \cphi_L(m)) = \rk_M(m) + 1 = R + 1.$$

$S$ is non-branching as a consequence of Lemma \ref{lemsqcup} for the cells in $(\cphi_J \sqcup \cphi_L)(M)$ and since $J$ and $L$ are also non-branching. 

By Lemma \ref{lemsqcup} and the closeness of $M$ every $R$-cell in $S \setminus (J \sqcup L)$ is contained in two $(R+1)$-cells. Therefore every $R$ cell contained in one $(R+1)$-cell must be contained in either $J$ or $L$ and by the surjectivity of $\cphi_J$ and $\cphi_L$ it is the case for all maximal cells of $J$ and $L,$ hence $\partial S = J \sqcup L.$

$S$ is non-pinching since an $S$-pinch would imply an $M$-pinch by Lemma \ref{lemsqcup} and a $\partial S$-pinch would imply a $J$-pinch or an $L$-pinch by the previous derivation, all of which are excluded since $M,J,L$ are non-pinching.

As a consequence we indeed have that $S \in \nsc.$
It remains to show that $S$ is a slice. First, we have that $\kz = J^{[0]} \sqcup L^{[0]}$ since every cell in $(\cphi_J \sqcup \cphi_L)(M)$ has  rank higher than or equal to one. Clearly  $S_J = S_L = (\cphi_J \sqcup \cphi_L)(M)$ and by Lemma \ref{lemsqcup} this implies $M = M^S.$ $(S,\partial S)$ is also clearly non-degenerate and it is pure by the following argument. If, say, $y \in S_J$ then $y = \cphi_J(m) \cap \cphi_L(m)$ for some $m \in M$ and $J \cap y  = J \cap \cphi_J(m)$ is a pure cc. Hence if $y$ has minimal rank among elements in $S_J(x)$ then $\cphi_J(m) = x.$ By  condition \ref{cond3col} we can assume that $\rk_M(m) = \rk_J(x)$ therefore
$$ \rk_S(y)  = \rk_M(m) + 1 = \rk_J(x) + 1 = \rk_J^S(x) + 1,$$
and this shows that all elements in $S^-(x)$ have same rank.

Concerning uniformity, for all $\cphi_J(m) \sqcup \cphi_L(m) \in (\cphi_J \sqcup \cphi_L)(M) = S_{\partial S}$ we have that
$$ (\cphi_J(m) \sqcup \cphi_L(m))_J = \{ v \in \cphi_J(m) ~|~ \exists w \in m, ~ \cphi(w) = v \} = \cphi_J(m) = \left( \cphi_J(m) \sqcup \cphi_L(m)  \right) \cap J^{[0]}$$
and similarly for $L.$ Since we also assumed that $J(S) = J \in \mlc$ and $L(S) = L \in \mlc$ it follows that $(S,\partial S)$ is indeed uniform.

Finally, we have that $\cphi_I = \cphi_I^S$ for $I \in \{J,L\}$ by the following argument. It is sufficient to show this for the case $I=J$ in which for any $m \in M$ we have 
$$\E_J^x = \{ \cphi_J(v) \sqcup \cphi_L(v) ~|~ v \in m \},$$
where $x = \cphi_J(m) \sqcup \cphi_L(m).$ By identifying each vertex $v \in m$ with the edge $\cphi_J(v) \sqcup \cphi_L(v) \in (M^S)^{[0]}$ we obtain $\E_J^x = m$ and therefore
$$ \cphi_J^S(\E_J^x) = x_J = x \cap J^{[0]} =  \cphi_J(m),$$
where in the second equality we used the assumption $J = J(S).$
\end{proof}
\end{thm}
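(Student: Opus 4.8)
The Correspondence Theorem asserts a bijection between slices (up to cc-isomorphism) and slice sequences, realized by the explicit poset construction $S = (\cphi_J \sqcup \cphi_L)(M) \sqcup J \sqcup L$. My plan is to split the proof into the two directions of the correspondence, which are of quite different character.

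For the direction ``slice $\Rightarrow$ slice sequence'', I would start from a slice $S$ with boundary components $J$ and $L$ (which are connected and the only two components of $\partial S$ by Proposition \ref{propslice}). Applying Corollary \ref{cortransitmidsec} to both $J$ and $L$ immediately furnishes reductions $\sphi_J^S : J \red J(S)$, $\sphi_L^S : L \red L(S)$ and collapses $\cphi_J^S : M_J^S \col J(S)$, $\cphi_L^S : M_L^S \col L(S)$, together with the compatibilities $\sphi_J^S \cpa \cphi_J^S$ and $\sphi_L^S \cpa \cphi_L^S$; and Proposition \ref{propslice} gives $M_J^S = M_L^S =: M^S$. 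The only genuinely new ingredient needed to assemble a slice sequence is the orthogonality $\cphi_J^S \perp \cphi_L^S$. To prove this I would take $m_1 = \E_J^{y_1}, m_2 = \E_J^{y_2} \in M^S$ with $(y_1)_J \cap (y_2)_J \neq \emptyset$ and $(y_1)_L \cap (y_2)_L \neq \emptyset$, use $y_1 \cap y_2 \neq \emptyset$ together with Lemma \ref{lemmidsec} ($\E_J^{y_1 \cap y_2} = \E_J^{y_1} \cap \E_J^{y_2}$) to get $(y_1 \cap y_2)_I \subset (y_1)_I \cap (y_2)_I$ for $I = J,L$, and then argue the reverse inclusion by contradiction: a vertex $v \in (y_1)_I \cap (y_2)_I$ not in $(y_1 \cap y_2)_I$ would force a minimal cell of rank $\geq 2$ in $A(v) \cap S_I^{y_1 \cap y_2}$, hence a minimal cell of $S(v)$ of rank $\geq 2$ coexisting with the rank-$1$ edge $e \ni v$, contradicting the purity of $(S,\partial S)$.

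For the direction ``slice sequence $\Rightarrow$ slice'', Proposition \ref{propredbdry} lets me reduce to the case $J = J'$, $L = L'$: given a slice sequence with nontrivial $\sphi_J, \sphi_L$, I first handle $J=J',L=L'$ and then re-introduce the reductions one at a time using $(K^\sphi, J) \red_\sphi (K, J')$, noting $\sphi_I^{S} = \sphi_I$ and $I(S) = I'$ come out of that proposition. So the core is: given $J \loc_{\cphi_J} M \col_{\cphi_L} L$ with $\sphi_I = \id$, $\cphi_J \cpa$ its identity partner (vacuously) and $\cphi_J \perp \cphi_L$, show $(S, \rk_S) \in \nsc_s$. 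I would proceed through the cc axioms via Lemma \ref{lemsqcup} (the augmented poset $(\cphi_J \sqcup \cphi_L)(M)$ and its rank function are well-defined precisely because $\cphi_J \perp \cphi_L$): Axioms \ref{cccrank} and \ref{cccenough} are inherited; Axiom \ref{cccinter} follows by splitting on whether $\cphi_J(m) \cap \cphi_J(m')$ and $\cphi_L(m) \cap \cphi_L(m')$ are empty, using Axiom \ref{cccinter} for $J$, $L$ in the degenerate cases and orthogonality in the generic one; Axiom \ref{cccdiamond} is the delicate one, where for an inclusion $w \subsetneq \cphi_J(m)\sqcup\cphi_L(m)$ with a rank drop of $2$ I would distinguish $\rk_J(\cphi_J(m)) = \rk_M(m)-1$ (use \ref{cond4col}) from $\rk_J(\cphi_J(m)) = \rk_M(m)$ (use \ref{cond5col}), each time matching up the two cells in the diamond via Lemma \ref{lemsqcup}. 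Then graph-basedness (edges come from vertices of $M$, rank $0$ under $\cphi$ by Remark \ref{remcond3redncol}), connectedness and cell-connectedness, purity and $\Rk(S) = R+1$ (from $\Rk(J) = \Rk(M) = \Rk(L) = R$ by Remark \ref{remrksredncol} and surjectivity of the collapses), non-branching and $\partial S = J \sqcup L$ (from closedness of $M$ and surjectivity), non-pinching (an $S$-pinch forces an $M$-pinch, a $\partial S$-pinch forces a $J$- or $L$-pinch), and finally $S^{[0]} = J^{[0]}\sqcup L^{[0]}$, $M^S = M$, uniformity, and $\cphi_I = \cphi_I^S$ (identifying $v \in m$ with the edge $\cphi_J(v)\sqcup\cphi_L(v)$ gives $\E_J^x = m$, so $\cphi_J^S(\E_J^x) = x_J = x \cap J^{[0]} = \cphi_J(m)$).

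The main obstacle I expect is the verification of the diamond property (Axiom \ref{cccdiamond}) for $S$ in the second direction: it is the one place where all three hypotheses $\cphi_J \cpa \cphi_J$(-partner), $\cphi_J \perp \cphi_L$, and the full lists \ref{cond1col}--\ref{cond5col} of the collapse conditions are simultaneously needed, and where the bookkeeping of ranks $\rk_S(w) = \rk_J(w) \le \rk_J(\cphi_J(m)) \le \rk_M(m)$ forces a careful case split with a non-obvious matching of the two diamond cells through the augmented poset. The orthogonality argument in the first direction, while shorter, is the other spot requiring care, since it is exactly there that the purity of $(S,\partial S)$ (rather than mere non-singularity) is indispensable.
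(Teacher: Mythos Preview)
Your proposal is correct and follows essentially the same approach as the paper's proof: the same two-direction split, the same invocation of Corollary \ref{cortransitmidsec} and Proposition \ref{propslice} plus the orthogonality argument via purity of $(S,\partial S)$ for the first direction, and the same reduction to $J=J'$, $L=L'$ via Proposition \ref{propredbdry} followed by the axiom-by-axiom verification through Lemma \ref{lemsqcup} with the \ref{cond4col}/\ref{cond5col} case split for the diamond property in the second direction. Your identification of Axiom \ref{cccdiamond} and the orthogonality step as the two delicate points also matches where the paper spends its effort.
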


%For the particular case of slices in a causal triangulation, it is established in \cite{dj15} that if one collapses the cells of the midsection by identifying cells in a way analogous to identifying cells having the same image under the collaps $\cphi_J,$ one obtains $L.$ It possible to deduce a generalization of this fact to the case of slices as defined here such that $\sphi_J^S = \id_J$ and $\sphi_L^S = \id_L$ from the previous theorem, by interpreting the operation of "collapsing a cell $m \in M$ using $\cphi_J$" to be the map defined by the composition$$ m \longmapsto \cphi_J(m) \sqcup \cphi_L(m) \longmapsto \cphi_L(m).$$

\subsection{Connecting sequences and union of cc} \label{sseccompcob}

In this section we start by introducing the notion dual to that of a slice sequence called a "connecting sequence" consisting of reductions and collapses from  and to elements $M,J,I,L,M' \in \mlc$ related in the following way:
$$M \col J \der I \red L \loc M'.$$
Such a sequence is designed to define the union of elements $K,H \in \nsc$ such that $I = K \cap H  \in \mlc$ is a connected component of the boundary of $K$ and $H.$ In this case  $M,M' \in \mlc$ are chosen as $M = M_I^K,$ $M' = M_I^H$ and the cc $J,L \in \mlc$ are the associated transitions i.e. $J = I(K)$ and $L = I(K).$

%The main result of this section is Corollary \ref{corunioncc} which shows that a connecting sequence contains exactly the information needed in order to obtain an element in $\nsc$ by taking the union of $K$ and $H$ as above.  

Connecting sequences are thus used in Definition \ref{defconnbdry} to give a precise meaning to the notion of cobordisms "connecting through" a given boundary component. It will then be natural to give the definition of the union of two cc in $\nsc$ sharing a boundary component and prove in Corollary \ref{corunioncc} that it defines an element in $\nsc$ as a consequence of the Correspondence Theorem shown in the previous section.

\begin{defi}[Connecting sequence] \label{defconnseq}
A sequence of mutually disjoint local cell complexes $M, J,$ $I,L,M'$ in $\mlc$ and maps
$$M \col_{\cphi_J} J \der_{\sphi_J} I \red_{\sphi_L} L \loc_{\cphi_L} M'$$
will be called a \textit{connecting sequence} if $\cphi_J, \sphi_J, \cphi_L, \sphi_L$ satisfy
$$ \sphi_J \cpa \cphi_J, \quad \sphi_L \cpa \cphi_L, \quad  \sphi_J \refl  \sphi_L.$$
\end{defi}

\begin{defi}[Connecting boundary components] \label{defconnbdry}
Let $K, H \in \nsc^R$ and $J,L \in \mlc^{R-1}$ be connected components of $\partial K$ and $ \partial H,$ respectively, such that $(K,J)$ and $(H,L)$ are uniform. Then we say that \textit{$K$ and $H$ connect through} $I \in \mlc^{R-1}$ if there exist cell complexes $K^{\sphi_J},H^{\sphi_L} \in \nsc^R$ and relative reductions $(I,K^{\sphi_J}) \red_{\sphi_J} (J,K)$ and $(I, H^{\sphi_L}) \red_{\sphi_L} (L,H)$ such that 
\begin{equation} \label{connectsequdef}
M_J^K \col_{\cphi^K_J} J(K) \der_{\sphi_J \circ \sphi_J^K} I \red_{\sphi_L \circ \sphi_L^H} L(H) \loc_{\cphi_L} M_L^H
\end{equation}
is a connecting sequence, where $\cphi^K_J, \sphi_J^K, \cphi_L^H$ and $\sphi_L^H$ are defined in Proposition \ref{proptransit}. In this case we say that $\sphi_J$ and $\phi_L$ \textit{are connecting $K$ and $H$ through $I.$}
\end{defi}

The next definition gives a candidate object for what should be the union of two cell complexes in $\nsc$ with "similar" boundary components. It will be shown in the following lemma that it indeed uniquely determines an element in $\nsc.$ 
%Using our definition of connecting boundary components seems to be the most general way to define a meaningful union of cell complexes as a binary operation on $\nsc.$ 
%but in fact a common subdivision of two boundary components does not guarantee that these boundary components are similar as topological spaces. 
For the purpose of defining the composition of cobordisms as morphisms in a category, it is sufficient to consider the case $J \cong I \cong L$ in the following definition. 

\begin{defi}[Union of cell complexes with connecting boundary components]\label{defunioncc}
Let $K, H \in \nsc^R$ and $J,L \in \mlc^{R-1}$ be boundary components of respectively $K$ and $H$ such that $(K,J)$ and $(H,L)$ are uniform. Let 
$$(K^{\sphi_J},I) \red_{\sphi_J} (K, J), \quad (H^{\sphi_L},I) \red_{\sphi_L} (H,L)$$
be two relative reductions connecting $K$ and $H$ through $I$ and such that  $K^{\sphi_J} \cap H^{\sphi_L} = I.$ We define the \textit{union of $K$ and $H$ by $\sphi_J$ and $\sphi_L$ } to be the poset
$$K \ccup{\sphi_J ~ \sphi_L} H := K^{\sphi_J} \cup H^{\sphi_L}$$
with rank function 
$$\rk_{\sphi_J, \sphi_L} = \rk_{K \ccup{\sphi_J ~ \sphi_L} H} := \rk_{K^{\sphi_J}} + \rk_{H^{\sphi_L}} - \rk_I.$$
In the particular case were $J = I = L,$ i.e. $\sphi_J = \sphi_L = \id_I,$ we will simply write
$$K \ccup{\sphi_J ~ \sphi_L} H = K \ccup{I} H.$$
\end{defi}

\begin{cor}\label{corunioncc}
$(K \ccup{ \sphi_J ~ \sphi_L} H, \rk_{\sphi_J, \sphi_L} )$ as in Definition \ref{defunioncc} is an element of $\nsc^R$ uniquely determined by $K,H$ and the reductions $\sphi_J : I \dans J,$ $\sphi_L : I \dans L$ up to cc-isomorphism. In particular, if $(K - J) , (H - L) \in \cob$ are uniform cobordisms such that $\partial H = J \sqcup L$ and $K \cap H = J$ then
$$(K \ccup{J} H - L) \in \cob.$$
\begin{proof}
As pointed out in Definition \ref{defikphi}, $K' := K^{\sphi_J}$ and $H' := H^{\sphi_L}$ are uniquely determined by $K,H, \sphi_J$ and $\sphi_L$, hence it remains to show that $K' \cup H' \in \nsc^R.$ Notice first that $\rk_{\sphi_J, \sphi_L}|_I = \rk_I.$

The central argument of this proof is the following. By our assumptions, and considering that we are using the same notation as in Definition \ref{defconnbdry}, we obtain the connecting sequence (\ref{connectsequdef}). As a consequence, and by Lemmas \ref{lemdualcompnorth} and \ref{lemdualredcol}, the following sequence
$$\dual{M_J^K} \red_{\sphi'_J} \dual{J(K)} \loc_{ \cphi'_J} \dual{I} \col_{\cphi'_L} \dual{L(H)} \der_{ \sphi'_L} \dual{M_L^H}$$
where 
$$\sphi'_J := \dual{\cphi_J^K}, \quad \cphi'_J := \dual{\sphi_J \circ \sphi_J^K}, \quad \cphi'_L := \dual{\sphi_L \circ \sphi_L^H}, \quad \sphi'_L := \dual{\cphi_L^H},$$
is a slice sequence. Therefore Theorem \ref{thmcorresp} provides us with a slice $S$ of rank $R$ such that 
$$ S = \dual{M_J^K} \sqcup (\cphi_J' \sqcup \cphi'_L)(\dual{I}) \sqcup \dual{M_L^H}, \quad  \partial S = \dual{M_J^K} \sqcup \dual{M_L^H}, \quad M^S = \dual{I}.$$
Lemmas \ref{lemmidsec} and \ref{lemsqcup} imply that there is an isomorphism of posets
$$ S \cong \dual{K_I^{\sphi_J}} \sqcup \dual{I} \sqcup \dual{H_I^{\sphi_L}}.$$
Hence by Lemma \ref{claim2propdual} the duality map $ x \longmapsto \dual{x}$ defines a poset anti-isomorphism between $S$ and $K_I^{\sphi_J} \sqcup I \sqcup H_I^{\sphi_L},$ i.e. a bijection between the two posets which reverses inclusion relations. Moreover, for any element $x \in (\cphi_J' \sqcup \cphi'_L)(\dual{I})$ we have
$$ \rk_S(x) = \rk_{\dual{I}}(x) + 1 = (R - 1) - \rk_I(\dual{x}) = R - \rk_{\sphi_J, \sphi_L}(\dual{x}),$$
therefore $\rk_{\sphi_J, \sphi_L}(\dual{x}) = R - \rk_S(x)$ for all $x \in S.$

By the same arguments as in the proof of Proposition \ref{propdual} this implies that the axioms of cc as well as all properties defining elements in $\nsc$ (i.e. graph-based, pure, non-branching, non-pinching, connected, cell-connected) are satisfied for the cells in $K_I^{\sphi_J} \sqcup I \sqcup H_I^{\sphi_L}$ as the same properties are true for the cells in $S.$
\end{proof}
\end{cor}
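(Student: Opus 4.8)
The statement to prove is Corollary~\ref{corunioncc}: that the union $K \ccup{\sphi_J\ \sphi_L} H$ with the indicated rank function is an element of $\nsc^R$, uniquely determined up to cc-isomorphism, and the corresponding statement for cobordisms.

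The plan is to reduce the claim about a \emph{union} of cell complexes to the already-established Correspondence Theorem~\ref{thmcorresp} about \emph{slices}, via the duality map of Chapter~\ref{secdualcob}. First I would note that the uniqueness part is immediate: by Definition~\ref{defikphi} the cell complexes $K' := K^{\sphi_J}$ and $H' := H^{\sphi_L}$ are each determined up to cc-isomorphism by their data, so it only remains to show $K' \cup H' \in \nsc^R$ with the given rank function, and to check that $\rk_{\sphi_J,\sphi_L}$ restricts to $\rk_I$ on the shared component $I$ (immediate from the defining formula $\rk_{\sphi_J,\sphi_L} = \rk_{K'} + \rk_{H'} - \rk_I$, since on $I$ all three terms agree with $\rk_I$ giving $\rk_I + \rk_I - \rk_I$).

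The heart of the argument is the dualization. By hypothesis $K$ and $H$ connect through $I$, so by Definition~\ref{defconnbdry} we have the connecting sequence~(\ref{connectsequdef}):
$$M_J^K \col_{\cphi^K_J} J(K) \der_{\sphi_J \circ \sphi_J^K} I \red_{\sphi_L \circ \sphi_L^H} L(H) \loc_{\cphi_L} M_L^H.$$
I would then apply the duality dictionary: Lemma~\ref{lemdualredcol} says that dualizing turns a reduction into a collapse and vice versa, and Lemma~\ref{lemdualcompnorth} says that $\cpa$ is preserved (in the form $\dual{j} \apc \dual{l}$) and that $j \refl l$ is equivalent to $\dual{j} \perp \dual{l}$. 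Since a connecting sequence is governed by $\sphi_J \cpa \cphi_J$, $\sphi_L \cpa \cphi_L$ and $\sphi_J \refl \sphi_L$, dualizing all the maps (setting $\sphi'_J := \dual{\cphi_J^K}$, $\cphi'_J := \dual{\sphi_J \circ \sphi_J^K}$, $\cphi'_L := \dual{\sphi_L \circ \sphi_L^H}$, $\sphi'_L := \dual{\cphi_L^H}$) produces a \emph{slice} sequence
$$\dual{M_J^K} \red_{\sphi'_J} \dual{J(K)} \loc_{\cphi'_J} \dual{I} \col_{\cphi'_L} \dual{L(H)} \der_{\sphi'_L} \dual{M_L^H},$$
provided one checks that the various $M$'s, $J$'s, $L$'s lie in $\mlc^{R-1}$ so that their duals make sense as closed cc (this follows from Theorem~\ref{thmdualcob} and the uniformity/locality assumptions, as $J(K), L(H) \in \mlc$ by uniformity and $M_J^K, M_L^H$ are cell-connected by locality, hence their duals are again in $\mlc$ by Proposition~\ref{propdual} and Remark~\ref{remassumptlocal}). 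Applying the Correspondence Theorem~\ref{thmcorresp} to this slice sequence yields a slice $S \in \nsc_s^R$ with $S = \dual{M_J^K} \sqcup (\cphi'_J \sqcup \cphi'_L)(\dual{I}) \sqcup \dual{M_L^H}$, $\partial S = \dual{M_J^K} \sqcup \dual{M_L^H}$, $M^S = \dual{I}$, and $\phi'_I = \phi_I^S$ for the relevant labels.

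The final step is to transport the good properties of $S$ back across the duality to $K \ccup{\sphi_J\ \sphi_L} H$. Using Lemma~\ref{lemmidsec} (the isomorphism between $K_I^{\sphi_J}$ and $M_I^{K^{\sphi_J}}$, and likewise for $H$) together with Lemma~\ref{lemsqcup} (the augmented-poset description), I would exhibit a poset isomorphism $S \cong \dual{K_I^{\sphi_J}} \sqcup \dual{I} \sqcup \dual{H_I^{\sphi_L}}$. Then Lemma~\ref{claim2propdual} tells us that $x \mapsto \dual x$ is a poset anti-isomorphism between $S$ and $K_I^{\sphi_J} \sqcup I \sqcup H_I^{\sphi_L} = K \ccup{\sphi_J\ \sphi_L} H$, and a short rank computation — for $x \in (\cphi'_J \sqcup \cphi'_L)(\dual I)$, $\rk_S(x) = \rk_{\dual I}(x) + 1 = (R-1) - \rk_I(\dual x) + 1 = R - \rk_{\sphi_J,\sphi_L}(\dual x)$ — shows $\rk_{\sphi_J,\sphi_L}(\dual x) = R - \rk_S(x)$, matching the dual rank function. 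At that point the argument is exactly parallel to the proof of Proposition~\ref{propdual}: every axiom of a cc, and each of the defining properties of $\nsc$ (graph-based, pure, non-branching, non-pinching, connected, cell-connected), is equivalent under the anti-isomorphism to the same property for $S$, which holds since $S \in \nsc$. Hence $K \ccup{\sphi_J\ \sphi_L} H \in \nsc^R$. The cobordism statement $(K \ccup{J} H - L) \in \cob$ then follows by specializing to $J = I = L$, noting $L$ remains a connected component of $\partial(K\ccup J H)$, and invoking Definition~\ref{deficob}: non-degeneracy and locality of the resulting relative cc are inherited from the slice $S$ being a slice (its boundary components, hence the outgoing $L$, are local non-pinching closed cc).

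I expect the main obstacle to be bookkeeping rather than conceptual: one must carefully verify that every map in the connecting sequence has domain and codomain in $\mlc^{R-1}$ so that all the duals are well-defined closed cell complexes, and that the duality dictionary (Lemmas~\ref{lemdualredcol} and~\ref{lemdualcompnorth}) applies to the \emph{composite} maps $\sphi_J \circ \sphi_J^K$ and $\sphi_L \circ \sphi_L^H$ — this uses that composites of reductions are reductions (Proposition~\ref{propprecpartialorder}) and that the compatibility/reflectivity hypotheses were built into the notion of connecting sequence precisely so the dualized sequence is a genuine slice sequence. The rest is a faithful mirror of the proof of Proposition~\ref{propdual}.
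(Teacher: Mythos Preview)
Your proposal is correct and follows essentially the same route as the paper: dualize the connecting sequence into a slice sequence via Lemmas~\ref{lemdualredcol} and~\ref{lemdualcompnorth}, apply the Correspondence Theorem~\ref{thmcorresp} to obtain a slice $S$, identify $S$ with $\dual{K_I^{\sphi_J}} \sqcup \dual{I} \sqcup \dual{H_I^{\sphi_L}}$ via Lemmas~\ref{lemmidsec} and~\ref{lemsqcup}, and then transfer the $\nsc$ properties across the duality anti-isomorphism as in Proposition~\ref{propdual}. If anything, you are slightly more explicit than the paper in checking that the dualized objects lie in $\mlc^{R-1}$ and in deducing the final cobordism statement.
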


\begin{rema}
If $\alpha : (K,J) \dans (K_\alpha,J)$ and $ \beta : (H,L) \dans (H_\beta,L)$ are relative cc-automorphisms then $K_\alpha$ and $H_\beta$ also connect through $I$ but in general $ K \ccup{ \sphi_J ~ \sphi_L} H  \not \cong K_\alpha \ccup{\alpha \circ \sphi_J ~ \beta \circ \sphi_L} H_\beta.$
\end{rema}

\subsection{The category of causal cobordisms}  \label{sseccatcausalcob}

%As we saw in Section \ref{sseccompcob}, two cobordisms can be composed even when the cobordism linked with its ingoing boundary component is not exactly collared. It would then be desirable to drop the assumption of exact collaredness used in the previous section, as this would allow a more general structure of cobordisms in the vicinity of boundary states. Such a feature seems advantageous for example when applied to the state sum and spin foam models mentioned in the introduction. 

%In order to define a category from  cobordisms, it is in partiuclar necessary to find a notion of object allowing morphisms to satisfy the composition rule stating that two morphisms can be composed whenever the target object of the first is the source of the second.
%Corollary \ref{corunioncc} states that two uniform combordisms can be composed if they induce a connecting sequence of the type introduced in Definition \ref{defconnbdry}. According to our understanding, it constitutes the most general way to define a composition operation on uniform cobordisms. The optimal choice for a definition of objects compatible with the composition rule of cobordisms is therefore to base this definition on the notion of connecting sequence. However, since our goal is to have the duality map acting as a bijection on the set of morphisms (or more precisely as a fully faithful functor), it will be natural to also include objects based on the notion of slice sequences. 

We saw in the previous section that the notion of connecting sequence incorporates the conditions determining when the composition (i.e. the union) of two cobordisms is defined. It is therefore natural to use connecting sequences as the set of objects in order to define a category having morphisms defined as a certain class of cobordisms. Let $J,L \in \mlc^R$ for some $R \geq 0$ and define $\Hom_{\mlc^R}(J,L)$ to be the set of cc-homomorphisms $\phi:J \dans L.$ The duality map then induces a bijection from $\Hom_{\mlc^R}(J, L)$ to $\Hom_{\mlc^R}(\dual{J}, \dual{L})$ defined by $\dual{\phi}(x) := \dual{\phi(x)}, ~ x \in J.$ Therefore, as a consequence of Lemma \ref{lemdualredcol}, the duality map defines a bijection between connecting sequences and slice sequences. In order to introduce a category on which the duality map acts as a functor it is therefore also natural to include slice sequences as objects of the category.

The category of causal cobordisms we obtain as a result will be defined using the so-called braket notation commonly used in quantum mechanics. 
%In the context of quantum mechanics, a "ket" $\ket{\phi}$  represents a vector in a Hilbert space $H$ on $\C$ with inner product $\langle . , . \rangle$ and a "bra" $\bra{\psi}$ represents a vector in the dual Hilbert space $H^*,$ so that we have $\braket{\psi}{\phi} = \langle \psi , \phi \rangle \in \C.$ 
In our context kets and bras are a convenient way to represent connecting and slice sequences and use them to define larger sequences. 

%This notation is also a way to introduce a distinction within the sets of connecting and slice sequences in order to indicate how to compose them into larger sequences, since the notions of connecting and slice sequences do not specify which domains of the sequence is to be considered on the left or on the right of a given sequence. 

In order to define the braket notation it will be convenient to introduce a lighter notation for connecting and slice sequences as follows.

Let us use the notation $M \col \der J$  for $M,J \in \mlc^d$ to denote an object we call a \textit{semi-sequence} of \textit{dimension $d$} defined as a shorthand for the notation
$$M \col_\cphi J' \der_\sphi J,$$
where $J' \in \mlc^{d}$ is the \textit{transition of $M \col \der J$} and  $\cphi \apc \sphi$ are some given data of the semi-sequence.

We define the composition of two semi-sequences of a given dimension using the orthogonality condition as follows. If $M_0 \col \der J$ and $M_1 \col \der J$ are semi-sequences then their \textit{composition} is defined to be the expression
$$M_0 \col \der J  \red \loc M_1,$$
if this expression defines a connecting sequence, i.e.
$$M_0 \col \der J = M_0 \col_{\cphi_0} J_0 \der_{\sphi_0} J, \quad M_1 \col \der J = M_1 \col_{\cphi_1} J_1 \der_{\sphi_1} J \quad \text{and} 
\quad \sphi_0 \refl \sphi_1.$$
Similarly, the composition of two semi-sequences $M \col \der J$ and $M \col \der L$  is defined as
$$J \red \loc M \col \der L,$$
if the latter expression defines a slice sequence.
Unless stated otherwise, we set the convention that if $(K, J)$ is a uniform relative cc then the semi-sequence $M_J^K \col \der J$ stands for
$$ M_J^K \col_{\cphi_J^K} J(K) \der_{\sphi_J^K} J.$$

This notation will now be used to introduce the braket notation and give the definitions of states and sequences of states.

\begin{defi}[Braket notation, sets of states $\sts$ and sequences of states $\seqs$] \label{defbraket}
We introduce the following braket notation, where each \textit{ket} $\ket{J_1, J_2, J_3}$ and \textit{bra} $\bra{J_1, J_2, J_3}$ is \text{labelled} by three elements $J_i \in \mlc, ~i= 1,2,3,$ satisfying the following conditions:
\begin{itemize}
\item $\ket{M, L , M'}$ denotes the \textit{ingoing state} corresponding to the connecting sequence $$M \col \der L \red \loc M',$$
\item $\bra{J, M, L}$ denotes the \textit{outgoing state} corresponding to the slice sequence 
$$J \red \loc M \col \der L,$$
\item $ \braket{J', M, J}{ N , L , N'}$ indicates that $M \col \der J$ is equal to $N \col \der L,$
\item $\ktbr{N', J ,N}{L , M , L'}$ indicates that $J \red \loc N$ is equal to $L \red \loc M.$
\end{itemize}
We define the \textit{set of states} $\sts$ to be the union of the \textit{set of ingoing states} $\ists$ and the \textit{set of outgoing states} $\osts.$ 
We define a \textit{sequence of states} to be an expression $s_0 \dots s_n$ where $s_i \in \sts, ~i = 0, \dots, n$ where $n \in \N$ is the \textit{length} of the sequence and such that $s_0 \dots s_n$ constitutes an expression composed of alternating ingoing and outgoing states in accordance with the above conditions. We denote by $\seqs$ the \textit{set of sequences of states} where for each state $s$ we denote by $\id_s := s$ the sequence of states of length 1 containing $s$ called the \textit{identity sequence on $s.$} If $\sigma := s_0 \dots s_n \in \seqs$ we define the \textit{image of $\sigma$} by $\im(\sigma) := s_0$  and the \textit{domain of $\sigma$} by $\Dom(\sigma) := s_n.$ 

Two sequences of the form $\sigma = s_0 \dots s_n, \gamma = s_n \dots s_m,$ where $n,m \in \N, ~n \leq m$ can be \textit{composed} to form the sequence
$$\sigma \circ \gamma := s_0 \dots s_m.$$
By Remark \ref{remrksredncol} all the labels of states belonging to a given sequence of states are elements in $\mlc$ of the same rank.
We define the \textit{set of $d$-dimensional states} $\sts_d$ to be the set of states labelled by elements in $\mlc^d.$ Similarly, the \textit{set of $d$-dimensional sequences} $\seqs_d$ is defined as the set of sequences of states with labels in $\mlc^d.$
\end{defi}

%We will often use the symbols $s,a,b,c,d$ to denote states in $\sts$ and the symbols $\sigma, \gamma, \eta, \theta$ to denote elements in $\seqs.$

This provides us with all the notions required to define the category of causal cobordisms. 

\begin{defi}[Category $\Cob_d$ of causal cobordism of rank $d$]
Let $d \in \N^*.$ The \textit{category of $d$-dimensional causal cobordisms} $\Cob_d$ is defined by having the set of states $\sts_{d-1}$ as its set of objects and set of morphisms from $a$ to $b$ defined by
$$\homc(a,b) = \{ \sigma \in \seqs ~|~ a = \Dom(\sigma), ~ b = \im(\sigma)\}.$$ 
The composition of morphisms in $\Cob_d$ is defined via the composition of sequences and the identity on $s \in \sts_{d-1}$ is given by $\id_s \in \homc(s,s).$
\end{defi}

Since it is clear that the latter defines a category, it remains to justify how it is related to cobordisms.
Unless specified otherwise, we will conventionally assume that the elements in $\mlc$ used to label any two given states in a sequence of states are mutually disjoint cc unless the two states are consecutive states in the sequence of states, in which case they only share one semi-sequence.

By the Correspondence Theorem \ref{thmcorresp}, we have a correspondence:
$$\bra{ J, M, L} \quad \text{is associated to} \quad (S - L ) \in \cob_s, \quad\text{where}\quad \partial S = J \sqcup L,~M^S \cong M,$$ 
and where the slice $S$ corresponds to the slice sequence
$$J \red \loc M \col \der L.$$
In this case we consider the two following bras as identical:
$$ \bra{ J, M, L}  = \bra{S - L}.$$
Then for example if $\bra{S - L}, \bra{S' - L'}$ are outgoing states we call $\ket{ M, L, M'}$ an \textit{interpolating ingoing state} if 
$$ \braket{S - L}{ M, L, M'} \bra{S' - L'} \in \seqs.$$
The latter requirement is equivalent to stating that $\partial S' = L \sqcup L'$ and  $M \col \der L \red \loc M'$ is the connecting sequence
$$M^S \col_{\cphi_L^S} L(S) \der_{\sphi_L^S} L \red_{\sphi_L^{S'}} L(S') \loc_{\cphi_L^{S'}} M^{S'}.$$ 
Corollary \ref{corunioncc} implies that one can define the union $ (S \ccup{L} S' - L') \in \cob.$ We can therefore define the \textit{composition of $\bra{S - L}$ with $\bra{S' -L'}$} via the ingoing state $\ket{M, L, M'}$ by
$$  \bra{ S \ccup{L} S' - L'} := \braket{S - L}{ M, L, M'} \bra{S' - L'},$$
which in particular implies that $M = M^S$ and $M' = M^{S'}.$
More generally, we recursively define outgoing states of the form $\bra{K - J}$ where $K$ is a union of slices $S_0, \dots, S_n$ using the relation:
$$\bra{ S_0 \ccup{L_i}_{i = 1, \dots, n} S_i - L_n} := \braket{ S_0 \ccup{L_i}_{i = 1, \dots, n-1} S_i - L_{n-1}}{M_{n-1}, L_{n-1} , M_{n}} \bra{ S_n - L_n },$$
where $\partial S_i = L_{i-1} \sqcup L_i$ and $M_i = M^{S_i}$ for $i= 1, \dots, n.$

At this point it is natural to introduce the following notion of causal cobordism.

\begin{defi}[Causal cobordisms $\cob_c$] \label{defcauscob}
A cobordism $(K - J) \in \cob$ is said to be \textit{causal} if there exist slices $S_0, \dots, S_n$ and interpolating ingoing states $\ket{M^i, L_i, M^{i+1}}$ for $i = 1, \dots, n$ such that
$$ (K - J) = ( S_0 \ccup{L_i}_{i = 1, \dots, n}  S_i - L_n).$$
The set of causal cobordisms is denoted by $\cob_c.$ We will also call a bra of the form $\bra{K - J}$ where $(K - J) \in \cob_c$ a causal cobordism.
For each $J \in \mlc$ we conventionally define an element called the \textit{empty causal cobordism at $J$} denoted $(J - J)$ to be a causal cobordism with boundary $J$ and rank $\Rk(J) + 1.$ 
\end{defi}

Empty cobordisms are simply introduced to make some expressions of sequences of states more general. We will in particular use the bra $\bra{J - J}$ in an expression of the form:
$$ \ket{M, J, M'} \braket{J - J}{N, L, N'}  \quad \text{implying that} \quad \ket{M, J, M'} = \ket{N, L, N'}.$$

It will also be convenient to introduce a notation indicating that a slice is either the first or the last slice of a causal cobordism as follows. Let us consider a causal cobordism $(K - J) \in \cob_c$ given by
$$(K - J) = ( S_0 \ccup{L_i}_{i = 1, \dots, n} S_i - L_n),$$
for some slices $S_0, \dots, S_n$ and ingoing states $\ket{M^i, L_i, M^{i+1}}, ~i = 1, \dots, n$ such that $J = L_n.$ Writing $L := L_0$ we have that $\partial K = J \sqcup L.$
We say that a slice $\bra{S - J} = \bra{J',  M^S , J}$  is the \textit{ ingoing slice of $\bra{K - J}$} and use the notation 
$$\bra{J', M^S, J } \ins \bra{K - J},$$ if $(S - J) = (S_n - L_n).$
Similarly, we say that a slice $\bra{S - L'}= \bra{L , M^S , L'}$ is the \textit{outgoing slice of $\bra{K - J}$} and use the notation
$$\bra{L , M^S, L'} \outs \bra{K - J},$$
if $S = S_0.$

Using this notation, we can more explicitly express the sets of morphisms between states as follows. The expressions associated to morphisms in $\Cob_d$ are of four different kinds depending on whether the image and domain are ingoing or outgoing states:
\begin{align*}
&\homc( \ket{M, J, M'} , \ket{N', L, N}) := \left\lbrace ~ \ket{N', L, N} \braket{K - J}{M, J, M'} ~:~ \bra{K - J} \in \cob_c  \right\rbrace,\\[1em]
&\homc( \bra{J', M, J} , \ket{N', L, N}) := \left\lbrace ~ \ktbr{N', L, N}{K - J} ~:~ \bra{K - J} \in \cob_c,~ \bra{J',M,J} \ins \bra{K - J} \right\rbrace,\\[1em]
&\homc( \ket{M, J, M'} , \bra{ L', N, L}) := \left\lbrace ~ \braket{K - J}{M, J, M'} ~:~ \bra{K - J} \in \cob_c, ~ \bra{L', N, L} \outs \bra{K - J} \right\rbrace, \\[1em]
&\homc( \bra{J', M, J} , \bra{ L', N, L}) := \left\lbrace ~ \bra{K - J} \in \cob_c ~:~  \begin{array}{c} \bra{J',M,J} \ins \bra{K - J},\\ 
\bra{L', N, L} \outs \bra{K - J} \end{array} \right\rbrace.
\end{align*}

%Since the expression of each morphism $\gamma$ in $\Cob^d$ depends on a causal cobordism $(K - J) \in \cob_c,$ it will be convenient to denote this cobordism using the notation $K = K(\gamma).$

It is seen that only the morphisms in the last set $\homc( \bra{J', M, J} , \bra{ L', N, L})$ correspond exactly to the notion of cobordisms we defined in Definition \ref{defcauscob}. The other sets of morphisms include additional ingoing states corresponding to connecting sequences associated with one or both of their boundary components. The morphisms of the first set $\homc( \ket{M, J, M'} , \ket{N', L, N})$ actually define the \textit{ sub-category of $d$-dimensional ingoing states} $\Cob_d^{in}$ with $\ists_d$ as the set of objects. Similarly, one can single out the \textit{sub-category of $d$-dimensional outgoing states} $\Cob_d^{out}$ having $\osts_d$ as objects and morphisms defined by the last set $\homc( \bra{J', M, J} , \bra{ L', N, L}).$

\begin{comment}
If for example we consider a morphism $\gamma \in \homc( a , b),$ where $a =  \bra{J', M, J}$ and $b = \ket{N', L, N}.$
Then there exist a causal cobordisms $(K - J), (K' - J') \in \cob_c$ such that $\gamma$ is given by
\begin{equation} \label{exampexpr}
\ktbr{N', L, N}{K - J} = \ket{N', L, N} \braket{K' - J'}{M_J^K, J' , M} \bra{J' , M , J}.
\end{equation}
\end{comment}

We can now turn to the definition of the duality map for elements in $\Cob^d.$ For this, it is sufficient to specify how the duality map acts on sequences of states in $\seqs_d$ as done in the following definition.

\begin{defi}[Duality map on $\sts$ and $\seqs$]
The dual of a sequence of states is defined via the following rules:
\begin{itemize}
\item if $s = \ket{M, L, M'}$ then $\dual{s} = \bra{\dual{M'}, \dual{L}, \dual{M}}$ is the outgoing state characterized by the slice sequence
$$\dual{M'} \red \loc \dual{L} \col \der \dual{M},$$
that we call the \textit{dual sequence} of $M \col \der L \red \loc M';$
\item if $s = \bra{J, M , L}$ then $\dual{s} = \ket{\dual{L}, \dual{M}, \dual{J}}$ is the ingoing state characterized by the connecting sequence
$$ \dual{L} \col \der \dual{M} \red \loc \dual{J},$$
\item if $ \sigma = \sigma_1 \sigma_2$ is a sequence of states then
$$ \dual{\sigma} = \dual{\sigma_2} ~~ \dual{\sigma_1}.$$
\end{itemize}
\end{defi}
The definition of the duality map on $\sts$ and $\seqs$ for example implies:
\begin{align}\label{exdualmapbrkt}
\dual{ \braket{ J, M , L}{M , L , M'} } &= \dual{ \ket{M, L, M'}} ~~ \dual{\bra{J, M ,L}} = \braket{\dual{M'}, \dual{L}, \dual{M}}{\dual{L}, \dual{M}, \dual{J}}.
\end{align}

As a direct consequence of the above we obtain the following result.

\begin{cor}[Duality map on $\Cob^d$]
Let $a,b \in \sts^d$ and let $\sigma$ be a morphism in $\Cob^d.$ We have the following equivalence:
$$\sigma \in \homc(a,b)  \quad \text{if and only if} \quad \dual{\sigma} \in \homc(\dual{b}, \dual{a}).$$
\end{cor}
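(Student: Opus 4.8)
The plan is to reduce the statement to what has already been established about the duality on sequences of states and on the building blocks of the category. The essential point is that $\homc(a,b)$ is, by definition, the set of sequences $\sigma \in \seqs$ with $\Dom(\sigma) = a$ and $\im(\sigma) = b$, and that the dual operation on $\seqs$ reverses composition and swaps the roles of domain and image. So the proof is really a matter of unwinding the definitions of $\Dom$, $\im$ and the duality map, checking that a sequence of states whose consecutive states are properly matched (alternating ingoing/outgoing, with matching semi-sequences as recorded by the bracket conditions) is taken to another such properly-matched sequence, and that the matching conditions are preserved under dualizing each state and reversing their order.

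First I would record the key input, namely that the duality map defines a bijection on $\sts^d$ (hence on $\seqs^d$), which follows from the fact that duality is a bijection on $\mlc^{d-1}$-labelled data together with Lemma \ref{lemdualredcol} and Lemma \ref{lemdualcompnorth}: a connecting sequence $M \col \der L \red \loc M'$ is dualized to a slice sequence $\dual{M'} \red \loc \dual{L} \col \der \dual{M}$ and conversely, so $\dual{\ket{M,L,M'}} = \bra{\dual{M'}, \dual{L}, \dual{M}}$ and $\dual{\bra{J,M,L}} = \ket{\dual{L}, \dual{M}, \dual{J}}$ are well-defined states, and applying $\dual{\phantom{x}}$ twice gives back the original state up to the canonical cc-isomorphisms from Proposition \ref{propdual}. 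Since $\dual{\phantom{x}}$ on $\seqs^d$ is defined by $\dual{(\sigma_1 \sigma_2)} = \dual{\sigma_2}~\dual{\sigma_1}$ and on length-one sequences coincides with the above, it is an involution on $\seqs^d$, hence a bijection.

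Next I would verify the two directions of the equivalence simultaneously by symmetry: since $\dual{\phantom{x}}$ is an involution on $\seqs^d$, it suffices to prove the forward implication, and then apply it to $\dual{\sigma}$, $\dual{a}$, $\dual{b}$ in place of $\sigma$, $a$, $b$ to get the converse. So assume $\sigma \in \homc(a,b)$, i.e. $\sigma = s_0 \dots s_n$ is a sequence of states with $s_n = \Dom(\sigma) = a$ and $s_0 = \im(\sigma) = b$. By the definition of the duality map, $\dual{\sigma} = \dual{s_n}~\dual{s_{n-1}} \dots \dual{s_0}$. I would then check that $\dual{s_n}~\dots~\dual{s_0}$ is again a valid sequence of states: the states $s_i$ alternate between ingoing and outgoing, and $\dual{\phantom{x}}$ exchanges $\ists$ and $\osts$, so the dualized sequence also alternates; moreover a consecutivity condition such as $\braket{s_i}{s_{i+1}}$ asserting that a semi-sequence $M \col \der J$ appearing in $s_i$ equals the one in $s_{i+1}$ becomes, after dualizing, the condition $\ktbr{\dual{s_{i+1}}}{\dual{s_i}}$ asserting that $\dual{J} \red \loc \dual{M}$ in $\dual{s_{i+1}}$ equals the same in $\dual{s_i}$, which is exactly the consecutivity condition required for $\dual{s_{i+1}}$ to precede $\dual{s_i}$ in a sequence of states (using Lemma \ref{lemdualcompnorth} to see that $\refl$ dualizes to $\perp$ and $\cpa$ to $\apc$, so that the compatibility/orthogonality/reflectivity conditions defining slice and connecting sequences are preserved). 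Hence $\dual{\sigma} \in \seqs^d$ with $\im(\dual{\sigma}) = \dual{s_n} = \dual{a}$ and $\Dom(\dual{\sigma}) = \dual{s_0} = \dual{b}$, i.e. $\dual{\sigma} \in \homc(\dual{b}, \dual{a})$.

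The main obstacle, such as it is, is purely bookkeeping: one must be careful that the four possible kinds of bracket conditions $\braket{\cdot}{\cdot}$ and $\ktbr{\cdot}{\cdot}$ relating consecutive states, as listed in Definition \ref{defbraket}, are each matched up correctly with their duals, and that the convention about which labels are assumed disjoint and which are shared between consecutive states is consistent before and after dualizing. There is no genuine analytic content beyond the already-proven Lemmas \ref{lemdualredcol} and \ref{lemdualcompnorth} and Proposition \ref{propdual}; the work is to phrase the reversal-of-order plus exchange-of-ingoing-and-outgoing in a way that visibly respects the constraints defining $\seqs^d$, and then to observe that the functoriality claim (compatibility with composition of sequences) is immediate from $\dual{(\sigma \circ \gamma)} = \dual{\gamma} \circ \dual{\sigma}$, which follows directly from $\dual{(\sigma_1 \sigma_2)} = \dual{\sigma_2}~\dual{\sigma_1}$ and the definition of composition of sequences as concatenation along a shared state.
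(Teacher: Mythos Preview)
Your proposal is correct and follows the same approach as the paper, which treats the corollary as an immediate consequence of the preceding definitions and lemmas (the paper provides no explicit proof beyond stating it is ``a direct consequence of the above''). Your argument is simply a careful unpacking of why this is so, invoking exactly the right ingredients: Lemmas \ref{lemdualredcol} and \ref{lemdualcompnorth} to see that duality exchanges slice and connecting sequences, the involution property from Proposition \ref{propdual}, and the reversal rule $\dual{(\sigma_1\sigma_2)} = \dual{\sigma_2}\,\dual{\sigma_1}$ on sequences.
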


In other words, $\Cob_d$ is self-dual as a category under the contravariant functor $\charge = \dual{( \cdot )}$ induced by the duality map.  It is also clear that $\charge$ can be restricted to a functor between the sub-categories $\Cob_d^{in}$ and $\Cob_d^{out}.$

Let us furthermore note the following contravariant functor $\tim = \rev{(\cdot)}$ in $\Cob_d.$ For any state $s \in \sts,$ let the \textit{reverse state} $\rev{s}$  be the state defined by the same labels in reversed order, for example
$$ \rev{\ket{M',L,M}} = \ket{M,L,M'}.$$
Then $s \in \ists$ if and only if $\rev{s} \in \ists$ and similarly for $\osts.$ We also define the \textit{reverse sequence of states} $\rev{\sigma}$ of a sequence $\sigma = \sigma_1 ~ \sigma_2$ by 
$$\rev{\sigma} = \rev{\sigma_2}~ \rev{\sigma_1},$$
which evidently implies that
$$ \sigma \in \homc(a,b) \quad \text{if and only if} \quad \rev{\sigma} \in \homc(b,a).$$

Finally, it is worth mentioning a third (covariant) functor $\parity = (\cdot)^*$ defined as follows. The functor $\parity$ acts on states as a map $s \longmapsto s^*,$ where $s^*$ is the state corresponding to the sequence dual to the sequence characterizing $s$ labelled by the dual of the labels of $s$ in the same order. For example,
$$ \ket{M', L, M}^* = \bra{ \dual{M'}, \dual{L}, \dual{M}}.$$
The functor $\parity$ acts on morphisms in $\Cob_d$ simply by acting on each states of the corresponding sequence of states, i.e. if $\sigma = \sigma_1 ~ \sigma_2$ then 
$$\sigma^* = \sigma_1^* ~ \sigma_2^*.$$
This in turn implies the equivalence:
$$ \sigma \in \homc(a,b) \quad \text{if and only if} \quad \sigma^* \in \homc(a^*, b^*).$$

These functors satisfy $\tim^2 = \charge^2 = \parity^2 = \id_{\Cob_d}$ and are in particular related by the relation $\charge =  \parity ~ \tim ,$ as can be seen for the specific example used in (\ref{exdualmapbrkt}):
\begin{align*}
\left( \rev{\left( \braket{ J, M, L}{M, L, M'} \right)} \right)^* &= \left( \ktbr{M',L,M}{M,L,J} \right)^*\\
&= \braket{ \dual{M'}, \dual{L}, \dual{M}}{\dual{L}, \dual{M}, \dual{J}} = \dual{ \braket{ J, M, L}{M, L, M'} }.
\end{align*}

It is then suggestive to view $\tim$ as the analogue of a "time reversal" transformation, $\parity$ as the analogue of a "parity transformation"  and the duality map $\charge$ as the analogue of a "charge transformation".

\newpage

\addcontentsline{toc}{section}{References}

\bibliographystyle{plain}
\bibliography{biblio}

%\newpage
%~
%\thispagestyle{empty}

\end{document}